\newtheorem{definition}{Definition}
\newtheorem{lemma}{Lemma}
\newif\ifcommentson\commentsontrue
\newcommand{\colorB}[1]{\textcolor{black}{#1}}
\newcommand{\colorBB}[1]{\textcolor{black}{#1}}
\newif\ifconferenceon\conferenceonfalse
\newcommand{\conference}[1]{#1}
\newcommand{\arxiv}[1]{}
\newcommand{\conference}[1]{}
\newcommand{\arxiv}[1]{#1}
\newcommand{\LDP}{LDP}
\newcommand{\OSLDP}{OSLDP}
\newcommand{\ULDP}{ULDP}
\newcommand{\nngreals}{\mathbb{R}_{\ge0}}
\newcommand{\argmax}{\operatornamewithlimits{argmax}}
\newcommand{\expect}[1]{\operatornamewithlimits{\displaystyle\mathbb{E}}_{#1}}
\newcommand{\expectym}{\expect{Y^n\sim \bmm^n}}
\newcommand{\var}{\text{Var}}
\newcommand{\normal}[1]{\mathcal{N}\!\left(#1\right)}
\newcommand{\calC}{\mathcal{C}}
\newcommand{\calM}{\mathcal{M}}
\newcommand{\calW}{\mathcal{W}}
\newcommand{\calX}{\mathcal{X}}
\newcommand{\calY}{\mathcal{Y}}
\newcommand{\calZ}{\mathcal{Z}}
\newcommand{\bbE}{\mathbb{E}}
\newcommand{\uid}[1]{^{(#1)}}
\newcommand{\bmX}{\mathbf{X}}
\newcommand{\bmY}{\mathbf{Y}}
\newcommand{\bmQ}{\mathbf{Q}}
\newcommand{\bmQxy}{\mathbf{Q}(y|x)}
\newcommand{\bmQxdy}{\mathbf{Q}(y|x')}
\newcommand{\bmQixy}{\mathbf{Q}^{(i)}(y|x)}
\newcommand{\bmQRR}{\bmQ_{\it RR}}
\newcommand{\bmQRAPPOR}{\bmQ_{\it RAP}}
\newcommand{\bmQrestRR}{\bmQ_{\it uRR}}
\newcommand{\bmQrestRAPPOR}{\bmQ_{\it uRAP}}
\newcommand{\bmp}{\mathbf{p}}
\newcommand{\hbmp}{\hat{\mathbf{p}}}
\newcommand{\bmm}{\mathbf{m}}
\newcommand{\hbmm}{\hat{\mathbf{m}}}
\newcommand{\bmr}{\mathbf{r}}
\newcommand{\hbmr}{\hat{\mathbf{r}}}
\newcommand{\bmpUS}{\mathbf{p}_{U_{\!S}}}
\newcommand{\bmpUN}{\mathbf{p}_{U_{\!N}}}
\newcommand{\bmpi}{\mathbf{\pi}}
\newcommand{\bmt}{\mathbf{t}}
\newcommand{\calXS}{\mathcal{X}_S}
\newcommand{\calXN}{\mathcal{X}_N}
\newcommand{\calYP}{\mathcal{Y}_P}
\newcommand{\calYI}{\mathcal{Y}_I}
\newcommand{\calZS}{\mathcal{Z}_S}
\newcommand{\calZP}{\mathcal{Z}_P}
\newcommand{\calZI}{\mathcal{Z}_I}
\begin{document}

\title{Utility-Optimized Local Differential Privacy Mechanisms for Distribution Estimation \thanks{This study was supported by JSPS KAKENHI 
JP19H04113, JP17K12667, and by Inria under the project LOGIS.}}

\numberofauthors{2} 

\author{
\alignauthor
Takao Murakami\\
       \affaddr{AIST, Tokyo, Japan}\\
       \email{takao-murakami at aist.go.jp}
\alignauthor
Yusuke Kawamoto\\
       \affaddr{AIST, Tsukuba, Japan}\\
       \email{yusuke.kawamoto at aist.go.jp}
}
\date{30 July 1999}

\maketitle

\begin{abstract}
LDP (Local Differential Privacy) 
has been widely studied to estimate statistics of personal data 
(e.g., distribution underlying the data) 
while protecting users' privacy. 
Although LDP does not require 
a trusted third party, 
it regards all personal data equally sensitive, which causes excessive obfuscation hence the loss of 
utility.

In this paper, we 
introduce the notion of 
\emph{ULDP (Utility-optimized LDP)}, 
which provides a privacy guarantee equivalent to LDP only for sensitive data. 
We first 
consider 
the setting where 
all users use the same obfuscation mechanism, 
and propose two mechanisms providing ULDP: 
\emph{utility-optimized randomized response} 
and 
\emph{utility-optimized RAPPOR}. 
We then consider 
the setting where 
the distinction between sensitive and non-sensitive data can be different from user to user. 
For this setting, 
we propose 
a \emph{personalized ULDP mechanism with semantic tags} 
to estimate the distribution of personal data with high utility while keeping secret what is sensitive for each user. 
We show, both theoretically and experimentally, that our mechanisms provide 
much higher utility than the existing LDP mechanisms when there are a lot of non-sensitive data. We also show that when most of the data are non-sensitive, our mechanisms even provide almost the same utility as non-private mechanisms in the low privacy regime. 
\end{abstract}

\section{Introduction}
\label{sec:intro}

\colorB{DP (Differential Privacy) \cite{Dwork_TCC06,DP} 
is becoming a gold standard for data privacy; 
it enables 
big data 
analysis 
while protecting users' privacy 
against adversaries with arbitrary background knowledge. 
According to the underlying architecture, 
DP 
can be categorized into 
the one in the \emph{centralized model} 
and 
the one in the \emph{local model} \cite{DP}.}
In the centralized model, 
a ``trusted'' database administrator, who can access to all users' personal data, obfuscates the data 
(e.g., by adding noise, generalization) 
before providing them to a (possibly malicious) 
data analyst. 
Although DP was extensively studied 
for the centralized model at the beginning, 
the original personal data in this model can be leaked from the database by illegal access or internal fraud. 
This issue is critical in recent years, because the number of data breach incidents is increasing \cite{data_breach}.

The local model 
does not require a ``trusted'' administrator, 
and therefore 
does not suffer from the data leakage issue 
explained above. 
In this model, each user obfuscates her personal data by herself, and 
sends the obfuscated data to a data collector (or data analyst). 
Based on the obfuscated data, 
the data collector can estimate some statistics (e.g., histogram, heavy hitters \cite{Qin_CCS16}) of the personal data. 
DP in the local model, which is called 
\emph{LDP} (\emph{Local Differential Privacy}) \cite{Duchi_FOCS13}, 
has recently attracted much attention in the academic field 
\cite{Avent_USENIX17,Cormode_SIGMOD18_2,Fanti_PoPETs16,Kairouz_ICML16,Kairouz_JMLR16,Murakami_PoPETs18,Pastore_ISIT16,Qin_CCS16,Sei_TIFS17,Wang_ICDE18,Ye_ISIT17}, 
and has also been adopted by 
\colorBB{industry} 
\cite{Ding_NIPS17,Thakurta_USPatent17,Erlingsson_CCS14}. 

However, LDP mechanisms regard all personal data as equally sensitive, and leave a lot of room for increasing data utility. 
For example, consider questionnaires such as: ``Have you ever cheated in an exam?'' and ``Were you with a prostitute in the last month?'' \cite{Cohen_SSSIPP89}. 
Obviously, ``Yes'' is a sensitive response to these questionnaires, whereas ``No'' is not sensitive. 
A 
RR (Randomized Response) method 
proposed 
by Mangat \cite{Mangat_JRSS94} utilizes this fact. 
Specifically, it reports ``Yes'' or ``No'' as follows: 
if the true answer is ``Yes'', always report ``Yes''; otherwise, report ``Yes'' and ``No'' with probability $p$ and $1-p$, respectively. 
Since the reported answer ``Yes'' may come from both the true answers ``Yes'' and ``No'', the confidentiality of the user reporting ``Yes'' is not violated. 
Moreover, since the reported answer ``No'' is always come from the true answer ``No'', the data collector can estimate a distribution of true answers with higher accuracy than Warner's 
RR
\cite{Warner_JASA65}, which simply flips ``Yes'' and ''No'' with probability $p$. 
However, Mangat's 
RR 
does not provide LDP, since LDP regards both ``Yes'' and ``No'' as equally sensitive. 

There are a lot of ``non-sensitive'' data 
for other types of data. For example, locations such as hospitals and home can be sensitive, whereas visited sightseeing places, restaurants, and coffee shops are non-sensitive for many users. 
Divorced people may want to keep their divorce secret, while 
the others may not care about their marital status. 
The distinction between sensitive and non-sensitive data can also be different from user to user 
(e.g., home address is different from user to user; 
some people might want to keep secret even the sightseeing places). 
\colorB{To explain more about this issue, 
we briefly review related work on 
LDP and variants of DP.}

\smallskip
\noindent{\colorB{\textbf{Related work.}}}~~\colorB{Since Dwork \cite{Dwork_TCC06} introduced 
DP, 
a number of its variants have been studied to provide different types of privacy guarantees; 
e.g., 
LDP \cite{Duchi_FOCS13}, 
$d$-privacy \cite{Chatzikokolakis_PETS13}, 
Pufferfish privacy \cite{Kifer_TODS14}, 
dependent DP \cite{Liu_NDSS16}, 
Bayesian DP \cite{Yang_SIGMOD15}, 
mutual-information DP \cite{Cuff_CCS16},
R\'{e}nyi DP \cite{Mironov_CSF17}, and
distribution privacy \cite{Kawamoto:18:arxiv}. 
In particular, LDP \cite{Duchi_FOCS13} has been widely studied in the literature. 
For example, 
Erlingsson \emph{et al.} \cite{Erlingsson_CCS14} proposed the RAPPOR as an obfuscation mechanism providing LDP, and implemented it in Google Chrome browser. 
Kairouz \emph{et al.} \cite{Kairouz_ICML16} showed that under the $l_1$ and $l_2$ losses, 
the randomized response (generalized to multiple alphabets) and RAPPOR 
are order optimal among all LDP mechanisms in the low and high privacy regimes, respectively. 
Wang \textit{et al.} \cite{Wang_USENIX17} generalized the RAPPOR and  
a random projection-based method \cite{Bassily_STOC15}, 
and found parameters that minimize the variance of the estimate.} 

Some studies also attempted to address the non-uniformity of privacy requirements among 
records (rows) 
or among items (columns) 
in the centralized DP: 
Personalized DP \cite{Jorgensen_ICDE15}, 
Heterogeneous DP \cite{Alaggan_JPC16}, and 
One-sided DP \cite{Doudalis_arXiv17}. 
However, 
obfuscation mechanisms 
that address the non-uniformity among input values
in the ``local'' DP 
have not been studied, to our knowledge. 
In this paper, we show 
that data utility can be significantly increased by designing such local mechanisms.

\smallskip
\noindent{\textbf{Our contributions.}}~~\colorB{The goal of this paper is to design obfuscation mechanisms in the local model that achieve high 
data utility while providing DP for sensitive data. 
To achieve this, 
we introduce the notion of 
\emph{ULDP (Utility-optimized LDP)}, 
which provides a privacy guarantee equivalent to LDP only for sensitive data, 
and obfuscation mechanisms providing ULDP.} 
As a task for the data collector, we consider \emph{discrete distribution estimation} 
\cite{Agrawal_SIGMOD05,Fanti_PoPETs16,Huang_ICDE08,Kairouz_ICML16,Murakami_PoPETs18,Sei_TIFS17,Erlingsson_CCS14,Ye_ISIT17}, 
where 
personal data 
take 
discrete values. 
Our contributions are as follows:
\begin{itemize}
\item 
We first consider the setting in which all users use the same obfuscation mechanism, 
and propose 
two 
ULDP mechanisms: 
\emph{utility-optimized RR} 
and \emph{utility-optimized RAPPOR}. 
\colorB{We 
prove that when there are a lot of non-sensitive data, 
our 
mechanisms provide much higher utility than 
two state-of-the-art LDP mechanisms: 
the RR (for multiple alphabets) \cite{Kairouz_ICML16,Kairouz_JMLR16} and RAPPOR \cite{Erlingsson_CCS14}. 
We also prove that when most of the data are non-sensitive, our mechanisms even provide almost the same utility as a non-private mechanism that does not obfuscate the personal data in the low privacy regime 
where the privacy budget is $\epsilon = \ln |\calX|$ for a set $\calX$ of personal data.}
\item 
We then consider the setting in which 
the distinction between sensitive and non-sensitive data can be different from user to user, 
and 
propose 
a \emph{PUM (Personalized ULDP Mechanism) with semantic tags}. 
The PUM keeps secret what is sensitive for each user, 
while enabling the data collector to estimate a distribution using some background knowledge about the distribution conditioned on each tag (e.g., geographic distributions of homes).
We also theoretically analyze 
the data utility of 
the 
PUM.
\item We finally show that \colorB{our mechanisms 
are very promising in terms of utility} using two large-scale datasets.
\end{itemize}
\colorBB{The proofs of all statements in the paper 
are given in \arxiv{the appendices.}\conference{the extended version of the paper \cite{arxiv:ULDP}.}}

\smallskip
\noindent{\colorBB{\textbf{Cautions and limitations.}}}~~\colorBB{Although ULDP is meant to protect sensitive data, there are some cautions and limitations.}

\colorBB{First, we assume that each user sends 
a single 
datum 
and that each 
user's personal data 
is independent (see Section~\ref{sub:notations}). 
This is reasonable for a variety of personal data (e.g., locations, age, sex, marital status), where each user's data is  irrelevant to most others' one.
However, for some types of personal data (e.g., flu status \cite{Song_SIGMOD17}), 
each user can be highly influenced by others. 
There might also be a correlation between sensitive data and non-sensitive data when a user sends multiple 
data 
(on a related note, non-sensitive attributes may lead to re-identification of a record \cite{Narayanan_CACM10}). 
A possible solution to these problems would be to incorporate ULDP with \emph{Pufferfish privacy}~\cite{Kifer_TODS14,Song_SIGMOD17}, which is used to protect correlated data.
We leave this as future work (see Section~\ref{sec:discussions_extension} for 
discussions on the case of multiple 
data 
per user 
and the correlation issue).}

\colorBB{We focus on a scenario in which it is easy for users to decide what is sensitive (e.g., 
cheating experience, location of home). 
However, there is also a scenario in which users do not know what is sensitive. 
For the latter scenario, we cannot use ULDP but can simply apply LDP.
}

\colorBB{Apart from the sensitive/non-sensitive data issue, 
there are scenarios in which ULDP does not cover. 
For example, 
ULDP does not protect users who have a sensitivity about ``information disclosure'' itself (i.e., those who will not disclose any information). 
We assume that users have consented to information disclosure. 
To collect as much data as possible, 
we can provide an incentive for the information disclosure; e.g., provide a reward or point-of-interest (POI) information nearby a reported location. 
We also assume that the data collector obtains a consensus from users before providing reported data to third parties. 
Note that these cautions are common to LDP.}

\colorBB{There might also be a risk of discrimination; e.g., the data collector might discriminate against all users that provide a yes-answer, and have no qualms about small false positives. 
False positives decrease with increase in 
$\epsilon$. 
We note that LDP also suffer from this 
attack; the false positive probability is the same for both ULDP and LDP with the same $\epsilon$.
}

\colorBB{In summary, 
ULDP provides a privacy guarantee equivalent to LDP for sensitive data under the assumption of the data independence. 
We consider our work as a building-block of broader DP approaches or the basis for further development.}

\section{Preliminaries}

\subsection{Notations}
\label{sub:notations}

Let 
$\nngreals$ be the set of non-negative real numbers. 
Let 
$n$ be the number of users, 
\colorBB{$[n] = \{ 1, 2, \ldots, n\}$},
$\calX$ 
(resp.~$\calY$) 
be a finite set of personal (resp.~obfuscated) data. 
We assume continuous data are discretized into bins in advance (e.g., a location map is divided into some regions).
We use the superscript ``${(i)}$'' to represent the $i$-th user. 
Let 
$X\uid{i}$ (resp.~$Y^{(i)}$) 
be a random variable representing personal (resp.~obfuscated) data of 
the $i$-th user. 
The $i$-th user 
obfuscates her personal data 
$X\uid{i}$ 
via 
her obfuscation mechanism $\bmQ\uid{i}$, 
which maps $x \in \calX$ to $y \in \calY$ with probability $\bmQixy$, 
and sends the obfuscated data $Y\uid{i}$ to a data collector. 
Here we assume that each user sends 
a single 
\colorBB{datum}. 
We discuss the case 
\colorBB{of multiple data} 
in Section~\ref{sec:discussions_extension}. 

We divide personal data into two types: 
\emph{sensitive data} and \emph{non-sensitive data}. 
Let $\calXS \subseteq \calX$ be a set of sensitive data common to all users, and $\calXN = \calX \setminus \calXS$ be the remaining personal data.
Examples of such ``common'' sensitive data $x \in \calXS$ are the regions including public sensitive locations (e.g., hospitals) and 
obviously sensitive responses to questionnaires 
described in Section~\ref{sec:intro}\footnote{Note that these data might be sensitive for many/most users but not for all in practice (e.g., some people might not care about their cheating experience). However, we can regard these data as sensitive for all users (i.e., be on the safe side) by allowing a small loss of data utility.}.

Furthermore, let $\calXS\uid{i} \subseteq \calXN$ ($i\in [n]$) be 
a set of 
sensitive data specific to the $i$-th user (here we do not include $\calXS$ into $\calXS\uid{i}$ because $\calXS$ is protected for all users in our mechanisms). 
$\calXS\uid{i}$ is a set of personal data that is possibly non-sensitive for many users but sensitive for the $i$-th user. 
Examples of such ``user-specific'' sensitive data $x \in \calXS\uid{i}$ are the regions including private locations such as their home and workplace. (Note that the majority of 
working population can be uniquely identified from their home/workplace location pairs \cite{Golle_Pervasive09}.)

In Sections~\ref{sec:ULDP} and \ref{sec:rest_mechanism}, 
we consider the case where 
all users divide $\calX$ into the same sets of sensitive data and of non-sensitive data, 
i.e., $\calXS\uid{1} = \cdots = \calXS\uid{n} = \emptyset$, 
and use the same obfuscation mechanism $\bmQ$ (i.e., $\bmQ = \bmQ\uid{1} = \cdots = \bmQ\uid{n}$). 
In Section~\ref{sec:per_rest}, we consider a general setting 
that can deal with the user-specific sensitive data $\calXS\uid{i}$ and user-specific mechanisms $\bmQ\uid{i}$. 
We call the former case a \textit{common-mechanism scenario} 
and the latter a \textit{personalized-mechanism scenario}. 

We assume that each user's personal data $X\uid{i}$ is independently and identically distributed (i.i.d.) with a 
probability distribution 
$\bmp$, 
which generates $x \in \calX$ with probability $\bmp(x)$. 
Let $\bmX = (X\uid{1}, \cdots, X\uid{n})$ and $\bmY = (Y\uid{1}, \cdots, Y\uid{n})$ 
be tuples of all personal data and all obfuscated data, respectively. 
The data collector estimates $\bmp$ from $\bmY$ by a method 
described 
in Section~\ref{sub:distribution_estimation}. 
We denote by $\hbmp$ the estimate of $\bmp$. 
We further denote by $\calC$ the probability simplex; 
i.e., $\calC = \{\bmp | \sum_{x \in \calX} \bmp(x) = 1, \bmp(x) \geq 0 \text{ for any } x \in \calX\}$.

\arxiv{In Appendix~\ref{sec:notations}, we also show the basic notations in Table~\ref{tab:notations}.}

\subsection{Privacy Measures}
\label{sub:privacy_metrics}

LDP (Local Differential Privacy) \cite{Duchi_FOCS13} is defined as follows: 
\begin{definition} [$\epsilon$-\LDP{}] \label{def:LDP} 
Let $\epsilon \in \nngreals$. 
An obfuscation mechanism 
$\bmQ$ 
from $\calX$ to $\calY$ 
provides \emph{$\epsilon$-\LDP{}} 
if for any $x,x' \in \calX$ and any $y \in \calY$, 
\begin{align}
\bmQxy \leq e^\epsilon \bmQxdy.
\label{eq:LDP}
\end{align}
\end{definition}
LDP guarantees that an adversary who has observed $y$ cannot determine, 
for any pair of $x$ and $x'$, 
whether it is come from $x$ or $x'$ with a certain degree of confidence. 
As the privacy budget $\epsilon$ approaches to $0$, all of the data in $\calX$ become almost equally likely. 
Thus, a user's privacy is strongly protected when $\epsilon$ is small.

\subsection{Utility Measures}
\label{sub:utility_metrics}
In this paper, we use 
the $l_1$ loss (i.e., absolute error) and 
the $l_2$ loss (i.e., squared error) 
as utility 
measures. 
Let 
$l_1$ 
(resp.~$l_2^2$) 
be the $l_1$ (resp.~$l_2$) loss function, which maps
the estimate $\hbmp$ and the true distribution $\bmp$ to 
the loss; i.e., 
$l_1(\hbmp,\bmp) = \sum_{x \in \calX} |\hbmp(x) - \bmp(x)|$, 
$l_2^2(\hbmp,\bmp) = \sum_{x \in \calX} (\hbmp(x) - \bmp(x))^2$. 
It should be noted that 
$\bmX$ is generated from $\bmp$ and 
$\bmY$ is generated from $\bmX$ using 
$\bmQ\uid{1}, \cdots, \bmQ\uid{n}$. 
Since $\hbmp$ is computed from $\bmY$, 
both 
the $l_1$ and $l_2$ losses
depend on $\bmY$. 

In our theoretical analysis in Sections~\ref{sec:rest_mechanism} and \ref{sec:per_rest}, 
we take the expectation 
of the $l_1$ loss over 
all possible realizations of 
$\bmY$. 
In our experiments in Section~\ref{sec:exp}, we replace the expectation of 
the $l_1$ loss 
with the sample mean over multiple realizations of $\bmY$ and divide it by $2$ 
to evaluate 
the TV (Total Variation). 
In 
Appendix~\ref{sec:proofs_utility_l2loss}, 
we also show that the $l_2$ loss has similar results to the ones in Sections~\ref{sec:rest_mechanism} and \ref{sec:exp} by evaluating the expectation of the $l_2$ loss and the MSE (Mean Squared Error), respectively. 

\subsection{Obfuscation Mechanisms}
\label{sub:obfuscation_mechanisms}

\colorB{We describe the RR (Randomized Response) \cite{Kairouz_ICML16,Kairouz_JMLR16} and 
a generalized version of the RAPPOR \cite{Wang_USENIX17} as follows.}

\smallskip
\noindent{\textbf{Randomized response.}}~~The RR 
for $|\calX|$-ary alphabets was studied in \cite{Kairouz_ICML16,Kairouz_JMLR16}.
Its output range 
is identical to the input domain;
i.e., $\calX = \calY$. 

Formally, 
given $\epsilon \in \nngreals$, 
the 
\textit{$\epsilon$-RR} is an obfuscation mechanism that 
maps $x$ to $y$ with the probability:
\begin{align}
\bmQRR(y | x) = 
\begin{cases}
 \frac{e^\epsilon}{|\calX|+e^\epsilon-1} & \text{(if $y = x$)}\\
 \frac{1}{|\calX|+e^\epsilon-1} & \text{(otherwise)}.\\
\end{cases} 
\label{eq:RR}
\end{align}
It is easy to check by (\ref{eq:LDP}) and (\ref{eq:RR}) that $\bmQRR$ provides $\epsilon$-\LDP{}.

\smallskip
\noindent{\textbf{\colorB{Generalized} RAPPOR.}}~~The 
RAPPOR (Randomized Aggregatable Privacy-Preserving Ordinal Response) \cite{Erlingsson_CCS14} is an obfuscation mechanism implemented in Google Chrome browser. 
\colorB{Wang \textit{et al.} \cite{Wang_USENIX17} extended its simplest configuration called the basic one-time RAPPOR by generalizing two probabilities in perturbation. Here we call it the \textit{generalized RAPPOR} and describe its algorithm in detail.}

\colorB{The generalized} RAPPOR 
is an obfuscation mechanism 
with the input alphabet $\calX = \{x_1, x_2, \cdots, x_{|\calX|}\}$ and the output alphabet $\mathcal{Y} = \{0,1\}^{|\mathcal{X}|}$. 
It 
first deterministically maps $x_i \in \mathcal{X}$ to 
$e_i \in \{0,1\}^{|\mathcal{X}|}$, where $e_i$ is the $i$-th standard basis vector. 
It then 
probabilistically 
flips each bit of $e_i$ 
to obtain obfuscated data 
$y = (y_1, y_2, \cdots, y_{|\calX|}) \in \{0,1\}^{|\mathcal{X}|}$, 
where $y_i \in \{0,1\}$ is the $i$-th element of $y$. 
\colorB{Wang \textit{et al.} \cite{Wang_USENIX17} 
 compute $\epsilon$ from two parameters $\theta \in [0,1]$ (representing the probability of keeping $1$ unchanged) and $\psi \in [0,1]$ (representing the probability  of flipping $0$ into $1$).
In this paper, we compute $\psi$ from two parameters $\theta$ and $\epsilon$.}

\colorB{Specifically, given $\theta \in [0,1]$ and $\epsilon \in \nngreals$, the ($\theta,\epsilon$)\textit{-generalized RAPPOR}} 
maps $x_i$ to 
$y$ 
with the probability:
\begin{align*}
\bmQRAPPOR(y | x_i) &= \textstyle{\prod_{1 \le j \le |\calX|} \Pr(y_j | x_i),}
\end{align*}
where 
\colorB{$\Pr(y_j | x_i) = \theta$ if $i = j$ and $y_j=1$, and
$\Pr(y_j | x_i) = 1-\theta$ if $i = j$ and $y_j=0$, and
$\Pr(y_j | x_i) = \psi = \frac{\theta}{(1 - \theta) e^\epsilon + \theta}$ if $i \ne j$ and $y_j=1$, and
$\Pr(y_j | x_i) = 1- \psi$ otherwise.
The basic one-time RAPPOR \cite{Erlingsson_CCS14} is a special case of the generalized RAPPOR where $\theta = \frac{e^{\epsilon/2}}{e^{\epsilon/2} + 1}$.}
$\bmQRAPPOR$ also provides $\epsilon$-\LDP{}. 

\subsection{Distribution Estimation Methods}
\label{sub:distribution_estimation}
Here we explain 
the empirical estimation method \cite{Agrawal_SIGMOD05,Huang_ICDE08,Kairouz_ICML16} 
and the EM reconstruction method \cite{Agrawal_PODS01,Agrawal_SIGMOD05}. 
Both of them assume that the data collector knows the obfuscation mechanism $\bmQ$ used to generate $\bmY$ from $\bmX$. 

\smallskip
\noindent{\textbf{Empirical estimation method.}}~~The empirical estimation method 
\cite{Agrawal_SIGMOD05,Huang_ICDE08,Kairouz_ICML16} 
computes an empirical estimate $\hbmp$ of $\bmp$ 
using an empirical distribution $\hbmm$ of the obfuscated data $\bmY$. 
Note that $\hbmp$, $\hbmm$, and $\bmQ$ can be represented as 
an $|\calX|$-dimensional vector, $|\calY|$-dimensional vector, and $|\calX| \times |\calY|$ matrix, 
respectively. They have the following equation: 
\begin{align}
\hbmp \bmQ = \hbmm.
\label{eq:empirical_estimate}
\end{align}
The empirical estimation method computes $\hbmp$ by solving (\ref{eq:empirical_estimate}). 

\colorBB{Let $\bmm$ be the true distribution of obfuscated data; i.e., $\bmm = \bmp \bmQ$.} 
As the number of users $n$ increases, 
the empirical distribution $\hbmm$ converges to 
$\bmm$. 
Therefore, the empirical estimate $\hbmp$ also converges to $\bmp$. 
\colorB{However, when the number of users $n$ is small, many elements in $\hbmp$ can be negative. 
To address this issue, 
the studies in \cite{Erlingsson_CCS14,Wang_USENIX17} 
kept only estimates above a significance threshold determined via Bonferroni correction, and
discarded the remaining estimates.}

\smallskip
\noindent{\textbf{EM reconstruction method.}}~~The 
EM (Expectation-Maximization) reconstruction method  \cite{Agrawal_PODS01,Agrawal_SIGMOD05} 
(also called the iterative Bayesian technique \cite{Agrawal_SIGMOD05}) 
regards $\bmX$ as a hidden variable 
and estimates $\bmp$ from $\bmY$ using the EM 
algorithm \cite{learning} 
(for details of the algorithm, see \cite{Agrawal_PODS01,Agrawal_SIGMOD05}). 
Let $\hbmp_{EM}$ be an estimate of $\bmp$ by the EM reconstruction method. 
The feature of this algorithm is that 
$\hbmp_{EM}$ is equal to the maximum likelihood estimate in the probability simplex $\calC$ (see \cite{Agrawal_PODS01} for the proof). 
Since this property holds irrespective of the number of users $n$, the elements in $\hbmp_{EM}$ are always non-negative. 

\smallskip
\colorB{In this paper, 
our theoretical analysis uses the empirical estimation method 
for simplicity, 
while 
our experiments use 
the empirical estimation method, 
the one with the significance threshold, and 
the EM reconstruction method.} 

\section{Utility-Optimized LDP (ULDP)}
\label{sec:ULDP}
In this section, we focus on the common-mechanism scenario 
(outlined in Section~\ref{sub:notations}) 
and introduce 
ULDP (Utility-optimized Local Differential Privacy), 
which provides a privacy guarantee equivalent to $\epsilon$-\LDP{} only for sensitive data. 
Section~\ref{sub:ULDP_definition} provides the definition of ULDP. 
Section~\ref{sub:ULDP_theoretical} shows 
some theoretical properties of ULDP.

\subsection{Definition}
\label{sub:ULDP_definition}

Figure~\ref{fig:overview_rest} 
shows an overview of 
ULDP. 
An obfuscation mechanism 
providing ULDP, 
which we call the utility-optimized mechanism, 
divides obfuscated data into \textit{protected data} and \textit{invertible data}. 
Let $\calYP$ be a set of protected data, and $\calYI = \calY \setminus \calYP$ be a set of invertible data. 

\begin{figure}
\centering
\includegraphics[width=0.72\linewidth]{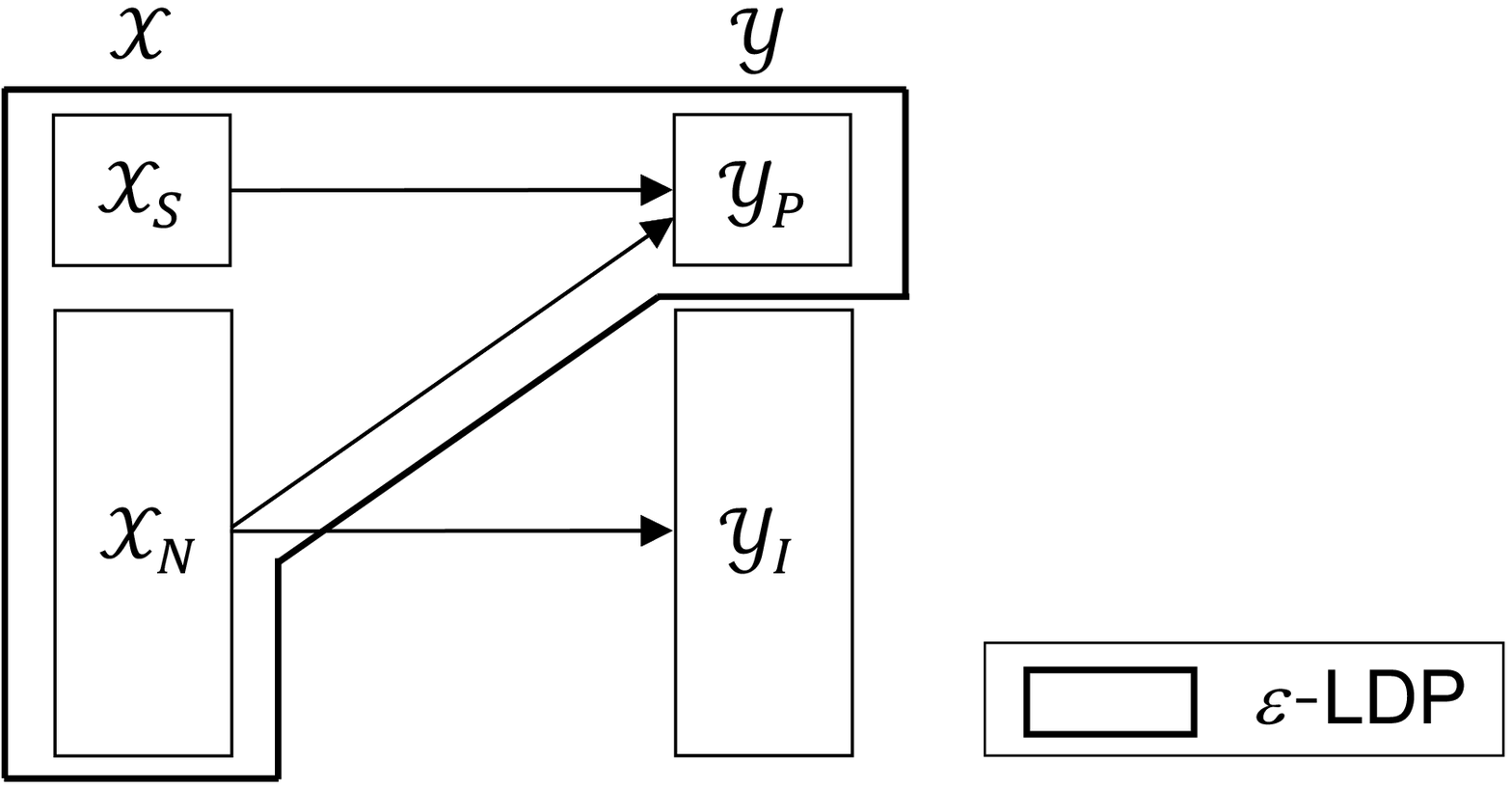}
\vspace{-3mm}
\caption{Overview of ULDP. 
It has no transitions from $\calXS$ to $\calYI$, and 
every output in $\calYI$ reveals the corresponding input in $\calXN$. 
It also provides $\epsilon$-\LDP{} for $\calYP$. 
}
\label{fig:overview_rest}
\end{figure}

The feature of 
the utility-optimized mechanism 
is that it maps 
sensitive data $x \in \calXS$ to only protected data $y \in \calYP$. 
In other words, 
\emph{it restricts the output set, given the input $x \in \calXS$, to $\calYP$}. 
Then it 
provides $\epsilon$-\LDP{} for $\calYP$; 
i.e., $\bmQxy \leq e^\epsilon \bmQxdy$ 
for any $x,x' \in \calX$ and any $y \in \calYP$. 
By this property, 
a privacy guarantee equivalent to $\epsilon$-\LDP{} is provided for any sensitive data $x \in \calXS$, 
since the output set corresponding to $\calXS$ is restricted to $\calYP$. 
In addition, 
every output in $\calYI$ reveals the corresponding input in $\calXN$ 
(as in Mangat's randomized response \cite{Mangat_JRSS94}) 
to optimize the estimation accuracy.

We now formally define ULDP and the utility-optimized mechanism:
\begin{definition} [$(\calXS,\calYP,\epsilon)$-\ULDP{}] \label{def:rest} 
Given $\calXS \subseteq \calX$, $\calYP \subseteq \calY$, 
and 
$\epsilon \in \nngreals$, 
an obfuscation mechanism $\bmQ$ 
from $\calX$ to $\calY$ 
provides $(\calXS,\calYP,\epsilon)$-\ULDP{} 
if it 
satisfies the following properties:
\begin{enumerate}
\item For any $y \in \calYI$, there exists 
an $x \in \calXN$ such that 
\begin{align}
\bmQxy > 0 ~~\text{and}~~\bmQxdy = 0~~\text{for any } x'\neq x.
\label{eq:xs_ys_0}
\end{align}
\item For any $x,x' \in \calX$ and any $y \in \calYP$,
\begin{align}
\bmQxy \leq e^\epsilon \bmQxdy.
\label{eq:epsilon_OSLDP_whole}
\end{align}
\end{enumerate}
We refer to an obfuscation mechanism $\bmQ$ providing 
$(\calXS, \calYP,$ $\epsilon)$-\ULDP{} 
as the $(\calXS,\calYP,\epsilon)$-utility-optimized mechanism. 
\end{definition}

\noindent{\textbf{Example.}}~~For an intuitive understanding of Definition~\ref{def:rest}, 
we show that Mangat's randomized response \cite{Mangat_JRSS94} provides $(\calXS,\calYP,\epsilon)$-\ULDP{}. 
As described in Section~\ref{sec:intro}, 
this mechanism considers binary alphabets (i.e., $\calX = \calY = \{0,1\}$), and regards the value $1$ as sensitive (i.e., $\calXS = \calYP =\{1\}$). 
If the input value is $1$, it always reports $1$ as output. 
Otherwise, 
it reports $1$ and $0$ with probability $p$ and $1-p$, respectively. 
Obviously, this mechanism does not provide $\epsilon$-\LDP{} for any $\epsilon \in [0,\infty)$.
However, it provides $(\calXS, \calYP, \ln \frac{1}{p})$-\ULDP{}.

\smallskip
$(\calXS,\calYP,\epsilon)$-\ULDP{} provides a privacy guarantee equivalent to $\epsilon$-\LDP{} for any sensitive data $x \in \calXS$, as explained above. 
On the other hand, 
no privacy guarantees are provided for non-sensitive data $x \in \calXN$ 
because every output in $\calYI$ reveals the corresponding input in $\calXN$.
However, it does not matter since non-sensitive data need not be protected. 
Protecting only minimum necessary data is the key to 
achieving locally private distribution estimation with high data utility. 

We can apply any $\epsilon$-\LDP{} mechanism to the sensitive data in $\calXS$ to 
provide $(\calXS,\calYP,\epsilon)$-\ULDP{} as a whole. 
In Sections~\ref{sub:restRR} and \ref{sub:restRAPPOR}, we 
propose a utility-optimized RR 
(Randomized Response) 
and utility-optimized RAPPOR, which 
apply the $\epsilon$-RR and $\epsilon$-RAPPOR, respectively, 
to the sensitive data $\calXS$. 

\conference{In Appendix~\ref{sec:ULDP_OSLDP}, 
we also consider OSLDP (One-sided LDP), a local model version of OSDP introduced in a preprint \cite{Doudalis_arXiv17}, and explain the reason for using ULDP in this paper.}

\colorBB{It might be better to generalize ULDP so that different levels of $\epsilon$ can be assigned to different sensitive data.
We leave introducing such granularity as future work.}

\smallskip
\noindent{\textbf{Remark.}}~~It should also be noted that 
the data collector needs to know 
$\bmQ$ to estimate $\bmp$ from $\bmY$ (as described in Section~\ref{sub:distribution_estimation}), and 
that the $(\calXS,\calYP,\epsilon)$-utility-optimized mechanism $\bmQ$ itself 
includes 
the information on  \emph{what is sensitive for users} 
(i.e., 
the data collector learns
whether each $x \in \calX$ belongs to $\calXS$ or not 
by checking the values of $\bmQ(y|x)$ for all $y \in \calY$). 
This does not matter in the common-mechanism scenario, 
since the set $\calXS$ of sensitive data is common to all users 
(e.g., public hospitals). 
However, in the personalized-mechanism scenario, 
the $(\calXS \cup \calXS\uid{i},\calYP,\epsilon)$-utility-optimized mechanism $\bmQ\uid{i}$, 
which expands the set $\calXS$ of personal data to $\calXS \cup \calXS\uid{i}$, 
includes 
the information on 
\emph{what is sensitive for the $i$-th user}. 
Therefore, 
the data collector learns whether each $x \in \calXN$ belongs to $\calXS\uid{i}$ or not by 
checking the values of $\bmQ\uid{i}(y|x)$ for all $y \in \calY$, 
despite the fact that the $i$-th user wants to hide her user-specific sensitive data $\calXS\uid{i}$  (e.g., home, workplace). 
We address this issue in Section~\ref{sec:per_rest}. 

\subsection{Basic Properties of ULDP}
\label{sub:ULDP_theoretical}
Previous work showed some basic properties of differential privacy (or its variant), such as compositionality~\cite{DP} 
and immunity to post-processing~\cite{DP}. 
We briefly explain theoretical properties of ULDP including 
the ones above.

\smallskip
\noindent{\textbf{Sequential composition.}}~~ULDP 
is preserved under adaptive sequential composition 
when the composed obfuscation mechanism maps sensitive data to pairs of protected data. 
Specifically, 
consider two mechanisms 
$\bmQ_0$ from $\calX$ to $\calY_0$ and 
$\bmQ_1$ from $\calX$ to $\calY_1$ such that 
$\bmQ_0$ (resp.~$\bmQ_1$) maps sensitive data $x\in\calXS$ to protected data $y_0\in\calY_{0P}$ (resp.~$y_1\in\calY_{1P}$). 
Then the sequential composition of $\bmQ_0$ and $\bmQ_1$ maps sensitive data $x\in\calXS$ to pairs $(y_0, y_1)$ of protected data ranging over:
\begin{align*}
(\calY_0\times\calY_1)_P &= 
 \left\{ (y_0,y_1)\in\calY_0\times\calY_1
 \mid y_0\in\calY_{0P} \mbox{ and } y_1\in\calY_{1P}
 \right\}. 
\end{align*}
Then we obtain the following compositionality.

\begin{restatable}[Sequential composition]{prop}{SequentialComposition}
\label{prop:composition_main}
Let $\varepsilon_0,\varepsilon_1 \ge 0$.
If $\bmQ_0$ provides $(\calXS,\calY_{0P},\varepsilon_0)$-\ULDP{} and 
$\bmQ_1(y_0)$ provides $(\calXS,$ $\calY_{1P},\varepsilon_1)$-\ULDP{} 
for each $y_0\in\calY_0$, then the sequential composition of $\bmQ_0$ and $\bmQ_1$ provides $(\calXS,(\calY_0\times\calY_1)_P,\varepsilon_0+\varepsilon_1)$-\ULDP{}.
\end{restatable}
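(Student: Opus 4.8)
The plan is to verify the two defining properties of $(\calXS,(\calY_0\times\calY_1)_P,\varepsilon_0+\varepsilon_1)$-\ULDP{} directly from Definition~\ref{def:rest}, exploiting the factorized form of the adaptive composition. Writing the composed mechanism as
\begin{align*}
\bmQ(y_0,y_1 | x) = \bmQ_0(y_0 | x)\,\bmQ_1(y_0)(y_1 | x),
\end{align*}
everything reduces to controlling each factor separately, with the first mechanism's output $y_0$ selecting the second mechanism $\bmQ_1(y_0)$.

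For the $\epsilon$-\LDP{} condition (property~2 of Definition~\ref{def:rest}), I would take an arbitrary protected pair $(y_0,y_1)\in(\calY_0\times\calY_1)_P$, so that $y_0\in\calY_{0P}$ and $y_1\in\calY_{1P}$, together with any $x,x'\in\calX$. Since $\bmQ_0$ is $(\calXS,\calY_{0P},\varepsilon_0)$-\ULDP{} and $y_0$ is protected, the bound $\bmQ_0(y_0 | x)\le e^{\varepsilon_0}\bmQ_0(y_0 | x')$ holds; since $\bmQ_1(y_0)$ is $(\calXS,\calY_{1P},\varepsilon_1)$-\ULDP{} and $y_1$ is protected, likewise $\bmQ_1(y_0)(y_1 | x)\le e^{\varepsilon_1}\bmQ_1(y_0)(y_1 | x')$. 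Multiplying these two inequalities yields $\bmQ(y_0,y_1 | x)\le e^{\varepsilon_0+\varepsilon_1}\bmQ(y_0,y_1 | x')$, exactly the required bound with budget $\varepsilon_0+\varepsilon_1$. This is the routine ``composition'' step and I expect no difficulty here.

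The invertibility condition (property~1) is the part that needs care. A pair is invertible precisely when $y_0\in\calY_{0I}$ or $y_1\in\calY_{1I}$. Fix any such pair that occurs with positive probability for some input. If $y_0\in\calY_{0I}$, property~1 for $\bmQ_0$ supplies a unique $x^\ast\in\calXN$ with $\bmQ_0(y_0 | x^\ast)>0$ and $\bmQ_0(y_0 | x')=0$ for every $x'\neq x^\ast$; the first factor then forces $\bmQ(y_0,y_1 | x')=0$ for all $x'\neq x^\ast$, so $x^\ast$ is the unique preimage and it lies in $\calXN$. If instead $y_0\in\calY_{0P}$ but $y_1\in\calY_{1I}$, I apply property~1 for $\bmQ_1(y_0)$ to obtain a unique $x^{\ast\ast}\in\calXN$ for which the second factor vanishes at every $x'\neq x^{\ast\ast}$, and the same reasoning makes $x^{\ast\ast}$ the unique non-sensitive preimage. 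In both subcases the positivity of $\bmQ(y_0,y_1 | \cdot)$ at the singled-out input follows from reachability, since only that input can carry positive mass.

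The main obstacle is the bookkeeping in property~1: I must argue that the subcases are mutually consistent (when both components are invertible, the preimages forced by $\bmQ_0$ and by $\bmQ_1(y_0)$ must coincide, for otherwise the pair would receive zero probability from every input and hence be unreachable, contradicting our choice) and that no sensitive input ever reaches an invertible pair. The latter follows because property~1, applied to each component mechanism, already forces every $x\in\calXS$ to have zero probability on each invertible component, so the corresponding factor annihilates $\bmQ(y_0,y_1 | x)$. Once these consistency points are settled, every reachable invertible pair has a unique preimage lying in $\calXN$, which completes the verification of both properties.
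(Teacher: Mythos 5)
Your proof is correct and follows essentially the same route as the paper's: factor the composed mechanism as $\bmQ((y_0,y_1)\,|\,x) = \bmQ_0(y_0\,|\,x)\,\bmQ_1(y_1\,|\,(y_0,x))$, multiply the two component bounds on protected pairs to get condition (2) with budget $\varepsilon_0+\varepsilon_1$, and derive condition (1) from the invertibility of whichever component lies in $\calY_{0I}$ or $\calY_{1I}$. If anything, you are more careful than the paper's proof, which asserts positivity of the composed probability at the forced preimage without remarking that a pair whose two invertible components force conflicting preimages (or whose forced preimage is annihilated by the other factor) is simply unreachable — a corner case you handle explicitly by restricting attention to pairs occurring with positive probability.
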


For example, 
if we apply an obfuscation mechanism providing $(\calXS,\calYP,\epsilon)$-\ULDP{} for $t$ times, 
then we obtain 
\arxiv{$(\calXS,\allowbreak(\calYP)^t,\epsilon t)$-\ULDP{}}\conference{$(\calXS,(\calYP)^t,\epsilon t)$-\ULDP{}} 
in total (this is derived by repeatedly using Proposition~\ref{prop:composition_main}).

\smallskip
\noindent{\textbf{Post-processing.}}~~ULDP 
is immune to the post-processing by a randomized algorithm 
\colorBB{that \textit{preserves data types}: protected data or invertible data.}
\colorBB{Specifically, if a mechanism $\bmQ_0$ provides $(\calXS,\calYP,\varepsilon)$-\ULDP{} and a randomized algorithm $\bmQ_1$ maps protected data over $\calYP$ (resp. invertible data) to protected data over $\calZP$ (resp. invertible data), then the composite function $\bmQ_1\circ\bmQ_0$ provides $(\calXS,\calZP,\varepsilon)$-\ULDP{}.} 

\colorBB{Note that 
$\bmQ_1$ needs to preserve data types for utility; i.e., to make all $y \in \calYI$ invertible 
(as in Definition~\ref{def:rest}) 
after post-processing.
The DP guarantee for $y \in \calYP$ is preserved by any post-processing algorithm. 
See Appendix~\ref{sub:post-process} for details.}

\smallskip
\noindent{\colorB{\textbf{Compatibility with LDP.}}}~~\colorBB{Assume that data collectors A and B adopt 
a mechanism 
providing 
\ULDP{} 
and 
a mechanism 
providing 
\LDP{}, respectively. 
In this case, 
all protected data in the data collector A 
can be 
combined with 
all obfuscated data in the data collector B (i.e., data integration) 
to perform data analysis under LDP. 
See Appendix~\ref{sub:compatibility} for details.}

\smallskip
\noindent{\textbf{Lower bounds on the $l_1$ and $l_2$ losses.}}~~We present lower bounds on the $l_1$ and $l_2$ losses of any \ULDP{} mechanism by using the fact that \ULDP{} provides 
(\ref{eq:epsilon_OSLDP_whole}) for any $x,x' \in \calXS$ and any $y \in \calYP$. 
Specifically, Duchi \textit{et al.} \cite{Duchi_arXiv13} showed that for $\epsilon \in [0,1]$, the lower bounds on the $l_1$ and $l_2$ losses (minimax rates) of any $\epsilon$-\LDP{} mechanism can be expressed as $\Theta(\frac{|\calX|}{\sqrt{n \epsilon^2}})$ and $\Theta(\frac{|\calX|}{n \epsilon^2})$, respectively. 
By directly applying these bounds to $\calXS$ and $\calYP$, 
the lower bounds on the $l_1$ and $l_2$ losses of any $(\calXS,\calYP,\epsilon)$-\ULDP{} mechanisms for $\epsilon \in [0,1]$ 
can be expressed as $\Theta(\frac{|\calXS|}{\sqrt{n \epsilon^2}})$ and $\Theta(\frac{|\calXS|}{n \epsilon^2})$, respectively. 
In Section~\ref{sub:utility_analysis}, 
we show that our utility-optimized RAPPOR achieves these lower bounds 
when $\epsilon$ is close to $0$ (i.e., high privacy regime).

\section{Utility-Optimized Mechanisms}
\label{sec:rest_mechanism}
In this section, we focus on the common-mechanism scenario 
and propose 
the \emph{utility-optimized RR (Randomized Response)} 
and \emph{utility-optimized RAPPOR} (Sections~\ref{sub:restRR} and \ref{sub:restRAPPOR}). 
We then analyze the data utility of these mechanisms (Section~\ref{sub:utility_analysis}). 

\subsection{Utility-Optimized Randomized Response}
\label{sub:restRR}
We propose the utility-optimized RR,  
which is a generalization of Mangat's randomized response \cite{Mangat_JRSS94} to $|\calX|$-ary alphabets with $|\calXS|$ sensitive symbols. 
As with the RR, the output range of the utility-optimized RR is identical to the input domain; i.e., $\calX = \calY$. 
In addition, we divide the output set 
in the same way as the input set; 
i.e., $\calXS = \calYP$, $\calXN = \calYI$. 

Figure~\ref{fig:rest_RR} shows the utility-optimized RR with $\calXS = \calYP = \{x_1, x_2, \allowbreak x_3\}$ and 
$\calXN = \calYI = \{x_4, x_5, x_6\}$. 
The utility-optimized RR applies the $\epsilon$-RR to 
$\calXS$.
It 
maps $x \in \calXN$ to $y \in \calYP$ ($=\calXS$) with the probability $\bmQxy$ so that (\ref{eq:epsilon_OSLDP_whole}) is satisfied, and maps $x \in \calXN$ to itself with the remaining probability. 
Formally, we define the utility-optimized RR 
\colorBB{(uRR)} as follows:

\begin{figure}
\centering
\includegraphics[width=0.82\linewidth]{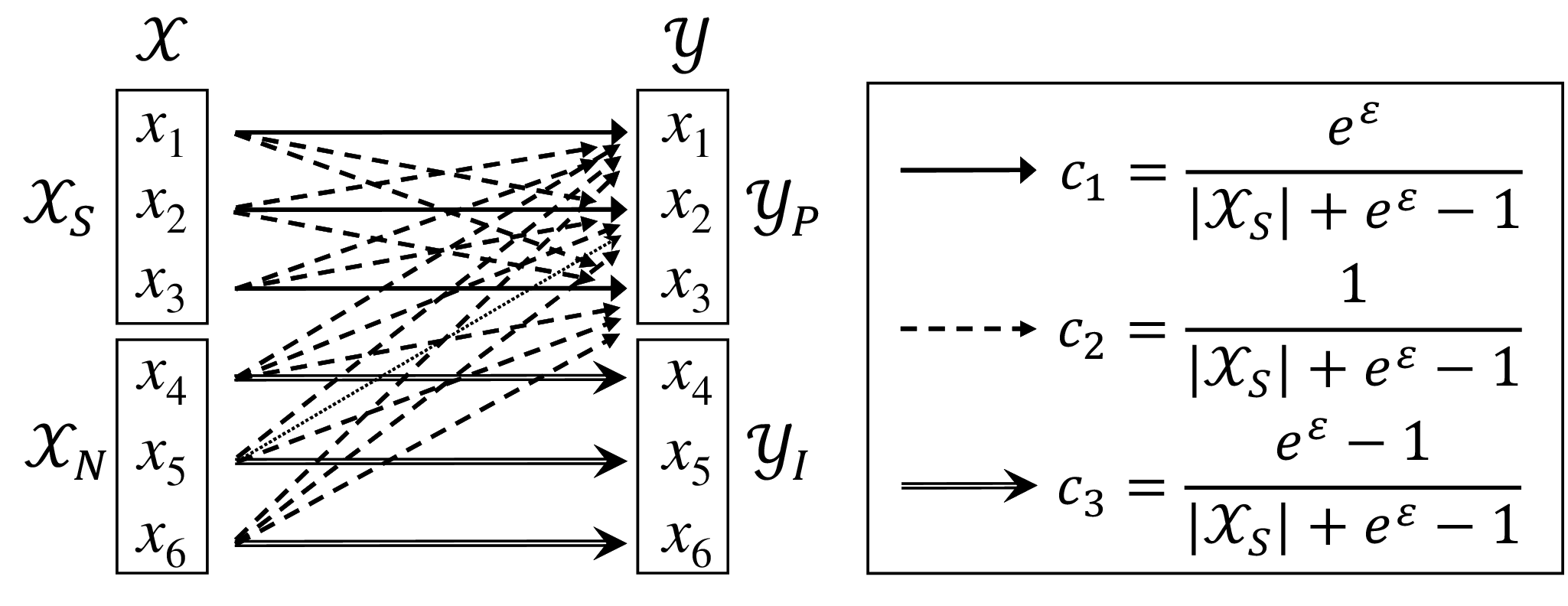}
\vspace{-5mm}
\caption{Utility-optimized RR 
in the case where $\calXS = \calYP =\{x_1, x_2, x_3\}$ and $\calXN = \calYI =\{x_4, x_5, x_6\}$.} 
\label{fig:rest_RR}
\end{figure}

\begin{definition}[$(\calXS,\epsilon)$-utility-optimized RR]\label{def:restRR}
Let $\calXS \subseteq \calX$ and $\epsilon \in \nngreals$. 
Let $c_1 = \frac{e^{\epsilon}}{|\calXS| + e^{\epsilon} - 1}$,
$c_2 = \frac{1}{|\calXS| + e^{\epsilon} - 1}$, and
\colorB{$c_3 = 1 - |\calXS|c_2 
= \frac{e^\epsilon - 1}{|\calXS| + e^{\epsilon} - 1}$}.
Then the $(\calXS,\epsilon)$-utility-optimized RR 
\colorBB{(uRR)} 
is an obfuscation mechanism that 
maps $x \in \calX$ to $y \in \calY$ ($=\calX$) 
with the probability $\bmQrestRR(y | x)$ defined as follows:

\begin{align}
\colorBB{\bmQrestRR(y | x) = 
\begin{cases}
 c_1 & \text{(if $x \in \calXS$, $y = x$)}\\
 c_2 & \text{(if $x \in \calXS$, $y \in \calXS\setminus\{x\}$)}\\
 c_2 & \text{(if $x \in \calXN$, $y \in \calXS$)}\\
 c_3 & \text{(if $x \in \calXN$, $y = x$)}\\
 0 & \text{(otherwise)}.
\end{cases}} 
\label{eq:restRR_XS}
\end{align}

\end{definition}

\begin{restatable}{prop}{ULDPrestRR}
\label{prop:restRR_ULDP} 
The 
$(\calXS,\epsilon)$-\colorBB{uRR} 
provides $(\calXS,\calXS,$ $\epsilon)$-\ULDP{}.
\end{restatable}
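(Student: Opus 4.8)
The plan is to verify directly the two defining conditions of $(\calXS,\calXS,\epsilon)$-\ULDP{} from Definition~\ref{def:rest}, using the explicit transition probabilities in (\ref{eq:restRR_XS}) together with the fact that for the uRR we have fixed $\calYP = \calXS$ and $\calYI = \calXN$. No limit argument or estimation machinery is needed; everything reduces to reading off the piecewise definition.

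First I would check the invertibility condition (\ref{eq:xs_ys_0}). Fix any $y \in \calYI = \calXN$ and take the witness $x = y$. Since $y \in \calXN$, the row $x=y$ falls in the fourth case of (\ref{eq:restRR_XS}), so $\bmQrestRR(y|y) = c_3 = \frac{e^\epsilon - 1}{|\calXS| + e^\epsilon - 1} > 0$. It then remains to argue that every other entry in this column vanishes: for $x' \in \calXS$ the output always lies in $\calXS$ (the $c_1$ and $c_2$ cases), hence $\bmQrestRR(y|x') = 0$ because $y \in \calXN$; and for $x' \in \calXN$ with $x' \neq y$ the pair $(x',y)$ matches none of the nonzero cases and so falls into the ``otherwise'' branch, giving $\bmQrestRR(y|x') = 0$. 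This yields (\ref{eq:xs_ys_0}) with $x = y$.

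Next I would check the LDP-on-protected-data condition (\ref{eq:epsilon_OSLDP_whole}) over $\calYP = \calXS$. The key observation is that for a fixed $y \in \calXS$ the entry $\bmQrestRR(y|x)$ takes only two values as $x$ ranges over $\calX$: it equals $c_1$ when $x = y$, and $c_2$ in every other case (both for $x \in \calXS \setminus \{y\}$ and for $x \in \calXN$); in particular it is never $0$. Thus $\max_x \bmQrestRR(y|x) = c_1$ and $\min_x \bmQrestRR(y|x) = c_2$, and a one-line computation gives $\frac{c_1}{c_2} = e^\epsilon$. Hence for all $x,x' \in \calX$, $\bmQrestRR(y|x) \le c_1 = e^\epsilon c_2 \le e^\epsilon\,\bmQrestRR(y|x')$, which is exactly (\ref{eq:epsilon_OSLDP_whole}).

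I do not expect a genuine obstacle: the argument is a finite case analysis driven entirely by the definition, and the only substantive step is the identity $c_1 = e^\epsilon c_2$, which holds by construction since $c_1,c_2$ are precisely the $\epsilon$-RR probabilities on the alphabet $\calXS$. The one point worth flagging is the boundary case $\epsilon = 0$, where $c_3 = 0$, so no mass reaches $\calYI$ and the invertibility condition (\ref{eq:xs_ys_0}) degenerates; I would therefore state the result for $\epsilon > 0$ (equivalently, note that when $\calXN \neq \emptyset$ the invertible outputs carry positive probability only for $\epsilon > 0$).
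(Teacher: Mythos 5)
Your proposal is correct and takes essentially the same route as the paper's proof, which simply observes that (\ref{eq:xs_ys_0}) follows from the piecewise definition (\ref{eq:restRR_XS}) and that (\ref{eq:epsilon_OSLDP_whole}) follows from the identity $c_1/c_2 = e^\epsilon$; you merely spell out the finite case analysis that the paper leaves implicit. Your flag about the boundary case $\epsilon = 0$ (where $c_3 = 0$, so no mass reaches $\calYI$ and condition (\ref{eq:xs_ys_0}) degenerates) is a legitimate corner the paper's one-line proof does not address, but it does not change the argument for $\epsilon > 0$.
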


\subsection{Utility-Optimized RAPPOR}
\label{sub:restRAPPOR}
Next, we 
propose the utility-optimized RAPPOR 
with the input alphabet $\calX = \{x_1, x_2, \cdots, x_{|\calX|}\}$ and the output alphabet $\mathcal{Y} = \{0,1\}^{|\calX|}$. 
Without loss of generality, 
we assume 
that $x_1, \cdots, x_{|\calXS|}$ 
are 
sensitive 
and $x_{|\calXS|+1}, \cdots, x_{|\calX|}$ 
are 
non-sensitive; 
i.e., 
$\calXS = \{x_1, \cdots, x_{|\calXS|}\}$, 
\arxiv{$\calXN = \{x_{|\calXS|+1}, \allowbreak \cdots, x_{|\calX|}\}$.}\conference{$\calXN = \{x_{|\calXS|+1}, \cdots, x_{|\calX|}\}$.} 

Figure~\ref{fig:rest_RAP} shows the utility-optimized RAPPOR with $\calXS =\{x_1, \cdots, \allowbreak x_4\}$ and $\calXN =\{x_5, \cdots, x_{10}\}$. 
The utility-optimized RAPPOR first deterministically 
maps $x_i \in \calX$ to the $i$-th standard basis vector $e_i$. 
It should be noted that 
if $x_i$ is sensitive data (i.e., $x_i \in \calXS$), 
then the last $|\calXN|$ elements in $e_i$ are always zero 
(as shown in the upper-left panel of Figure~\ref{fig:rest_RAP}). 
Based on this fact, the utility-optimized RAPPOR 
regards obfuscated data $y = (y_1, y_2, \ldots, y_{|\calX|}) \in \{0,1\}^{|\calX|}$ such that 
$y_{|\calXS|+1} = \cdots = y_{|\calX|} = 0$ as 
protected 
data; i.e., 
\begin{align}
\calYP &= \{ (y_1, \ldots, y_{|\calXS|}, 0, \cdots, 0) |  y_1, \ldots, y_{|\calXS|} \in \{0,1\} \}. \label{eq:rRAP_YS}
\end{align}
Then it applies 
the ($\theta,\epsilon$)-generalized RAPPOR 
to 
$\calXS$, 
and maps $x \in \calXN$ to $y \in \calYP$ (as shown in the lower-left panel of Figure~\ref{fig:rest_RAP}) with the probability $\bmQxy$ so that (\ref{eq:epsilon_OSLDP_whole}) is satisfied. 
We 
formally define the utility-optimized RAPPOR \colorBB{(uRAP)}:

\begin{figure}
\centering
\includegraphics[width=0.9\linewidth]{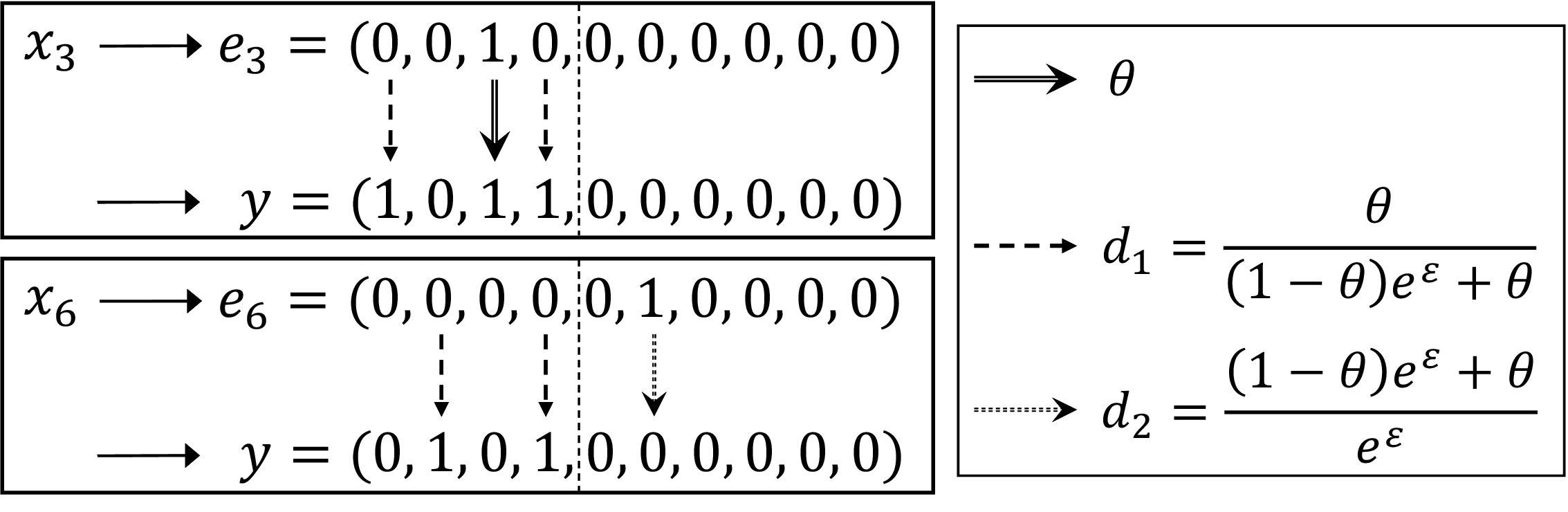}
\vspace{-4mm}
\caption{Utility-optimized RAPPOR in the case where $\calXS =\{x_1, \cdots, x_4\}$ and $\calXN =\{x_5, \cdots, x_{10}\}$.} 
\label{fig:rest_RAP}
\end{figure}

\begin{definition}[$(\calXS,\colorB{\theta,}\epsilon)$-utility-optimized RAPPOR]\label{def:restRAPPOR}
Let $\calXS \subseteq \calX$, 
\colorB{$\theta \in [0,1]$,} 
and $\epsilon \in \nngreals$. 
Let 
\colorB{$d_1 = \frac{\theta}{(1 - \theta) e^\epsilon + \theta}$, 
$d_2 = \frac{(1 - \theta) e^\epsilon + \theta}{e^\epsilon}$}. 
Then the $(\calXS,\colorB{\theta,}\epsilon)$-utility-optimized RAPPOR \colorBB{(uRAP)} is an obfuscation mechanism that 
maps $x_i \in \calX$ to $y \in \calY = \{0,1\}^{|\calX|}$ 
with the probability $\bmQrestRAPPOR(y | x)$ given by:
\begin{align}
\bmQrestRAPPOR(y | x_i) &= \textstyle{\prod_{1 \le j \le |\calX|} \Pr(y_j | x_i),}
\label{eq:restRAPPOR}
\end{align}
where $\Pr(y_j | x_i)$ is written as follows:
\begin{enumerate}
\renewcommand{\labelenumi}{(\roman{enumi})}
\item if $1 \leq j \leq |\calXS|$:
\begin{align}
\Pr(y_j | x_i) &= 
\colorB{
\begin{cases}
 1 - \theta & \text{(if $i = j$, $y_j = 0$)}\\
 \theta & \text{(if $i = j$, $y_j = 1$)}\\
 1 - d_1 & \text{(if $i \neq j$, $y_j = 0$)}\\
 d_1 & \text{(if $i \neq j$, $y_j = 1$)}.
\end{cases} 
 }
\label{eq:restRAPPOR_calXS}
\end{align}
\item if $|\calXS|+1 \leq j \leq |\calX|$:
\begin{align}
\Pr(y_j | x_i) &= 
\begin{cases}
 d_2 & \text{(if $i = j$, $y_j = 0$)}\\
 1 - d_2 & \text{(if $i = j$, $y_j = 1$)}\\
 1 & \text{(if $i \neq j$, $y_j = 0$)}\\
 0 & \text{(if $i \neq j$, $y_j = 1$)}.\\
\end{cases} 
\label{eq:restRAPPOR_calXN}
\end{align}
\end{enumerate}
\end{definition}

\begin{restatable}{prop}{ULDPrestRAP}
\label{prop:restRAPPOR_ULDP} 
The $(\calXS,\theta,\epsilon)$-\colorBB{uRAP} 
provides $(\calXS,\calYP,\epsilon)$-\ULDP{}, where 
$\calYP$ is given by (\ref{eq:rRAP_YS}).
\end{restatable}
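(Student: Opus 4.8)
The plan is to verify directly the two conditions in the definition of $(\calXS,\calYP,\epsilon)$-\ULDP{} (Definition~\ref{def:rest}) for the mechanism $\bmQrestRAPPOR$ with $\calYP$ as in (\ref{eq:rRAP_YS}): first the invertibility condition (\ref{eq:xs_ys_0}) for every invertible output $y\in\calYI$, and then the $\epsilon$-\LDP{} bound (\ref{eq:epsilon_OSLDP_whole}) for every protected output $y\in\calYP$. Both follow by exploiting the coordinate-product form (\ref{eq:restRAPPOR}) together with the two per-coordinate tables (\ref{eq:restRAPPOR_calXS}) and (\ref{eq:restRAPPOR_calXN}).

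For the invertibility condition I would use the structure of (\ref{eq:restRAPPOR_calXN}). The first observation is that for a sensitive input $x_i$ (so $i\le|\calXS|$), every non-sensitive coordinate $j>|\calXS|$ has $i\ne j$, hence $\Pr(y_j=1\mid x_i)=0$; consequently $\bmQrestRAPPOR(y\mid x)=0$ for all $x\in\calXS$ whenever $y$ has a $1$ among its last $|\calXN|$ coordinates, i.e.\ whenever $y\in\calYI$. The second observation is that a non-sensitive input $x_i$ ($i>|\calXS|$) can flip only coordinate $i$ of the non-sensitive block to $1$, since every other non-sensitive coordinate is mapped deterministically to $0$. Therefore a realizable $y\in\calYI$ carries exactly one $1$ in its last $|\calXN|$ coordinates, say at position $k$, and $x_k\in\calXN$ is then the unique input with $\bmQrestRAPPOR(y\mid x_k)>0$ (positivity of the remaining bit-probabilities follows from $0<d_1<1$ and $0<d_2<1$ for $\theta\in(0,1)$, $\epsilon>0$), which is exactly (\ref{eq:xs_ys_0}). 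One subtlety to flag: an output with two or more $1$s in the non-sensitive block has probability $0$ under every input, so it never lies in the range of $\bmQrestRAPPOR$; condition (\ref{eq:xs_ys_0}) is therefore read over the support of the mechanism (equivalently, we may take $\calY$ to be its actual range).

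For the \LDP{} bound I would fix $y\in\calYP$ (so its last $|\calXN|$ bits are $0$) and two inputs $x_a,x_b$, and factor the likelihood ratio coordinate by coordinate. The key reduction is that any coordinate $j\notin\{a,b\}$ contributes a factor $1$: for a sensitive coordinate both inputs use the ``$i\ne j$'' row of (\ref{eq:restRAPPOR_calXS}), and for a non-sensitive coordinate $y_j=0$ together with the ``$i\ne j$'' row of (\ref{eq:restRAPPOR_calXN}) forces probability $1$ for both. Hence
\begin{align*}
\frac{\bmQrestRAPPOR(y\mid x_a)}{\bmQrestRAPPOR(y\mid x_b)} = \frac{R_a}{R_b},
\end{align*}
where $R_c$ denotes the ratio of the ``matching'' ($i=j$) to the ``non-matching'' ($i\ne j$) bit-probability at coordinate $c$, evaluated at $y_c$. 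The heart of the argument is the algebraic identity, using $d_1=\tfrac{\theta}{(1-\theta)e^\epsilon+\theta}$, that $R_c$ takes only two values: $\theta/d_1=(1-\theta)e^\epsilon+\theta=e^\epsilon d_2$ when $y_c=1$ (possible only for a sensitive coordinate), and $(1-\theta)/(1-d_1)=d_2$ when $y_c=0$ for a sensitive coordinate, while for a non-sensitive coordinate (necessarily $y_c=0$) the ratio is $d_2/1=d_2$. Thus $R_c\in\{d_2,\,e^\epsilon d_2\}$ for every coordinate, so $R_a/R_b\le e^\epsilon d_2/d_2=e^\epsilon$, which is (\ref{eq:epsilon_OSLDP_whole}).

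The main obstacle is this \LDP{} step, and within it the per-coordinate identity $\theta/d_1=e^\epsilon\,(1-\theta)/(1-d_1)$ — equivalently, that the matching/non-matching ratio is always $d_2$ or $e^\epsilon d_2$ — which is precisely what makes the choice of $d_1,d_2$ tight; once it is in place the bound is immediate and attained (take $x_a$ sensitive with $y_a=1$ and $x_b$ with $y_b=0$), so $\epsilon$ cannot be improved. The invertibility step is structurally simpler, the only care needed being the support caveat noted above.
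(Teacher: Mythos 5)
Your proposal is correct and takes essentially the same route as the paper's proof: invertibility via the deterministic behavior of the last $|\calXN|$ coordinates in (\ref{eq:restRAPPOR_calXN}), and the \LDP{} bound via the coordinate-product factorization, cancellation at all coordinates $j \neq i, i'$ (the paper's claim (\ref{eq:ratio_is_one})), and the same algebraic identities $\theta/d_1 = e^\epsilon d_2$ and $(1-\theta)/(1-d_1) = d_2$ — your uniform bookkeeping $R_c \in \{d_2, e^\epsilon d_2\}$ simply collapses the paper's three-way case analysis (both inputs sensitive; one sensitive, one not; both non-sensitive) into a single bound. Your explicit caveat that outputs with two or more $1$s in the non-sensitive block have probability zero under every input, so that condition (\ref{eq:xs_ys_0}) must be read over the support of $\bmQrestRAPPOR$, is a point the paper's proof silently assumes (it asserts that $y \in \calYI$ has exactly one such $1$), and flagging it is a welcome clarification rather than a deviation.
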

\colorB{
Although we used the generalized RAPPOR in $\calXS$ and $\calYP$ in Definition~\ref{def:restRAPPOR}, hereinafter 
we set $\theta = \frac{e^{\epsilon/2}}{e^{\epsilon/2} + 1}$ in the same way as the original RAPPOR \cite{Erlingsson_CCS14}. 
There are two reasons for this. 
First, 
it achieves ``order'' optimal data utility among all \colorBB{$(\calXS,\calYP,\epsilon)$-\ULDP{}} 
mechanisms in the high privacy regime, as shown in Section~\ref{sub:utility_analysis}. 
Second, it maps $x_i \in \calXN$ to $y \in \calYI$ with probability $1 - d_2 = 1 - e^{-\epsilon/2}$, which is close to $1$ when $\epsilon$ is large (i.e., low privacy regime). 
Wang \textit{et al.} \cite{Wang_USENIX17} showed that the generalized RAPPOR with parameter $\theta = \frac{1}{2}$ minimizes the variance of the estimate. However, our 
\colorBB{uRAP} 
with parameter $\theta = \frac{1}{2}$ maps $x_i \in \calXN$ to $y \in \calYI$ with probability 
$1 - d_2 = \frac{e^\epsilon - 1}{2 e^\epsilon}$ 
which is 
less than $1 - e^{-\epsilon/2}$ for any $\epsilon > 0$ and is 
less than $\frac{1}{2}$ even when $\epsilon$ goes to infinity. 
Thus, 
our 
\colorBB{uRAP} 
with $\theta = \frac{e^{\epsilon/2}}{e^{\epsilon/2} + 1}$ 
maps $x_i \in \calXN$ to $y \in \calYI$ with higher probability, and therefore achieves a smaller estimation error over all non-sensitive data.
We also consider that 
an optimal $\theta$ for our 
\colorBB{uRAP} 
is different from the optimal $\theta$ ($= \frac{1}{2}$) for the generalized RAPPOR.
We leave 
finding the optimal 
$\theta$ for 
our 
\colorBB{uRAP} 
(with respect to 
the estimation error 
over all personal data) 
as future work.}

We 
refer to 
the 
$(\calXS,\theta,\epsilon)$-\colorBB{uRAP} 
with $\theta = \frac{e^{\epsilon/2}}{e^{\epsilon/2} + 1}$ 
in shorthand 
as the \textit{$(\calXS,\epsilon)$-\colorBB{uRAP}}.

\subsection{Utility Analysis}
\label{sub:utility_analysis}

We 
evaluate 
the $l_1$ loss 
of 
the \colorBB{uRR} 
and 
\colorBB{uRAP} 
when the empirical estimation method is used for 
distribution estimation\footnote{\colorBB{We note that we use the empirical estimation method in the same way as \cite{Kairouz_ICML16}, and that it might be possible that 
other mechanisms have better utility with a different estimation method. 
However, we emphasize that even with the empirical estimation method, 
the \colorBB{uRAP} 
achieves the lower bounds on the $l_1$ and $l_2$ losses of any \ULDP{} mechanisms 
when $\epsilon \approx 0$, 
and 
the \colorBB{uRR} and  \colorBB{uRAP} 
achieve almost the same utility as a non-private mechanism 
when $\epsilon = \ln |\calX|$ 
and most of the data are non-sensitive.}}. 
In particular, we evaluate the $l_1$ loss
when $\epsilon$ is close to $0$ (i.e., high privacy regime) and 
$\epsilon = \ln |\calX|$ (i.e., low privacy regime). 
Note that 
ULDP provides a natural interpretation of the latter value of $\epsilon$. 
Specifically, it follows from (\ref{eq:epsilon_OSLDP_whole}) that 
if $\epsilon = \ln |\calX|$, then 
for any $x \in \calX$, 
the likelihood that the input data is $x$ is almost equal to the sum of the likelihood that the input data is $x' \neq x$. 
This is consistent with the fact that 
the $\epsilon$-RR with parameter $\epsilon = \ln |\calX|$ 
sends true data (i.e., $y = x$ in (\ref{eq:RR})) with probability about $0.5$ and 
false data (i.e., $y \neq x$) 
with probability about $0.5$, 
and hence provides plausible deniability \cite{Kairouz_ICML16}.

\smallskip
\noindent{\textbf{\colorBB{uRR} 
in the general case.}}~~We 
begin with 
the \colorBB{uRR:}

\begin{restatable}[$l_1$ loss of the \colorBB{uRR}]{prop}{propLoneRestRR}
\label{prop:l1_restRR}
Let $\epsilon \in \nngreals$,
$u = |\calXS| + e^\epsilon - 1$, $u'=e^\epsilon-1$, and $v = \frac{u}{u'}$.
Then the expected $l_1$ loss of the 
$(\calXS,\epsilon)$-\colorBB{uRR} 
mechanism is given by:
\begin{align}
\bbE \left[ l_1(\hbmp,\bmp) \right]
\approx& {\textstyle \sqrt{\!\frac{2}{n\pi}}} \biggl( \sum_{x \in \calXS} \sqrt{\bigl( \bmp(x) + 1/u' \bigr) \bigl( v - \bmp(x) - 1/u' \bigr)} \nonumber\\[-1.5ex]
&\hspace{7ex} +\!\sum_{x \in \calXN}\hspace{-1.5ex} \sqrt{\bmp(x) \bigl( v - \bmp(x) \bigr)} \biggr),
\label{eq:restRR-l1-loss:formal}
\end{align}
where $f(n)\approx g(n)$ 
represents 
$\lim_{n\rightarrow\infty} f(n)/g(n) = 1$.
\end{restatable}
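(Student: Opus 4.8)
The plan is to treat the empirical estimator $\hbmp = \hbmm\,\bmQrestRR^{-1}$ as a linear, unbiased estimator of $\bmp$ and reduce the expected $l_1$ loss to a sum of expected absolute values of asymptotically Gaussian coordinate errors. First I would write $\bmQrestRR$ in block form, with rows indexed by inputs and columns by outputs and with $\calXS$ ordered before $\calXN$. It is then block lower-triangular: the diagonal block on $\calXS$ is $A = c_3 I + c_2 J$ (where $J$ is the all-ones matrix), the diagonal block on $\calXN$ is $c_3 I$, the lower-left block $B$ has every entry equal to $c_2$, and the upper-right block is zero because sensitive data never map to $\calYI$. Since the column sums of $A$ equal $1$, the $aI+bJ$ inversion formula gives $A^{-1} = vI - \frac{1}{u'}J$, and the triangular structure then yields the columns of $\bmQrestRR^{-1}$ explicitly.

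Next I would record the two column shapes I need. For a non-sensitive $x$, column $x$ of $\bmQrestRR^{-1}$ has a single nonzero entry $c_3^{-1}=v$ at $y=x$. For a sensitive $x$, column $x$ has diagonal entry $v - 1/u'$ and \emph{every} other entry equal to $-1/u'$; the cross-block entries collapse to $-1/u'$ precisely because the column sums of $A^{-1}$ equal $1$. In parallel I would compute the true obfuscated distribution $\bmm = \bmp\,\bmQrestRR$, obtaining $\bmm(x) = c_3\,\bmp(x)$ for $x\in\calXN$ and $\bmm(x) = \frac{u'\bmp(x)+1}{u}$ for $x\in\calXS$.

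The core step is the variance. Since $\hbmm$ is a multinomial empirical distribution, $\mathrm{Cov}(\hbmm(y),\hbmm(y')) = \frac{1}{n}\bigl(\bmm(y)\delta_{yy'} - \bmm(y)\bmm(y')\bigr)$, and linearity together with unbiasedness ($\bbE[\hbmp] = \bmm\,\bmQrestRR^{-1} = \bmp$) gives $\var(\hbmp(x)) = \frac{1}{n}\bigl(\sum_{y}[\bmQrestRR^{-1}]_{y,x}^2\,\bmm(y) - \bmp(x)^2\bigr)$. Substituting the two column shapes, the non-sensitive case is immediate: it yields $\frac{1}{n}\bmp(x)\bigl(v-\bmp(x)\bigr)$. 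The sensitive case, after inserting $\bmm(x)=\frac{u'\bmp(x)+1}{u}$ and simplifying, factors into $\frac{1}{n}\bigl(\bmp(x)+1/u'\bigr)\bigl(v-\bmp(x)-1/u'\bigr)$. I expect this algebraic collapse into a clean product to be the main obstacle, since the factored form is not visible before the cancellations are carried out.

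Finally I would invoke the multinomial CLT: each error $\hbmp(x)-\bmp(x)$ is asymptotically Gaussian with mean zero and standard deviation $\sqrt{\var(\hbmp(x))}$, so by uniform integrability $\bbE\bigl[|\hbmp(x)-\bmp(x)|\bigr] \approx \sqrt{2/\pi}\,\sqrt{\var(\hbmp(x))}$, using $\bbE|Z| = \sigma\sqrt{2/\pi}$ for $Z\sim\normal{0,\sigma^2}$. Because $l_1(\hbmp,\bmp)=\sum_x|\hbmp(x)-\bmp(x)|$, linearity of expectation needs only these marginals, not the joint law; summing the two cases and factoring out $\sqrt{2/(n\pi)}$ reproduces exactly (\ref{eq:restRR-l1-loss:formal}).
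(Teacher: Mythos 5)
Your proposal is correct, and its probabilistic core coincides with the paper's proof: both reduce $\bbE[l_1(\hbmp,\bmp)]$ coordinatewise via linearity of expectation, invoke the CLT for the multinomial counts, and use $\bbE|Z|=\sigma\sqrt{2/\pi}$ for a centered Gaussian. Where you differ is in how the per-coordinate standard deviations are obtained. The paper never inverts the matrix: it writes down $\bmm$ and $\hbmp$ componentwise (its equations (\ref{eq:bmm_restRR}) and (\ref{eq:hbmp_restRR})) and observes that $\hbmp(x)$ is an affine function of $\hbmm(x)$ alone with the \emph{same} slope $v$ on both blocks, so $|\hbmp(x)-\bmp(x)| = v\,|\hbmm(x)-\bmm(x)|$ exactly and the variance is simply $v^2\bmm(x)(1-\bmm(x))/n$; the factorization into $\bigl(\bmp(x)+1/u'\bigr)\bigl(v-\bmp(x)-1/u'\bigr)$ then drops out immediately from $v\,\bmm(x)=\bmp(x)+1/u'$ for $x\in\calXS$. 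You instead compute the full block-triangular inverse of $\bmQrestRR$ and evaluate the quadratic form $\frac{1}{n}\bigl(\sum_y [\bmQrestRR^{-1}]_{y,x}^2\,\bmm(y)-\bmp(x)^2\bigr)$, then factor by hand --- the ``algebraic collapse'' you anticipated as the main obstacle. The two routes agree because the off-diagonal entries of each sensitive column of $\bmQrestRR^{-1}$ are the constant $-1/u'$, so against the constraint $\sum_y\hbmm(y)=1$ your linear estimator collapses to the paper's scalar affine form; once you see this, the clean product is no surprise. Your matrix route is heavier but more generic (it would still work for mechanisms whose inverse is not effectively diagonal), and your explicit appeal to uniform integrability when passing from convergence in distribution to convergence of the absolute first moment is a point the paper leaves implicit; the paper's shortcut buys a much shorter variance computation tailored to the uRR's structure.
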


Let $\bmpUN$ be the uniform distribution over $\calXN$; i.e., 
for any $x \in \calXS$, $\bmpUN(x) = 0$, 
and for any $x \in \calXN$, $\bmpUN(x) = \frac{1}{|\calXN|}$.
Symmetrically, let $\bmpUS$ be the uniform distribution over $\calXS$.

For $0 < \epsilon < \ln(|\calXN|+1)$, 
the $l_1$ loss is maximized by~$\bmpUN$: 

\begin{restatable}{prop}{propLoneRestRRsmall}
\label{prop:l1_restRR:eps=0}
For any $0 < \epsilon < \ln(|\calXN|+1)$ and $|\calXS| \le |\calXN|$, 
(\ref{eq:restRR-l1-loss:formal})
is maximized by $\bmpUN$:
\begin{align}
&\bbE \left[ l_1(\hbmp,\bmp) \right] 
\lesssim \bbE \left[ l_1(\hbmp,\bmpUN) \right] \nonumber\\
&=\!{\textstyle \sqrt{\!\frac{2}{n\pi}}} \biggl(\! {\textstyle \frac{|\calXS|\sqrt{ |\calXS| + e^{\epsilon} - 2 }}{e^{\epsilon} - 1} + \sqrt{ \frac{|\calXS||\calXN|}{e^{\epsilon} - 1} + |\calXN| - 1 } } \biggr), 
\label{eq:restRR-l1-loss:small_espilon}
\end{align}
where $f(n)\lesssim g(n)$ 
represents 
$\lim_{n\rightarrow\infty} f(n)/g(n) \leq 1$.
\end{restatable}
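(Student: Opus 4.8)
The plan is to regard the right-hand side of (\ref{eq:restRR-l1-loss:formal}) as an explicit function $F$ of $\bmp\in\calC$ and to maximize it directly, the prefactor $\sqrt{2/(n\pi)}$ being an irrelevant positive constant. First I would notice that every summand has the single shape $\gamma(q)=\sqrt{q(v-q)}$, with $q=\bmp(x)+1/u'$ for $x\in\calXS$ and $q=\bmp(x)$ for $x\in\calXN$. Since $\gamma$ is strictly concave on $(0,v)$, each term is concave in the corresponding $\bmp(x)$, so $F$ is a \emph{concave} function on the simplex that is moreover \emph{symmetric} under permutations within $\calXS$ and within $\calXN$.

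Next I would exploit this symmetry. Averaging any maximizer over all within-group permutations cannot decrease $F$ (Jensen), so there is a maximizer constant on $\calXS$ and constant on $\calXN$; write $\bmp(x)=s$ for $x\in\calXS$ and $\bmp(x)=t$ for $x\in\calXN$, with $|\calXS|s+|\calXN|t=1$. Since $\gamma'(q)\to+\infty$ as $q\to0^+$, no non-sensitive coordinate is ever set to $0$, so $t>0$. This reduces the problem to maximizing the one-variable concave function $h(s)=|\calXS|\,\gamma(s+1/u')+|\calXN|\,\gamma(t(s))$, where $t(s)=(1-|\calXS|s)/|\calXN|$, over $s\in[0,1/|\calXS|]$. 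It then suffices to show $h$ is non-increasing, i.e. that its maximum is at $s=0$, which is precisely $\bmpUN$.

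The crux is the sign of $h'(0)$, and here the key observation is that the sensitive and non-sensitive marginal gains are the \emph{same} profile derivative evaluated at different points. Differentiating gives $h'(s)=|\calXS|\bigl(\gamma'(s+1/u')-\gamma'(t(s))\bigr)$, so $h'(0)=|\calXS|\bigl(\gamma'(1/u')-\gamma'(1/|\calXN|)\bigr)$. As $\gamma$ is strictly concave, $\gamma'$ is strictly decreasing, so $h'(0)\le0$ iff $1/u'\ge 1/|\calXN|$, i.e. iff $u'=e^{\epsilon}-1\le|\calXN|$, which is exactly the hypothesis $\epsilon<\ln(|\calXN|+1)$. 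Concavity of $h$ then propagates $h'(0)\le0$ to $h'\le0$ on the whole interval, so $h$ is non-increasing and attains its maximum at $s=0$, i.e. at $\bmpUN$. I expect the hypothesis $|\calXS|\le|\calXN|$ to enter only as a regularity condition keeping both arguments $1/u'$ and $1/|\calXN|$ inside $(0,v)$ so the monotone comparison of $\gamma'$ is legitimate; the operative constraint is the $\epsilon$-bound. This derivative comparison is the main obstacle, but rewriting both gains through the single profile $\gamma'$ makes it collapse to the stated bound.

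Finally I would substitute $\bmp=\bmpUN$ (so $s=0$, $t=1/|\calXN|$) into $F$. Using $v=1+|\calXS|/u'$, the sensitive terms give $|\calXS|\sqrt{(1/u')(v-1/u')}=\frac{|\calXS|\sqrt{|\calXS|+e^{\epsilon}-2}}{e^{\epsilon}-1}$ (since $v-1/u'=(|\calXS|+e^{\epsilon}-2)/u'$), and the non-sensitive terms give $|\calXN|\sqrt{(1/|\calXN|)(v-1/|\calXN|)}=\sqrt{|\calXN|v-1}=\sqrt{\tfrac{|\calXS||\calXN|}{e^{\epsilon}-1}+|\calXN|-1}$. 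Reinstating $\sqrt{2/(n\pi)}$ yields (\ref{eq:restRR-l1-loss:small_espilon}), and the concavity-based maximality established above gives the $\lesssim$ for arbitrary $\bmp$.
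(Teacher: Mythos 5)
Your proposal is correct, and it reaches the result by a genuinely different route than the paper. The paper first symmetrizes within $\calXS$ and within $\calXN$ via the Cauchy--Schwarz inequality (with equality exactly on mixtures of the two within-group uniform distributions), reducing to a one-variable function $F(w)=\sqrt{A(w)}+\sqrt{B(w)}$ of $w=\bmp(\calXS)$, and then proves $F$ decreasing by bounding $F'(w)\le \frac{1}{C_w}\bigl(A'(w)+B'(w)\bigr)$ with $C_w=\min\bigl(2\sqrt{A(w)},2\sqrt{B(w)}\bigr)$; that crude aggregation of the two derivative terms is where the hypothesis $|\calXS|\le|\calXN|$ is invoked (to replace $|\calXS|$ by $|\calXN|$ before using $e^\epsilon<|\calXN|+1$). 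You instead symmetrize by concavity plus permutation invariance of the objective --- the same reduction, since your $h(s)$ equals the paper's $F(w)$ under $w=|\calXS|s$ --- and then observe that both groups contribute through one profile $\gamma(q)=\sqrt{q(v-q)}$, so $h'(s)=|\calXS|\bigl(\gamma'(s+1/u')-\gamma'(t(s))\bigr)$ and everything collapses to comparing $\gamma'$ at $1/u'$ versus $1/|\calXN|$; strict concavity then propagates $h'(0)<0$ across the interval. Your diagnosis of the hypotheses is also accurate, and in fact sharper than you state: your argument never uses $|\calXS|\le|\calXN|$ at all (the regularity $1/u',1/|\calXN|\in(0,v)$ holds automatically since $v=1+|\calXS|/u'$), so your proof establishes the conclusion for arbitrary $|\calXS|$ whenever $\epsilon<\ln(|\calXN|+1)$, strengthening the proposition --- notably, the paper explicitly declines to treat $|\calXS|>|\calXN|$ here and handles it only approximately in its high-privacy-regime analysis via $\argmax_w F(w)$. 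What your route buys is this extra generality, plus a conceptual explanation of the threshold ($e^\epsilon-1\le|\calXN|$ is exactly when the marginal gain of mass on a sensitive symbol at $\bmpUN$ is dominated); what the paper's route buys is a purely elementary derivation (two applications of Cauchy--Schwarz and an explicit derivative bound) whose one-variable function $F$ is reused verbatim in its later worst-case analysis. Your closing evaluation at $\bmpUN$ matches the paper's computation of $F(0)$ exactly.
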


For $\epsilon \ge \ln(|\calXN|+1)$, 
the $l_1$ loss is maximized by a mixture distribution of $\bmpUN$ and $\bmpUS$:

\begin{restatable}{prop}{propLoneRestRRbig}
\label{prop:l1_restRR:eps=lnX}
Let $\bmp^*$ be a distribution over $\calX$ defined by:
\begin{align}
\bmp^*(x) =
\begin{cases}
\frac{1 - |\calXN|/(e^{\epsilon}-1)}{|\calXS| + |\calXN|} 
~~\mbox{(if $x\in\calXS$)}
\\[2ex]
\frac{1 + |\calXS|/(e^{\epsilon}-1)}{|\calXS| + |\calXN|}
~~\mbox{(otherwise)}
\end{cases}
\label{eq:restRR-l1-loss:p_ast}
\end{align}
Then for any $\epsilon \ge \ln(|\calXN|+1)$,
(\ref{eq:restRR-l1-loss:formal})
is maximized by $\bmp^*$:
\begin{align}
\bbE \left[ l_1(\hbmp,\bmp) \right] 
\lesssim \bbE \left[ l_1(\hbmp,\bmp^*) \right]
= {\textstyle \sqrt{\frac{2(|\calX| - 1)}{n\pi}} \cdot \frac{|\calXS|+e^{\epsilon}-1}{e^{\epsilon}-1} },
\label{eq:restRR-l1-loss:large_espilon}
\end{align}
where $f(n)\lesssim g(n)$ represents $\lim_{n\rightarrow\infty} f(n)/g(n) \leq 1$.
\end{restatable}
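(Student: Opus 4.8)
The plan is to treat the parenthesized sum on the right-hand side of (\ref{eq:restRR-l1-loss:formal}) as an objective function $F(\bmp)$ to be maximized over the probability simplex, since the prefactor $\sqrt{2/(n\pi)}$ is a positive constant independent of $\bmp$. Writing $g(t)=\sqrt{t(v-t)}$, each summand is $g$ evaluated at an affine function of a single coordinate of $\bmp$ (namely $\bmp(x)+1/u'$ for $x\in\calXS$ and $\bmp(x)$ for $x\in\calXN$), and a one-line check shows these arguments always lie in $[0,v]$, where $g$ is the upper half of an ellipse and hence strictly concave. Consequently $F$ is a sum of concave functions of single coordinates and is therefore concave on the simplex, so the constrained maximum is characterized by the KKT (Lagrange) stationarity conditions, and \emph{any feasible stationary point is automatically a global maximizer}.

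First I would write out the stationarity condition $\partial F/\partial \bmp(x)=\lambda$ for every coordinate with $\bmp(x)>0$. Because $g'$ is strictly decreasing, the equation $g'(\cdot)=\lambda$ has a unique solution; this forces $\bmp(x)+1/u'$ to take a common value over all $x\in\calXS$ and $\bmp(x)$ to take a common value over all $x\in\calXN$, and moreover these two common values must coincide. Denoting the sensitive mass by $s$ and the non-sensitive mass by $t$, this gives $t=s+1/u'$, and substituting into the normalization $|\calXS|\,s+|\calXN|\,t=1$ solves for $s$ and $t$, reproducing exactly the distribution $\bmp^*$ of (\ref{eq:restRR-l1-loss:p_ast}).

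The crux is to verify that $\bmp^*$ is a \emph{valid} distribution, and this is precisely where the hypothesis enters: the non-sensitive mass is manifestly positive, while the sensitive mass $\frac{1-|\calXN|/(e^\epsilon-1)}{|\calX|}$ is non-negative if and only if $e^\epsilon-1\ge|\calXN|$, i.e. $\epsilon\ge\ln(|\calXN|+1)$, the stated range. (For smaller $\epsilon$ the stationary point leaves the simplex and the maximizer migrates to the boundary where $\bmp(x)=0$ for all $x\in\calXS$, which is the regime of Proposition~\ref{prop:l1_restRR:eps=0} and explains the dichotomy.) The main obstacle I anticipate is the threshold $\epsilon=\ln(|\calXN|+1)$, where the sensitive coordinates vanish and $\bmp^*$ sits on the boundary of the simplex; there I would confirm that the same multiplier $\lambda$ still satisfies the KKT conditions (the active non-negativity constraints carry zero multipliers), so that $\bmp^*$ remains the global maximizer. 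Evaluating $F(\bmp^*)$ is then routine: at $\bmp^*$ every one of the $|\calX|$ arguments collapses to the single value $v/|\calX|$ (using $v=1+|\calXS|/u'$), so that $F(\bmp^*)=|\calX|\sqrt{(v/|\calX|)\bigl(v-v/|\calX|\bigr)}=v\sqrt{|\calX|-1}$; restoring the prefactor together with $v=\frac{|\calXS|+e^\epsilon-1}{e^\epsilon-1}$ yields (\ref{eq:restRR-l1-loss:large_espilon}).
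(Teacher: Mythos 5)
Your proposal is correct, but it takes a genuinely different route from the paper. The paper works in the \emph{output} space: it observes that under the uRR the map $\bmp \mapsto \bmm$ is affine ($\bmm(x) = \bmp(x)/v + 1/u$ on $\calXS$, $\bmm(x)=\bmp(x)/v$ on $\calXN$), verifies by direct substitution that $\bmp^*$ is exactly the pre-image of the \emph{uniform} output distribution $\bmm(x) = 1/|\calX|$, and then applies Cauchy--Schwarz to $\sum_{x}\sqrt{\bmm(x)(1-\bmm(x))}$ in the form (\ref{eq:restRR-l1-loss:outdist}), with equality precisely when $\bmm$ is uniform; the hypothesis $\epsilon \ge \ln(|\calXN|+1)$ enters only as the condition that this unconstrained optimum is achievable, i.e., $\bmp^*(x)\ge 0$ on $\calXS$. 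Your argument instead stays in the input space and treats (\ref{eq:restRR-l1-loss:formal}) as a concave program: strict concavity of $g(t)=\sqrt{t(v-t)}$ plus KKT stationarity forces all arguments to a common value, which reconstructs $\bmp^*$ from the normalization constraint. Both are sound and give the same closed form (your common value $v/|\calX|$ corresponds exactly to the paper's uniform $\bmm$, since the argument of $g$ equals $v\,\bmm(x)$). What each buys: the paper's proof is shorter and conceptually crisp---the $l_1$ loss is worst when the obfuscated data are uniform---while your derivation \emph{discovers} $\bmp^*$ rather than verifying a guessed maximizer, and it explains the threshold $\epsilon=\ln(|\calXN|+1)$ as the point where the interior stationary point exits the simplex and the maximizer migrates to the boundary distribution $\bmpUN$ of Proposition~\ref{prop:l1_restRR:eps=0}, unifying the two regimes. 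Two details you rightly flag and that do need the care you give them: feasibility of $\bmp^*$ (nonnegativity on $\calXS$) is exactly the stated range of $\epsilon$, and at the threshold the KKT conditions still hold with zero multipliers on the active constraints because the sensitive arguments $0+1/u'$ then coincide with the non-sensitive value $t$; note also that $g'(0^+)=+\infty$ conveniently rules out boundary maximizers in the non-sensitive coordinates, so no case analysis is needed there.
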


Next, we instantiate the $l_1$ loss in the high and low privacy regimes based on these propositions. 

\smallskip
\noindent{\textbf{\colorBB{uRR} 
in the high privacy regime.}}~~When $\epsilon$ is close to $0$, 
we have $e^\epsilon - 1 \approx \epsilon$. 
Thus, 
the right-hand side of (\ref{eq:restRR-l1-loss:small_espilon}) 
in Proposition~\ref{prop:l1_restRR:eps=0} 
can be simplified 
as follows: 
\begin{align}
\bbE \left[ l_1(\hbmp,\bmpUN) \right] 
&\approx\! {\textstyle \sqrt{\frac{2}{n\pi}} \cdot \frac{|\calXS|\sqrt{ |\calXS| - 1 }}{\epsilon}}.
\label{eq:restRR-better-than-RR_main}
\end{align}
It was shown in \cite{Kairouz_ICML16} that 
the expected $l_1$ loss of the $\epsilon$-RR is at most 
$\sqrt{\frac{2}{n\pi}} \frac{|\calX|\sqrt{ |\calX| - 1 }}{\epsilon}$ when $\epsilon \approx 0$. 
\colorBB{The right-hand side of (\ref{eq:restRR-better-than-RR_main}) is much smaller than this when $|\calXS| \ll |\calX|$. 
Although both of them are ``upper-bounds'' of the expected $l_1$ losses, we show that the total variation of the $(\calXS,\epsilon)$-\colorBB{uRR} 
is also much smaller than that of the $\epsilon$-RR when $|\calXS| \ll |\calX|$ in Section~\ref{sec:exp}.}

\smallskip
\noindent{\textbf{\colorBB{uRR} 
in the low privacy regime.}}~~When $\epsilon=\ln|\calX|$ 
and $|\calXS| \ll |\calX|$, 
the right-hand side of (\ref{eq:restRR-l1-loss:large_espilon}) 
in Proposition~\ref{prop:l1_restRR:eps=lnX} 
can be simplified by using $|\calXS| / |\calX| \approx 0$: 
\begin{align*}
\bbE \left[ l_1(\hbmp,\bmp^*) \right] 
&\approx {\textstyle \sqrt{\frac{2(|\calX| - 1)}{n\pi}}}. 
\end{align*}
It should be noted that the expected $l_1$ loss of the non-private mechanism, which does not obfuscate the personal data at all, is at most $\sqrt{\frac{2(|\calX| - 1)}{n\pi}}$ \cite{Kairouz_ICML16}. 
Thus, when $\epsilon=\ln|\calX|$ and $|\calXS| \ll |\calX|$,
the 
$(\calXS,\epsilon)$-\colorBB{uRR} 
achieves almost the same data utility as the non-private mechanism, 
whereas the expected $l_1$ loss of the $\epsilon$-RR is 
twice 
larger than that of the non-private mechanism \cite{Kairouz_ICML16}.

\smallskip
\noindent{\textbf{\colorBB{uRAP} 
in the general case.}}~~We then 
\colorBB{analyze} 
the 
\colorBB{uRAP:} 

\begin{restatable}[$l_1$ loss of the \colorBB{uRAP}]{prop}{propLoneRestRAP}
\label{prop:l1_restRAPPOR}
Let $\epsilon \in \nngreals$,
$u' = e^{\epsilon/2} - 1$, 
and $v_N = \frac{e^{\epsilon/2}}{e^{\epsilon/2}-1}$.
The expected $l_1$-loss of the $(\calXS,\epsilon)$-\colorBB{uRAP} 
mechanism is:
\begin{align}
\bbE \left[ l_1(\hbmp,\bmp) \right]
\approx& {\textstyle \sqrt{\frac{2}{n\pi}}} \biggl( \sum_{j=1}^{|\calXS|} \sqrt{\bigl( \bmp(x_j) + 1/u' \bigr) \bigl( v_N - \bmp(x_j) \bigr)} \nonumber\\[-1.5ex]
&\hspace{6.5ex} +\hspace{-1.5ex}\sum_{j=|\calXS|+1}^{|\calX|}\hspace{-2.5ex}\sqrt{\bmp(x_j) \bigl( v_N - \bmp(x_j) \bigr)} \biggr),
\label{eq:restRAPPOR-l1-loss:formal}
\end{align}
where $f(n)\approx g(n)$ represents $\lim_{n\rightarrow\infty} f(n)/g(n) = 1$.
\end{restatable}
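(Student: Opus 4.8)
The plan is to follow the same coordinate-wise Gaussian-approximation argument that yields Proposition~\ref{prop:l1_restRR}, now exploiting the product structure of $\bmQrestRAPPOR$. First I would observe that, because the output bits are conditionally independent given the input, the empirical estimation method reduces to inverting, for each coordinate $j$, the affine relation between $\bmp(x_j)$ and the marginal probability $m_j := \Pr(Y_j=1) = \sum_{i} \bmp(x_i)\Pr(y_j=1\mid x_i)$ that bit $j$ equals $1$. Writing $\hat{m}_j = \frac1n\sum_{k=1}^n Y^{(k)}_j$ for the empirical bit-frequency, the $Y^{(k)}_j$ are i.i.d.\ Bernoulli$(m_j)$ across users, so by the CLT $\sqrt{n}(\hat{m}_j - m_j)$ converges in distribution to $\normal{0,m_j(1-m_j)}$. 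Since $\hbmp(x_j)$ is the affine image $(\hat{m}_j-\beta_j)/\alpha_j$ of $\hat{m}_j$, where $\alpha_j,\beta_j$ are the slope and intercept of $m_j$ viewed as a function of $\bmp(x_j)$, it follows that $\sqrt{n}\,(\hbmp(x_j)-\bmp(x_j))$ is asymptotically $\normal{0,\,m_j(1-m_j)/\alpha_j^2}$.

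Next I would read off $m_j$ and $\alpha_j$ from Definition~\ref{def:restRAPPOR} after substituting $\theta=\frac{e^{\epsilon/2}}{e^{\epsilon/2}+1}$, which gives $d_1 = 1-\theta = \frac{1}{e^{\epsilon/2}+1}$ and $d_2 = e^{-\epsilon/2}$. For a non-sensitive coordinate ($|\calXS|+1\le j\le|\calX|$) only the input $x_j$ can turn bit $j$ on, so $m_j = (1-d_2)\bmp(x_j)$ and $\alpha_j = 1-d_2 = 1/v_N$. For a sensitive coordinate ($1\le j\le|\calXS|$) bit $j$ is on with probability $\theta$ when the input is $x_j$ and $d_1$ otherwise, so $m_j = \theta\,\bmp(x_j) + d_1(1-\bmp(x_j))$ and $\alpha_j = \theta-d_1 = \frac{e^{\epsilon/2}-1}{e^{\epsilon/2}+1}$.

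I would then assemble the loss. By linearity of expectation $\bbE[l_1(\hbmp,\bmp)] = \sum_{j}\bbE\bigl[|\hbmp(x_j)-\bmp(x_j)|\bigr]$, and for a centered Gaussian limit $\bbE|Z| = \sqrt{2/\pi}\,\sigma$; boundedness of $\hat{m}_j\in[0,1]$ supplies the uniform integrability needed to pass from convergence in distribution to convergence of the first absolute moment. Hence $\bbE[l_1(\hbmp,\bmp)] \approx \sqrt{\frac{2}{n\pi}}\sum_{j}\frac{\sqrt{m_j(1-m_j)}}{\alpha_j}$. Plugging in the non-sensitive values gives $\frac{\sqrt{m_j(1-m_j)}}{\alpha_j} = v_N\sqrt{m_j(1-m_j)} = \sqrt{\bmp(x_j)(v_N-\bmp(x_j))}$ since $m_j = \bmp(x_j)/v_N$. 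For a sensitive coordinate, writing $a=e^{\epsilon/2}$ so that $m_j = \frac{1+(a-1)\bmp(x_j)}{a+1}$ and $\alpha_j = \frac{a-1}{a+1}$, the crux is the factorization $\bigl(1+(a-1)\bmp(x_j)\bigr)\bigl(a-(a-1)\bmp(x_j)\bigr) = (a-1)^2\bigl(\bmp(x_j)+1/u'\bigr)\bigl(v_N-\bmp(x_j)\bigr)$, which after dividing by $\alpha_j$ yields exactly $\sqrt{\bigl(\bmp(x_j)+1/u'\bigr)\bigl(v_N-\bmp(x_j)\bigr)}$ and reproduces (\ref{eq:restRAPPOR-l1-loss:formal}).

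The main obstacle I anticipate is not the algebra but the two justifications underpinning the first paragraph: (i) arguing that the empirical estimation method genuinely collapses to per-coordinate inversion of the marginal bit-frequencies for $\bmQrestRAPPOR$, so that the full $2^{|\calX|}$-dimensional system need not be solved; and (ii) the uniform-integrability step converting the CLT statement into a statement about $\bbE|\hbmp(x_j)-\bmp(x_j)|$ under the $\approx$ (ratio-to-one as $n\to\infty$) convention. The factorization identity for the sensitive coordinates, where the shift $1/u' = 1/(e^{\epsilon/2}-1)$ arises from the nonzero cross-talk probability $d_1$, is the one computation that must be done carefully, since it is precisely what distinguishes the sensitive summand from the non-sensitive one.
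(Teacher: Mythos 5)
Your proposal is correct and takes essentially the same route as the paper's proof: per-coordinate inversion of the affine map between $\bmp(x_j)$ and the marginal bit-probability $\bmm_j$ (the paper's (\ref{eq:bmmj_bmpj_restRAPPOR})--(\ref{eq:bmpj_bmmj_restRAPPOR}), with your $1/\alpha_j$ equal to the paper's $v_S$ or $v_N$), the CLT on the binomial counts combined with the Gaussian absolute-moment formula $\bbE|G|=\sqrt{2/\pi}\,\sigma$, and the same algebra, your sensitive-coordinate factorization being exactly the paper's simplification via $v_S - 1/u' = v_N$. The two justifications you flag are handled identically (and no more explicitly) in the paper, which defines estimation coordinate-wise from the outset and inherits the moment-convergence step from the uRR proof, so your argument has no gap relative to it.
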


When $0 < \epsilon < 2\ln(\frac{|\calXN|}{2}+1)$, the $l_1$ loss is maximized by the uniform distribution $\bmpUN$ over $\calXN$: 

\begin{restatable}{prop}{propLoneRestRAPapprox}
\label{prop:l1_restRAPPOR:approx}
For any $0 < \epsilon < 2\ln(\frac{|\calXN|}{2}+1)$ and $|\calXS|\le|\calXN|$, 
(\ref{eq:restRAPPOR-l1-loss:formal}) is maximized when $\bmp =~\bmpUN$:
\begin{align}
\bbE \left[ l_1(\hbmp,\bmp) \right]
&\lesssim \bbE \left[ l_1(\hbmp,\bmpUN) \right] \nonumber \\
&=\!{\textstyle \sqrt{\frac{2}{n\pi}} \biggl( \frac{e^{\epsilon/4}|\calXS|}{e^{\epsilon/2}-1} + \sqrt{ \frac{e^{\epsilon/2}|\calXN|}{e^{\epsilon/2}-1} - 1 } \biggr)}, 
\label{eq:restRAPPOR-l1_upper_bound}
\end{align}
where $f(n)\lesssim g(n)$ represents $\lim_{n\rightarrow\infty} f(n)/g(n) \le 1$.
\end{restatable}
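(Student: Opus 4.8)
The plan is to regard the right-hand side of (\ref{eq:restRAPPOR-l1-loss:formal}) (up to the positive constant $\sqrt{2/(n\pi)}$) as a function $F$ on the probability simplex $\calC$ and to maximize it by exploiting concavity. Writing $u' = e^{\epsilon/2}-1$ and $v_N = \frac{e^{\epsilon/2}}{e^{\epsilon/2}-1}$ as in Proposition~\ref{prop:l1_restRAPPOR}, every summand has the form $g_S(p)=\sqrt{(p+1/u')(v_N-p)}$ (for $x\in\calXS$) or $g_N(p)=\sqrt{p(v_N-p)}$ (for $x\in\calXN$). Each is the square root of a nonnegative concave quadratic, hence concave in $p$; therefore $F$ is concave on $\calC$ and its maximizer is characterized by first-order (KKT) conditions.

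First I would reduce to a one-dimensional problem by symmetrization. Given any $\bmp\in\calC$, replacing every sensitive coordinate by the common average $s=\frac{1}{|\calXS|}\sum_{x\in\calXS}\bmp(x)$ and every non-sensitive coordinate by $t=\frac{1}{|\calXN|}\sum_{x\in\calXN}\bmp(x)$ preserves the constraint $|\calXS|s+|\calXN|t=1$ and, by Jensen's inequality applied to the concave $g_S$ and $g_N$, does not decrease $F$. Hence a maximizer can be taken symmetric, and $F$ reduces to $\phi(s)=|\calXS|\,g_S(s)+|\calXN|\,g_N(t)$ with $t=(1-|\calXS|s)/|\calXN|$ and $s\in[0,1/|\calXS|]$. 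Since $t$ is affine in $s$, $\phi$ is concave, and a direct computation gives $\phi'(s)=|\calXS|\bigl(g_S'(s)-g_N'(t)\bigr)$. By concavity it suffices to show $\phi'(0)\le 0$, i.e. $g_S'(0)\le g_N'(1/|\calXN|)$; this forces $\phi$ to be non-increasing, so the maximum is attained at $s=0$, $t=1/|\calXN|$, which is exactly $\bmpUN$.

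The technical crux is verifying this single derivative inequality. Using the identity $v_N-1/u'=1$ one obtains the clean value $g_S'(0)=\tfrac{1}{2}\sqrt{u'/v_N}=\frac{e^{\epsilon/2}-1}{2e^{\epsilon/4}}$, while $g_N'(1/|\calXN|)=\frac{v_N-2/|\calXN|}{2\sqrt{(1/|\calXN|)(v_N-1/|\calXN|)}}$. Setting $a=e^{\epsilon/2}$ and $N=|\calXN|$, clearing the positive denominators and squaring turns $g_S'(0)\le g_N'(1/N)$ into the polynomial inequality $(a-1)^3\bigl(a(N-1)+1\bigr)\le a\bigl(a(N-2)+2\bigr)^2$. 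The hypothesis $\epsilon<2\ln(\tfrac{|\calXN|}{2}+1)$ is exactly $a<N/2+1$, equivalently $1/u'>2/N$; I expect the main obstacle to be showing that this bound suffices to make the polynomial inequality hold throughout $a\in(1,N/2+1]$ (e.g. by checking it at the endpoints and controlling the difference of the two sides as a function of $a$ for fixed $N$, noting that the inequality does reverse for sufficiently large $a$, which is why the companion large-$\epsilon$ regime is treated separately). The assumption $|\calXS|\le|\calXN|$ keeps us in the regime where this first-order sign condition is the binding one.

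Finally, once $\bmpUN$ is identified as the maximizer, the closed form (\ref{eq:restRAPPOR-l1_upper_bound}) follows by direct evaluation: $|\calXS|\,g_S(0)=|\calXS|\sqrt{v_N/u'}=\frac{e^{\epsilon/4}|\calXS|}{e^{\epsilon/2}-1}$ and $|\calXN|\,g_N(1/|\calXN|)=\sqrt{|\calXN|\,v_N-1}=\sqrt{\frac{e^{\epsilon/2}|\calXN|}{e^{\epsilon/2}-1}-1}$, and multiplying by $\sqrt{2/(n\pi)}$ recovers the stated expression.
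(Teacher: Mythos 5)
Your route is genuinely different from the paper's and, in outline, viable. The paper symmetrizes via Cauchy--Schwarz and then proves in Lemma~\ref{lem:l1_restRAPPOR:decreasing} that the one-parameter function $F(w)$, $w=\bmp(\calXS)$, is decreasing on all of $[0,1)$, by bounding $F'(w)$ from above with the common factor $1/C_w$ where $C_w=\min\bigl(2\sqrt{A(w)},2\sqrt{B(w)}\bigr)$; that crude bound is exactly where both hypotheses $|\calXS|\le|\calXN|$ and $1/u'>2/|\calXN|$ get consumed. You symmetrize with Jensen (equivalent in effect), then replace global monotonicity by concavity plus a single first-order condition $\phi'(0)\le 0$, i.e.\ $g_S'(0)\le g_N'(1/|\calXN|)$. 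Your computations there are correct: $v_N-1/u'=1$ gives $g_S'(0)=\frac{e^{\epsilon/2}-1}{2e^{\epsilon/4}}$, the reduction $\phi'(s)=|\calXS|\bigl(g_S'(s)-g_N'(t)\bigr)$ is right, and your evaluation at $\bmpUN$ reproduces (\ref{eq:restRAPPOR-l1_upper_bound}) exactly.

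The genuine gap is that you never prove the decisive polynomial inequality $(a-1)^3\bigl(a(N-1)+1\bigr)\le a\bigl(a(N-2)+2\bigr)^2$ for $a\in(1,\frac{N}{2}+1]$, $N=|\calXN|$; you explicitly defer it as ``the main obstacle.'' Since this inequality carries the entire analytic content of the proposition (it plays the role of the paper's Lemma~\ref{lem:l1_restRAPPOR:decreasing}), the proposal as written is an incomplete proof rather than a complete alternative one. The inequality is in fact true with room to spare and is closable by elementary means: substituting $b=a-1\in(0,N/2]$, the difference of the two sides is $P(b)=-(N-1)b^4+(N-1)(N-4)b^3+(N-2)(3N-2)b^2+N(3N-4)b+N^2$; for $N\ge 8$ the bound $b\le N/2$ gives $(N-1)b^4\le(N-1)\frac{N}{2}b^3$, so the two highest-order terms are jointly nonnegative and $P>0$, one checks $P(N/2)=N^4(N+3)/16>0$, and the cases $N\le 7$ are verified directly. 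Finally, one of your side remarks is off: once the inequality is closed, your argument is actually slightly \emph{stronger} than the paper's, because $|\calXS|$ cancels from $\phi'$ and the endpoint condition at $s=0$ involves only $|\calXN|$ and $\epsilon$. So the hypothesis $|\calXS|\le|\calXN|$ is never used in your route --- contrary to your claim that it ``keeps us in the regime where the first-order sign condition is the binding one'' --- and it appears in the paper's statement only as an artifact of its lossier derivative bound.
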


Note that this proposition covers a wide range of $\epsilon$. 
For example, when $|\calXS|\le|\calXN|$, 
it covers 
both the high privacy regime ($\epsilon \approx 0$) and low privacy regime ($\epsilon = \ln |\calX|$), 
since $\ln |\calX| < 2\ln(\frac{|\calXN|}{2}+1)$. 
Below we instantiate the $l_1$ loss in the high and low privacy regimes based on this proposition. 

\smallskip
\noindent{\textbf{\colorBB{uRAP} 
in the high privacy regime.}}~~If $\epsilon$ is close to $0$, we have 
$e^{\epsilon/2} - 1 \approx \epsilon/2$. 
Thus, the right-hand side of 
(\ref{eq:restRAPPOR-l1_upper_bound}) 
in Proposition~\ref{prop:l1_restRAPPOR:approx} 
can be simplified as follows: 
\begin{align}
\bbE \left[ l_1(\hbmp,\bmpUN) \right] 
\approx {\textstyle \sqrt{\frac{2}{n\pi}} \cdot\frac{2|\calXS|}{\epsilon}}.
\label{eq:restRAP-better-than-RAP_main}
\end{align}
It is shown in \cite{Kairouz_ICML16} that 
the expected $l_1$ loss of the $\epsilon$-RAPPOR is at most 
$\sqrt{\frac{2}{n\pi}} \cdot\frac{2|\calX|}{\epsilon}$ 
when $\epsilon \approx 0$. 
Thus, by (\ref{eq:restRAP-better-than-RAP_main}), the expected $l_1$ loss of the 
$(\calXS,\epsilon)$-\colorBB{uRAP} 
is much smaller than that of the $\epsilon$-RAPPOR when $|\calXS| \ll |\calX|$. 

Moreover, 
by (\ref{eq:restRAP-better-than-RAP_main}), 
the expected $l_1$ loss of the $(\calXS,\epsilon)$-\colorBB{uRAP} 
in the worst case is expressed as 
$\Theta(\frac{|\calXS|}{\sqrt{n\epsilon^2}})$ 
in the high privacy regime. 
\colorB{As described in Section~\ref{sub:ULDP_theoretical}, 
this is ``order'' optimal among all 
\colorBB{$(\calXS,\calYP,\epsilon)$-\ULDP{}} 
mechanisms 
(in \arxiv{Appendix~\ref{sec:proofs_utility_l2loss},}\conference{Appendix~\ref{sub:proofs_utility_l2loss},} 
we also show that the expected $l_2$ 
of the 
$(\calXS,\epsilon)$-\colorBB{uRAP} 
is expressed as 
$\Theta(\frac{|\calXS|}{n\epsilon^2})$).} 

\smallskip
\noindent{\textbf{\colorBB{uRAP} 
in the low privacy regime.}}~~If $\epsilon=\ln|\calX|$ 
and 
$|\calXS| \ll |\calX|^{\frac{3}{4}}$, 
the right-hand side of (\ref{eq:restRAPPOR-l1_upper_bound}) 
can be simplified, using 
$|\calXS| / |\calX|^{\frac{3}{4}} \approx 0$, 
as follows:
\begin{align*}
\bbE \left[ l_1(\hbmp,\bmpUN) \right]
\approx {\textstyle \sqrt{\frac{2(|\calX|-1)}{n\pi}}}. 
\end{align*}
Thus, when $\epsilon=\ln|\calX|$ and 
$|\calXS| \ll |\calX|^{\frac{3}{4}}$,
the 
$(\calXS,\epsilon)$-\colorBB{uRAP} 
also 
achieves almost the same data 
utility as the non-private mechanism, 
whereas the expected $l_1$ loss of the $\epsilon$-RAPPOR is $\sqrt{|\calX|}$ times larger than that of the non-private mechanism \cite{Kairouz_ICML16}.

\smallskip
\noindent{\textbf{Summary.}}~~In summary, 
the \colorBB{uRR} 
and 
\colorBB{uRAP} 
provide much higher utility than 
the RR and RAPPOR 
when $|\calXS| \ll |\calX|$. 
Moreover, 
when $\epsilon = \ln |\calX|$ and $|\calXS| \ll |\calX|$ (resp.~$|\calXS| \ll |\calX|^{\frac{3}{4}}$), 
the 
\colorBB{uRR} 
(resp.~\colorBB{uRAP}) 
achieves 
almost the same 
utility as a non-private mechanism.

\section{Personalized ULDP Mechanisms}
\label{sec:per_rest}
We now consider the personalized-mechanism scenario (outlined in Section~\ref{sub:notations}), 
and 
propose a 
\emph{PUM (Personalized ULDP Mechanism)} 
to keep secret what is sensitive for each user 
while enabling the data collector to estimate a distribution. 

Sections~\ref{sub:semantic_PUM} describes the PUM. 
Section~\ref{sub:privacy_PUM} explains its privacy properties. 
Section~\ref{sub:per_rest_dist_est} proposes 
\colorB{a method} to estimate the distribution $\bmp$ from $\bmY$ obfuscated using the PUM. 
Section~\ref{sub:per_rest_util_anal} analyzes the data utility of 
the PUM.

\subsection{PUM with $\kappa$ Semantic Tags}
\label{sub:semantic_PUM}

Figure~\ref{fig:basic_PUM} shows the overview of the 
PUM $\bmQ\uid{i}$ for the $i$-th user ($i = 1, 2, \ldots, n$). 
It first 
deterministically 
maps personal data $x \in \calX$ to \emph{intermediate data} 
using a \emph{pre-processor} $f_{pre}\uid{i}$, 
and then 
maps the intermediate data to obfuscated data $y \in \calY$ 
using a utility-optimized mechanism $\bmQ_{cmn}$ common to all users. 
The pre-processor $f_{pre}\uid{i}$ 
maps 
user-specific sensitive data 
$x \in \calXS\uid{i}$ to 
one of $\kappa$ bots: $\bot_1, \bot_2, \cdots,$ or $\bot_\kappa$. 
The $\kappa$ bots 
represent 
user-specific sensitive data, 
and 
each of them 
is associated with a \emph{semantic tag} such as ``home'' or ``workplace''. 
\colorBB{The $\kappa$ semantic tags are the same for all users, and are} 
useful when the data collector has some background knowledge about $\bmp$ 
conditioned on each tag. 
For example, a distribution of POIs tagged as ``home'' or ``workplace'' can be easily obtained via the Fousquare venue API \cite{Yang_TIST16}. 
Although this is not a user distribution but a ``POI distribution'', it can be used to roughly approximate the distribution of users tagged as ``home'' or ``workplace'', as shown in Section~\ref{sec:exp}. 
We define a set $\calZ$ of intermediate data by 
$\calZ = \calX \cup \{\bot_1, \cdots, \bot_\kappa\}$, 
and a set $\calZS$ of sensitive intermediate data by 
$\calZS = \calXS \cup \{\bot_1, \cdots, \bot_\kappa\}$. 
\begin{figure}
\centering
\includegraphics[width=0.76\linewidth]{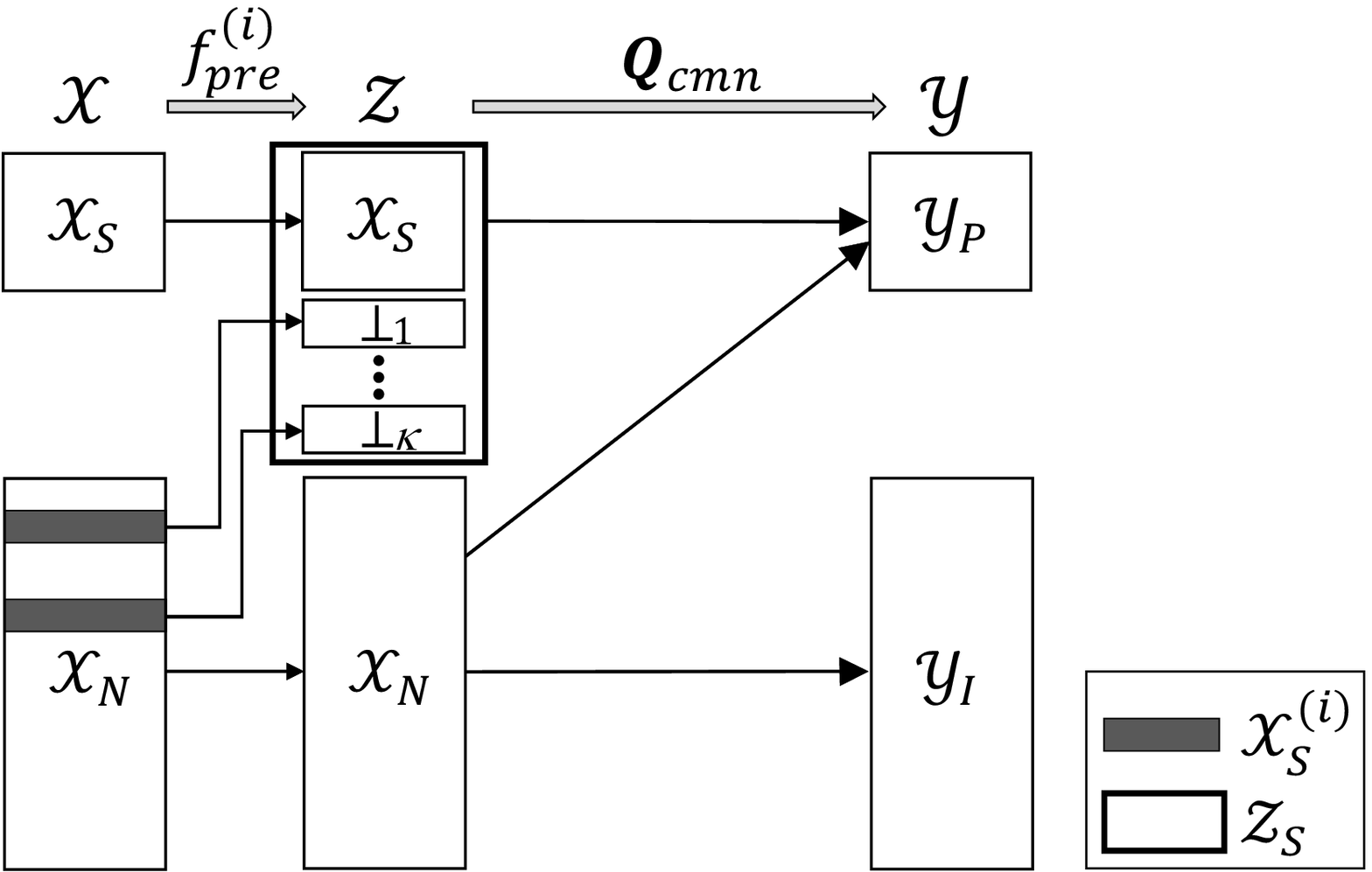}
\vspace{-2mm}
\caption{Overview of the PUM $\bmQ\uid{i}$ \colorBB{($= \bmQ_{cmn} \circ f_{pre}\uid{i}$)}. 
}
\label{fig:basic_PUM}
\end{figure}

Formally, 
the 
PUM $\bmQ\uid{i}$ first maps personal data $x \in \calX$ to intermediate data $z \in \calZ$ using a pre-processor $f_{pre}\uid{i}: \calX \rightarrow \calZ$ specific to each user. 
The pre-processor $f_{pre}\uid{i}$ maps sensitive data $x \in \calXS\uid{i}$ associated with the $k$-th tag ($1 \leq k \leq \kappa$) to the corresponding bot $\bot_k$, and maps other data to themselves. 
Let $\calX_{S,k}\uid{i}$ be a set of the $i$-th user's 
sensitive data associated with the $k$-th tag 
(e.g., set of regions including her primary home and second home). 
Then, $\calXS\uid{i}$ is expressed as $\calXS\uid{i} = \bigcup_{1 \leq k \leq \kappa} \calX_{S,k}\uid{i}$, and 
$f_{pre}\uid{i}$ is given by: 
\begin{align}
f_{pre}\uid{i}(x) = 
\begin{cases}
\bot_k & \text{(if $x \in \calX_{S,k}\uid{i}$)}\\
x & \text{(otherwise)}.\\
\end{cases} 
\label{eq:f_pre_semantic}
\end{align}

After mapping personal data $x \in \calX$ to intermediate data $z \in \calZ$, the $(\calZS,\calYP,\epsilon)$-utility-optimized mechanism $\bmQ_{cmn}$ maps $z$ to obfuscated data $y \in \calY$. 
Examples of $\bmQ_{cmn}$ include the $(\calZS,\epsilon)$-\colorBB{uRR} 
(in Definition~\ref{def:restRR}) and $(\calZS,\epsilon)$-\colorBB{uRAP} 
(in Definition~\ref{def:restRAPPOR}). 
As a whole, 
the 
PUM $\bmQ\uid{i}$ can be expressed as: $\bmQ\uid{i} = \bmQ_{cmn} \circ f_{pre}\uid{i}$. 
The $i$-th user stores $f_{pre}\uid{i}$ and $\bmQ_{cmn}$ in a device that obfuscates her personal data (e.g., mobile phone, personal computer). 
\colorBB{Note that if $f_{pre}\uid{i}$ is leaked, $x \in \calXN$ corresponding to each bot (e.g., home, workplace) is leaked. Thus, the user keeps $f_{pre}\uid{i}$ secret. To strongly prevent the leakage of $f_{pre}\uid{i}$, the user may deal with $f_{pre}\uid{i}$ using a tamper-resistant hardware/software. On the other hand,} 
the 
utility-optimized mechanism $\bmQ_{cmn}$, which is common to all users, is available to the data collector. 

The feature of the proposed PUM $\bmQ\uid{i}$ is two-fold: 
(i) the secrecy of the pre-processor $f_{pre}\uid{i}$ and (ii) the $\kappa$ semantic tags. 
By the first feature, the $i$-th user can keep $\calXS\uid{i}$ (i.e., what is sensitive for her) secret, as shown in Section~\ref{sub:privacy_PUM}. 
The second feature enables the data collector to estimate a distribution $\bmp$ with high accuracy. 
Specifically, 
she 
estimates 
$\bmp$ from obfuscated data 
$\bmY$ using $\bmQ_{cmn}$ and some background knowledge about $\bmp$ conditioned on each tag, 
as shown in Section~\ref{sub:per_rest_dist_est}. 

In practice, it may happen that a user has her specific sensitive data $x \in \calXS\uid{i}$ that is not associated with 
any semantic tags. 
For example, if 
we 
prepare only tags named ``home'' and ``workplace'', then 
sightseeing places, restaurants, 
and any other places are not associated with these tags. 
One way to deal with such data is to 
create another bot 
associated with a tag named ``others'' 
(e.g., if $\bot_1$ and $\bot_2$ are associated with
``home'' and ``workplace'', respectively, we create $\bot_3$ 
associated with ``others''), and 
map $x$ to this bot. 
It would be difficult for the data collector to obtain background knowledge about $\bmp$ conditioned on such a tag. 
In Section~\ref{sub:per_rest_dist_est}, 
we will explain how to estimate $\bmp$ in this case. 

\subsection{Privacy Properties}
\label{sub:privacy_PUM}

\colorB{We analyze the privacy properties of the PUM $\bmQ\uid{i}$. First, we show that it provides ULDP.} 

\begin{restatable}{prop}{propPUMULDP}
\label{prop:PUM_ULDP}
The PUM $\bmQ\uid{i}$ ($= \bmQ_{cmn} \circ f_{pre}\uid{i}$) provides $(\calXS \cup \calXS\uid{i},$ $\calYP,\epsilon)$-\ULDP{}.
\end{restatable}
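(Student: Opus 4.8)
The plan is to show that the composite mechanism $\bmQ\uid{i} = \bmQ_{cmn} \circ f_{pre}\uid{i}$ satisfies the two defining conditions of $(\calXS \cup \calXS\uid{i}, \calYP, \epsilon)$-\ULDP{} (Definition~\ref{def:rest}). By hypothesis $\bmQ_{cmn}$ is a $(\calZS, \calYP, \epsilon)$-utility-optimized mechanism, where $\calZS = \calXS \cup \{\bot_1, \ldots, \bot_\kappa\}$, and $f_{pre}\uid{i}$ is the deterministic pre-processor of (\ref{eq:f_pre_semantic}). The key observation is that $f_{pre}\uid{i}$ sends exactly the sensitive inputs $\calXS \cup \calXS\uid{i}$ into the sensitive intermediate set $\calZS$: data in $\calXS$ map to themselves (which lie in $\calZS$), user-specific sensitive data in $\calX_{S,k}\uid{i}$ map to $\bot_k \in \calZS$, and all non-sensitive inputs $x \in \calX \setminus (\calXS \cup \calXS\uid{i})$ map to themselves in $\calZ \setminus \calZS$. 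Thus I first record that $\bmQ\uid{i}(y|x) = \bmQ_{cmn}(y \mid f_{pre}\uid{i}(x))$, since $f_{pre}\uid{i}$ is deterministic.

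For the invertibility condition (\ref{eq:xs_ys_0}), fix $y \in \calYI$. Because $\bmQ_{cmn}$ provides $(\calZS, \calYP, \epsilon)$-\ULDP{}, there is a unique $z \in \calZ \setminus \calZS$ with $\bmQ_{cmn}(y|z) > 0$ and $\bmQ_{cmn}(y|z') = 0$ for all $z' \neq z$. Since $z \notin \calZS$ and the only intermediate values outside $\calZ \setminus \calX$ are the bots (all of which are in $\calZS$), we have $z \in \calX \setminus (\calXS \cup \calXS\uid{i}) = \calXN \setminus \calXS\uid{i}$, and $f_{pre}\uid{i}$ fixes this $z$, with no other input mapping to it. Hence the unique $x$ with $\bmQ\uid{i}(y|x) > 0$ is $x = z$, which lies in the non-sensitive set $\calX \setminus (\calXS \cup \calXS\uid{i})$, giving (\ref{eq:xs_ys_0}).

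For the privacy condition (\ref{eq:epsilon_OSLDP_whole}), fix any $x, x' \in \calX$ and $y \in \calYP$. Writing $z = f_{pre}\uid{i}(x)$ and $z' = f_{pre}\uid{i}(x')$, we have $\bmQ\uid{i}(y|x) = \bmQ_{cmn}(y|z)$ and $\bmQ\uid{i}(y|x') = \bmQ_{cmn}(y|z')$. Since $\bmQ_{cmn}$ provides $(\calZS, \calYP, \epsilon)$-\ULDP{}, property (\ref{eq:epsilon_OSLDP_whole}) for $\bmQ_{cmn}$ yields $\bmQ_{cmn}(y|z) \le e^\epsilon \bmQ_{cmn}(y|z')$ for the protected output $y$, which is exactly $\bmQ\uid{i}(y|x) \le e^\epsilon \bmQ\uid{i}(y|x')$. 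This establishes both conditions and completes the argument.

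I expect this to be essentially a verification rather than a hard proof: the real content is the bookkeeping that $f_{pre}\uid{i}$ maps precisely $\calXS \cup \calXS\uid{i}$ into $\calZS$ and is injective on its complement. The one step needing care is the invertibility argument, where I must confirm that the unique preimage $z$ produced by $\bmQ_{cmn}$ is a genuine element of $\calX$ (not a bot) so that $f_{pre}\uid{i}$ has a unique preimage for it; this follows because every bot belongs to $\calZS$ while $z \notin \calZS$. This is also a clean instance of the post-processing/pre-processing composition flavor already invoked informally in Section~\ref{sub:ULDP_theoretical}, so it could alternatively be phrased as a corollary of a pre-processing lemma, but the direct verification above is the most transparent route.
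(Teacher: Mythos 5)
Your overall strategy---writing $\bmQ\uid{i}(y|x) = \bmQ_{cmn}(y \mid f_{pre}\uid{i}(x))$ and verifying the two conditions of Definition~\ref{def:rest} directly---is the same as the paper's, and your treatment of the privacy condition (\ref{eq:epsilon_OSLDP_whole}) is correct and matches the paper's argument. The gap is in the invertibility step, at exactly the point you flagged as ``the one step needing care.'' You assert that $z \in \calZ \setminus \calZS$ implies $z \in \calX \setminus (\calXS \cup \calXS\uid{i})$, justified by the fact that every bot belongs to $\calZS$. But $\calZS = \calXS \cup \{\bot_1,\ldots,\bot_\kappa\}$ does \emph{not} contain $\calXS\uid{i}$: the user-specific sensitive data sit inside $\calXN = \calZ\setminus\calZS$ as intermediate values, and they are protected only because $f_{pre}\uid{i}$ re-routes them to bots, not because they are sensitive from the viewpoint of $\bmQ_{cmn}$. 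So ruling out the bots and $\calXS$ only yields $z \in \calXN$, which still permits $z \in \calXS\uid{i}$; in that case your conclusion fails, since $f_{pre}\uid{i}$ does not fix $z$ (it maps $z$ to some $\bot_k$), no input is mapped to the intermediate value $z$, and hence $\bmQ\uid{i}(y|x) = 0$ for \emph{every} $x \in \calX$---there is no input at all with positive probability of producing this $y$.

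To close the gap you need a case split. If $z \in \calXN \setminus \calXS\uid{i}$, your argument is correct as written. If $z \in \calXS\uid{i}$, the output $y$ is unreachable under $\bmQ\uid{i}$ (zero probability from every input), so it should be excluded from the effective output range of $\bmQ\uid{i}$, after which condition (\ref{eq:xs_ys_0}) holds for every remaining $y \in \calYI$; equivalently, one observes that any invertible output that actually occurs must originate from intermediate data in $\calXN \setminus \calXS\uid{i}$, whose unique preimage under $f_{pre}\uid{i}$ is itself. This is in effect how the paper's proof handles the issue: it phrases the pre-processor property conditionally---``if the intermediate data is $x \in \calXN$, then the input data is always $x \in \calXN$''---which is vacuously true when $x \in \calXS\uid{i}$, although the paper also leaves the unreachable-output case implicit. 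Your write-up, by contrast, makes a positively false set-theoretic claim ($z \notin \calZS \Rightarrow z \notin \calXS\uid{i}$), so the missing case analysis must be added for the proof to stand.
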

\colorBB{We also show that 
our PUM 
provides DP in that 
an adversary who has observed $y \in \calYP$ cannot determine, for any $i,j \in [n]$, whether it is obfuscated using $\bmQ\uid{i}$ or $\bmQ\uid{j}$, which means that $y \in \calYP$ reveals almost no information about $\calXS\uid{i}$: 
\begin{restatable}{prop}{propPUMDPcalXS}
\label{prop:PUM_DPcalXS}
For any $i,j \in [n]$, any $x \in \calX$, and any $y \in \calYP$, 
\begin{align*}
\bmQ\uid{i} (y | x) \leq e^\epsilon \bmQ\uid{j} (y | x).
\end{align*}
\end{restatable}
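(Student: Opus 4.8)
The plan is to reduce the claim to a single application of the DP guarantee that the common mechanism $\bmQ_{cmn}$ provides on protected outputs. Recall that $\bmQ\uid{i} = \bmQ_{cmn} \circ f_{pre}\uid{i}$ with $f_{pre}\uid{i}$ a \emph{deterministic} pre-processor. Hence, for every $x \in \calX$ and $y \in \calY$, the composition collapses to
\begin{align*}
\bmQ\uid{i}(y|x) = \bmQ_{cmn}\bigl(y \mid f_{pre}\uid{i}(x)\bigr),
\qquad
\bmQ\uid{j}(y|x) = \bmQ_{cmn}\bigl(y \mid f_{pre}\uid{j}(x)\bigr),
\end{align*}
with no mixture (summation) over the intermediate data, which is what makes the whole argument pointwise.

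First I would record that both images $z_i := f_{pre}\uid{i}(x)$ and $z_j := f_{pre}\uid{j}(x)$ lie in $\calZ$, simply because each pre-processor has codomain $\calZ$ by (\ref{eq:f_pre_semantic}). Next I would invoke that $\bmQ_{cmn}$ is a $(\calZS,\calYP,\epsilon)$-utility-optimized mechanism, so by property~2 of Definition~\ref{def:rest} (i.e.\ (\ref{eq:epsilon_OSLDP_whole}) with $\calZ$ playing the role of the input set) it satisfies, for \emph{every} pair of inputs $z,z' \in \calZ$ and every protected output $y \in \calYP$,
\begin{align*}
\bmQ_{cmn}(y|z) \le e^\epsilon\, \bmQ_{cmn}(y|z').
\end{align*}
Applying this with $z = z_i$ and $z' = z_j$, and then unfolding the two identities above, yields $\bmQ\uid{i}(y|x) \le e^\epsilon \bmQ\uid{j}(y|x)$ for all $x \in \calX$ and all $y \in \calYP$, which is exactly the claim.

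The step worth emphasizing — and the reason nothing harder is needed — is that the ULDP inequality (\ref{eq:epsilon_OSLDP_whole}) is quantified over \emph{arbitrary} input pairs in $\calZ$, not merely over sensitive inputs. Consequently I never have to case-split on whether $x$ is user-specific sensitive for $i$ but not for $j$, nor track which bot (if any) each pre-processor sends $x$ to: whatever the two intermediate values $z_i, z_j$ happen to be, the guarantee on protected outputs is uniform across all of $\calZ$. The only genuine ingredient is the determinism of $f_{pre}\uid{i}$, which is what lets the composition be read pointwise rather than as a mixture; I would state this explicitly so that the reduction to (\ref{eq:epsilon_OSLDP_whole}) is unambiguous. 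There is thus no real obstacle here — the substance of the result lives in the earlier fact that $\bmQ_{cmn}$ enforces $\epsilon$-indistinguishability on $\calYP$ for \emph{all} inputs, and this proposition is the immediate consequence for the two user-specific channels.
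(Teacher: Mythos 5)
Your proof is correct and follows essentially the same route as the paper's: both reduce the claim to the determinism of the pre-processors (so that $\bmQ\uid{i}(y|x) = \bmQ_{cmn}(y \mid f_{pre}\uid{i}(x))$ with no mixture) and then apply the second ULDP condition (\ref{eq:epsilon_OSLDP_whole}) of $\bmQ_{cmn}$, which holds for \emph{all} pairs of intermediate inputs in $\calZ$, not just sensitive ones. The paper merely writes out the unfolding of $f_{pre}\uid{i}$ and $f_{pre}\uid{j}$ casewise (bot vs.\ identity), whereas you abstract it as $z_i, z_j \in \calZ$ — the same argument in slightly cleaner form.
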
}

\colorB{We then analyze the secrecy of $\calXS\uid{i}$. The data collector, who knows the common-mechanism $\bmQ_{cmn}$, 
cannot obtain any information about $\calXS\uid{i}$ from $\bmQ_{cmn}$ \colorBB{and $y \in \calYP$}. 
Specifically, 
the data collector knows, for each $z \in \calZ$, whether $z \in \calZS$ or not by viewing $\bmQ_{cmn}$. 
However, she cannot obtain any information about $\calXS\uid{i}$ from $\calZS$, because 
she does not know the mapping between $\calXS\uid{i}$ and $\{\bot_1, \cdots, \bot_\kappa\}$ (i.e., $f_{pre}\uid{i}$). 
In addition, 
\colorBB{Propositions~\ref{prop:PUM_ULDP} and \ref{prop:PUM_DPcalXS} guarantee that $y \in \calYP$ reveals almost no information about both input data and $\calXS\uid{i}$.}
}

\colorB{For example, assume that the $i$-th user obfuscates her home 
$x \in \calXS \cup \calXS\uid{i}$ 
using the PUM $\bmQ\uid{i}$, and 
sends $y \in \calYP$ to the data collector. 
The data collector cannot infer 
either $x \in \calXS \cup \calXS\uid{i}$ or 
$z \in \calZS$ 
from 
$y \in \calYP$, since both $\bmQ_{cmn}$ and $\bmQ\uid{i}$ provide ULDP. 
This means that the data collector cannot infer \emph{the fact that she was at home} from $y$. 
Furthermore, the data collector cannot infer \emph{where her home is}, 
since $\calXS\uid{i}$ cannot be inferred from $\bmQ_{cmn}$ and $y \in \calYP$ as explained above.
}

\colorB{We need to take a little care when the $i$-th user obfuscates 
non-sensitive data $x \in \calXN \setminus \calXS\uid{i}$ 
using $\bmQ\uid{i}$ and sends $y \in \calYI$ to the data collector. 
In this case, the data collector learns 
$x$ from $y$, 
and therefore learns that $x$ is not sensitive (i.e., $x \notin \calXS\uid{i}$). 
Thus, the data collector, who knows that the user wants to hide her home, would reduce the number of possible candidates for her home from $\calX$ to  $\calX \setminus \{x\}$. 
However, if $|\calX|$ is large (e.g., $|\calX|$ = $625$ in our experiments using location data), the number $|\calX|-1$ of candidates is still large. 
Since the data collector cannot further reduce the number of candidates using $\bmQ_{cmn}$, her home is still kept 
strongly 
secret. 
In 
Section~\ref{sec:discussions_extension}, 
we also explain that the secrecy of $\calXS\uid{i}$ is achieved under reasonable assumptions even when she sends multiple 
data.
}

\subsection{Distribution Estimation}
\label{sub:per_rest_dist_est}
We now explain how to estimate a distribution $\bmp$ from data $\bmY$ obfuscated using the PUM. 
Let $\bmr^{(i)}$ be a distribution of intermediate data for the $i$-th user: 
\begin{align*}
\bmr^{(i)}(z) = 
\begin{cases}
\sum_{x \in \calX_{S,k}\uid{i}} \bmp(x) & \hspace{-2mm} \text{(if $z = \bot_{k}$ for some $k = 1, \ldots , \kappa$)}\\
0 & \hspace{-2mm} \text{(if $z \in \calXS\uid{i}$)}\\
\bmp(z) & \hspace{-2mm} \text{(otherwise)}.
\end{cases} 
\end{align*}
and $\bmr$ be the average of $\bmr^{(i)}$ over $n$ users; i.e., 
\arxiv{$\bmr(z) = \frac{1}{n} \allowbreak \sum_{i=1}^n \bmr^{(i)}(z)$}\conference{$\bmr(z) = \frac{1}{n} \sum_{i=1}^n \bmr^{(i)}(z)$} 
for any $z \in \calZ$. 
Note that 
$\sum_{x \in \calX} \bmp(x) = 1$ and 
$\sum_{z \in \calZ} \bmr(z) = 1$. 
Furthermore, 
let $\bmpi_k$ be a distribution of personal data $x \in \calX$ conditioned on $\bot_k$ defined by:
\begin{align}
\bmpi_k(x) &= \frac{\sum_{i=1}^n \bmp_k\uid{i}(x)}{\sum_{x' \in \calX} \sum_{i=1}^n \bmp_{k}\uid{i}(x')},
\label{eq:bmpi_kappa}
\\
\bmp_k\uid{i}(x) &=
\begin{cases}
\bmp(x) & \text{(if $f_{pre}\uid{i}(x) = \bot_k$)}\\
0 & \text{(otherwise)}.\nonumber
\end{cases} 
\end{align}
$\pi_k(x)$ in (\ref{eq:bmpi_kappa}) is a normalized sum of the probability $\bmp(x)$ of personal data $x$ whose corresponding intermediate data is $\bot_k$. 
Note that although $x \in \calX$ is deterministically mapped to $z \in \calZ$ for each user, we can consider the probability distribution $\bmpi_k$ 
for $n$ users. 
For example, if $\bot_k$ is tagged as ``home'', then $\bmpi_k$ is a distribution of users at home. 

\colorB{We propose a method to estimate a distribution $\bmp$ from obfuscated data $\bmY$ 
using some background knowledge about $\bmpi_k$ as an estimate $\hat{\bmpi}_k$ of $\bmpi_k$ 
(we explain the case where 
we have no background knowledge later). 
Our estimation method first estimates a distribution $\bmr$ of intermediate data 
from obfuscated data $\bmY$ using $\bmQ_{cmn}$. 
This can be performed in the same way as the common-mechanism scenario. 
Let $\hat{\bmr}$ be the estimate of $\bmr$. 
}

\colorB{After computing $\hbmr$, our method estimates $\bmp$ using 
the estimate $\hat{\bmpi}_k$ (i.e., background knowledge about $\bmpi_k$) 
as follows:}
\begin{align}
\hbmp(x) = \hat{\bmr}(x) + \sum_{k=1}^\kappa \hat{\bmr}(\bot_k) \hat{\bmpi}_k(x),~~\forall x \in \calX.
\label{eq:hbmp_PUM_2}
\end{align}
Note that $\hbmp$ in (\ref{eq:hbmp_PUM_2}) can be regarded as an empirical estimate of $\bmp$. 
Moreover, if both $\hbmr$ and $\hat{\bmpi}_k$ are in the probability simplex $\calC$, then $\hbmp$ in (\ref{eq:hbmp_PUM_2}) is always in $\calC$. 

\colorB{If 
we do 
not have estimates $\hat{\bmpi}_k$ 
for some bots (like the one tagged as ``others'' in Section~\ref{sub:semantic_PUM}), then 
we 
set $\hat{\bmpi}_k(x)$ 
in proportion to 
$\hat{\bmr}(x)$ 
over $x \in \calXN$ 
(i.e., 
$\hat{\bmpi}_k(x) = \frac{\hat{\bmr}(x)}{\sum_{x' \in \calXN} \hat{\bmr}(x')}$) 
for such bots. 
When 
we do 
not have any background knowledge $\hat{\bmpi}_1, \cdots, \hat{\bmpi}_\kappa$ for all bots, it amounts to simply discarding the estimates $\hat{\bmr}(\bot_1), \cdots, \hat{\bmr}(\bot_\kappa)$ for $\kappa$ bots and normalizing 
$\hat{\bmr}(x)$ 
over $x \in \calXN$ 
so that the sum is one.}

\subsection{Utility Analysis}
\label{sub:per_rest_util_anal}
We now theoretically analyze the data utility of our PUM. 
Recall that $\hbmp$, $\hbmr$, and $\hat{\bmpi}_k$ are the estimate of the distribution of personal data, intermediate data, and personal data conditioned on $\bot_k$, respectively. 
In the following, we show that the $l_1$ loss of $\hbmp$ can be upper-bounded as follows:

\begin{restatable}[$l_1$ loss of the PUM]{thm}{lonelossPUM}
\label{thm:l1_loss_decomposition}
\begin{align}
l_1(\hbmp, \bmp) \leq l_1(\hbmr,\bmr) + \sum_{k=1}^\kappa \hbmr(\bot_k) l_1(\hat{\bmpi}_k, \bmpi_k).
\label{eq:l1_loss_decomposition}
\end{align}
\end{restatable}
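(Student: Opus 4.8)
The plan is to show that the estimator~(\ref{eq:hbmp_PUM_2}) is exactly the empirical (plug-in) counterpart of an \emph{exact} identity relating the true distributions $\bmp$, $\bmr$, and $\bmpi_k$, and then to propagate the three separate estimation errors ($\hbmr$ versus $\bmr$ on $\calX$, $\hbmr$ versus $\bmr$ on the bots, and $\hat{\bmpi}_k$ versus $\bmpi_k$) through that identity by the triangle inequality.

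First I would establish the key identity
\begin{align*}
\bmp(x) = \bmr(x) + \sum_{k=1}^\kappa \bmr(\bot_k)\,\bmpi_k(x)
\qquad \text{for every } x \in \calX.
\end{align*}
Unwinding the definitions, $\bmr(z) = \frac{1}{n}\sum_{i=1}^n \bmr^{(i)}(z)$ gives $\bmr(x) = \frac{1}{n}\sum_{i=1}^n \bmp(x)\,\mathbf{1}[x \notin \calXS\uid{i}]$ and $\bmr(\bot_k) = \frac{1}{n}\sum_{i=1}^n \sum_{x' \in \calX_{S,k}\uid{i}} \bmp(x')$. Substituting the latter into the normalising denominator of~(\ref{eq:bmpi_kappa}) shows that denominator equals $n\,\bmr(\bot_k)$, so the product collapses to $\bmr(\bot_k)\,\bmpi_k(x) = \frac{1}{n}\sum_{i=1}^n \bmp(x)\,\mathbf{1}[x \in \calX_{S,k}\uid{i}]$. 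Summing over $k$ and using that $\calXS\uid{i} = \bigcup_{k}\calX_{S,k}\uid{i}$ is a disjoint union (each user-specific sensitive datum carries exactly one tag), the two indicators recombine via $\mathbf{1}[x \notin \calXS\uid{i}] + \mathbf{1}[x \in \calXS\uid{i}] = 1$, yielding the identity.

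Given this identity, I subtract it from~(\ref{eq:hbmp_PUM_2}) and, inside each summand, add and subtract $\hbmr(\bot_k)\bmpi_k(x)$ to decompose the pointwise error as
\begin{align*}
\hbmp(x) - \bmp(x)
= \bigl(\hbmr(x) - \bmr(x)\bigr)
+ \sum_{k=1}^\kappa \hbmr(\bot_k)\bigl(\hat{\bmpi}_k(x) - \bmpi_k(x)\bigr)
+ \sum_{k=1}^\kappa \bigl(\hbmr(\bot_k) - \bmr(\bot_k)\bigr)\bmpi_k(x).
\end{align*}
Taking absolute values, applying the triangle inequality, summing over $x \in \calX$, and using $\sum_{x \in \calX}\bmpi_k(x) = 1$ turns the last group of terms into $\sum_{k}|\hbmr(\bot_k) - \bmr(\bot_k)|$ and the middle group into $\sum_k \hbmr(\bot_k)\,l_1(\hat{\bmpi}_k,\bmpi_k)$. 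Finally, since $\calZ = \calX \cup \{\bot_1,\dots,\bot_\kappa\}$ is a disjoint union, the leftover $\sum_{x \in \calX}|\hbmr(x) - \bmr(x)|$ and $\sum_{k}|\hbmr(\bot_k) - \bmr(\bot_k)|$ combine into $l_1(\hbmr,\bmr)$, giving the claimed bound.

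I expect the main obstacle to be the bookkeeping in the first step: verifying that the normalising denominator in~(\ref{eq:bmpi_kappa}) is \emph{exactly} $n\,\bmr(\bot_k)$ so that it cancels cleanly, and keeping the per-user membership indicators $\mathbf{1}[x\in\calX_{S,k}\uid{i}]$ straight through the average over the $n$ users and the disjoint-union decomposition of $\calXS\uid{i}$. Once the exact identity is secured, the remaining error-propagation is a routine triangle-inequality argument with no further subtlety.
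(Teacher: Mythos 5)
Your proposal is correct and takes essentially the same route as the paper's proof: the paper introduces the intermediate estimate $\hbmp^*(x)=\hbmr(x)+\sum_{k=1}^\kappa\hbmr(\bot_k)\bmpi_k(x)$ and applies the triangle inequality via $l_1(\hbmp,\bmp)\leq l_1(\hbmp,\hbmp^*)+l_1(\hbmp^*,\bmp)$, which produces exactly your three-term pointwise decomposition. The only differences are cosmetic --- you merge the two triangle-inequality steps into a single add-and-subtract, and you verify the exact identity $\bmp(x)=\bmr(x)+\sum_{k=1}^\kappa\bmr(\bot_k)\bmpi_k(x)$ explicitly (correctly, via the cancellation of the normalizer $n\,\bmr(\bot_k)$ in (\ref{eq:bmpi_kappa})), where the paper asserts it in one line from the conditional-probability interpretation of $\bmpi_k$.
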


This means 
the upper-bound on the $l_1$ loss of $\hbmp$ can be decomposed into 
the $l_1$ loss of $\hbmr$ and of $\hat{\bmpi}_k$ weighted by $\hbmr(\bot_k)$. 

The first term in (\ref{eq:l1_loss_decomposition}) is the $l_1$ loss of $\hbmr$, which depends on 
$\bmQ_{cmn}$. 
For example, if we use 
the \colorBB{uRR} or \colorBB{uRAP} 
as $\bmQ_{cmn}$, the expectation of $l_1(\hbmr,\bmr)$ is given by Propositions~\ref{prop:l1_restRR} and \ref{prop:l1_restRAPPOR}, respectively. 
In Section~\ref{sec:exp}, we show they are very small. 

The second term in (\ref{eq:l1_loss_decomposition}) is 
the summation of the $l_1$ loss of $\hat{\bmpi}_k$ weighted by $\hbmr(\bot_k)$. 
If we accurately estimate $\bmpi_k$, 
the second term is very small. 
In other words, 
if we have enough background knowledge about $\bmpi_k$, 
we can accurately estimate $\bmp$ in the personalized-mechanism scenario. 

It should be noted that 
when the probability $\hbmr(\bot_k)$ is small, 
the second term in (\ref{eq:l1_loss_decomposition}) is small \emph{even if we have no background knowledge about $\bmpi_k$}. 
For example, when only a small number of users map $x \in \calXS\uid{i}$ to a tag named ``others'', 
they hardly affect the accuracy of $\hbmp$. 
Moreover, the second term in (\ref{eq:l1_loss_decomposition}) is upper-bounded by $2 \sum_{k=1}^\kappa \hbmr(\bot_k)$, since the $l_1$ loss is at most $2$. 
Thus, after computing $\hbmr$, 
the data collector can easily compute 
the worst-case value of the second term in (\ref{eq:l1_loss_decomposition}) to know the effect of the estimation error of $\hat{\bmpi}_k$ on the accuracy of $\hbmp$.

Last but not least, the second term in (\ref{eq:l1_loss_decomposition}) does not depend on $\epsilon$ (while the first term depends on $\epsilon$). 
Thus, the effect of the second term is relatively small when $\epsilon$ is small (i.e., high privacy regime), 
as shown in Section~\ref{sec:exp}. 

\smallskip
\noindent{\colorBB{\textbf{Remark.}}}~~\colorBB{Note that 
different privacy preferences 
might skew the distribution $\pi_k$. 
For example, doctors might not consider hospitals as sensitive as compared to patients. Consequently, the distribution $\pi_k$ conditioned on ``hospital'' might be a distribution of patients (not doctors) in hospitals. 
This kind of systematic bias can increase the estimation error of $\hat{\bmpi}_k$. 
Theorem~\ref{thm:l1_loss_decomposition} and the above discussions are also valid in this case.}

\section{Experimental Evaluation}
\label{sec:exp}

\subsection{Experimental Set-up}
\label{sub:set-up}
We conducted experiments using 
two large-scale datasets: 

\smallskip
\noindent{\textbf{Foursquare dataset.}}~~The Foursquare dataset (global-scale check-in dataset) \cite{Yang_TIST16} is one of the largest location datasets among publicly available datasets (e.g., see \cite{Gowalla}, \cite{CRAWDAD}, \cite{Yang_TSMC15}, \cite{Geolife}); it contains $33278683$ check-ins all over the world, each of which is associated with a POI ID and venue category (e.g., restaurant, shop, hotel, hospital, home, workplace). 

We used $359054$ 
\colorBB{check-ins} 
in Manhattan, assuming that each 
\colorBB{check-in} 
is from a different user. 
Then we divided Manhattan into $25 \times 25$ regions at regular intervals 
and used them as input alphabets; i.e., $|\calX| = 625$. 
The size of each region is about $400$m (horizontal) $\times$ $450$m (vertical). 
We assumed a region that includes a hospital visited by at least ten users as a sensitive region common to all users. 
The number of such regions was $|\calXS| = 15$. 
In addition, we assumed a region in $\calXN$ that includes a user's home or workplace 
as her user-specific sensitive region. 
The number of users at home and workplace was $5040$ and $19532$, respectively.

\smallskip
\noindent{\textbf{US Census dataset.}}~~The US Census (1990) dataset \cite{Lichman2013} was collected as part of the 1990 U.S. census. 
It contains responses from $2458285$ people (each person provides one response), each of which contains $68$ attributes. 

We used the responses from all people, 
and used age, income, marital status, and sex as attributes. 
Each attribute has $8$, $5$, $5$, and $2$ categories, respectively. 
(See \cite{Lichman2013} for details about the value of each category ID.)
We regarded a tuple of the category IDs as a total category ID, 
and used it as an input alphabet; i.e., $|\calX| = 400$ 
($= 8 \times 5 \times 5 \times 2$). 
We 
considered the fact that ``divorce'' and ``unemployment'' might be sensitive for many users \cite{Leahy_Psychology13}, and regarded such categories as sensitive for all users 
(to be on the safe side, as described in Section~\ref{sub:notations}). 
Note that people might be students until their twenties and might retire in their fifties or sixties. 
Children of age twelve and under cannot get married. 
We excluded such categories from sensitive ones. 
The number of sensitive categories was $|\calXS| = 76$. 

\smallskip
We used a frequency distribution of all people as a true distribution $\bmp$, and randomly chose a half of all people as users who provide their obfuscated data; i.e., $n=179527$ and $1229143$ in the Foursquare and US Census datasets, respectively. 
Here we did not use all people, because we would like 
to evaluate the non-private mechanism that does not obfuscate the personal data; i.e., 
the non-private mechanism has an estimation error in our experiments 
due to the random sampling from the population.

As 
utility, we evaluated the TV (Total Variation) by computing the sample mean over 
a hundred 
realizations of $\bmY$.

\subsection{Experimental Results}
\label{sub:res}

\noindent{\textbf{Common-mechanism scenario.}}~~We first focused on the common-mechanism scenario, and 
evaluated the RR, RAPPOR, 
\colorBB{uRR}, 
and 
\colorBB{uRAP}. 
As distribution estimation methods, 
we used empirical estimation, 
empirical estimation with the significance threshold, 
and EM reconstruction (denoted by ``emp'', ``emp+thr'', and ``EM'', respectively). 
In ``emp+thr'', we set the significance level $\alpha$ to be $\alpha = 0.05$, and uniformly assigned the remaining probability to each of the estimates below the significance threshold in the same way as \cite{Wang_USENIX17}.

\begin{figure}[t]
\centering
\includegraphics[width=0.97\linewidth]{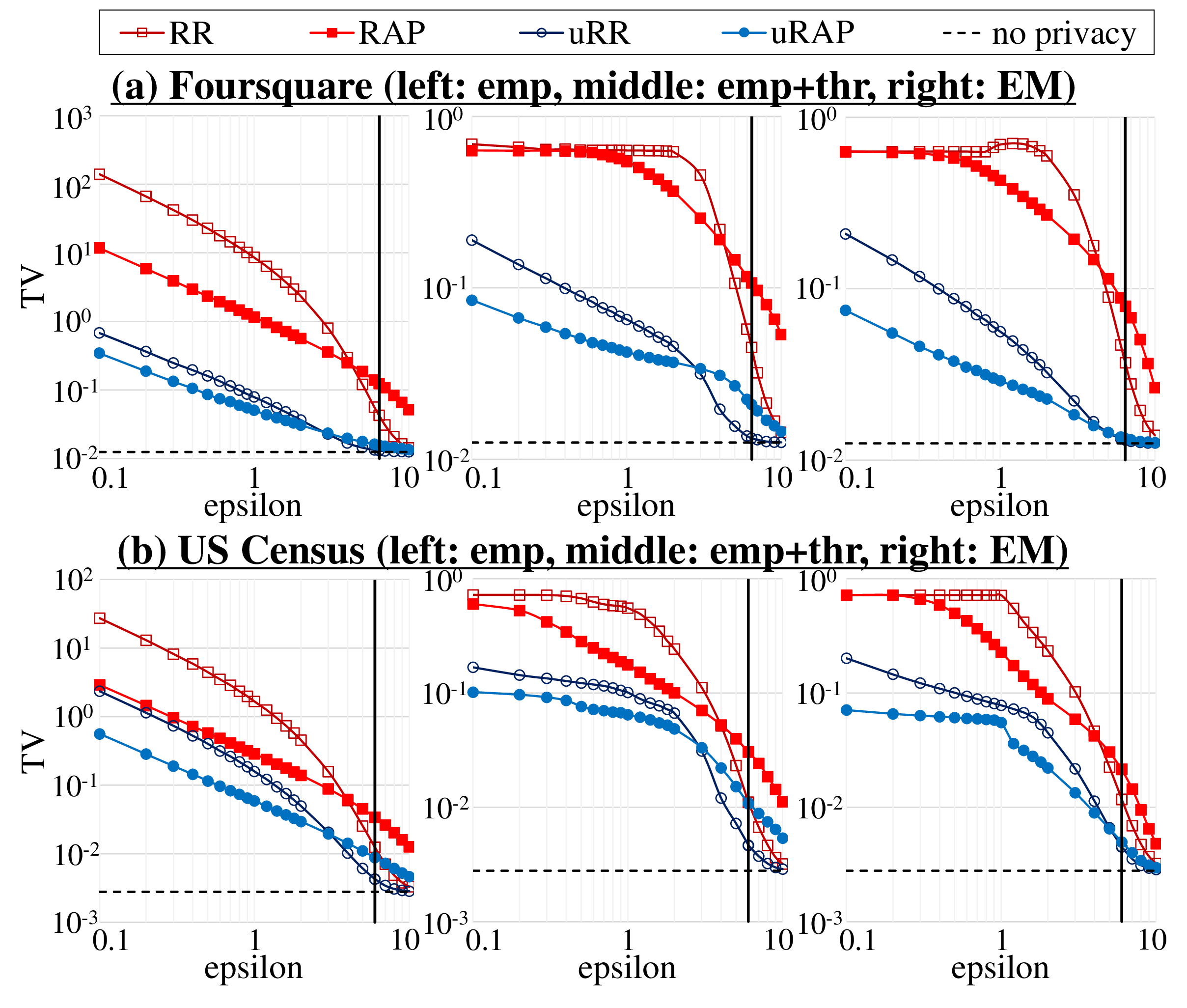}
\vspace{-5mm}
\caption{$\epsilon$ vs.~TV (common-mechanism). 
A bold 
line parallel to the $y$-axis represents 
$\epsilon = \ln |\calX|$.
}
\label{fig:res1_TV}
\end{figure}

Figure~\ref{fig:res1_TV} shows 
the results in the case where $\epsilon$ is changed from $0.1$ to $10$. 
\colorBB{``no privacy'' represents the non-private mechanism.} 
It can be seen that 
our mechanisms 
outperform the existing mechanisms 
by one or two orders of magnitude. 
Our mechanisms are effective especially 
in the Foursquare dataset, 
since the proportion of sensitive regions is very small ($15/625=0.024$). 
Moreover, 
the \colorBB{uRR} 
provides almost the same performance as the non-private mechanism when $\epsilon = \ln |\calX|$, as described in Section~\ref{sub:utility_analysis}. 
It can also be seen that 
``emp+thr'' and ``EM'' significantly outperform ``emp'', 
since the estimates in ``emp+thr'' and ``EM'' are always non-negative. 
Although ``EM'' outperforms ``emp+thr'' for the RAPPOR and 
\colorBB{uRAP} 
when $\epsilon$ was large, 
the two estimation methods provide very close performance as a whole.

We then evaluated the relationship between the number of sensitive regions/categories and the TV. 
To this end, 
we randomly chose $\calXS$ from $\calX$, and increased $|\calXS|$ from $1$ to $|\calX|$ 
(only in this experiment). 
We attempted 
one hundred
cases for randomly choosing $\calXS$ from $\calX$, and evaluated the TV by computing the sample mean over 
one hundred 
cases. 

\begin{figure}[t]
\centering
\includegraphics[width=0.97\linewidth]{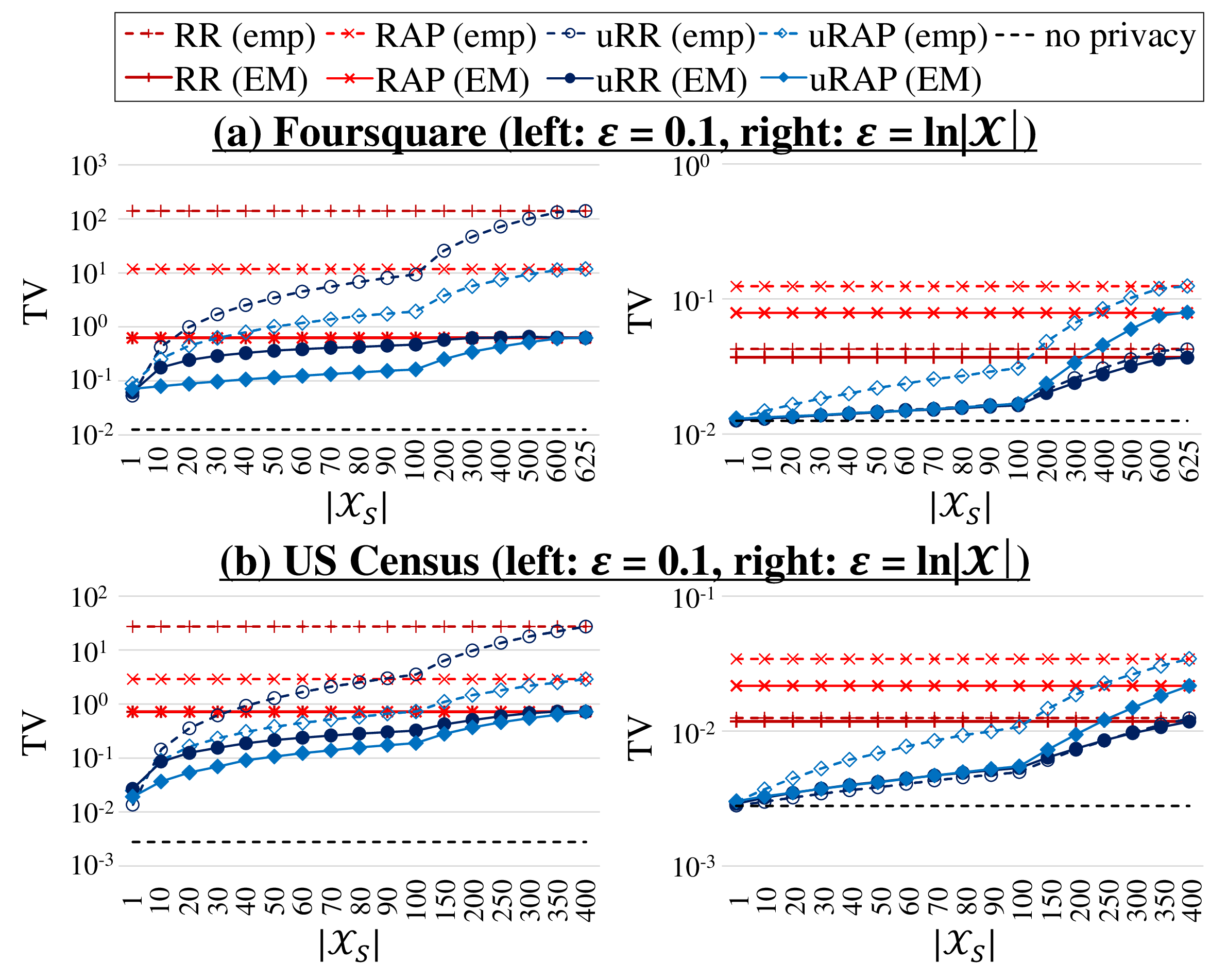}
\vspace{-5mm}
\caption{$|\calXS|$ vs.~TV when $\epsilon = 0.1$ or $\ln |\calX|$. 
}
\label{fig:res2_TV}
\vspace{1.5mm}
\includegraphics[width=0.97\linewidth]{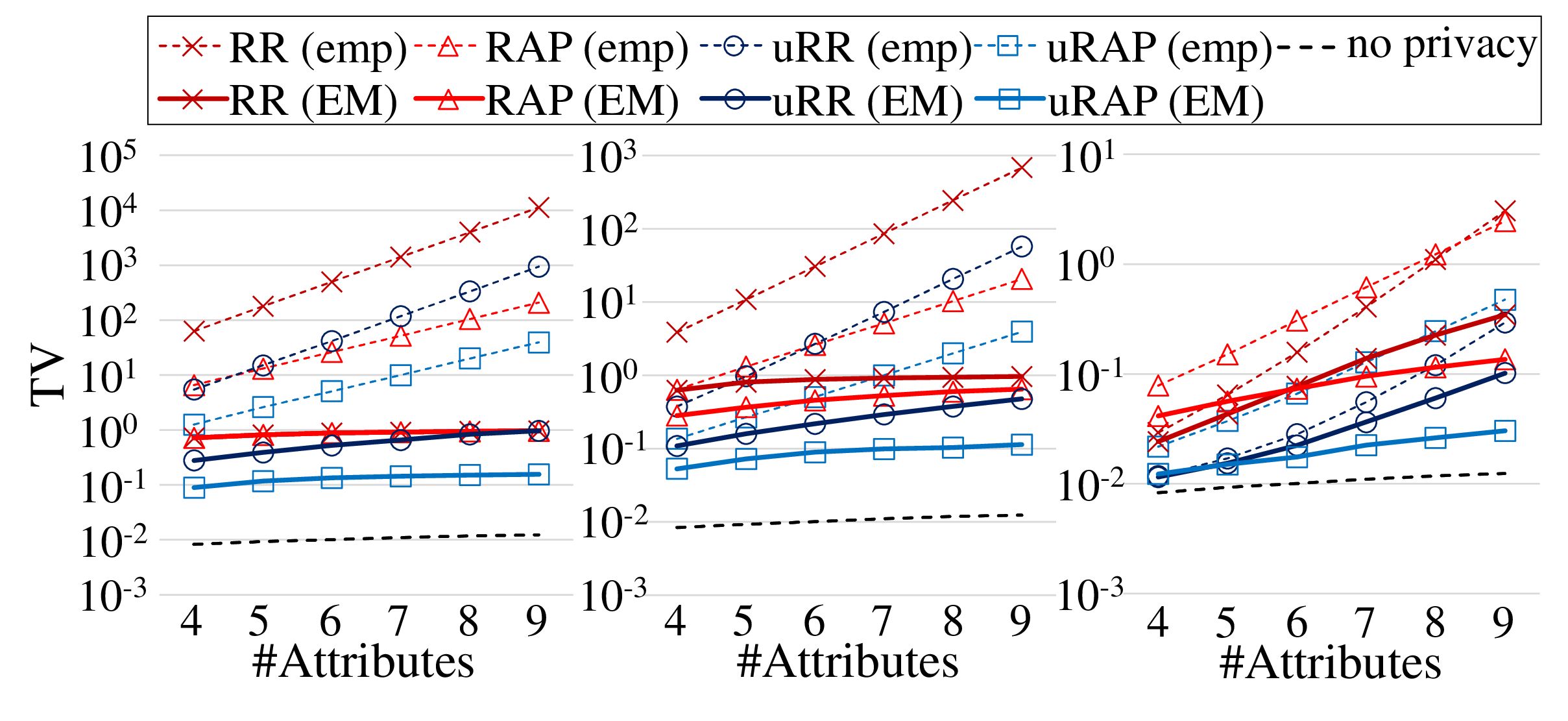}
\vspace{-5mm}
\caption{Number of attributes vs.~TV (US Census dataset; left: $\epsilon=0.1$, middle: $\epsilon=1.0$, right: $\epsilon=6.0$).
}
\label{fig:res4_TV}
\end{figure}

Figure~\ref{fig:res2_TV} shows the results 
\colorBB{for} 
$\epsilon = 0.1$ (high privacy regime) 
or $\ln |\calX|$ (low privacy regime). 
Here we omit the performance of ``emp+thr'', since it is very close to that of ``EM'' in the same way as in Figure~\ref{fig:res1_TV}.
The 
\colorBB{uRAP} 
and 
\colorBB{uRR} 
provide the best performance when $\epsilon = 0.1$ and $\ln |\calX|$, respectively. 
In addition, 
the \colorBB{uRR} 
provides the performance close to the non-private mechanism when $\epsilon = \ln |\calX|$ and 
the number $|\calXS|$ of sensitive regions/categories 
is less than $100$. 
The performance of 
the \colorBB{uRAP} 
is also close to that of the non-private mechanism when $|\calXS|$ is 
less than $20$ 
(note that $|\calX|^{\frac{3}{4}} = 125$ and $89$ in the Foursquare and US Census datasets, respectively). 
However, it rapidly increases with increase in $|\calXS|$. 
Overall, our theoretical results in Section~\ref{sub:utility_analysis} hold for the two real datasets. 

We also evaluated the performance when the number of attributes was increased from $4$ to $9$ in the US Census dataset. 
We added, one by one, five attributes as to whether or not a user has served in the military during five periods (``Sept80'', ``May75880'', ``Vietnam'', ``Feb55'', and ``Korean'' in \cite{Dua2017}; we added them in this order). 
We assumed that these attributes are non-sensitive. 
Since 
each of the five attributes had two categories (1: yes, 0: no), 
$|\calX|$ (resp.~$|\calXS|$) was changed from $400$ to $12800$ (resp.~from $76$ to $2432$). 
We randomly chose $n=240000$ people as users who provide obfuscated data, and 
evaluated the TV by computing the sample mean over ten realizations of $\bmY$ (only in this experiment).

Figure~\ref{fig:res4_TV} shows the results in the case where $\epsilon = 0.1$, $1.0$, or $6.0$ (=$\ln 400$). Here we omit the performance of ``emp+thr'' in the same way as Figure~\ref{fig:res2_TV}. 
Although the TV increases with an increase in the number of attributes, 
overall our utility-optimized mechanisms remain effective, compared to the existing mechanisms.
One exception is the case where $\epsilon=0.1$ and the number of attributes is $9$; 
the TV of the RR (EM), RAPPOR  (EM), and 
\colorBB{uRR} 
(EM) is almost $1$. 
Note that 
when we use the EM reconstruction method, 
the worst value of the TV is $1$. 
Thus, as with the RR and RAPPOR,
the \colorBB{uRR} 
fails to estimate a distribution in this case. 
On the other hand, the TV of 
the \colorBB{uRAP} 
(EM) is much smaller than $1$ even in this case, 
which is consistent with the fact that 
the \colorBB{uRAP} 
is order optimal in the high privacy regime. 
Overall, 
the \colorBB{uRAP} 
is robust to the increase of the attributes at the same value of $\epsilon$ 
(note that 
for large $|\calX|$, 
$\epsilon = 1.0$ or $6.0$ is a medium privacy regime where $0 \ll \epsilon \ll \ln |\calX|$).

We also measured 
the running time (i.e., time to estimate $\bmp$ from $\bmY$)
of 
``EM'' 
(which sets 
the estimate by ``emp+thr'' 
as an initial value of $\hbmp$) 
on an Intel Xeon CPU E5-2620 v3 (2.40 GHz, 6 cores, 12 logical processors) with 32 GB RAM. We found that the running time increases
roughly linearly with the number of attributes.
For example, 
when $\epsilon = 6.0$ and 
the number of attributes is $9$, 
the running time of ``EM'' 
required 
$3121$, $1258$, $5225$, and $1073$ 
seconds 
for 
``RR'', ``uRR'', ``RAP'', and ``uRAP'', respectively. 
We also measured the running time of 
`emp'' and ``emp+thr'', and found that 
they required less than one second even when the number of attributes is $9$.
Thus, if ``EM'' requires too much time for a large number of attributes, ``emp+thr'' would be a good alternative to ``EM''.

\begin{figure}[t]
\centering
\includegraphics[width=0.97\linewidth]{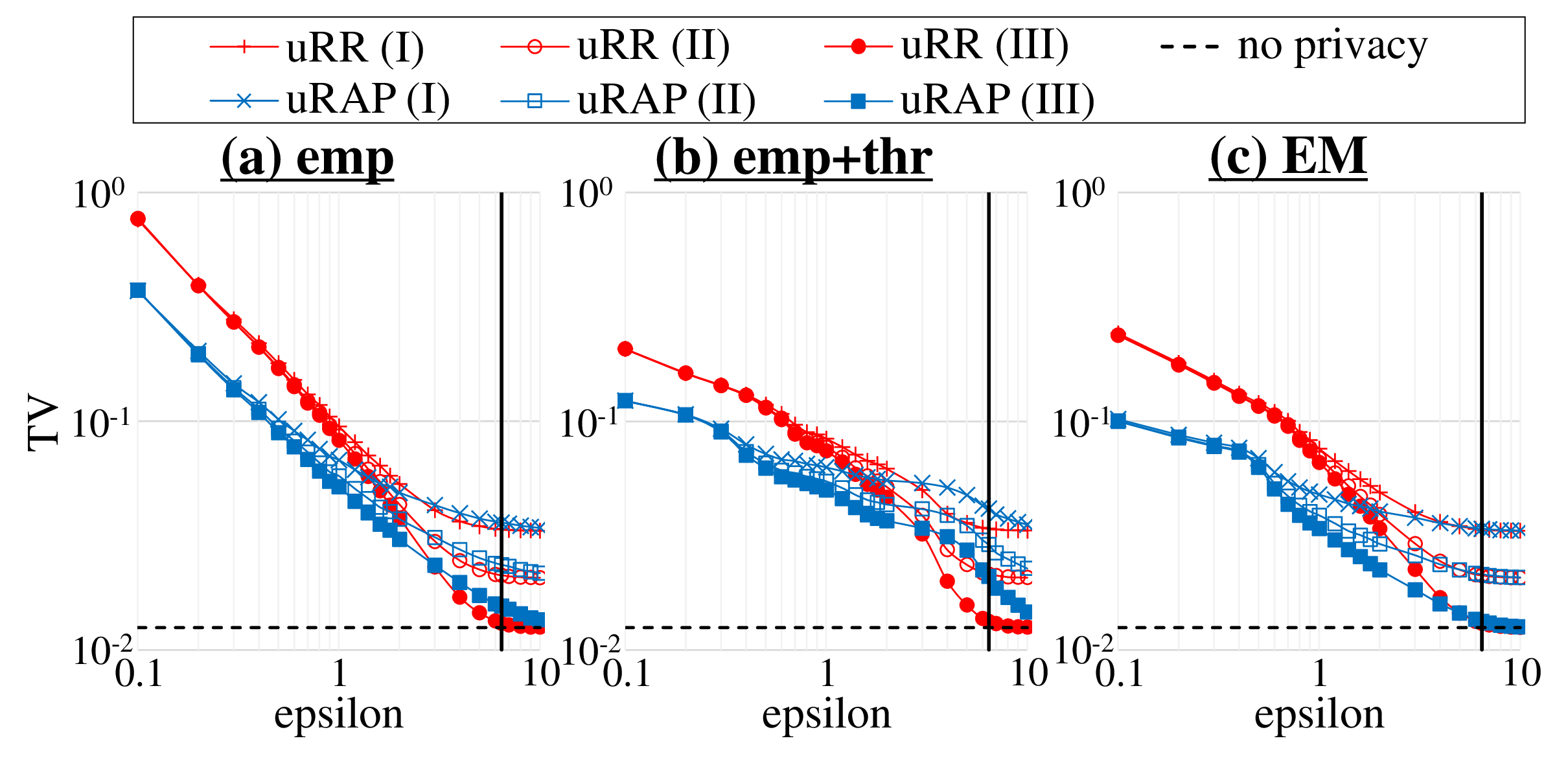}
\vspace{-5mm}
\caption{$\epsilon$ vs.~TV 
(personalized-mechanism) 
((I): w/o 
knowledge, (II): POI distribution, (III): true distribution).
}
\label{fig:res3_TV}
\includegraphics[width=1.0\linewidth]{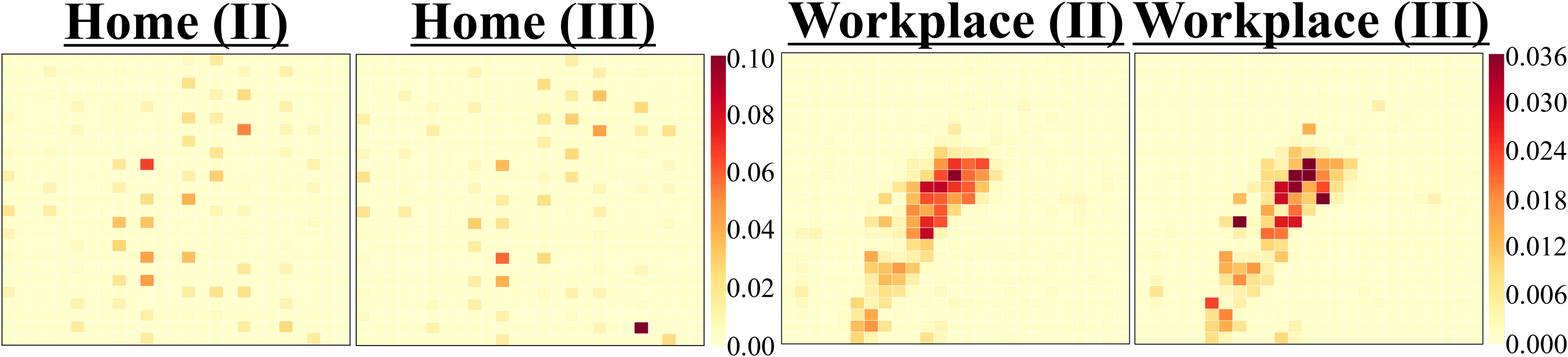}

\vspace{-6mm}
\caption{Visualization of the distributions 
((II): POI distribution, (III): true distribution).
}
\label{fig:res4_dist}
\end{figure}

\smallskip
\noindent{\textbf{Personalized-mechanism scenario.}}~~We then 
focused on 
the personalized-mechanism scenario, and 
evaluated 
our utility-optimized mechanisms 
using the Foursquare dataset. 
We used the PUM with $\kappa=2$ semantic tags 
(described in Section~\ref{sub:semantic_PUM}), which maps ``home'' and `workplace'' to bots $\bot_1$ and $\bot_2$, respectively. 
As the background knowledge about the bot distribution $\bmpi_k$ ($1 \leq k \leq 2)$, we considered three cases: 
(I) we do not have any background knowledge; 
(II) we use a distribution of POIs tagged as ``home'' (resp.~``workplace''), which is computed from the POI data in \cite{Yang_TIST16}, as an estimate of the bot probability $\hat{\bmpi}_1$ (resp.~$\hat{\bmpi}_2$);
(III) we use the true distributions (i.e., $\hat{\bmpi}_k = \bmpi_k$). 
Regarding (II), we emphasize again that it is not a user distribution but a ``POI distribution'', and can be easily obtained via the Foursquare venue API \cite{Yang_TIST16}.

Figure~\ref{fig:res3_TV} shows the results. 
We also show the POI and true distributions in Figure~\ref{fig:res4_dist}. 
It can be seen that 
the performance of (II) 
lies in 
between that of (I) and (III), 
which shows that the estimate $\hat{\bmpi}_k$ of the bot distribution affects utility. 
However, when $\epsilon$ is smaller than $1$, all of
(I), (II), and (III) 
provide almost the same performance, since the effect of the estimation error of $\hat{\bmpi}_k$ does not depend on $\epsilon$, 
as described in Section~\ref{sub:per_rest_util_anal}. 

\begin{table}[t]
\caption{$l_1$ loss $l_1(\hbmp, \bmp)$ and the first and second terms in the right-hand side of (\ref{eq:l1_loss_decomposition}) in the case where $\epsilon = \ln |\calX|$ and the EM reconstruction method is used.}
\centering
\hbox to\hsize{\hfil
\begin{tabular}{l|l|l|l}
\hline
Method	&	$l_1(\hbmp, \bmp)$ 	&	first term	&	second term\\
\hline
uRR (\textbf{I})	&	$6.73 \times 10^{-2}$	&	$2.70 \times 10^{-2}$	&	$7.34 \times 10^{-2}$\\
uRR (\textbf{II})	&	$4.24 \times 10^{-2}$	&	$2.70 \times 10^{-2}$	&	$2.96 \times 10^{-2}$\\
uRR (\textbf{III})	&	$2.62 \times 10^{-2}$	&	$2.70 \times 10^{-2}$	&	$0$\\
uRAP (\textbf{I})	&	$6.77 \times 10^{-2}$	&	$2.76 \times 10^{-2}$	&	$7.35 \times 10^{-2}$\\
uRAP (\textbf{II})	&	$4.28 \times 10^{-2}$	&	$2.76 \times 10^{-2}$	&	$2.96 \times 10^{-2}$\\
uRAP (\textbf{III})	&	$2.67 \times 10^{-2}$	&	$2.76 \times 10^{-2}$	&	$0$\\
\hline
\end{tabular}
\hfil}
\label{tab:first_second_terms}
\end{table}

We 
also 
computed the $l_1$ loss $l_1(\hbmp, \bmp)$ and 
the first and second terms in the right-hand side of (\ref{eq:l1_loss_decomposition}) to investigate whether Theorem~\ref{thm:l1_loss_decomposition} holds. 
Table~\ref{tab:first_second_terms} shows the results 
(we averaged the values over 
one hundred
realizations of $\bmY$). 
It can be seen that $l_1(\hbmp, \bmp)$ is smaller than the summation of the first and second terms in all of the methods, which shows that Theorem~\ref{thm:l1_loss_decomposition} holds in our experiments.

\smallskip
From these experimental results, we conclude that 
our proposed methods are very effective 
in both the common-mechanism and personalized-mechanism scenarios. 
In Appendix~\ref{sec:results_MSE}, 
we  
show the MSE has similar results to the TV.

\section{Discussions}
\label{sec:discussions_extension}
\noindent{\textbf{On the case of multiple 
data 
per user.}}~~We have so far assumed that each user sends only 
a single 
\colorBB{datum}. 
Now we discuss
the case where 
each user sends multiple 
data 
based on the 
compositionality of \ULDP{} described in 
Section~\ref{sub:ULDP_theoretical}. 
Specifically, 
when 
a user sends $t$ $(>1)$ 
data, 
we obtain $(\calXS, (\calYP)^t, \epsilon)$-\ULDP{} in total by obfuscating each 
data 
using the $(\calXS,\calYP, \epsilon/t)$-utility-optimized mechanism. 
Note, however, that 
the amount of noise added to each data increases with 
increase in $t$. 
Consequently, 
for $\epsilon \in [0,t]$, 
the lower bound on the $l_1$ (resp.~$l_2$) loss 
(described in Section~\ref{sub:ULDP_theoretical}) can be expressed as 
$\Theta(\frac{\sqrt{t}|\calXS|}{\sqrt{n \epsilon^2}})$ (resp.~$\Theta(\frac{t|\calXS|}{n \epsilon^2})$), which 
increases with increase in $t$. 
Thus, $t$ cannot be large for distribution estimation in practice. 
This is also common to all LDP mechanisms.

Next we discuss 
the secrecy of $\calXS\uid{i}$. 
Assume that the $i$-th user 
\colorBB{obfuscates $t$ data using different seeds, and} 
sends $t_P$ 
protected 
data 
in $\calYP$ and $t_I$ 
invertible 
data 
in $\calYI$, 
where $t = t_P + t_I > 1$ 
\colorBB{(she can also use the same seed for the same data to reduce $t_I$ as in \cite{Erlingsson_CCS14})}.
If all the $t_I$ 
data 
in $\calYI$ are different from each other, 
the data collector learns $t_I$ original 
data 
in $\calXN$.
However, 
$t_I$ ($\leq t$) 
cannot be large in practice, 
as explained above.
In addition, 
in many 
applications, a user's personal data is highly non-uniform and sparse. 
In locations data, for example, a user often visits only a small number of regions in the whole map $\calX$.
Let $\mathcal{T}\uid{i} \subseteq \calXN$ be a set of possible input values 
for the $i$-th user 
in $\calXN$. 
Then, 
even if $t_I$ is large, 
the data collector 
cannot learn more than $|\mathcal{T}\uid{i}|$ 
data 
in~$\calXN$. 

Moreover, 
the $t_P$ 
data 
in $\calYP$ reveal almost no information about $\calXS\uid{i}$, 
since $\bmQ\uid{i}$ provides $(\calXS, (\calYP)^t, \epsilon)$-\ULDP{}. 
$\bmQ_{cmn}$ provides no information about $\calXS\uid{i}$, since $f_{pre}\uid{i}$ is kept secret. 
Thus, 
the data collector, 
who knows that the user wants to hide her home, 
cannot 
reduce the number of candidates for her home from 
$\max\{|\calX| - t_I, |\calX| - |\mathcal{T}\uid{i}|\}$ using the $t_P$ 
data 
and $\bmQ_{cmn}$. 
If either $t_I$ or $|\mathcal{T}\uid{i}|$ is much smaller than $|\calX|$, 
her home is kept strongly secret. 

\colorBB{Note that $\bmp$ can be estimated 
even if $\calXS\uid{i}$ changes over time. $\calXS\uid{i}$ is also kept strongly secret if $t_I$ or $|\mathcal{T}\uid{i}|$ is small.}

\smallskip
\noindent{\textbf{On the correlation between $\calXS$ and $\calXN$.}}~~It should also be noted that 
there might be a correlation between sensitive data $\calXS$ and non-sensitive data $\calXN$. 
For example, 
if a user discloses a non-sensitive region close to a sensitive region including her home, the adversary might  infer approximate information about the original location (e.g., the fact that the user lives in Paris). 
However, we emphasize that if the size of each region is large, the adversary cannot infer the exact location such as the exact home address. 
Similar approaches can be seen in 
a state-of-the-art location privacy 
measure 
called \emph{geo-indistinguishability} \cite{Andres_CCS13,Bordenabe_CCS14,Oya_CCS17,Shokri_PoPETs15}. 
Andr\'{e}s \textit{et al.}~\cite{Andres_CCS13} considered privacy protection within a radius of $200$m from the original location, whereas the size of each region in our experiments was about $400$m $\times$ $450$m (as described in Section~\ref{sub:set-up}). 
We can protect the exact location by setting the size of each region to be large enough, or setting all regions close to a user's sensitive location to be sensitive. 

There might also be a correlation 
between two attributes (e.g., income and marital status) 
in the US Census dataset. 
However, 
we combined the four category IDs into a total category ID for each user as described in Section~\ref{sub:set-up}. 
Thus, 
there is only ``one'' category ID for each user. 
Assuming that each user's 
data 
is independent, 
there is no correlation between 
data. 
Therefore, we conclude that 
the sensitive data are 
strongly protected 
in both the Foursquare and US Census datasets 
in our experiments. 

It should be noted, however, that 
the number of total category IDs increases exponentially with the number of attributes. 
Thus, 
when 
there are many attributes 
as in Figure~\ref{fig:res4_TV}, 
the estimation accuracy might be increased by obfuscating each attribute independently 
(rather than obfuscating a total ID) 
while considering the correlation among attributes. 
We also need to consider a correlation among ``users'' 
for some types of personal data (e.g., flu status). 
For 
rigorously protecting 
such 
correlated data, 
we should 
incorporate 
\colorBB{Pufferfish privacy \cite{Kifer_TODS14,Song_SIGMOD17} into ULDP, as described in Section~\ref{sec:intro}.}

\section{Conclusion}
\label{sec:conc}
In this paper, we introduced the notion of ULDP that guarantees privacy equivalent to LDP for only sensitive data. 
We proposed 
ULDP mechanisms 
in both the 
common and personalized mechanism scenarios. 
We evaluated the utility of our mechanisms theoretically 
and 
demonstrated the effectiveness of our mechanisms through experiments.

\arxiv{\balance}

\bibliographystyle{abbrv}
\bibliography{sigproc_abrv}

\appendix

\section{Notations}
\label{sec:notations}

We show the basic notations used throughout this paper in Table~\ref{tab:notations}.

\section{Properties of ULDP}
\label{sec:basic-properties}

In this section we present basic properties of ULDP: 
adaptive sequential composition, 
post-processing, 
and  the compatibility with LDP. 
We also prove that 
the utility-optimized RR and 
the utility-optimized RAPPOR provide ULDP.

\subsection{Sequential Composition}
\label{sub:sequential-composition}

Below we prove that ULDP provides the compositionality.

\SequentialComposition*

\begin{proof}
Let $\calY_{0I} = \calY \setminus \calY_{0P}$ and $\calY_{1I} = \calY \setminus \calY_{1P}$.
Let $\bmQ$ be the sequential composition of $\bmQ_0$ and $\bmQ_1$; 
i.e., 
\begin{align*}
\bmQ((y_0,y_1) | x) 
= \bmQ_0(y_0 | x) \bmQ_1(y_1 | (y_0, x)).
\end{align*}

We first show that $\bmQ$ satisfies the first condition (\ref{eq:xs_ys_0}) in Definition~\ref{def:rest}.
Let $(\calY_0\times\calY_1)_I = \calY_0\times\calY_1 \setminus (\calY_0\times\calY_1)_P$, which can be expressed as follows:
\begin{align*}
 (\calY_0\times\calY_1)_I = \left\{ (y_0,y_1)\in\calY_0\times\calY_1
 \mid y_0\in\calY_{0I} \,\mbox{ or }\, y_1\in\calY_{1I}
 \right\}.
\end{align*}
If either $y_0\in\calY_{0I}$ or $y_1\in\calY_{1I}$, then it reveals the corresponding input $x \in \calXN$; 
i.e., 
if $y_0\in\calY_{0I}$, 
then 
there exists an $x \in \calXN$ such that 
$\bmQ_0(y_0|x) > 0$ and 
$\bmQ_0(y_0|x') = 0$ for any $x' \neq x$; 
if $y_0\in\calY_{1I}$, 
then 
there exists an $x \in \calXN$ such that 
$\bmQ_1(y_1|(y_0,x)) > 0$ and 
$\bmQ_1(y_1|(y_0,x')) = 0$ for any $x' \neq x$. 
Thus, for any $(y_0,y_1) \in (\calY_0\times\calY_1)_I$, there exists an $x \in \calXN$ such that 
$\bmQ((y_0,y_1)|x) > 0$ and 
$\bmQ((y_0,y_1)|x') = 0$ for any $x' \neq x$.

Next we show that $\bmQ$ satisfies the second condition (\ref{eq:epsilon_OSLDP_whole}).
Let $x,x'\in\calX$ and $(y_0,y_1)\in(\calY_0\times\calY_1)_P$.
Then $y_0\in\calY_{P0}$ and $y_1\in\calY_{P1}$.
Hence we obtain:
\begin{align*}
&\bmQ((y_0,y_1) | x) \nonumber\\
&= \bmQ_0(y_0 | x) \bmQ_1(y_1 | (y_0, x)) \\
&\le e^{\epsilon_0} \bmQ_0(y_0 | x') \bmQ_1(y_1 | (y_0, x))
~~~~~\text{(by $y_0\in\calY_{0P}$)} \\
&\le e^{\epsilon_0} \bmQ_0(y_0 | x') e^{\epsilon_1} \bmQ_1(y_1 | (y_0, x'))
~~\text{(by $y_1\in\calY_{1P}$)} \\
&= e^{\epsilon_0+\epsilon_1}\bmQ((y_0,y_1) | x')
.
\end{align*}
\end{proof}

\subsection{Post-processing}
\label{sub:post-process}

\colorBB{
We first define a class of post-processing randomized algorithms that 
preserve data types: 
\begin{definition}[Preservation of data types]\label{def:sense-preserving}\rm
Let $\calYP$ and $\calZP$ 
be 
sets of protected data, 
and 
$\calYI$ and $\calZI$ be 
sets of invertible data.
Given a randomized algorithm $\bmQ_1$ from $\calYP\cup\calYI$ to $\calZP\cup\calZI$, 
we say that $\bmQ_1$ 
\emph{preserves data types}
if it satisfies:
\begin{itemize}
\item for any $z\in\calZP$ and any $y\in\calYI$, $\bmQ_1(z | y) = 0$, and
\item for any $z\in\calZI$, there exists a $y\in\calYI$ such that 
$\bmQ_1(z | y) > 0$ and $\bmQ_1(z | y') = 0$ for any $y' \neq y$.
\end{itemize}
\end{definition}}

\colorBB{
Then we show that \ULDP{} is immune to the post-processing by this class of randomized algorithms.
\begin{restatable}[Post-processing]{prop}{propPostProcess}
\label{prop:PostProcess}
Let $\varepsilon \ge 0$.
Let $\calZP$ and $\calZI$ be sets of protected and invertible data respectively, and $\calZ = \calZP\cup\calZI$.
Let $\bmQ_1$ be a randomized algorithm from $\calY$ to $\calZ$ that 
preserves 
data types. 
If an obfuscation mechanism $\bmQ_0$ from $\calX$ to $\calY$ provides $(\calXS,\calYP,\varepsilon)$-\ULDP{} then the composite function $\bmQ_1\circ\bmQ_0$ provides $(\calXS,\calZP,\varepsilon)$-\ULDP{}.
\end{restatable}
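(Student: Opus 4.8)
The plan is to verify directly that the composite $\bmQ_1\circ\bmQ_0$, given by $(\bmQ_1\circ\bmQ_0)(z\mid x) = \sum_{y\in\calY}\bmQ_0(y\mid x)\,\bmQ_1(z\mid y)$, satisfies the two defining conditions of $(\calXS,\calZP,\varepsilon)$-\ULDP{} in Definition~\ref{def:rest}. The two halves of the data-type-preservation hypothesis (Definition~\ref{def:sense-preserving}) are tailored to feed exactly into these two conditions, so each verification reduces to collapsing the sum over $y$ onto the relevant subset and then invoking the corresponding property of $\bmQ_0$.

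First I would establish condition (\ref{eq:epsilon_OSLDP_whole}) for protected outputs. Fix $x,x'\in\calX$ and $z\in\calZP$. Since $\bmQ_1$ preserves data types, $\bmQ_1(z\mid y)=0$ for every $y\in\calYI$, so the defining sum for $(\bmQ_1\circ\bmQ_0)(z\mid x)$ ranges only over $y\in\calYP$. On this range I can apply property (\ref{eq:epsilon_OSLDP_whole}) of $\bmQ_0$, namely $\bmQ_0(y\mid x)\le e^{\varepsilon}\bmQ_0(y\mid x')$, term by term; factoring out $e^{\varepsilon}$ and re-expanding the truncated sum as $(\bmQ_1\circ\bmQ_0)(z\mid x')$ yields $(\bmQ_1\circ\bmQ_0)(z\mid x)\le e^{\varepsilon}(\bmQ_1\circ\bmQ_0)(z\mid x')$, as required.

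Next I would establish the invertibility condition (\ref{eq:xs_ys_0}) for $z\in\calZI$. The second clause of Definition~\ref{def:sense-preserving} supplies a unique $y^{*}\in\calYI$ with $\bmQ_1(z\mid y^{*})>0$ and $\bmQ_1(z\mid y')=0$ for $y'\neq y^{*}$, which collapses the composite sum to the single term $(\bmQ_1\circ\bmQ_0)(z\mid x)=\bmQ_0(y^{*}\mid x)\,\bmQ_1(z\mid y^{*})$. Because $y^{*}\in\calYI$ and $\bmQ_0$ provides \ULDP{}, property (\ref{eq:xs_ys_0}) yields some $x^{*}\in\calXN$ with $\bmQ_0(y^{*}\mid x^{*})>0$ and $\bmQ_0(y^{*}\mid x')=0$ for $x'\neq x^{*}$. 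Substituting shows the composite probability is positive at $x^{*}$ and vanishes at every other input, which is exactly (\ref{eq:xs_ys_0}) for the composite with witness $x^{*}\in\calXN$.

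I do not expect a genuine obstacle here: both steps are short sum manipulations, and the hypotheses of Definition~\ref{def:sense-preserving} were evidently chosen so that the two \ULDP{} conditions decouple cleanly. The only point requiring care is the bookkeeping — ensuring that in each case the vanishing of the appropriate $\bmQ_1(z\mid\cdot)$ entries restricts the summation \emph{before} the property of $\bmQ_0$ is applied, and that the uniqueness of $y^{*}$ in the invertible case is genuinely used so that no cross terms survive.
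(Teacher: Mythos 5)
Your proposal is correct and matches the paper's own proof essentially step for step: both verify condition (\ref{eq:epsilon_OSLDP_whole}) by using the first clause of Definition~\ref{def:sense-preserving} to restrict the composite sum to $y\in\calYP$ before applying $\bmQ_0$'s \ULDP{} bound termwise, and both verify condition (\ref{eq:xs_ys_0}) by collapsing the sum onto the unique $y^{*}\in\calYI$ and then invoking the invertibility of $\bmQ_0$ to obtain the witness $x^{*}\in\calXN$. The only difference is the order in which the two conditions are checked, which is immaterial.
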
}

\begin{proof}
We first show that $\bmQ$ satisfies the first condition (\ref{eq:xs_ys_0}) in Definition~\ref{def:rest}.
Let 
$z\in\calZI$. 
Since $\bmQ_1$ preserves data types, there exists a $y\in\calYI$ such that 
$\bmQ_1(z | y) > 0$ and $\bmQ_1(z | y') = 0$ for any $y' \neq y$.
In addition, 
since $\bmQ_0$ provides $(\calXS,\calYP,\varepsilon)$-\ULDP{}, 
there exists an $x\in\calXN$ such that 
$\bmQ_0(y | x) > 0$ and $\bmQ_0(y | x') = 0$ for any $x' \neq x$. 
Hence we obtain:
\begin{align*}
&(\bmQ_1\circ\bmQ_0)(z|x) 
= \bmQ_0(y|x) \bmQ_1(z|y) > 0 
\end{align*}
and for any $x' \neq x$,
\begin{align*}
(\bmQ_1\circ\bmQ_0)(z|x') &=
\bmQ_0(y|x') \bmQ_1(z|y) +
\sum_{y'\neq y} \bmQ_0(y'|x') \bmQ_1(z|y')
\\ &= 0.
\end{align*}

Next we show that $\bmQ$ satisfies the second condition (\ref{eq:epsilon_OSLDP_whole}).
Let $x,x'\in\calX$ and $z\in\calZP$.
Since $\bmQ_1$ preserves 
data types, 
$\bmQ_1(z | y) = 0$ holds for all $y\in\calYI$.
Then we obtain:
\begin{align*}
&\,
(\bmQ_1\circ\bmQ_0)(z|x) \\
&=
\sum_{y\in\calYP}\! \bmQ_0(y|x) \bmQ_1(z|y) \\
&\le
\sum_{y\in\calYP} e^\varepsilon \bmQ_0(y|x') \bmQ_1(z|y)
~~~\text{(by $\bmQ_0$'s $(\calXS,\calYP,\varepsilon)$-\ULDP{})} \\
&= 
e^\varepsilon (\bmQ_1\circ\bmQ_0)(z|x')
.
\end{align*}
Therefore $\bmQ_1\circ\bmQ_0$ provides $(\calXS, \calZP, \varepsilon)$-\ULDP{}.
\end{proof}

\begin{table}[t]
\caption{Basic notations used in this paper.} 
\centering
\hbox to\hsize{\hfil
\begin{tabular}{l|l}
\hline
Symbol													&	Description\\
\hline
$n$														&		Number of users.\\
$\calX$													&		Set of personal data.\\
$\calY$													&		Set of obfuscated data.\\
$\calXS$												&		Set of sensitive data common to all users.\\
$\calXN$												&		Set of the remaining personal data ($= \calX \setminus \calXS$).\\
$\calXS\uid{i}$											&		Set of sensitive data specific to the $i$-th user.\\
$X\uid{i}$										& 		Personal data of the $i$-th user.\\
$Y\uid{i}$	 									& 		Obfuscated data of the $i$-th user.\\
$\bmX$ 						& 		Tuple of all personal data.\\ 
$\bmY$						& 		Tuple of all obfuscated data.\\ 
$\bmQ\uid{i}$												&		Obfuscation mechanism of the $i$-th user.\\
$\bmp$													&		Distribution of the personal data.\\
$\hbmp$													&		Estimate of $\bmp$.\\
$\calC$													&		Probability simplex.\\
\hline
\end{tabular}
\hfil}
\label{tab:notations}
\end{table}

\colorBB{
For example, ULDP is immune to data cleaning operations 
(e.g., transforming values, merging disparate values) \cite{Krishnan_SIGMOD16} 
as long as they are represented as $\bmQ_1$ explained above.}

\colorBB{Note that 
$\bmQ_1$ needs to preserve data types for utility (i.e., to make all $y \in \calYI$ invertible, 
as in Definition~\ref{def:rest}, 
after post-processing), 
and the DP guarantee for $y \in \calYP$ is preserved by any post-processing algorithm. Specifically, 
by (\ref{eq:epsilon_OSLDP_whole}), 
for any randomized post-processing algorithm $\bmQ_1^*$, any obfuscated data $z \in \calZ$ obtained from $y \in \calYP$ via $\bmQ_1^*$, and any $x, x' \in \calX$, we have: $\Pr(z|x) \leq e^\epsilon \Pr(z|x')$.
}

\subsection{\colorB{Compatibility with LDP}}
\label{sub:compatibility}
\colorBB{
Assume that data collectors A and B adopt 
a mechanism $\bmQ_A$ providing $(\calXS,\calYP,\epsilon_A)$-\ULDP{} 
and 
a mechanism $\bmQ_B$ providing $\epsilon_B$-\LDP{}, respectively. 
In this case, 
all protected 
data in the data collector A 
can be 
combined with 
all obfuscated data in the data collector B (i.e., data integration) 
to perform data analysis under LDP. 
More specifically, assume that 
Alice transforms 
her sensitive personal data 
in $\calXS$ 
into 
$y_A \in \calYP$ (resp.~$y_B \in \calY$) using $\bmQ_A$ (resp.~$\bmQ_B$), and sends $y_A$ (resp.~$y_B$) to the data collector A (resp.~B) 
to request two different services 
(e.g., location check-in for A and point-of-interest search for B). 
Then, 
the composition $(\bmQ_A, \bmQ_B)$ in parallel 
has 
the following property:
\begin{restatable}[Compatibility with LDP]{prop}{DataIntegration}
\label{prop:data_integration}
If $\bmQ_A$ and $\bmQ_B$ respectively provide $(\calXS,\calYP,\epsilon_A)$-\ULDP{} and $\epsilon_B$-\LDP{},
then 
for any $x, x' \in \calX$, $y_A \in \calYP$, and $y_B \in \calY$, we have:
\begin{align*}
(\bmQ_A, \bmQ_B)(y_A, y_B | x) \allowbreak \leq e^{\epsilon_A + \epsilon_B} (\bmQ_A, \bmQ_B)(y_A, y_B | x').
\end{align*}
\end{restatable}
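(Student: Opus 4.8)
The plan is to reduce the statement to a direct application of the two privacy definitions, exploiting the fact that the parallel composition $(\bmQ_A, \bmQ_B)$ factors as a product of the two marginal mechanisms. Since Alice feeds the same input $x$ to $\bmQ_A$ and $\bmQ_B$ using independent randomness, I would first write
\begin{align*}
(\bmQ_A, \bmQ_B)(y_A, y_B \mid x) = \bmQ_A(y_A \mid x)\, \bmQ_B(y_B \mid x),
\end{align*}
and likewise for $x'$. This factorization is the only structural observation the proof needs, and it is the step I would be most careful about: it relies on the two mechanisms being run independently in parallel, rather than adaptively as in the sequential composition of Proposition~\ref{prop:composition_main}.

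Next I would bound each factor separately. Because $\bmQ_A$ provides $(\calXS, \calYP, \epsilon_A)$-\ULDP{} and the output $y_A$ lies in $\calYP$, condition (\ref{eq:epsilon_OSLDP_whole}) yields $\bmQ_A(y_A \mid x) \le e^{\epsilon_A}\bmQ_A(y_A \mid x')$ for \emph{any} $x, x' \in \calX$ --- this is precisely the regime in which \ULDP{} coincides with \LDP{}. Because $\bmQ_B$ provides $\epsilon_B$-\LDP{}, Definition~\ref{def:LDP} gives $\bmQ_B(y_B \mid x) \le e^{\epsilon_B}\bmQ_B(y_B \mid x')$ for any $x, x' \in \calX$ and any $y_B \in \calY$, with no restriction on $y_B$.

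Finally I would multiply the two inequalities to obtain
\begin{align*}
\bmQ_A(y_A \mid x)\,\bmQ_B(y_B \mid x) \le e^{\epsilon_A + \epsilon_B}\,\bmQ_A(y_A \mid x')\,\bmQ_B(y_B \mid x'),
\end{align*}
and re-assemble the product back into $(\bmQ_A, \bmQ_B)$ to conclude. There is no genuine obstacle here beyond the factorization above: the invertible outputs of $\bmQ_A$ never enter the argument because we only consider $y_A \in \calYP$, so the asymmetry of \ULDP{} causes no difficulty. The one point I would state explicitly is \emph{why} the \ULDP{} bound holds for all pairs $x, x'$ rather than only for sensitive inputs --- namely, that (\ref{eq:epsilon_OSLDP_whole}) is quantified over all $x, x' \in \calX$ whenever the output is protected, which is exactly what makes the protected data of collector A safe to integrate with the \LDP{}-obfuscated data of collector B.
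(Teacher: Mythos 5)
Your proposal is correct and matches the paper's proof essentially verbatim: both factor the parallel composition as $\bmQ_A(y_A \mid x)\,\bmQ_B(y_B \mid x)$, apply the \ULDP{} condition (\ref{eq:epsilon_OSLDP_whole}) for $y_A \in \calYP$ (which is quantified over all $x, x' \in \calX$) together with the \LDP{} bound (\ref{eq:LDP}), and multiply. Your added remark on why the \ULDP{} bound covers all input pairs for protected outputs is a fair elaboration of the step the paper leaves implicit, but the argument is the same.
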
}

\begin{proof}
By (\ref{eq:LDP}) and (\ref{eq:epsilon_OSLDP_whole}), we have: 
\begin{align*}
(\bmQ_A, \bmQ_B)(y_A, y_B | x) 
&= \bmQ_A(y_A | x) \bmQ_B(y_B | x) \\
&\leq e^{\epsilon_A} \bmQ_A(y_A | x') e^{\epsilon_B} \bmQ_B(y_B | x')\\
&= e^{\epsilon_A + \epsilon_B} (\bmQ_A, \bmQ_B)(y_A, y_B | x').
\end{align*}
\end{proof}

\colorBB{
Proposition~\ref{prop:data_integration} 
implies 
that 
Alice's sensitive personal data 
in $\calXS$ 
is protected by $(\epsilon_A + \epsilon_B)$-\LDP{} 
after the data integration.}

\subsection{ULDP of the utility-optimized RR}
\label{sec:proofs_restRR}

Below we prove that the utility-optimized RR provides ULDP. 

\ULDPrestRR*

\begin{proof}
It follows from (\ref{eq:restRR_XS}) 
that 
(\ref{eq:xs_ys_0}) holds. 
Since $c_1 / c_2 = e^{\epsilon}$, 
the inequality 
(\ref{eq:epsilon_OSLDP_whole}) 
also 
holds 
\colorB{(note that 
$c_3$ is 
uniquely determined 
from $c_2$ so that the sum of probabilities from $x \in \calXN$ is $1$; i.e., 
$c_3 = 1 - |\calXS|c_2$)}.
\end{proof}

\subsection{\colorB{ULDP of the utility-optimized RAPPOR}}
\label{sec:proofs_restRAPPOR}
Below we prove that the utility-optimized RAPPOR provides ULDP. 

\ULDPrestRAP*

\begin{proof}
Let $i, i' \in \{1, 2, \ldots, |\calX|\}$.

By (\ref{eq:rRAP_YS}), 
if $y \in \calYI$, then only one of $y_{|\calXS|+1}, \cdots, y_{|\calX|}$ is $1$. 
In addition, 
it follows from (\ref{eq:restRAPPOR_calXN}) that 
for any $j \in \{|\calXS|+1, \cdots, |\calX|\}$, 
$y$ with $y_j = 1$ always comes from $x_j$. 
Therefore, the $(\calXS,\theta,\epsilon)$-utility-optimized RAPPOR satisfies~(\ref{eq:xs_ys_0}).

To show that the $(\calXS,\theta,\epsilon)$-utility-optimized RAPPOR satisfies (\ref{eq:epsilon_OSLDP_whole}), we first prove a few claims as follows.

Let $y \in \calYP$.
If $i = i'$ then $\bmQrestRAPPOR(y | x_i) = \bmQrestRAPPOR(y | x_{i'})$ obviously.
Thus, we assume $i \neq i'$ hereafter.

Then we obtain the following claim:
for any $j\neq i, i'$,
\begin{align}
\frac{\Pr(y_j | x_i)}{\Pr(y_j | x_{i'})} = 1
{.}
\label{eq:ratio_is_one}
\end{align}
This claim is proven as follows:
If $1 \leq j \leq |\calXS|$ and $y_j = 1$, then $\frac{\Pr(y_j | x_i)}{\Pr(y_j | x_{i'})} = \frac{d_1}{d_1} = 1$.
If $1 \leq j \leq |\calXS|$ and $y_j = 0$, then $\frac{\Pr(y_j | x_i)}{\Pr(y_j | x_{i'})} = \frac{1-d_1}{1-d_1} = 1$.
Otherwise, since $|\calXS|+1 \le j \le |\calX|$ 
and $y \in \calYP$, we have $y_j = 0$, hence
$\frac{\Pr(y_j | x_i)}{\Pr(y_j | x_{i'})} = \frac{1}{1} = 1$.

Now we show that the $(\calXS,\theta,\epsilon)$-utility-optimized RAPPOR satisfies (\ref{eq:epsilon_OSLDP_whole}) as follows.

If $x_i, x_{i'} \in \calXS$, 
then it follows from (\ref{eq:restRAPPOR_calXS}) that 
the $(\calXS,\theta,\epsilon)$-utility-optimized RAPPOR is equivalent to the ($\theta,\epsilon$)-RAPPOR in $\calXS$ and $\calYP$,
and thus satisfies (\ref{eq:epsilon_OSLDP_whole}).

Next we consider the case in which $x_i \in \calXS$ and $x_{i'} \in \calXN$ 
(i.e., $1 \leq i \leq |\calXS|$ and $|\calXS|+1 \le i' \le |\calX|$). 
By $x_{i'} \in \calXN$ and $y \in \calYP$, we have $y_{i'} = 0$.
If $y_i = 1$ then we have:
\begin{align}
&
\frac{\bmQrestRAPPOR(y | x_i)}{\bmQrestRAPPOR(y | x_{i'})}
\nonumber \\
&= \frac{\Pr(y_i | x_i)}{\Pr(y_i | x_{i'})} \cdot \frac{\Pr(y_{i'} | x_i)}{\Pr(y_{i'} | x_{i'})} \cdot\hspace{-2ex} 
\prod_{\substack{j \neq i \\ 1 \le j \le |\calXS|}}\hspace{-2ex} \frac{\Pr(y_j | x_i)}{\Pr(y_j | x_{i'})}\cdot\hspace{-4.5ex}
\prod_{\substack{j \neq i' \\~~~ |\calXS|+1 \le j \le |\calX|}\hspace{-1.5ex}}\hspace{-3.5ex} \frac{\Pr(y_j | x_i)}{\Pr(y_j | x_{i'})}
\nonumber \\
&= 
\frac{\theta}{d_1} \cdot \frac{1}{d_2} \cdot 1 \cdot 1
\hspace{20ex}\text{(by \eqref{eq:restRAPPOR_calXS}, \eqref{eq:restRAPPOR_calXN}, and \eqref{eq:ratio_is_one})}
\nonumber \\
&= 
 \theta \cdot \frac{(1 - \theta)e^\epsilon+\theta}{\theta} \cdot \frac{e^\epsilon}{(1 - \theta)e^\epsilon+\theta}
\nonumber\\
&= 
 e^\epsilon
{,}
\nonumber
\end{align}
hence \eqref{eq:epsilon_OSLDP_whole} is satisfied.
If $y_i = 0$ then we obtain:
\begin{align}
&
\frac{\bmQrestRAPPOR(y | x_i)}{\bmQrestRAPPOR(y | x_{i'})}
\nonumber \\
&= \frac{\Pr(y_i | x_i)}{\Pr(y_i | x_{i'})} \cdot \frac{\Pr(y_{i'} | x_i)}{\Pr(y_{i'} | x_{i'})} \cdot\hspace{-2ex} 
\prod_{\substack{j \neq i \\ 1 \le j \le |\calXS|}}\hspace{-2ex} \frac{\Pr(y_j | x_i)}{\Pr(y_j | x_{i'})}\cdot\hspace{-4.5ex}
\prod_{\substack{j \neq i' \\~~~ |\calXS|+1 \le j \le |\calX|}\hspace{-1.5ex}}\hspace{-3.5ex} \frac{\Pr(y_j | x_i)}{\Pr(y_j | x_{i'})}
\nonumber \\
&=
 \frac{1 - \theta}{1 - d_1} \cdot \frac{1}{d_2} \cdot 1 \cdot 1
\hspace{15.5ex}\text{(by \eqref{eq:restRAPPOR_calXS}, \eqref{eq:restRAPPOR_calXN}, and \eqref{eq:ratio_is_one})}
\nonumber \\
&=
 (1-\theta) \cdot \frac{(1 - \theta)e^\epsilon+\theta}{(1-\theta)e^\epsilon} \cdot \frac{e^\epsilon}{(1 - \theta)e^\epsilon+\theta}
\nonumber\\
&= 
 1,
\nonumber
\end{align}
which also imply \eqref{eq:epsilon_OSLDP_whole}.

Finally we consider the case where $x_i, x_{i'} \in \calXN$ 
(i.e., $|\calXS|+1 \le i, i' \le |\calX|$).
By $x_i, x_{i'} \in \calXN$ and $y \in \calYP$, we have $y_i = y_{i'} = 0$.
Then:
\begin{align}
&
\frac{\bmQrestRAPPOR(y | x_i)}{\bmQrestRAPPOR(y | x_{i'})}
\nonumber \\
&= \frac{\Pr(y_i | x_i)}{\Pr(y_i | x_{i'})} \cdot \frac{\Pr(y_{i'} | x_i)}{\Pr(y_{i'} | x_{i'})} \cdot\hspace{-1ex} 
\prod_{\substack{1 \le j \le |\calXS|}}\hspace{-0ex} \frac{\Pr(y_j | x_i)}{\Pr(y_j | x_{i'})}\cdot\hspace{-3.5ex}
\prod_{\substack{j \neq i,i' \\~~~ |\calXS|+1 \le j \le |\calX|}\hspace{-1.5ex}}\hspace{-2.5ex} \frac{\Pr(y_j | x_i)}{\Pr(y_j | x_{i'})}
\nonumber \\
&=
 \frac{d_2}{1} \cdot \frac{1}{d_2} \cdot 1 \cdot 1
\hspace{19.5ex}\text{(by \eqref{eq:restRAPPOR_calXS}, \eqref{eq:restRAPPOR_calXN}, and \eqref{eq:ratio_is_one})}
\nonumber \\
&= 
 1.
\nonumber
\end{align}

Therefore, 
the $(\calXS,\theta,\epsilon)$-utility-optimized RAPPOR provides $(\calXS,\allowbreak\calYP, \epsilon)$-ULDP.
\end{proof}

\section{Relationship between LDP, ULDP and OSLDP}
\label{sec:ULDP_OSLDP}

Our main contributions lie in the proposal of local obfuscation mechanisms (i.e., uRR, uRAP, PUM) and ULDP is introduced to characterize the main features of these mechanisms, i.e., LDP for sensitive data and high utility in distribution estimation. 
Nonetheless, it is worth making clearer the reasons for using ULDP as a privacy measure.
To this end, 
we also
introduce the notion of
OSLDP (One-sided LDP), a local model version of OSDP (One-sided DP) proposed in a preprint \cite{Doudalis_arXiv17}:
\colorB{\begin{definition} [$(\calXS,\epsilon)$-\OSLDP{}] \label{def:OSLDP} 
Given $\calXS \subseteq \calX$ and $\epsilon \in \nngreals$, 
an obfuscation mechanism $\bmQ$ from $\calX$ to $\calY$ provides 
$(\calXS,\epsilon)$-\OSLDP{} 
if for any 
$x \in \calXS$, 
any $x' \in \calX$ and any $y \in \calY$, we have 
\begin{align}
\bmQxy \leq e^\epsilon \bmQxdy.
\label{eq:OSLDP}
\end{align}
\end{definition}}
OSLDP 
is a special case of OSDP \cite{Doudalis_arXiv17} 
that takes as input personal data of a single user. 
Unlike ULDP, 
OSLDP allows the transition probability $\bmQxdy$ from non-sensitive data $x' \in \calXN$ to be very large for any $y \in \calY$, 
and hence does not provide $\epsilon$-LDP for $\calY$ (whereas ULDP provides $\epsilon$-LDP for 
$\calYP$). 
Thus, OSLDP can be regarded as a ``relaxation'' of ULDP. 
In fact, the following proposition holds:
\begin{restatable}{prop}{propODLP}
\label{prop:rest_OSLDP} 
If an obfuscation mechanism $\bmQ$ provides $(\calXS,\calYP,\epsilon)$-\ULDP{}, then it also provides $(\calXS,\epsilon)$-\OSLDP{}.
\end{restatable}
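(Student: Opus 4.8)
The plan is to verify Definition~\ref{def:OSLDP} directly, by a case analysis on whether the output $y$ lies in the protected set $\calYP$ or the invertible set $\calYI$. Fix an arbitrary $x \in \calXS$, an arbitrary $x' \in \calX$, and an arbitrary $y \in \calY = \calYP \cup \calYI$; the goal is to establish $\bmQxy \le e^\epsilon \bmQxdy$, which is exactly the OSLDP inequality~(\ref{eq:OSLDP}).

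First I would dispose of the case $y \in \calYP$. Here I can invoke condition~(\ref{eq:epsilon_OSLDP_whole}) of $(\calXS,\calYP,\epsilon)$-\ULDP{} verbatim: it asserts the inequality for \emph{every} pair of inputs in $\calX$ and every $y \in \calYP$, so applying it to our particular $x \in \calXS \subseteq \calX$ and $x' \in \calX$ yields $\bmQxy \le e^\epsilon \bmQxdy$ immediately.

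The substance of the argument is the remaining case $y \in \calYI$. By condition~(\ref{eq:xs_ys_0}) of \ULDP{}, there exists an $x_0 \in \calXN$ such that $\bmQ(y | x_0) > 0$ and $\bmQ(y | \tilde{x}) = 0$ for every $\tilde{x} \neq x_0$. Since $\calXN = \calX \setminus \calXS$ is disjoint from $\calXS$ and $x \in \calXS$, we have $x \neq x_0$, and therefore $\bmQxy = 0$. The required inequality $0 \le e^\epsilon \bmQxdy$ then holds trivially because the right-hand side is nonnegative. Combining the two cases covers all of $\calY$ and completes the proof.

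I expect the only point needing care to be the disjointness step in the second case: one must observe that the input $x_0$ singled out by condition~(\ref{eq:xs_ys_0}) is guaranteed to lie in $\calXN$, which is disjoint from $\calXS \ni x$, so that $x$ necessarily maps to any $y \in \calYI$ with probability zero. This is the mechanism by which restricting sensitive inputs to $\calYP$ forces the OSLDP bound to hold automatically on $\calYI$; everything else is a direct reading of the two defining conditions of \ULDP{}.
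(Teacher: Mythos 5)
Your proof is correct and fills in exactly the argument the paper's one-line proof gestures at: the paper also verifies (\ref{eq:OSLDP}) directly from the two \ULDP{} conditions, with (\ref{eq:epsilon_OSLDP_whole}) handling $y \in \calYP$ and (\ref{eq:xs_ys_0}) forcing $\bmQxy = 0$ for $x \in \calXS$ and $y \in \calYI$. Your explicit case analysis, including the disjointness observation $x \in \calXS \Rightarrow x \neq x_0 \in \calXN$, is precisely the "easy to check" step made rigorous.
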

\begin{proof}
It is easy to check by 
(\ref{eq:xs_ys_0}) and (\ref{eq:epsilon_OSLDP_whole}) that 
$\bmQ$ provides 
(\ref{eq:OSLDP}) 
for any $x \in \calXS$, any $x' \in \calX$, and any $y \in \calY$.
\end{proof}
It should be noted that if an obfuscation mechanism provides $\epsilon$-LDP, then it obviously provides $(\calXS,\calYP,\epsilon)$-\ULDP{}, where $\calYP = \calY$. 
Therefore, 
$(\calXS,\calYP,\epsilon)$-\ULDP{} 
is a privacy measure that lies between $\epsilon$-LDP and $(\calXS,\epsilon)$-\OSLDP{}. 

The advantage of ULDP over LDP is that it provides much higher utility than LDP when $|\calXS| \ll |\calX|$. 
As described in Section~\ref{sub:ULDP_theoretical}, for $\epsilon \in [0,1]$, the lower bound on the $l_1$ and $l_2$ losses of any $\epsilon$-\LDP{} mechanism can be expressed as $\Theta(\frac{|\calX|}{\sqrt{n \epsilon^2}})$ and $\Theta(\frac{|\calX|}{n \epsilon^2})$, respectively. 
On the other hand, 
the lower bound on the $l_1$ and $l_2$ losses of any $(\calXS,\calYP,\epsilon)$-\ULDP{} mechanism can be expressed as $\Theta(\frac{|\calXS|}{\sqrt{n \epsilon^2}})$ and $\Theta(\frac{|\calXS|}{n \epsilon^2})$, respectively, both of which are achieved by the utility-optimized RAPPOR. 
In addition, 
the utility-optimized RR and the utility-optimized RAPPOR can even achieve almost the same utility as non-private mechanisms when $\epsilon = \ln |\calX|$, as described in Section~\ref{sub:utility_analysis}.

We use ULDP instead of OSLDP for 
the following 
two reasons. 
The first reason is that 
ULDP is compatible with LDP, and makes it possible to perform data integration and data analysis under LDP 
(Proposition~\ref{prop:data_integration}). 
OSLDP does not have this property in general, since 
it allows the transition probability $\bmQxdy$ from non-sensitive data $x' \in \calXN$ to be very large for any $y \in \calY$, as explained above.

The second reason, which is more important, is 
that 
\textit{the utility of OSLDP is not better than that of ULDP}. 
Intuitively, it can be explained as follows. 
First, although $\calYP$ is not explicitly defined in OSLDP, we can define $\calYP$ in OSLDP as the \textit{image of $\calXS$}, 
and $\calYI$ as 
$\calYI = \calY \setminus \calYP$,
analogously to ULDP. 
Then, OSLDP differs from ULDP in the following two points: 
(i) it allows the transition probability $\bmQxdy$ from $x' \in \calXN$ to $y \in \calYP$ to be very large (i.e., (\ref{eq:epsilon_OSLDP_whole}) may not satisfied); 
(ii) it allows $y \in \calYI$ to be non-invertible. 
(i.e., (\ref{eq:xs_ys_0}) may not satisfied). 
Regarding (i), 
it is important to note that 
the transition probability from $x' \in \calXN$ to $\calYI$ 
decreases with increase in 
the transition probability from $x'$ to $\calYP$. 
Thus, (i) and (ii) only allow us to mix non-sensitive data with sensitive data or other non-sensitive data, and 
reduce the amount of output data $y \in \calYI$ that can be inverted to $x \in \calXN$. 

Then, each OSLDP mechanism can be decomposed into 
a ULDP mechanism and a randomized post-processing that mixes non-sensitive data with sensitive data or other non-sensitive data. 
Note that this post-processing does not preserve data types (in Definition~\ref{def:sense-preserving}), 
and hence OSLDP 
does not have a compatibility with LDP as explained above. 
In addition, although the post-processing might improve privacy for non-sensitive data, we would like to protect sensitive data in this paper and ULDP is sufficient for this purpose; i.e., it guarantees $\epsilon$-LDP for sensitive data. 

Since the 
information is 
generally lost (never gained) 
by mixing 
data via the randomized post-processing, 
the utility of OSLDP is not better than that of ULDP 
(this holds for the information-theoretic utility such as mutual information and $f$-divergences \cite{Kairouz_JMLR16} because of the data processing inequality \cite{elements,Chen_JMLR16}; we also show this for the expected $l_1$ and $l_2$ losses at the end of Appendix~\ref{sec:ULDP_OSLDP}). 
Thus, 
it suffices to consider ULDP for our goal of 
designing obfuscation mechanisms that achieve high utility while providing LDP for sensitive data 
(as tdescribed in Section~\ref{sec:intro}).

We now formalize our claim as follows:

\begin{restatable}{prop}{ULDPOSLDPutility}
\label{prop:ULDP_OSLDP_utility} 
Let $\calM_O$ be the class of all mechanisms from $\calX$ to $\calY$ providing $(\calXS,\epsilon)$-\OSLDP{}. 
For any $\bmQ_O \in \calM_O$,
there exist two sets $\calZ$ and $\calZP$, 
a $(\calXS,\calZP,\epsilon)$-\ULDP{} mechanism $\bmQ_U$ from $\calX$ to $\calZ$, 
and 
a randomized algorithm $\bmQ_R$ from $\calZ$ to $\calY$ 
such that: 
\begin{align}
\bmQ_O = \bmQ_R \circ \bmQ_U.
\label{eq:Q_O_Q_R_Q_U}
\end{align}
\end{restatable}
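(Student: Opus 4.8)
The plan is to make rigorous the informal decomposition sketched before the proposition: convert $\bmQ_O$ into a \ULDP{} mechanism $\bmQ_U$ by \emph{clamping} the transition probabilities into the image of $\calXS$ down to an LDP-compatible level while routing the removed mass to fresh invertible symbols, and then let $\bmQ_R$ add that mass back. First I would fix the sets. Define $\calYP$ as the image of $\calXS$, i.e. $\calYP=\{y\in\calY\mid \bmQ_O(y|x)>0 \text{ for some } x\in\calXS\}$, put $\calYI=\calY\setminus\calYP$, and for each $y\in\calYP$ set $b_y=\max_{x\in\calXS}\bmQ_O(y|x)$. The key consequence of $(\calXS,\epsilon)$-\OSLDP{} I would extract first is that $b_y\le e^\epsilon\,\bmQ_O(y|x')$ for \emph{every} $x'\in\calX$ (apply (\ref{eq:OSLDP}) with the maximizing $x\in\calXS$), hence $\bmQ_O(y|x')\ge e^{-\epsilon}b_y$ for all $x'\in\calX$; together with $\bmQ_O(y|x)\le b_y$ for $x\in\calXS$ this shows that every value $\bmQ_O(y|x)$, $x\in\calX$, would lie in the window $[e^{-\epsilon}b_y,\,b_y]$ once the large values are clamped at $b_y$.

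Next I would build $\calZ$ and $\bmQ_U$. Take $\calZP=\calYP$ and introduce one fresh invertible symbol per non-sensitive input, $\calZI=\{\,\hat x\mid x\in\calXN\,\}$, with $\calZ=\calZP\cup\calZI$. Define $\bmQ_U$ by: for $x\in\calXS$, $\bmQ_U(y|x)=\bmQ_O(y|x)$ on $\calZP$ and $0$ on $\calZI$ (legitimate since $\bmQ_O(\cdot|x)$ is supported on $\calYP$); for $x\in\calXN$, $\bmQ_U(y|x)=\min(\bmQ_O(y|x),b_y)$ for $y\in\calZP$, all the remaining mass on the single symbol $\hat x$, and $0$ on the other invertible symbols. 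I would then verify the two \ULDP{} conditions of Definition~\ref{def:rest}: condition (\ref{eq:xs_ys_0}) holds because $\hat x$ receives positive probability only from $x$ (dropping any $\hat x$ whose leftover mass is $0$), and condition (\ref{eq:epsilon_OSLDP_whole}) holds because, by the previous paragraph, every $\bmQ_U(y|x)$ with $y\in\calZP$ lies in $[e^{-\epsilon}b_y,b_y]$, a window of ratio $e^\epsilon$. Non-negativity of the leftover mass $\bmQ_U(\hat x|x)=1-\sum_{y\in\calYP}\min(\bmQ_O(y|x),b_y)$ follows from $\sum_{y\in\calYP}\bmQ_O(y|x)\le1$.

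Then I would define the post-processing $\bmQ_R$ and check the factorization. Let $\bmQ_R(y|z)=1$ if $y=z$ and $0$ otherwise for $z\in\calZP$ (protected outputs pass through), and for each surviving $\hat x\in\calZI$ redistribute its mass by $\bmQ_R(y|\hat x)=\max(0,\bmQ_O(y|x)-b_y)/\bmQ_U(\hat x|x)$ for $y\in\calYP$ and $\bmQ_R(y|\hat x)=\bmQ_O(y|x)/\bmQ_U(\hat x|x)$ for $y\in\calYI$. Composing, for $x\in\calXN$ and $y\in\calYP$ the $\calZP$ part contributes $\min(\bmQ_O(y|x),b_y)$ and the $\hat x$ part contributes $\max(0,\bmQ_O(y|x)-b_y)$, which sum to $\bmQ_O(y|x)$; for $y\in\calYI$ only $\hat x$ contributes, giving $\bmQ_O(y|x)$; and for $x\in\calXS$ the pass-through already reproduces $\bmQ_O$. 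Hence (\ref{eq:Q_O_Q_R_Q_U}) holds.

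The main obstacle, and the step I would treat most carefully, is showing that each $\bmQ_R(\cdot|\hat x)$ is a genuine probability distribution. Non-negativity is immediate, but summing to one requires the bookkeeping identity $\bmQ_U(\hat x|x)=\sum_{y\in\calYP}\max(0,\bmQ_O(y|x)-b_y)+\sum_{y\in\calYI}\bmQ_O(y|x)$, which I would derive by substituting $1=\sum_{y\in\calY}\bmQ_O(y|x)$ into the definition of $\bmQ_U(\hat x|x)$ and using $\bmQ_O(y|x)-\min(\bmQ_O(y|x),b_y)=\max(0,\bmQ_O(y|x)-b_y)$ on $\calYP$. This identity makes the numerator of $\sum_y\bmQ_R(y|\hat x)$ equal to its denominator $\bmQ_U(\hat x|x)$. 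The only remaining care is the edge case $\bmQ_U(\hat x|x)=0$, handled simply by not including such $\hat x$ in $\calZI$.
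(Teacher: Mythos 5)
Your proposal is correct and matches the paper's proof in all essentials: you clamp the transitions into $\calYP$ at $b_y=\max_{x\in\calXS}\bmQ_O(y|x)$ (the paper's $\bmQ_{max}^\dagger$), route the removed mass together with the $\calYI$-mass to one invertible symbol per non-sensitive input with leftover weight equal to the paper's $\alpha(x)$, discard symbols with zero leftover (the paper's $\calXN'$), verify \ULDP{} via the two-sided window $[e^{-\epsilon}b_y,\,b_y]$ exactly as in the paper's inequalities, and reconstruct $\bmQ_O$ by post-processing. The only difference is cosmetic: your single kernel $\bmQ_R$ merges the paper's two-stage factorization $\bmQ_{R_1}\circ\bmQ_{R_2}$ through the intermediate alphabet $\calW=\calYP\cup\calXN$, and your normalization identity for $\bmQ_R(\cdot|\hat x)$ is the same bookkeeping the paper distributes across $\beta(z,w)$ and $\bmQ_{R_1}$.
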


\begin{proof}
Let $\bmQ_O \in \calM_O$ and 
$\calYP$ be the image of $\calXS$ in $\bmQ_O$.

If $\bmQ_O$ provides $(\calXS,\calYP,\epsilon)$-\ULDP{},
then 
(\ref{eq:Q_O_Q_R_Q_U}) holds, where $\bmQ_O = \bmQ_U$ and $\bmQ_R$ is the identity transform. 
In addition, if $\calXS = \calX$ (i.e., $\calXN = \emptyset$), then 
all of 
$(\calXS,\epsilon)$-\OSLDP{}, 
$(\calXS,\calYP,\epsilon)$-\ULDP{}, and 
$\epsilon$-\LDP{} are equivalent, and hence 
(\ref{eq:Q_O_Q_R_Q_U}) holds. 

Assume that 
$\bmQ_O$ does not provide $(\calXS,\calYP,\epsilon)$-\ULDP{},
and that $\calXN \neq \emptyset$. 
Below we construct a ULDP mechanism $\bmQ_U$ by modifying $\bmQ_O$ so that the conditions (\ref{eq:xs_ys_0}) and (\ref{eq:epsilon_OSLDP_whole}) are satisfied.
Let $\calW = \calYP \cup \calXN$.
First, from $\bmQ_O$, 
we construct a mechanism $\bmQ_O^\dagger$ from $\calX$ to 
$\calW$  
such that: 
\begin{align}
\bmQ_O^\dagger(w|x) = 
\begin{cases}
\bmQ_O(w|x) & \hspace{-2mm} \text{(if $w \in \calYP$)}\\
\sum_{y' \in \calYI} \bmQ_O(y'|x) & \hspace{-2mm} \text{(if $w \in \calXN$ and $w = x$)}\\
0 & \hspace{-2mm} \text{(if $w \in \calXN$ and $w \neq x$)}.
\end{cases}
\label{eq:Q_0_dagger}
\end{align}
For any $w\in\calYP$, we define 
$\bmQ_{max}^\dagger(w)$ by: 
\begin{align*}
\bmQ_{max}^\dagger(w) = \max_{x_0 \in \calXS} \bmQ_O^\dagger(w|x_0). 
\end{align*}
For any $x\in\calXN$, we define $\alpha(x)$ by:
\begin{align*}
\alpha(x) = 1 - \sum_{w \in \calYP} \min\{\bmQ_O^\dagger(w|x), \bmQ_{max}^\dagger(w)\}.
\end{align*}
Note that $\alpha(x) \geq 0$. 
Let $\calXN' = \{x \in \calXN \mid \alpha(x) > 0 \}$, 
$\calZ = \calYP \cup \calXN'$, and $\calZP = \calYP$.
Then, 
from $\bmQ_O^\dagger$, 
we construct a mechanism 
$\bmQ_U$ 
from $\calX$ to 
$\calZ$ ($= \calYP \cup \calXN'$)
such that: 
\begin{align}
&\bmQ_U(z|x) \nonumber\\
&= 
\begin{cases}
\min\{\bmQ_O^\dagger(z|x), \bmQ_{max}^\dagger(z)\} & \hspace{-2mm} \text{(if $z \in \calYP$)}\\
\alpha(x) & \hspace{-2mm} \text{(if $z \in \calXN'$ and $z = x$)}\\
0 & \hspace{-2mm} \text{(if $z \in \calXN'$ and $z \neq x$)}.
\end{cases} 
\label{eq:Q_0_ddagger}
\end{align}

Below we show that 
$\bmQ_U$ provides $(\calXS,\calZP,\epsilon)$-\ULDP{}, where $\calZ = \calYP \cup \calXN'$ and $\calZP = \calYP$. 
By (\ref{eq:Q_0_ddagger}), 
$\bmQ_U$ satisfies the first condition (\ref{eq:xs_ys_0}) in Definition~\ref{def:rest}. 
By (\ref{eq:OSLDP}), 
it satisfies the second condition (\ref{eq:epsilon_OSLDP_whole}) 
for any $x, x' \in \calXS$ and any $z \in \calZP$. In addition, 
by (\ref{eq:OSLDP}) and (\ref{eq:Q_0_ddagger}), 
for any $x\in\calXN$, any $x'\in\calXS$, and any $z \in \calZP$, 
we obtain:
\begin{align}
\bmQ_U(z|x) 
&\leq \bmQ_{max}^\dagger(z) & \text{(by (\ref{eq:Q_0_ddagger}))} \nonumber\\
&\leq e^\epsilon \bmQ_O^\dagger(z|x') & \text{(by (\ref{eq:OSLDP}))} \nonumber\\
&= e^\epsilon \bmQ_U(z|x'). & \text{(by (\ref{eq:Q_0_ddagger}))}
\label{eq:Q_y_x_upper_bound}
\end{align}
By (\ref{eq:OSLDP}) and (\ref{eq:Q_y_x_upper_bound}), $\bmQ_U$ satisfies the second condition (\ref{eq:epsilon_OSLDP_whole}) 
for any $x\in\calXN$, any $x'\in\calXS$, and any $z \in \calYP$.
Furthermore, by (\ref{eq:OSLDP}) and (\ref{eq:Q_0_ddagger}), for any $x\in\calXN$ and any $z \in \calYP$, 
we obtain:
\begin{align}
e^{-\epsilon} \bmQ_{max}^\dagger(z) \leq \bmQ_U(z|x) \leq \bmQ_{max}^\dagger(z).
\label{eq:Q_y_x_two_sided}
\end{align}
Thus, $\bmQ_U$ satisfies the second condition (\ref{eq:epsilon_OSLDP_whole}) 
for any $x,x'\in\calXN$ and any $z \in \calYP$.
Therefore, $\bmQ_U$ provides $(\calXS,\calZP,\epsilon)$-\ULDP{}.

Finally, we show that there 
exists a randomized algorithm $\bmQ_R$ 
from $\calZ$ to $\calY$ 
such that $\bmQ_O = \bmQ_R \circ \bmQ_U$.
Let $\calYI = \calY \setminus \calYP$. 
First, we define a randomized algorithm $\bmQ_{R_1}$ 
from $\calW$ ($= \calYP \cup \calXN$) to $\calY$ by:
\begin{align}
\bmQ_{R_1}(y|w) = 
\begin{cases}
1 & \hspace{-2mm} \text{(if $w \in \calYP$ and $y=w$)}\\
0 & \hspace{-2mm} \text{(if $w \in \calYP$ and $y \neq w$)}\\
0 & \hspace{-2mm} \text{(if $w \in \calXN$ and $y \in \calYP$)}.\\
\frac{\bmQ_O(y|w)}{\sum_{y' \in \calYI} \bmQ_O(y'|w)} & \hspace{-2mm} \text{(if $w \in \calXN$ and $y \in \calYI$)}.
\end{cases}
\label{eq:rand_alg_R_1}
\end{align}
Note that $\sum_{y \in \calY} \bmQ_{R_1}(y|w) = 1$ for any $w \in \calW$.
$\bmQ_{R_1}$ mixes non-sensitive data with other non-sensitive data. 
By (\ref{eq:Q_0_dagger}) and (\ref{eq:rand_alg_R_1}), we obtain:
\begin{align}
\bmQ_O = \bmQ_{R_1} \circ \bmQ_O^\dagger
\label{eq:bmQ_O_R_1_dagger}
\end{align}
(note that in (\ref{eq:Q_0_dagger}), if $w=x$, then $\bmQ_O(y'|x) = \bmQ_O(y'|w)$). 

Next, 
for any $z \in \calXN'$ and any $w \in \calYP$, we define $\beta(z,w)$ by: 
\begin{align*}
\beta(z,w) = \frac{\bmQ_O^\dagger(w|z) - \min\{\bmQ_O^\dagger(w|z), \bmQ_{max}^\dagger(w)\}}{\alpha(z)},
\end{align*}
where $\alpha(z) > 0$ since $z\in\calXN'$. 
We also define a randomized algorithm $\bmQ_{R_2}$ 
from $\calZ$ ($=\calYP \cup \calXN'$) to $\calW$ ($=\calYP \cup \calXN$) by:
\begin{align}
&\bmQ_{R_2}(w|z) \nonumber\\
&= 
\begin{cases}
1 & \hspace{-2mm} \text{(if $z \in \calYP$ and $w=z$)}\\
0 & \hspace{-2mm} \text{(if $z \in \calYP$ and $w \neq z$)}\\
\beta(z,w) & \hspace{-2mm} \text{(if $z \in \calXN'$ and $w \in \calYP$)}\\
1 - \sum_{w'\in\calYP} \beta(z,w') & \hspace{-2mm} \text{(if $z \in \calXN'$, $w \in \calXN$, and $w = z$)}\\
0 & \hspace{-2mm} \text{(if $z \in \calXN'$, $w \in \calXN$, and $w \neq z$)}.
\end{cases}
\label{eq:rand_alg_R_2}
\end{align}
Note that 
$\beta(z,w) \geq 0$, since 
$\bmQ_O^\dagger(w|z) \geq \min\{\bmQ_O^\dagger(w|z),$ $\bmQ_{max}^\dagger(w)\}$.
$\sum_{w'\in\calYP} \beta(z,w') \leq 1$, since
$\sum_{w'\in\calYP} \bmQ_O^\dagger(w'|z) - \sum_{w'\in\calYP} \min\{\bmQ_O^\dagger(w'|z), \bmQ_{max}^\dagger(w')\} \leq \alpha(z)$. 
Furthermore, 
$\sum_{w \in \calW} \bmQ_{R_2}(w|z) = 1$ for any $z \in \calZ$. 
$\bmQ_{R_2}$ mixes non-sensitive data with sensitive data. 
By (\ref{eq:Q_0_ddagger}) and (\ref{eq:rand_alg_R_2}), we obtain:
\begin{align}
\bmQ_O^\dagger = \bmQ_{R_2} \circ \bmQ_U
\label{eq:bmQ_dagger_R_2_U}
\end{align}
(note that in (\ref{eq:Q_0_ddagger}), if $z=x$, then $\alpha(x) = \alpha(z)$). 
Let $\bmQ_R = \bmQ_{R_1} \circ \bmQ_{R_2}$.
Then by 
(\ref{eq:bmQ_O_R_1_dagger}) and (\ref{eq:bmQ_dagger_R_2_U}), 
we obtain
$\bmQ_O = \bmQ_R \circ \bmQ_U$. 
\end{proof}

From Proposition~\ref{prop:ULDP_OSLDP_utility}, we show that 
the expected $l_1$ and $l_2$ losses 
of OSLDP are not better than those of ULDP as follows. 
For any OSLDP mechanism $\bmQ_O \in \calM_O$ and any estimation method 
$\lambda_O$ from data in $\calY$, 
we can construct a ULDP mechanism $\bmQ_O$ by (\ref{eq:Q_0_ddagger}) and an estimation method $\lambda_U$ 
that perturbs data in $\calZ$ via $\bmQ_R$ and then estimates a distribution from data in $\calY$ via $\lambda_O$. 
$\bmQ_U$ and $\lambda_U$ provide the same expected $l_1$ and $l_2$ losses as $\bmQ_O$ and $\lambda_O$, 
and 
there might also exist ULDP mechanisms and estimation methods 
from data in $\calZ$ 
that provide 
smaller expected $l_1$ and $l_2$ losses. 
Thus, the expected $l_1$ and $l_2$ losses of OSLDP are not better than those of ULDP.

\begin{table*}[t]
\caption{$l_1$ loss of each obfuscation mechanism in the worst case (RR: randomized response, RAP: RAPPOR, uRR: utility-optimized RR, uRAP: utility-optimized RAPPOR, no privacy: non-private mechanism, *1: approximation in the case where $|\calXS| \ll |\calX|$).} 
\centering
\renewcommand{\arraystretch}{1.2}
\hbox to\hsize{\hfil
\begin{tabular}{l|c|c}
\hline
Mechanism			&	$\epsilon \approx 0$					&	$\epsilon = \ln |\calX|$\\
\hline
RR					&	$\sqrt{\frac{2}{n\pi}} \frac{|\calX|\sqrt{ |\calX| - 1 }}{\epsilon}$	&	$\sqrt{\frac{8(|\calX| - 1)}{n\pi}}$  \\
\hline
RAP				&	$\sqrt{\frac{2}{n\pi}} \cdot\frac{2|\calX|}{\epsilon}$			& $\sqrt{\frac{2\sqrt{\calX}(|\calX|-1)}{n\pi}}$\\
\hline
uRR		&	$\sqrt{\frac{2}{n\pi}} \cdot \frac{|\calXS|\sqrt{ |\calXS| - 1 }}{\epsilon}$ (see Appendix~\ref{sub:l1_restRR:high_privacy})	&	$\sqrt{\frac{2(|\calX| - 1)}{n\pi}}$ $^{(*1)}$ (see Appendix~\ref{sub:l1_restRR:low_privacy})  \\
\hline
uRAP	&	$\sqrt{\frac{2}{n\pi}} \cdot\frac{2|\calXS|}{\epsilon}$ (see Appendix~\ref{sub:l1_restRAPPOR:high_privacy})	&	
$\sqrt{\frac{2(|\calX|-1)}{n\pi}} \biggl( 1 + \frac{|\calXS|}{|\calX|^\frac{3}{4}} \biggr)$ $^{(*1)}$ (see Appendix~\ref{sub:l1_restRAPPOR:low_privacy}) \\
\hline
no privacy	&	\multicolumn{2}{|c}{$\sqrt{\frac{2(|\calX| - 1)}{n\pi}}$} \\
\hline
\end{tabular}
\hfil}
\label{tab:l_1_loss}
\end{table*}

\section{L1 loss of the utility-optimized Mechanisms}
\label{sec:proofs_utility}

In this section we show the detailed analyses on the $l_1$ loss of the utility-optimized RR and the utility-optimized RAPPOR. 
Table~\ref{tab:l_1_loss} summarizes the $l_1$ loss of each obfuscation mechanism. 

\subsection{$l_1$ loss of the utility-optimized RR}
\label{sec:proof_prop_l1_restRR}

We first present the $l_1$ loss of the $(\calXS,\epsilon)$-utility-optimized RR. 
In the theoretical analysis of utility, we use the empirical estimation method described in Section~\ref{sub:distribution_estimation}.
Then it follows from (\ref{eq:restRR_XS}) 
that 
the distribution $\bmm$ of the obfuscated data can be written as follows: 

\begin{align}
\bmm(x) = 
\begin{cases}
\frac{e^\epsilon - 1}{|\calXS| + e^\epsilon - 1} \bmp(x) + \frac{1}{|\calXS| + e^\epsilon - 1} & \text{(if $x \in \calXS$)}\\
\frac{e^\epsilon - 1}{|\calXS| + e^\epsilon - 1} \bmp(x) & \text{(if $x \in \calXN$)}.\\
\end{cases} 
\label{eq:bmm_restRR}
\end{align}
The empirical estimate of $\bmp$ is given by: 
\begin{align}
\hbmp(x) = 
\begin{cases}
\frac{|\calXS| + e^\epsilon - 1}{e^\epsilon - 1} \hbmm(x) - \frac{1}{e^\epsilon - 1} & \text{(if $x \in \calXS$)}\\
\frac{|\calXS| + e^\epsilon - 1}{e^\epsilon - 1} \hbmm(x) & \text{(if $x \in \calXN$)}.\\
\end{cases} 
\label{eq:hbmp_restRR}
\end{align}
The following proposition is derived from (\ref{eq:bmm_restRR}) and (\ref{eq:hbmp_restRR}):

\propLoneRestRR*

\begin{proof}
Let $\bmt$ be a frequency distribution of the obfuscated data with sample size $n$; i.e., $\bmt(x) = \hbmm(x)n$. 
By $\epsilon>0$, we have $u>0$ and $v>0$.
By (\ref{eq:bmm_restRR}) and (\ref{eq:hbmp_restRR}),
the $l_1$ loss of $\hbmp$ can be written as follows:
\begin{align*}
\expectym \left[ l_1(\hbmp,\bmp) \right]
&= \bbE \left[ \sum_{x \in \calX} |\hbmp(x) - \bmp(x)| \right] \\
&= \bbE \left[ \sum_{x \in \calX} v \cdot |\hbmm(x) - \bmm(x)| \right] \\
&= \sum_{x \in \calX} v \cdot \bbE \left[\, |\hbmm(x) - \bmm(x)| \,\right] \\
&= \sum_{x \in \calX} v \cdot \bbE \left[\, \left|\frac{\bmt(x)}{n} - \bbE \left[ \frac{\bmt(x)}{n} \right] \right| \,\right] \\
&= \sum_{x \in \calX} \frac{v}{\sqrt{n}} \cdot \bbE \left[\, \left|\frac{\bmt(x) - \bbE \bmt(x)}{\sqrt{n}} \right| \,\right].
\end{align*}
It follows from the central limit theorem that 
$\frac{\bmt(x) - \bbE \bmt(x)}{\sqrt{n}}$ 
converges to the normal distribution $\normal{0, \bmm(x) (1 - \bmm(x))}$ as $n\rightarrow\infty$.
Here we use the fact that the absolute moment of a random variable $G\sim\normal{\mu, \sigma}$ is given by:
\begin{align*}
\bbE[\,|G|\,] = \sqrt{{\textstyle\frac{2}{\pi}}} \cdot\sigma \cdot\Phi\bigl({\textstyle-\frac{1}{2},\frac{1}{2};-\frac{\mu^2}{2\sigma^2}}\bigr)
\end{align*}
where $\Phi$ is Kummer's confluent hypergeometric function.
(See~\cite{Winkelbauer:12:arXiv} for details.)
Hence we obtain:
\begin{align}
\lim_{n\rightarrow\infty} \expectym \left[\, \left|\frac{\bmt(x) - \bbE \bmt(x)}{\sqrt{n}} \right| \,\right]
= \sqrt{\frac{2}{\pi} \bmm(x) (1 - \bmm(x))}.
\label{eq:restRR-l1-loss:cen_lim_the}
\end{align}
Then we have:
\begin{align}
\bbE \left[ l_1(\hbmp,\bmp) \right]
\approx \sqrt{\frac{2}{n\pi}} \cdot \sum_{x \in \calX} v \sqrt{\bmm(x) (1 - \bmm(x))}.
\label{eq:restRR-l1-loss:outdist}
\end{align}
Recall that $u = |\calXS| + e^\epsilon - 1$, $u'=e^\epsilon-1$, and $v = \frac{u}{u'}$.
It follows from (\ref{eq:bmm_restRR}) that for $x\in\calXS$, $\bmm(x) = \bmp(x)/v + 1/u$, and for $x\in\calXN$, $\bmm(x) = \bmp(x)/v$.
Therefore, we obtain:
\begin{align}
&\bbE \left[ l_1(\hbmp,\bmp) \right] \nonumber\\
&\approx \sqrt{\frac{2}{n\pi}} \biggl( \sum_{x \in \calXS} v \sqrt{\bigl( \bmp(x)/v + 1/u \bigr) \bigl( 1 - \bmp(x)/v - 1/u \bigr)} \nonumber\\
&\hspace{10ex} +\sum_{x \in \calXN} v \sqrt{\bmp(x)/v \bigl( 1 - \bmp(x)/v \bigr)} \biggr) \nonumber\\
&= \sqrt{\frac{2}{n\pi}} \biggl( \sum_{x \in \calXS} \sqrt{\bigl( \bmp(x) + 1/u' \bigr) \bigl( v - \bmp(x) - 1/u' \bigr)} \nonumber\\
&\hspace{10ex} +\sum_{x \in \calXN} \sqrt{\bmp(x) \bigl( v - \bmp(x) \bigr)} \biggr). \nonumber
\end{align}
\end{proof}

\subsubsection{Maximum of the $l_1$ loss}
Next we show that 
when $0 < \epsilon < \ln(|\calXN|+1)$, 
the $l_1$ loss 
is maximized by the uniform distribution 
$\bmpUN$ over $\calXN$.

\propLoneRestRRsmall*

Here we do not present the general result for $|\calXS| > |\calXN|$, because in this section later, we are interested in using this proposition to analyze the utility for $|\calX| \gg |\calXS|$, where the utility-optimized mechanism is useful.

To prove these propositions, we first show the lemma below.

\begin{lemma}
\label{lem:l1_restRR:decreasing}
Assume $1 \le |\calXS| \le |\calXN|$.
Let $u'=e^\epsilon-1$ and $v = \frac{|\calXS| + e^\epsilon-1}{e^\epsilon-1}$.
For $w\in[0,1]$, we define $A(w)$, $B(w)$, and $F(w)$ by:
\begin{align*}
A(w) &= \bigl( w + {\textstyle\frac{|\calXS|}{u'}} \bigr) \bigl( v|\calXS| - w - {\textstyle\frac{|\calXS|}{u'}} \bigr) \\
B(w) &= \bigl(1-w\bigr) \bigl( v|\calXN| - 1 + w \bigr) \\
F(w) &= \sqrt{ A(w) } + \sqrt{ B(w) }
{.}
\end{align*}
Then for any $0 < \epsilon < \ln(|\calXN|+1)$,
$F(w)$ is decreasing in $w$.
\end{lemma}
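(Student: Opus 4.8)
The plan is to recognize $\sqrt{A}$ and $\sqrt{B}$ as semicircle functions and then argue by concavity. Writing $u' = e^\epsilon-1$ and completing the square, each factored quadratic becomes a downward parabola:
\begin{align*}
A(w) &= \Bigl(\tfrac{v|\calXS|}{2}\Bigr)^2 - (w - w_A)^2,
&
B(w) &= \Bigl(\tfrac{v|\calXN|}{2}\Bigr)^2 - (w - w_B)^2,
\end{align*}
for suitable centers $w_A, w_B$. Hence $\sqrt{A}$ and $\sqrt{B}$ are upper semicircles, each concave on the interval where it is defined. First I would check that $[0,1]$ lies in both domains, i.e. that $A,B \ge 0$ on $[0,1]$: for $B$ this follows from $v|\calXN| - 1 > 0$ (since $v > 1$ and $|\calXN|\ge 1$), and for $A$ from $v|\calXS| - \tfrac{|\calXS|}{u'} \ge 1$. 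Then $F = \sqrt{A} + \sqrt{B}$ is concave on $[0,1]$, so $F'$ is non-increasing there, and it suffices to prove $F'(0) \le 0$.

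Next I would evaluate $F'(0)$. With $a = \tfrac{|\calXS|}{u'}$ and $c = v|\calXS| - a = \tfrac{|\calXS|(|\calXS| - 1 + u')}{u'}$ we have $A(0) = ac$ and $A'(0) = c - a$, while $B(0) = v|\calXN| - 1$ and $B'(0) = 2 - v|\calXN|$, so
\begin{align*}
F'(0) = \frac{c - a}{2\sqrt{ac}} + \frac{2 - v|\calXN|}{2\sqrt{v|\calXN| - 1}}.
\end{align*}
Substituting the expressions for $a$ and $c$ and simplifying, the first term collapses to $\tfrac{(|\calXS| + u') - 2}{\sqrt{(|\calXS| + u') - 1}}$. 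Writing $g(t) = \tfrac{t - 2}{\sqrt{t - 1}}$, this gives the clean identity $F'(0) = \tfrac{1}{2}\bigl(g(|\calXS| + u') - g(v|\calXN|)\bigr)$, so $F'(0)\le 0$ is equivalent to $g(|\calXS| + u') \le g(v|\calXN|)$. I expect this algebraic collapse to be the main (though routine) obstacle, since it is what turns the two-denominator expression into a single monotone comparison.

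Finally I would establish that $g$ is increasing and verify the resulting inequality. Substituting $r = t - 1 > 0$ gives $g = \sqrt{r} - \tfrac{1}{\sqrt{r}}$, which is manifestly increasing in $r$ and hence in $t$; note both arguments exceed $1$, since $|\calXS| + u' > 1$ and $v|\calXN| > 2$. Thus $g(|\calXS| + u') \le g(v|\calXN|)$ reduces to $|\calXS| + u' \le v|\calXN|$. Using $v = \tfrac{|\calXS| + u'}{u'}$, a direct computation gives
\begin{align*}
v|\calXN| - (|\calXS| + u') = (|\calXN| - u')\,v,
\end{align*}
which is strictly positive exactly because the hypothesis $0 < \epsilon < \ln(|\calXN| + 1)$ is equivalent to $0 < u' < |\calXN|$. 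Hence $F'(0) < 0$, and by concavity $F'(w) \le F'(0) < 0$ for all $w \in [0,1]$, so $F$ is decreasing.
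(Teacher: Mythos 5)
Your proof is correct, and it takes a genuinely different route from the paper's. The paper argues pointwise: it differentiates $\sqrt{A}$ and $\sqrt{B}$, replaces both denominators by the common quantity $C_w = \min\bigl(2\sqrt{A(w)},\, 2\sqrt{B(w)}\bigr)$, uses $|\calXS|\le|\calXN|$ to merge the two numerators, and shows the merged numerator is negative precisely because $u' = e^\epsilon - 1 < |\calXN|$. You instead exploit the global structure: $\sqrt{A}$ and $\sqrt{B}$ are upper semicircles, hence concave, so $F'$ is non-increasing and the whole lemma reduces to the sign of the endpoint derivative $F'(0)$; your identity $F'(0) = \tfrac{1}{2}\bigl(g(|\calXS|+u') - g(v|\calXN|)\bigr)$ with $g(t) = (t-2)/\sqrt{t-1} = \sqrt{t-1} - 1/\sqrt{t-1}$ increasing turns this into the single comparison $|\calXS| + u' \le v|\calXN|$, and the neat cancellation $v|\calXN| - (|\calXS|+u') = v(|\calXN| - u')$ shows the hypothesis on $\epsilon$ enters exactly as $u' < |\calXN|$ (I verified the algebra; it is right). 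Your route buys two things. First, generality: you never use $|\calXS|\le|\calXN|$, so you prove the statement without that standing assumption, which is relevant to the paper's later Case~2 analysis where $|\calXS| > |\calXN|$. Second, rigor: the paper's common-denominator step $\frac{N_A}{2\sqrt{A}} + \frac{N_B}{2\sqrt{B}} \le \frac{N_A + N_B}{C_w}$ is delicate when the numerators have opposite signs --- for $N_A > 0 > N_B$ it requires $A(w) \ge B(w)$ and can fail otherwise (e.g.\ $|\calXS| = 3$, $|\calXN| = 5$, $u' = 3$, $w = 0$ gives $A(0)=5 < B(0)=9$, and the true value $F'(0) \approx -0.44$ exceeds the claimed bound $\approx -0.89$), whereas your endpoint computation is exact; the paper's conclusion survives, but your argument is the cleaner justification. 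What the paper's approach buys in exchange is that it is purely local and needs no closed form for $F'(0)$. Two harmless slips in your write-up: the phrase ``the first term collapses to $\frac{(|\calXS|+u')-2}{\sqrt{(|\calXS|+u')-1}}$'' drops a factor $\tfrac{1}{2}$ (the correct value is $\frac{|\calXS|+u'-2}{2\sqrt{|\calXS|+u'-1}}$), but the identity for $F'(0)$ you actually use restores it; and your domain checks ($B \ge 0$ on $[0,1]$ from $v|\calXN| - 1 > 0$, and $v|\calXS| - |\calXS|/u' = |\calXS|(|\calXS|-1+u')/u' \ge 1$, with equality only when $|\calXS| = 1$, where $A(1)=0$ merely makes the slope tend to $-\infty$) are correct as asserted.
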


\begin{proof}
Let $0 < \epsilon < \ln(|\calXN|+1)$ and $w\in[0, 1)$.
By $\epsilon>0$, $|\calXS| \ge 1$, $0 \leq w \leq 1$, $v = \frac{|\calXS| + e^\epsilon-1}{e^\epsilon-1} \geq 1$, and $v - \frac{1}{u'} = \frac{|\calXS|-1}{e^\epsilon-1} + 1 \geq 1$, we have $A(w) > 0$ and $B(w) > 0$.
Then:
\begin{align*}
\frac{d \sqrt{A(w)}}{dw}
&= \frac{1}{2\sqrt{A(w)}} \frac{dA(w)}{dw} \\
&= \frac{1}{2\sqrt{A(w)}} \Bigl(
- 2 w + |\calXS| \bigl( v - {\textstyle\frac{2}{u'}} \bigr)
\Bigr) \\
\frac{d \sqrt{B(w)}}{dw}
&= \frac{1}{2\sqrt{B(w)}} \frac{dB(w)}{dw} \\
&= \frac{1}{2\sqrt{B(w)}} \Bigl(
- 2 w + \bigl( - v |\calXN| + 2 \bigr)
\Bigr).
\end{align*}
Let $C_w = \min\left( 2\sqrt{A(w)}, 2\sqrt{B(w)} \right)$.
By $C_w>0$, we obtain:
\begin{align*}
\frac{dF(w)}{dw}
&= \frac{d\sqrt{A(w)}}{dw} + \frac{d\sqrt{B(w)}}{dw} \\
&\le \frac{1}{C_w} \Bigl({\textstyle
- 2 w + |\calXS| \bigl( v - \frac{2}{u'} \bigr)
- 2 w + \bigl( - v|\calXN| + 2 \bigr) }
\Bigr) \\
&= \frac{1}{C_w} \Bigl(
{\textstyle - 4 w + |\calXS| \bigl( v - \frac{2}{u'} \bigr) - |\calXN| \bigl( v - \frac{2}{|\calXN|} \bigr)}
\Bigr) \nonumber \\
&\le \frac{1}{C_w} \Bigl(
{\textstyle - 4 w + |\calXN| \bigl( v - \frac{2}{u'} \bigr) - |\calXN| \bigl( v - \frac{2}{|\calXN|} \bigr)}
\Bigr) \nonumber \\
&\hspace{35ex}\text{(by $|\calXS| \le |\calXN|$)} \nonumber \\
&= \frac{1}{C_w} \Bigl(
{\textstyle - 4 w + 2|\calXN| \bigl( \frac{1}{|\calXN|} - \frac{1}{u'} \bigr)}
\Bigr). \nonumber
\end{align*}
Recall that $u'=e^\epsilon-1$.
By $e^\epsilon < |\calXN|+1$, we have:
\begin{align*}
\frac{1}{|\calXN|} - \frac{1}{u'}
= \frac{1}{|\calXN|} - \frac{1}{e^{\epsilon}-1}
< \frac{1}{|\calXN|} - \frac{1}{|\calXN|}
= 0.
\end{align*}
Hence $\frac{dF(w)}{dw} < 0$.
Therefore for $w\in[0,1)$, $F(w)$ is decreasing in $w$.
\end{proof}

Now we prove Proposition~\ref{prop:l1_restRR:eps=0} as follows.
\begin{proof}
Let $0 < \epsilon < \ln(|\calXN|+1)$. 
Let $\calC_{SN}$ be the set of all distributions $\bmp^*$ over $\calX$ that satisfy:
\begin{itemize}
\item for any $x\in\calXS$,\,
 $\bmp^*(x) = \frac{\bmp^*(\calXS)}{|\calXS|}$, and
\item for any $x\in\calXN$,\,
 $\bmp^*(x) = \frac{\bmp^*(\calXN)}{|\calXN|} = \frac{1 - \bmp^*(\calXS)}{|\calXN|}$.
\end{itemize}
Note that 
$\calC_{SN}$ is the set of mixture distributions of 
the uniform distribution $\bmpUS$ over $\calXS$ and 
the uniform distribution $\bmpUN$ over $\calXN$. 

By the Cauchy-Schwarz inequality, for any $a_1,a_2,\ldots,a_L\allowbreak \ge~0$, $\sum_{i=1}^{L} \sqrt{a_i} \le \sqrt{L \sum_{i=1}^{L} a_i}$, where the equality holds iff $a_1=a_2=\ldots=a_L$.
Hence by (\ref{eq:restRR-l1-loss:formal}) we obtain:
\begin{align}
&\bbE \left[ l_1(\hbmp,\bmp) \right] \nonumber\\
&\lesssim \sqrt{\frac{2}{n\pi}} \biggl( \sqrt{|\calXS| \sum_{x \in \calXS} \bigl( \bmp(x) + 1/u' \bigr) \bigl( v - \bmp(x) - 1/u' \bigr)} \nonumber\\
&\hspace{10ex} + \sqrt{|\calXN| \sum_{x \in \calXN} \bmp(x) \bigl( v - \bmp(x) \bigr)} \biggr), 
\nonumber
\end{align}
where $\approx$ holds iff $\bmp\in\calC_{SN}$.

Therefore we obtain:
\begin{align}
&~ \bbE \left[ l_1(\hbmp,\bmp) \right] \nonumber \\
&\lesssim \sqrt{\frac{2}{n\pi}} \!\biggl(\!\sqrt{|\calXS|^2 \Bigl( {\textstyle\frac{\bmp(\calXS)}{|\calXS|}} + 1/u' \Bigr) \Bigl( v - {\textstyle\frac{\bmp(\calXS)}{|\calXS|}} - 1/u' \Bigr)} \label{eq:l1-loss-RR-with-pxs} \\
&\hspace{9ex} + \sqrt{|\calXN|^2{\textstyle\frac{1-\bmp(\calXS)}{|\calXN|}} \Bigl( v - {\textstyle\frac{1-\bmp(\calXS)}{|\calXN|}} \Bigr)} \biggr) \nonumber\\
&= \sqrt{\frac{2}{n\pi}} \!\biggl(\!\sqrt{ \Bigl( \bmp(\calXS) + |\calXS|/u' \Bigr) \Bigl( v|\calXS| - \bmp(\calXS) - |\calXS|/u' \Bigr)} \nonumber\\
&\hspace{9ex} + \sqrt{\bigl(1-\bmp(\calXS)\bigr) \Bigl( v|\calXN| - 1 +\bmp(\calXS)} \biggr) \nonumber\\
&= \sqrt{\frac{2}{n\pi}}\, F(\bmp(\calXS)), 
\label{eq:l1_RR_FpXs}
\end{align}
where $F$ is defined in Lemma~\ref{lem:l1_restRR:decreasing}.
Note that in (\ref{eq:l1-loss-RR-with-pxs}), $\approx$ holds iff $\bmp\in\calC_{SN}$.

By Lemma~\ref{lem:l1_restRR:decreasing} and
$0 < \epsilon < \ln(|\calXN|+1)$, 
$F(\bmp(\calXS))$ is maximized when $\bmp(\calXS)=0$.
Hence the right-hand side of (\ref{eq:restRR-l1-loss:formal}) is maximized when $\bmp(\calXS) = 0$ and $\bmp\in\calC_{SN}$, i.e., when $\bmp$ is the uniform distribution $\bmpUN$ over~$\calXN$.

Therefore we obtain:
\begin{align*}
&\bbE \left[ l_1(\hbmp,\bmp) \right] \nonumber\\
&\lesssim\!\bbE \left[ l_1(\hbmp,\bmpUN) \right] \\
&=\!\sqrt{\frac{2}{n\pi}}\,F(0) \\
&=\!\sqrt{\frac{2}{n\pi}} \!\biggl(\!\sqrt{ {\textstyle\frac{|\calXS|^2}{u'}} \bigl( v - {\textstyle\frac{1}{u'}} \bigr)} + \sqrt{ v|\calXN| - 1 } \biggr) \nonumber \\
&=\!\sqrt{\frac{2}{n\pi}} \!\Bigl(\! {\textstyle \frac{|\calXS|\!\sqrt{ |\calXS| + e^{\epsilon} - 2 }}{e^{\epsilon} - 1} +\!\sqrt{\! \frac{|\calXS||\calXN|}{e^{\epsilon} - 1} + |\calXN| - 1 } } \Bigr). \nonumber
\end{align*}
\end{proof}

Finally, we show that 
when $\epsilon \ge \ln(|\calXN|+1)$, 
the $l_1$ loss is maximized by a mixture of 
the uniform distribution $\bmpUS$ over $\calXS$ and 
the uniform distribution $\bmpUN$ over $\calXN$.

\propLoneRestRRbig*

\begin{proof}
We show that for any $\epsilon \ge \ln(|\calXN|+1)$, 
the right-hand side of (\ref{eq:restRR-l1-loss:formal})
is maximized when $\bmp = \bmp^*$.
(Note that by $\epsilon \ge \ln(|\calXN|+1)$, $\bmp^*(x)\ge 0$ holds for all $x\in\calX$.)

To show this, we prove that if $\bmp = \bmp^*$ then $\bmm$ is the uniform distribution over $\calY$ as follows.
If $x\in\calXS$ then we have:
\begin{align*}
\bmm(x)
&= \frac{e^\epsilon - 1}{|\calXS| + e^\epsilon - 1} \bmp^*(x) + \frac{1}{|\calXS| + e^\epsilon - 1} \\
&= \frac{e^\epsilon - 1}{|\calXS| + e^\epsilon - 1} \frac{1 - \frac{|\calXN|}{e^{\epsilon}-1}}{|\calXS| + |\calXN|} + \frac{1}{|\calXS| + e^\epsilon - 1} \\
&= \frac{e^\epsilon - 1 - |\calXN| + (|\calXS| + |\calXN|)}{(|\calXS| + e^\epsilon - 1)(|\calXS| + |\calXN|)} \\
&= \frac{1}{|\calX|}
.
\end{align*}
On the other hand, if $x\in\calXN$ then we obtain:
\begin{align*}
\bmm(x)
&= \frac{e^\epsilon - 1}{|\calXS| + e^\epsilon - 1} \bmp^*(x) \\
&= \frac{e^\epsilon - 1}{|\calXS| + e^\epsilon - 1} \frac{1 + \frac{|\calXS|}{e^{\epsilon}-1}}{|\calXS| + |\calXN|} \\
&= \frac{1}{|\calX|}
.
\end{align*}
Hence $\bmm$ is the uniform distribution over $\calY$.

By (\ref{eq:restRR-l1-loss:outdist}) and the Cauchy-Schwarz inequality, we obtain:
\begin{align*}
\bbE \left[ l_1(\hbmp,\bmp) \right]
&\lesssim \sqrt{\frac{2}{n\pi}} \cdot v \sqrt{|\calX| \sum_{x \in \calX} \bmm(x) (1 - \bmm(x))}
,
\end{align*}
where $\approx$ holds iff $\bmm$ is the uniform distribution over $\calY$, or equivalently $\bmp = \bmp^*$.
Hence:
\begin{align*}
\bbE \left[ l_1(\hbmp,\bmp) \right]
&\lesssim \bbE \left[ l_1(\hbmp,\bmp^*) \right] \\
&= \sqrt{\frac{2}{n\pi}} \cdot v \sqrt{|\calX| \sum_{x \in \calX} \frac{1}{|\calX|} \Bigl(1 - \frac{1}{|\calX|}\Bigr)} \\
&= \sqrt{\frac{2(|\calX| - 1)}{n\pi}} \cdot \frac{|\calXS|+e^{\epsilon}-1}{e^{\epsilon}-1}
.
\end{align*}
\end{proof}

\subsubsection{$l_1$ loss in the high privacy regime}
\label{sub:l1_restRR:high_privacy}
Consider the high privacy regime where $\epsilon\approx0$. 
In this case, $e^\epsilon - 1 \approx \epsilon$. 
By using this approximation, we simplify the $l_1$ loss of the utility-optimized RR for both the cases where $|\calXS|\le|\calXN|$ and where $|\calXS|>|\calXN|$.\\

\noindent{\textbf{Case 1: $|\calXS|\le|\calXN|$.}}~~By Proposition~\ref{prop:l1_restRR:eps=0},
the expected $l_1$ loss of the $(\calXS,\epsilon)$-utility-optimized RR mechanism is maximized by $\bmpUN$ and given by:
\begin{align*}
&\bbE \left[ l_1(\hbmp,\bmp) \right] \nonumber\\
&\lesssim\bbE \left[ l_1(\hbmp,\bmpUN) \right] \nonumber\\
&\approx\!{\textstyle \sqrt{\frac{2}{n\pi}}} \!\biggl(\! {\textstyle \frac{|\calXS|\sqrt{ |\calXS| +\epsilon - 1 }}{\epsilon} +\!\sqrt{ \frac{|\calXS|\cdot|\calXN|}{\epsilon} + |\calXN| - 1 } } \biggr) \nonumber \\
&\hspace{35ex} (\text{by $e^\epsilon-1\approx\epsilon$}) \nonumber \\
&\approx\!\sqrt{\frac{2}{n\pi}} \cdot \frac{|\calXS|\sqrt{ |\calXS| - 1 }}{\epsilon} \hspace{8ex}(\text{by $\epsilon\approx0$}). 
\end{align*}\\

\noindent{\textbf{Case 2: $|\calXS|>|\calXN|$.}}~~Let 
$F$ be the function defined in Lemma~\ref{lem:l1_restRR:decreasing}, 
$w^* = \argmax_{w\in[0,1]} F(w)$, 
and
$\bmp^{\!*}$ be the prior distribution over $\calX$ defined by:
\begin{align*}
\bmp^{\!*}(x) = 
\begin{cases}
\frac{w^*}{|\calXS|}
& (\text{if $x \in \calXS$}) \\
\frac{1 - w^*}{|\calXN|}
& (\text{otherwise}).
\end{cases}
\end{align*}
Then, by (\ref{eq:l1_RR_FpXs}), 
$\bbE \left[ l_1(\hbmp,\bmp) \right]$ is maximized by $\bmp^{\!*}$. Thus, 
for $\epsilon\approx0$, the expected $l_1$ loss of the $(\calXS,\epsilon)$-utility-optimized RR mechanism is given by:
\begin{align}
\bbE \left[ l_1(\hbmp,\bmp) \right] 
&\lesssim \left[ l_1(\hbmp,\bmp^{\!*}) \right] \nonumber \\
&=\!\sqrt{\frac{2}{n\pi}}\,F(w^*) \nonumber \\
&\approx\!\sqrt{\frac{2}{n\pi}} \!\biggl(\! {\textstyle 
\sqrt{ \bigl( w^* + {\textstyle\frac{|\calXS|}{\epsilon}} \bigr) \bigl( \frac{|\calXS|^2}{\epsilon} - w^* - {\textstyle\frac{|\calXS|}{\epsilon}} \bigr) } } \nonumber \\
&\hspace{9ex} + {\textstyle \sqrt{ \bigl(1-w^*\bigr) \bigl( \frac{|\calXS||\calXN|}{\epsilon} - 1 + w^* \bigr) } }\biggr) \nonumber \\
&\hspace{13ex} (\text{by $u' = e^\epsilon-1\approx\epsilon$ and $v\approx {\textstyle\frac{|\calXS|}{\epsilon}}$}) \nonumber \\
&=\!\sqrt{\frac{2}{n\pi}} \!\biggl(\! {\textstyle 
\sqrt{ \frac{|\calXS|^2 (|\calXS| - 1)}{\epsilon^2} + O(\epsilon^{-1}) } } + {\textstyle \sqrt{ O(\epsilon^{-1}) } }\biggr) \nonumber \\
&\hspace{31ex} (\text{by $0\le w^* \le 1$}) \nonumber \\
&\approx\!\sqrt{\frac{2}{n\pi}} \cdot \frac{|\calXS| \sqrt{ |\calXS| - 1 } }{\epsilon}.
\hspace{11ex}(\text{by $\epsilon\approx0$})
\nonumber
\end{align}\\

In summary, the expected $l_1$ loss of the utility-optimized RR 
is at most 
$\sqrt{\frac{2}{n\pi}} \cdot \frac{|\calXS| \sqrt{ |\calXS| - 1 } }{\epsilon}$ 
in the high privacy regime, 
irrespective of whether $|\calXS|\le|\calXN|$ or not.
It is shown in \cite{Kairouz_ICML16} that 
the expected $l_1$ loss of the $\epsilon$-RR is at most 
$\sqrt{\frac{2}{n\pi}} \frac{|\calX|\sqrt{ |\calX| - 1 }}{\epsilon}$ when $\epsilon \approx 0$. 
Thus, 
the expected $l_1$ loss of the $(\calXS,\epsilon)$-utility-optimized RR is much smaller than that of the $\epsilon$-RR when $|\calXS| \ll |\calX|$.

\subsubsection{$l_1$ loss in the low privacy regime}
\label{sub:l1_restRR:low_privacy}

Consider the low privacy regime where $\epsilon=\ln|\calX|$ and $|\calXS| \ll |\calX|$. 
By Proposition~\ref{prop:l1_restRR:eps=lnX},
the expected $l_1$ loss of the $(\calXS,\epsilon)$-utility-optimized RR is given by:
\begin{align*}
\bbE \left[ l_1(\hbmp,\bmp) \right]
&\lesssim \bbE \left[ l_1(\hbmp,\bmp^*) \right] \\
&= \sqrt{\frac{2(|\calX| - 1)}{n\pi}} \cdot \frac{|\calXS|+|\calX|-1}{|\calX|-1} \\
&= \sqrt{\frac{2(|\calX| - 1)}{n\pi}} \Bigl( 1 + \frac{|\calXS|}{|\calX|-1} \Bigr) \\
&\approx \sqrt{\frac{2(|\calX| - 1)}{n\pi}}. \hspace{8ex}(\text{by $|\calXS| / |\calX| \approx 0$}) 
\end{align*}
It should be noted that the expected $l_1$ loss of the non-private mechanism, which does not obfuscate the personal data, is 
at most 
$\sqrt{\frac{2(|\calX| - 1)}{n\pi}}$ \cite{Kairouz_ICML16}. 
Thus, when $\epsilon=\ln|\calX|$ and $|\calXS| \ll |\calX|$,
the $(\calXS,\epsilon)$-utility-optimized RR achieves almost the same data utility as the non-private mechanism, 
whereas the expected $l_1$ loss of the $\epsilon$-RR is 
twice 
larger than that of the non-private mechanism \cite{Kairouz_ICML16}.

\subsection{$l_1$ loss of the utility-optimized RAPPOR}
\label{sub:proof_prop_l1_restRAPPOR}

We first present the $l_1$ loss of the $(\calXS,\epsilon)$-utility-optimized RAPPOR.
Recall that 
$\bmm_j$ (resp.~$\hbmm_j$) is the true probability (resp.~empirical probability) that the $j$-th coordinate in obfuscated data is $1$. 
It follows from (\ref{eq:restRAPPOR}), (\ref{eq:restRAPPOR_calXS}) and (\ref{eq:restRAPPOR_calXN}) that 
$\bmm_j$ can be written as follows: 
\begin{align}
\bmm_j = 
\begin{cases}
\frac{e^{\epsilon/2} - 1}{e^{\epsilon/2} + 1} \bmp(x_j) + \frac{1}{e^{\epsilon/2} + 1} & \text{(if $1 \leq j \leq |\calXS|$)}\\
\frac{e^{\epsilon/2} - 1}{e^{\epsilon/2}} \bmp(x_j) & \text{(otherwise)}.\\
\end{cases} 
\label{eq:bmmj_bmpj_restRAPPOR}
\end{align}
Then, the empirical estimate $\hbmp$ is given by: 
\begin{align}
\hbmp(x_j) = 
\begin{cases}
\frac{e^{\epsilon/2} + 1}{e^{\epsilon/2} - 1} \hbmm_j - \frac{1}{e^{\epsilon/2} - 1} & \text{(if $1 \leq j \leq |\calXS|$)}\\
\frac{e^{\epsilon/2}}{e^{\epsilon/2} - 1} \hbmm_j & \text{(otherwise)}.\\
\end{cases} 
\label{eq:bmpj_bmmj_restRAPPOR}
\end{align}
The following proposition is derived from 
(\ref{eq:bmmj_bmpj_restRAPPOR}) and (\ref{eq:bmpj_bmmj_restRAPPOR}): 

\propLoneRestRAP*

\begin{proof}
Let $v_S = \frac{e^{\epsilon/2}+1}{e^{\epsilon/2}-1}$.
By $\epsilon>0$, we have $v_S>0$ and $v_N>0$.

Analogously to the derivation of (\ref{eq:restRR-l1-loss:outdist}) in Appendix~\ref{sec:proof_prop_l1_restRR}, 
it follows from (\ref{eq:bmmj_bmpj_restRAPPOR}) and (\ref{eq:bmpj_bmmj_restRAPPOR}) that:
\begin{align*}
\bbE \left[ l_1(\hbmp,\bmp) \right]
\approx& \sqrt{\frac{2}{n\pi}} \sum_{j=1}^{|\calXS|} v_S \cdot \sqrt{\bmm_j (1 - \bmm_j)} \\
&+  \sqrt{\frac{2}{n\pi}} \sum_{j=|\calXS|+1}^{|\calX|} v_N \cdot \sqrt{\bmm_j (1 - \bmm_j)}.
\end{align*}
Let $u = e^{\epsilon/2} + 1$.
Then $v_S = \frac{u}{u'}$ and $v_N = \frac{u-1}{u'}$.
It follows from (\ref{eq:bmmj_bmpj_restRAPPOR}) that for $1\le j \le |\calXS|$, $\bmm_j = \bmp(x_j)/v_S + 1/u$, and for $|\calXS|+1 \le j \le |\calX|$, $\bmm_j = \bmp(x_j)/v_N$.
Therefore, we obtain:
\begin{align}
&\bbE \left[ l_1(\hbmp,\bmp) \right] \nonumber\\
&\approx \sqrt{\frac{2}{n\pi}} \biggl( \sum_{j=1}^{|\calXS|} v_S \sqrt{ \bigl( \bmp(x_j)/v_S + 1/u \bigr) \bigl( 1 - \bmp(x_j)/v_S - 1/u \bigr)} \nonumber\\
&\hspace{10ex} +\sum_{j=|\calXS|+1}^{|\calX|} v_N \sqrt{ \bmp(x_j)/v_N \bigl( 1 - \bmp(x_j)/v_N \bigr)} \biggr) \nonumber\\
&= \sqrt{\frac{2}{n\pi}} \biggl( \sum_{j=1}^{|\calXS|} \sqrt{\bigl( \bmp(x_j) + 1/u' \bigr) \bigl( v_S - \bmp(x_j) - 1/u' \bigr)} \nonumber\\
&\hspace{10ex} +\sum_{j=|\calXS|+1}^{|\calX|} \sqrt{\bmp(x_j) \bigl( v_N - \bmp(x_j) \bigr)} \biggr) \nonumber\\
&= \sqrt{\frac{2}{n\pi}} \biggl( \sum_{j=1}^{|\calXS|} \sqrt{\bigl( \bmp(x_j) + 1/u' \bigr) \bigl( v_N - \bmp(x_j) \bigr)} \nonumber\\
&\hspace{10ex} +\sum_{j=|\calXS|+1}^{|\calX|} \sqrt{\bmp(x_j) \bigl( v_N - \bmp(x_j) \bigr)} \biggr). \nonumber
\end{align}
\end{proof}

\subsubsection{Maximum of the $l_1$ loss}
Next we show that 
when $0 < \epsilon < 2\ln(\frac{|\calXN|}{2}+1)$, 
the $l_1$ loss is maximized by the uniform distribution $\bmpUN$ over $\calXN$.

\propLoneRestRAPapprox*

To prove this proposition, we first show the lemma below.

\begin{lemma}
\label{lem:l1_restRAPPOR:decreasing}
Assume $1 \le |\calXS| \le |\calXN|$.
Let $u'=e^{\epsilon/2}-1$ and 
$v_n = \frac{e^{\epsilon/2}}{e^{\epsilon/2}-1}$. 
For $w\in[0,1]$, we define $A(w)$, $B(w)$, and $F(w)$~by:
\begin{align*}
A(w) &= \bigl( w + {\textstyle\frac{|\calXS|}{u'}} \bigr) \bigl( v_N|\calXS| - w \bigr) \\
B(w) &= \bigl(1-w\bigr) \bigl( v_N|\calXN| - 1 + w \bigr) \\
F(w) &= \sqrt{ A(w) } + \sqrt{ B(w) }
{.}
\end{align*}
Then for any $0 < \epsilon < 2\ln (\frac{|\calXN|}{2}+1)$,
$F(w)$ is decreasing in~$w$.
\end{lemma}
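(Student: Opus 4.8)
The plan is to mirror the proof of Lemma~\ref{lem:l1_restRR:decreasing}, since the statement and the functions $A,B,F$ are structurally identical. First I would record the algebraic identity that drives the utility-optimized RAPPOR, namely $v_N - \frac{1}{u'} = \frac{e^{\epsilon/2}}{e^{\epsilon/2}-1} - \frac{1}{e^{\epsilon/2}-1} = 1$, so that $v_N = 1 + \frac{1}{u'}$ and $v_N|\calXS| = |\calXS| + \frac{|\calXS|}{u'}$. Using this I would establish positivity on $[0,1)$: one has $w + \frac{|\calXS|}{u'} > 0$ and $v_N|\calXS| - w \ge v_N - 1 = \frac{1}{u'} > 0$, so $A(w) > 0$, while $1-w > 0$ and $v_N|\calXN| - 1 + w > 0$ give $B(w) > 0$ (the endpoint $w=1$, where $B(1)=0$, is handled by continuity). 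Differentiating and simplifying with the identity then collapses the cross terms to the clean forms $A'(w) = |\calXS| - 2w$ and $B'(w) = 2 - v_N|\calXN| - 2w$.

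Next, following the RR lemma, I would set $C_w = \min\{2\sqrt{A(w)},\, 2\sqrt{B(w)}\} > 0$ and bound
\[
\frac{dF}{dw} = \frac{A'(w)}{2\sqrt{A(w)}} + \frac{B'(w)}{2\sqrt{B(w)}} \le \frac{A'(w) + B'(w)}{C_w}.
\]
The decisive quantity is therefore $A'(w) + B'(w) = |\calXS| + 2 - v_N|\calXN| - 4w$. Writing $v_N|\calXN| = |\calXN| + \frac{|\calXN|}{u'}$ and invoking the hypothesis $|\calXS| \le |\calXN|$ yields $A'(w) + B'(w) \le 2 - \frac{|\calXN|}{u'} - 4w$. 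This is exactly where the bound on $\epsilon$ enters: $\epsilon < 2\ln(\frac{|\calXN|}{2}+1)$ is equivalent to $e^{\epsilon/2} < \frac{|\calXN|}{2}+1$, i.e.\ $u' = e^{\epsilon/2}-1 < \frac{|\calXN|}{2}$, i.e.\ $\frac{|\calXN|}{u'} > 2$. Hence $2 - \frac{|\calXN|}{u'} < 0$, so $A'(w) + B'(w) < -4w \le 0$, giving $\frac{dF}{dw} < 0$ on $[0,1)$ and, by continuity, that $F$ is decreasing on $[0,1]$.

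The hard part is justifying the combined-bound step, since $A'(w)$ can be positive (for small $|\calXS|$ or small $w$), so the inequality $\frac{A'}{2\sqrt{A}} + \frac{B'}{2\sqrt{B}} \le \frac{A'+B'}{C_w}$ is not literally valid termwise and needs care. I would resolve this with a cleaner alternative that sidesteps the sign issue entirely: $A$ and $B$ are positive concave quadratics, so $\sqrt{A}$ and $\sqrt{B}$ — hence $F$ — are concave, whence $F'$ is non-increasing and it suffices to verify $F'(0) \le 0$. A pleasant cancellation makes this explicit: from $\sqrt{A(0)} = \frac{|\calXS|\,e^{\epsilon/4}}{u'}$ the $A$-contribution to $F'(0)$ equals $\frac{u'}{2e^{\epsilon/4}} = \sinh(\epsilon/4)$, \emph{independent of} $|\calXS|$ (which is why the $C_w$ route could afford the cruder $|\calXS| \le |\calXN|$ estimate), and $F'(0) \le 0$ reduces to the scalar inequality $2\sinh(\epsilon/4) \le \frac{v_N|\calXN| - 2}{\sqrt{v_N|\calXN| - 1}}$, a routine verification that the constraint $\epsilon < 2\ln(\frac{|\calXN|}{2}+1)$ delivers.
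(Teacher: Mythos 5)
Your proposal is correct, and its decisive step is genuinely different from --- and more careful than --- the paper's own proof. The paper proves this lemma exactly by the route you sketch first: it computes $\frac{d\sqrt{A}}{dw}$ and $\frac{d\sqrt{B}}{dw}$, sets $C_w=\min\bigl(2\sqrt{A(w)},2\sqrt{B(w)}\bigr)$, asserts $\frac{dF}{dw}\le\frac{A'(w)+B'(w)}{C_w}$, bounds $A'+B'$ via $|\calXS|\le|\calXN|$, and concludes from $\frac{|\calXN|}{u'}>2$, which is equivalent to $\epsilon<2\ln(\frac{|\calXN|}{2}+1)$ --- precisely your computation $A'+B' \le 2-\frac{|\calXN|}{u'}-4w<0$. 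Your objection to the combined bound is accurate: writing $\alpha=2\sqrt{A}$, $\beta=2\sqrt{B}$, one has $\frac{A'}{\alpha}+\frac{B'}{\beta}-\frac{A'+B'}{\min(\alpha,\beta)}=B'\bigl(\frac{1}{\beta}-\frac{1}{\alpha}\bigr)>0$ whenever $A'>0>B'$ and $A(w)<B(w)$ (e.g.\ $|\calXS|=1$ with $|\calXN|$ large at $w=0$), so the paper's displayed inequality is pointwise false in that regime even though its final conclusion is true; the paper does not address this (the same unremarked step appears in its proof of Lemma~\ref{lem:l1_restRR:decreasing}). Your concavity repair closes the gap cleanly: $A$ and $B$ are concave quadratics, positive on $[0,1)$, so $\sqrt{A}$, $\sqrt{B}$, and $F$ are concave, $F'$ is non-increasing, and it suffices that $F'(0)\le 0$; your identity $\frac{A'(0)}{2\sqrt{A(0)}}=\frac{u'}{2e^{\epsilon/4}}=\sinh(\epsilon/4)$ is right, and the resulting scalar inequality does follow from the hypothesis: with $s=e^{\epsilon/4}$ and $T=v_N|\calXN|$, the constraint gives $|\calXN|>2(s^2-1)$, hence $T>2s^2$; since $t\mapsto\frac{t-2}{\sqrt{t-1}}$ is increasing (its derivative is $\frac{t}{2(t-1)^{3/2}}>0$), it suffices that $2\sinh(\epsilon/4)=\frac{s^2-1}{s}\le\frac{2(s^2-1)}{\sqrt{2s^2-1}}$, i.e.\ $\sqrt{2s^2-1}\le 2s$, which always holds, and strictly --- so $F'(w)\le F'(0)<0$ on $[0,1)$ and $F$ is strictly decreasing (for strictness you could equivalently note $F''<0$). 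Two remarks on what each approach buys: the paper's $C_w$ argument is shorter and uniform with the RR case but rests on a step that is false as stated in part of the parameter range, whereas your endpoint-plus-concavity argument is fully rigorous at the cost of one explicit computation; moreover, since $\sinh(\epsilon/4)$ is independent of $|\calXS|$ and $B$ does not involve $|\calXS|$, your route in fact proves the lemma \emph{without} the hypothesis $|\calXS|\le|\calXN|$, which the paper's route genuinely consumes.
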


\begin{proof}
Let $0 < \epsilon < 2\ln (\frac{|\calXN|}{2}+1)$ and $w\in[0, 1)$.
By $\epsilon>0$, $|\calXS| \ge 1$, 
and $v_n = \frac{e^{\epsilon/2}}{e^{\epsilon/2}-1} > 1$,
we have $A(w) > 0$ and $B(w) > 0$.
Then:
\begin{align*}
\frac{d \sqrt{A(w)}}{dw}
&= \frac{1}{2\sqrt{A(w)}} \frac{dA(w)}{dw} \\
&= \frac{1}{2\sqrt{A(w)}} \Bigl(
- 2 w + |\calXS|\bigl( v_N - {\textstyle\frac{1}{u'}} \bigr)
\Bigr) \\
\frac{d \sqrt{B(w)}}{dw}
&= \frac{1}{2\sqrt{B(w)}} \frac{dB(w)}{dw} \\
&= \frac{1}{2\sqrt{B(w)}} \Bigl(
- 2 w + \bigl( - v_N|\calXN| + 2 \bigr)
\Bigr).
\end{align*}
Let $C_w = \min\left( 2\sqrt{A(w)}, 2\sqrt{B(w)} \right)$.
By $C_w>0$, we obtain:
\begin{align*}
&\frac{dF(w)}{dw} \nonumber\\
&= \frac{d\sqrt{A(w)}}{dw} + \frac{d\sqrt{B(w)}}{dw} \\
&\le \frac{1}{C_w} \Bigl({\textstyle
- 2 w + |\calXS|\bigl( v_N - \frac{1}{u'} \bigr)
- 2 w - |\calXN|\bigl( v_N - \frac{2}{|\calXN|}} \bigr)
\Bigr) \\
&\le \frac{1}{C_w} \Bigl({\textstyle
- 4 w + |\calXN|\bigl( v_N - \frac{1}{u'} \bigr)
- |\calXN|\bigl( v_N - \frac{2}{|\calXN|}} \bigr)
\Bigr) \\
&\hspace{38ex} (\text{by $|\calXS|\le|\calXN|$}) \\
&\le \frac{1}{C_w} \Bigl({\textstyle
- 4 w + |\calXN|\bigl( \frac{2}{|\calXN|} - \frac{1}{u'} \bigr)}
\Bigr)
.
\end{align*}
Recall that $u'=e^{\epsilon/2}-1$.
By $\epsilon < 2\ln \bigl(\frac{|\calXN|}{2}+1\bigr)$, we have $e^{\epsilon/2} < \frac{|\calXN|}{2}+1$, hence:
\begin{align*}
\frac{2}{|\calXN|} - \frac{1}{u'}
= \frac{2}{|\calXN|} - \frac{1}{e^{\epsilon/2}-1}
< \frac{2}{|\calXN|} - \frac{2}{|\calXN|}
= 0.
\end{align*}
Hence $\frac{dF(w)}{dw} < 0$.
Therefore for $w\in[0,1)$, $F(w)$ is decreasing in $w$.
\end{proof}

Now we prove Proposition~\ref{prop:l1_restRAPPOR:approx} as follows.

\begin{proof}
Let $0 < \epsilon < 2\ln\bigl(\frac{|\calXN|}{2}+1\bigr)$. 
As with the proof for Proposition~\ref{prop:l1_restRR:eps=0},
let $\calC_{SN}$ be the set of all distributions $\bmp^*$ over $\calX$ that satisfy:
\begin{itemize}
\item for any $1 \le j \le |\calXS|$,\,
 $\bmp^*(x_j) = \frac{\bmp^*(\calXS)}{|\calXS|}$, and
\item for any $|\calXS|+1 \le j \le |\calX|$,\,
 $\bmp^*(x_j) = \frac{\bmp^*(\calXN)}{|\calXN|} = \frac{1 - \bmp^*(\calXS)}{|\calXN|}$.
\end{itemize}
By the Cauchy-Schwarz inequality, for any $a_1,a_2,\ldots,a_L\ge~0$, $\sum_{i=1}^{L} \sqrt{a_i} \le \sqrt{L \sum_{i=1}^{L} a_i}$, where the equality holds iff $a_1=a_2=\ldots=a_L$.
Hence by (\ref{eq:restRAPPOR-l1-loss:formal}) we obtain:
\begin{align}
\bbE \left[ l_1(\hbmp,\bmp) \right]
\lesssim& \sqrt{\frac{2}{n\pi}} \Biggl(\!\sqrt{|\calXS| \sum_{j=1}^{|\calXS|} \bigl( \bmp(x_j) + 1/u' \bigr) \bigl( v_N - \bmp(x_j)  \bigr)} \nonumber\\
&\hspace{6ex} + \!\sqrt{|\calXN|\hspace{-2ex} \sum_{j=|\calXS|+1}^{|\calX|}\hspace{-2.3ex} \bmp(x_j) \bigl( v_N - \bmp(x_j) \bigr)} \Biggr)
,
\nonumber
\end{align}
where the equality holds iff $\bmp\in\calC_{SN}$.

Therefore we obtain:
\begin{align}
&~ \bbE \left[ l_1(\hbmp,\bmp) \right] \nonumber \\
&\lesssim \sqrt{\frac{2}{n\pi}} \!\biggl(\!\sqrt{|\calXS|^2 \Bigl( {\textstyle\frac{\bmp(\calXS)}{|\calXS|}} + 1/u' \Bigr) \Bigl( v_N - {\textstyle\frac{\bmp(\calXS)}{|\calXS|}} \Bigr)} \label{eq:l1-loss-RAP-with-pxs}\\
&\hspace{9ex} + \sqrt{|\calXN|^2{\textstyle\frac{1-\bmp(\calXS)}{|\calXN|}} \Bigl( v_N - {\textstyle\frac{1-\bmp(\calXS)}{|\calXN|}} \Bigr)} \biggr) \nonumber\\
&= \sqrt{\frac{2}{n\pi}} \!\biggl(\!\sqrt{ \Bigl( \bmp(\calXS) + |\calXS|/u' \Bigr) \Bigl( v_N|\calXS| - \bmp(\calXS) \Bigr)} \nonumber\\
&\hspace{9ex} + \sqrt{\bigl(1-\bmp(\calXS)\bigr) \Bigl( v_N|\calXN| - 1 + \bmp(\calXS) \Bigr)} \biggr) \nonumber\\
&= \sqrt{\frac{2}{n\pi}}\, F(\bmp(\calXS)) 
\label{eq:l1_RAPPOR_FpXs}
\end{align}
where $F$ is defined in Lemma~\ref{lem:l1_restRAPPOR:decreasing}.
Note that in (\ref{eq:l1-loss-RAP-with-pxs}), $\approx$ holds iff $\bmp\in\calC_{SN}$.

By Lemma~\ref{lem:l1_restRAPPOR:decreasing} and
$0 < \epsilon < 2\ln(\frac{|\calXN|}{2}+1)$, 
$F(\bmp(\calXS))$ is maximized when $\bmp(\calXS)=0$.
Hence the right-hand side of (\ref{eq:restRAPPOR-l1-loss:formal}) is maximized when $\bmp(\calXS) = 0$ and $\bmp\in\calC_{SN}$, i.e., when $\bmp$ is the uniform distribution $\bmpUN$ over~$\calXN$.

Therefore we obtain:
\begin{align*}
\bbE \left[ l_1(\hbmp,\bmp) \right] 
&\lesssim\!\bbE \left[ l_1(\hbmp,\bmpUN) \right] \\
&=\!\sqrt{\frac{2}{n\pi}}\,F(0) \\
&=\!\sqrt{\frac{2}{n\pi}} \!\biggl(\!\sqrt{ {\textstyle\frac{|\calXS|^2}{u'}} v_N} + \sqrt{ v_N|\calXN| - 1 } \biggr) \\
&=\!\sqrt{\frac{2}{n\pi}} \!\biggl(\!\frac{e^{\epsilon/4}|\calXS|}{e^{\epsilon/2}-1} + \sqrt{ \frac{e^{\epsilon/2}|\calXN|}{e^{\epsilon/2}-1} - 1 } \biggr)
.
\end{align*}
\end{proof}

\subsubsection{$l_1$ loss in the high privacy regime}
\label{sub:l1_restRAPPOR:high_privacy}
Consider the high privacy regime where $\epsilon\approx0$. In this case, $e^{\epsilon/2} - 1 \approx \epsilon/2$. 
By using this approximation, we simplify the $l_1$ loss of the utility-optimized RAPPOR for both the cases where $|\calXS|\le|\calXN|$ and where $|\calXS|>|\calXN|$.\\

\noindent{\textbf{Case 1: $|\calXS|\le|\calXN|$.}}~~By Proposition~\ref{prop:l1_restRAPPOR:approx},
the expected $l_1$ loss of the $(\calXS,\epsilon)$-utility-optimized RAPPOR mechanism is given by:
\begin{align*}
\bbE \left[ l_1(\hbmp,\bmp) \right]
&\lesssim \bbE \left[ l_1(\hbmp,\bmpUN) \right] \nonumber \\
&=\!\sqrt{\frac{2}{n\pi}} \!\biggl(\!\sqrt{ {\textstyle\frac{|\calXS|^2}{u'}} v_N} + \sqrt{ v_N|\calXN| - 1 } \biggr) \nonumber \\
&\approx\!\sqrt{\frac{2}{n\pi}} \!\biggl(\!\sqrt{ {\textstyle\frac{|\calXS|^2\frac{2}{\epsilon}}{ {\textstyle\frac{\epsilon}{2}} }}} + \sqrt{ {\textstyle\frac{2|\calXN|}{\epsilon}-1} } \biggr) \nonumber \\
&\hspace{31ex} (\text{by $e^{\epsilon/2}-1\approx\epsilon/2$}) \nonumber \\
&=\!\sqrt{\frac{2}{n\pi}} \!\biggl(\!\frac{2|\calXS|}{\epsilon} + \sqrt{ \frac{2|\calXN|-\epsilon}{\epsilon} } \biggr) \nonumber \\
&\approx\!\sqrt{\frac{2}{n\pi}} \cdot\frac{2|\calXS|}{\epsilon} \hspace{23ex}(\text{by $\epsilon\approx0$}). 
\end{align*}\\

\noindent{\textbf{Case 2: $|\calXS|>|\calXN|$.}}~~Let 
$F$ be the function defined in Lemma~\ref{lem:l1_restRAPPOR:decreasing}, 
$w^* = \argmax_{w\in[0,1]} F(w)$, 
and
$\bmp^{\!*}$ be the prior distribution over $\calX$ defined by:
\begin{align*}
\bmp^{\!*}(x_j) = 
\begin{cases}
\frac{w^*}{|\calXS|}
~~~(\text{if $1 \le j \le |\calXS|$}) \\
\frac{1 - w^*}{|\calXN|}
~~~(\text{if $|\calXS|+1 \le j \le |\calX|$})
\end{cases}
\end{align*}
Then, by (\ref{eq:l1_RAPPOR_FpXs}), 
$\bbE \left[ l_1(\hbmp,\bmp) \right]$ is maximized by $\bmp^{\!*}$. Thus, 
for $\epsilon\approx0$, the expected $l_1$ loss of the $(\calXS,\epsilon)$-utility-optimized RAPPOR mechanism is given by:
\begin{align}
\bbE \left[ l_1(\hbmp,\bmp) \right] 
&\lesssim \left[ l_1(\hbmp,\bmp^{\!*}) \right] \nonumber \\
&=\!\sqrt{\frac{2}{n\pi}}\,F(w^*) \nonumber \\
&\approx\!\sqrt{\frac{2}{n\pi}} \!\biggl(\! {\textstyle 
\sqrt{ \bigl( w^* + {\textstyle\frac{|\calXS|}{\epsilon/2}} \bigr) \bigl( \frac{2|\calXS|}{\epsilon} - w^* \bigr) } } \nonumber \\
&\hspace{9ex} + {\textstyle \sqrt{ \bigl(1-w^*\bigr) \bigl( \frac{2|\calXN|}{\epsilon} - 1 + w^* \bigr) } }\biggr) \nonumber \\
&\hspace{14ex} (\text{by $u' = e^\epsilon-1\approx{\textstyle\frac{\epsilon}{2}}$ and $v_N\approx {\textstyle\frac{2}{\epsilon}}$}) \nonumber \\
&=\!\sqrt{\frac{2}{n\pi}} \!\biggl(\! {\textstyle 
\sqrt{ \frac{4|\calXS|^2}{\epsilon^2} + O(\epsilon^{-1}) } } + {\textstyle \sqrt{ O(\epsilon^{-1}) } }\biggr) \nonumber \\
&\hspace{32ex} (\text{by $0\le w^* \le 1$}) \nonumber \\
&\approx\!\sqrt{\frac{2}{n\pi}} \cdot \frac{2|\calXS|}{\epsilon}.
\hspace{14.5ex}(\text{by $\epsilon\approx0$})
\nonumber
\end{align}\\

In summary, the expected $l_1$ loss of the utility-optimized RAPPOR 
is at most 
$\sqrt{\frac{2}{n\pi}} \cdot \frac{2|\calXS|}{\epsilon}$ 
in the high privacy regime, 
irrespective of whether $|\calXS|\le|\calXN|$ or not. 
It is shown in \cite{Kairouz_ICML16} that 
the expected $l_1$ loss of the $\epsilon$-RAPPOR is at most 
$\sqrt{\frac{2}{n\pi}} \cdot\frac{2|\calX|}{\epsilon}$ 
when $\epsilon \approx 0$. 
Thus, 
the expected $l_1$ loss of the $(\calXS,\epsilon)$-utility-optimized RAPPOR is much smaller than that of the $\epsilon$-RAPPOR when $|\calXS| \ll |\calX|$. 

Note that the expected $l_1$ loss of the utility-optimized RAPPOR in the worst case can also be expressed as 
$\Theta(\frac{|\calXS|}{\sqrt{n \epsilon^2}})$ 
in this case. 
As described in Section~\ref{sub:ULDP_theoretical}, 
this is ``order'' optimal among all ULDP mechanisms.

\subsubsection{$l_1$ loss in the low privacy regime}
\label{sub:l1_restRAPPOR:low_privacy}
Consider the low privacy regime where $\epsilon=\ln|\calX|$ and 
$|\calXS| \ll |\calX|$. 
In this case, 
we have $\epsilon=\ln|\calX| < 2\ln(\frac{|\calXN|}{2}+1)$
(since $|\calX| < (\frac{|\calXN|}{2}+1)^2$).
Then by Proposition~\ref{prop:l1_restRAPPOR:approx},
the expected $l_1$ loss of the $(\calXS,\epsilon)$-utility-optimized RAPPOR mechanism is given by:
\begin{align}
&\bbE \left[ l_1(\hbmp,\bmp) \right] \nonumber\\
&\lesssim \bbE \left[ l_1(\hbmp,\bmpUN) \right] \nonumber \\
&=\!\sqrt{\frac{2}{n\pi}} \!\biggl(\!\sqrt{ {\textstyle\frac{|\calXS|^2}{u'}} v_N} + \sqrt{ v_N|\calXN| - 1 } \biggr) \nonumber \\
&\approx\!\sqrt{\frac{2}{n\pi}} \!\biggl(\!{\textstyle\frac{|\calXS|}{\sqrt{e^{\epsilon/2}-1}}} + \!\sqrt{ |\calXN| - 1 } \biggr) \nonumber \\
&\hspace{20ex} \Bigl(\text{by $\textstyle v_N = \frac{e^{\epsilon/2}}{e^{\epsilon/2}-1} = \frac{\sqrt{|\calX|}}{\sqrt{|\calX|}-1} \approx 1$}\Bigr) \nonumber \\
&\approx\!\sqrt{\frac{2}{n\pi}} \!\biggl(\!{\textstyle\frac{|\calXS|}{\sqrt{\sqrt{|\calX|}-1}}} + \sqrt{ |\calX| - 1 } \biggr) \hspace{4ex}(\text{by $|\calXN|\approx|\calX|$}) \nonumber \\
&\approx\!\sqrt{\frac{2(|\calX|-1)}{n\pi}} \biggl( 1 + \frac{|\calXS|}{|\calX|^\frac{3}{4}} \biggr). \hspace{8.2ex}(\text{by $|\calX| \gg 1$}) 
\label{eq:l1_restRAPPOR_low_XS_ll_X}
\end{align}
When $|\calXS| \ll |\calX|^{\frac{3}{4}}$, 
the right-hand side of (\ref{eq:l1_restRAPPOR_low_XS_ll_X}) can be simplified as follows: 
\begin{align}
\sqrt{\frac{2(|\calX|-1)}{n\pi}} \biggl( 1 + \frac{|\calXS|}{|\calX|^\frac{3}{4}} \biggr)
\approx\!\sqrt{\frac{2(|\calX|-1)}{n\pi}}. 
\nonumber
\end{align}
Note that the expected $l_1$ loss of the non-private mechanism is 
at most 
$\sqrt{\frac{2(|\calX| - 1)}{n\pi}}$ \cite{Kairouz_ICML16}. 
Thus, when $\epsilon=\ln|\calX|$ and 
$|\calXS| \ll |\calX|^{\frac{3}{4}}$, 
the $(\calXS,\epsilon)$-utility-optimized RAPPOR achieves almost the same data utility as the non-private mechanism, 
whereas the expected $l_1$ loss of the $\epsilon$-RAPPOR is $\sqrt{|\calX|}$ times larger than that of the non-private mechanism \cite{Kairouz_ICML16}.

\section{L2 loss of the utility-optimized Mechanisms}
\label{sec:proofs_utility_l2loss}

In this section we 
theoretically analyze 
the $l_2$ loss of the utility-optimized RR and the utility-optimized RAPPOR. 
Table~\ref{tab:l_2_loss} summarizes the $l_2$ loss of each obfuscation mechanism. 
We also show the results of the MSE in our experiments. 

\begin{table*}[t]
\caption{$l_2$ loss of each obfuscation mechanism in the worst case (RR: randomized response, RAP: RAPPOR, uRR: utility-optimized RR, uRAP: utility-optimized RAPPOR, no privacy: non-private mechanism, *1: approximation in the case where $|\calXS| \ll |\calX|$).}
\centering
\renewcommand{\arraystretch}{1.2}
\hbox to\hsize{\hfil
\begin{tabular}{l|c|c}
\hline
Mechanism			&	$\epsilon \approx 0$					&	$\epsilon = \ln |\calX|$\\
\hline
RR					&	$\frac{|\calX|(|\calX| - 1)}{n \epsilon^2}$	&	$\frac{4}{n} \Bigl( 1 - \frac{1}{|\calX|} \Bigr)$  \\
\hline
RAP				&	$\frac{4|\calX|}{n\epsilon^2} \Bigl( 1 - \frac{1}{|\calX|} \Bigr)$			& $\frac{\sqrt{|\calX|}}{n} \Bigl( 1 - \frac{1}{|\calX|} \Bigr)$\\
\hline
uRR		&	$\frac{|\calXS|(|\calXS| - 1)}{n \epsilon^2}$ (see Appendix~\ref{sub:l2_restRR:high_privacy})	&	$\frac{1}{n}$ $^{(*1)}$ (see Appendix~\ref{sub:l2_restRR:low_privacy}) \\
\hline
uRAP	&	$\frac{4|\calXS|}{n\epsilon^2}$ (see Appendix~\ref{sub:l2_restRAPPOR:high_privacy})	&	$\frac{1}{n} \Bigl( {\textstyle 1 + \frac{|\calXS|+1}{\sqrt{|\calX|}}} \Bigr)$ $^{(*1)}$ (see Appendix~\ref{sub:l2_restRAPPOR:low_privacy}) \\
\hline
no privacy	&	\multicolumn{2}{|c}{$\frac{1}{n} \Bigl( 1 - \frac{1}{|\calX|} \Bigr)$} \\
\hline
\end{tabular}
\hfil}
\label{tab:l_2_loss}
\end{table*}

\subsection{$l_2$ loss of the utility-optimized RR}
\label{sub:proof_prop_l2_restRR}

We first present the $l_2$-loss of the $(\calXS,\epsilon)$-utility-optimized RR.

\begin{restatable}[$l_2$ loss of the \colorBB{uRR}]{prop}{propLtwoRestRR}
\label{prop:l2_restRR:new} 
The expected $l_2$-loss of the $(\calXS,\epsilon)$-\colorBB{uRR} 
mechanism is given by:
\begin{align}
\bbE[l_2^2(\hbmp,\bmp)] 
&= \frac{2(e^{\epsilon}-1)(|\calXS| - \bmp(\calXS)) + |\calXS|(|\calXS|-1)}{n (e^{\epsilon}-1)^2} \nonumber \\
&~~~ + \frac{1}{n} \bigl( 1 -\sum_{x \in \calX} \bmp(x)^2 \bigr).
\label{eq:restRR-l2-loss:formal}
\end{align}
\end{restatable}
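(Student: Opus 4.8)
The plan is to reduce the $l_2$ loss to a variance computation for the empirical output distribution, exploiting the same linear structure used in Proposition~\ref{prop:l1_restRR}. First I would note, directly from (\ref{eq:bmm_restRR}) and (\ref{eq:hbmp_restRR}), that the additive constant $\frac{1}{e^\epsilon-1}$ present for sensitive $x$ cancels in the difference, so that for \emph{every} $x\in\calX$ one has $\hbmp(x)-\bmp(x) = v\,(\hbmm(x)-\bmm(x))$, where $v = \frac{|\calXS|+e^\epsilon-1}{e^\epsilon-1}$. Hence
\begin{align*}
l_2^2(\hbmp,\bmp) = v^2\sum_{x\in\calX}\bigl(\hbmm(x)-\bmm(x)\bigr)^2,
\end{align*}
and the task is reduced to evaluating the expected squared fluctuation of $\hbmm$.

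Next I would take the expectation. Since the obfuscated data are i.i.d.\ draws from $\bmm$, the count $n\,\hbmm(x)$ is marginally $\mathrm{Binomial}(n,\bmm(x))$ with $\bbE[\hbmm(x)]=\bmm(x)$. Because the $l_2$ loss is a sum of squares rather than of cross terms, the multinomial covariances never appear and only the marginal variances $\var[\hbmm(x)] = \frac{1}{n}\bmm(x)(1-\bmm(x))$ are needed, giving
\begin{align*}
\bbE\bigl[l_2^2(\hbmp,\bmp)\bigr]
= \frac{v^2}{n}\sum_{x\in\calX}\bmm(x)\bigl(1-\bmm(x)\bigr)
= \frac{v^2}{n}\Bigl(1-\textstyle\sum_{x\in\calX}\bmm(x)^2\Bigr),
\end{align*}
using $\sum_x\bmm(x)=1$.

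It then remains to expand $\sum_x\bmm(x)^2$ from (\ref{eq:bmm_restRR}). Splitting over $\calXS$ and $\calXN$ and using $\bmm(x)=\bmp(x)/v$ on $\calXN$ and $\bmm(x)=\bmp(x)/v+1/u$ on $\calXS$ (with $u=|\calXS|+e^\epsilon-1$), the squares expand to
\begin{align*}
\sum_{x\in\calX}\bmm(x)^2
= \frac{1}{v^2}\sum_{x\in\calX}\bmp(x)^2
+ \frac{2}{uv}\,\bmp(\calXS)
+ \frac{|\calXS|}{u^2},
\end{align*}
where $\bmp(\calXS)=\sum_{x\in\calXS}\bmp(x)$. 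Substituting this back, using $\frac{v}{u}=\frac{1}{e^\epsilon-1}$ and expanding $v^2=\frac{(|\calXS|+(e^\epsilon-1))^2}{(e^\epsilon-1)^2}$, I would regroup the result by powers of $(e^\epsilon-1)^{-1}$ to obtain exactly the two summands in (\ref{eq:restRR-l2-loss:formal}).

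The conceptual step is entirely the first one: the exact cancellation making $\hbmp-\bmp$ a uniform rescaling of $\hbmm-\bmm$, after which the binomial variance formula does all the work. The only delicate point is the final regrouping, where the $\frac{|\calXS|^2}{(e^\epsilon-1)^2}$ coming from $v^2$ must be combined with the $-\frac{|\calXS|}{(e^\epsilon-1)^2}$ from $\frac{v^2|\calXS|}{u^2}$ to yield $|\calXS|(|\calXS|-1)$, and $\frac{2|\calXS|}{e^\epsilon-1}$ with $-\frac{2\bmp(\calXS)}{e^\epsilon-1}$ to yield the cross term $2(e^\epsilon-1)(|\calXS|-\bmp(\calXS))$ after clearing denominators; this is merely error-prone bookkeeping rather than a genuine obstacle.
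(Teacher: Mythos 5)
Your proposal is correct and follows essentially the same route as the paper's own proof: the exact cancellation giving $\hbmp(x)-\bmp(x) = v(\hbmm(x)-\bmm(x))$ for all $x\in\calX$, the marginal binomial variance yielding $\bbE[l_2^2(\hbmp,\bmp)] = \frac{v^2}{n}\bigl(1-\sum_{x\in\calX}\bmm(x)^2\bigr)$, and the expansion of $\bmm$ over $\calXS$ and $\calXN$ followed by regrouping via $v/u = 1/(e^\epsilon-1)$. Your bookkeeping checks out (in particular $u^2 - 2u'\bmp(\calXS) - |\calXS| = |\calXS|(|\calXS|-1) + 2u'(|\calXS|-\bmp(\calXS)) + u'^2$ with $u'=e^\epsilon-1$), so nothing is missing.
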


\begin{proof}
Let $u = |\calXS| + e^\epsilon - 1$, $u'=e^\epsilon-1$, and $v = \frac{u}{u'}$.
By $\epsilon>0$, we have $u>0$ and $v>0$.

Let $\bmt$ be a frequency distribution of the obfuscated data $\bmY$; i.e., $\bmt(x) = \hbmm(x)n$. 
Since $\bmt(x)$ follows the binomial distribution with parameters $n$ and $\bmm(x)$, 
the mean is given by $\bbE[\bmt(x)] = n\bmm(x)$, and the variance of $\bmt(x)$ is given by $\var(\bmt(x)) = n\bmm(x)(1 - \bmm(x))$. 

Then, 
by (\ref{eq:bmm_restRR}) and (\ref{eq:hbmp_restRR}), 
the $l_2$-loss of $\hbmp$ can be written as follows:
\begin{align}
\expectym \left[ l_2^2(\hbmp,\bmp) \right] \nonumber
&= \bbE \left[ \sum_{x \in \calX} (\hbmp(x) - \bmp(x))^2 \right] \nonumber \\
&= \bbE \left[ \sum_{x \in \calX} v^2 \cdot (\hbmm(x) - \bmm(x))^2 \right]  \nonumber \\
&= \sum_{x \in \calX} v^2 \cdot \bbE \left[\, (\hbmm(x) - \bmm(x))^2 \,\right] \nonumber \\
&= \sum_{x \in \calX} v^2 \cdot \bbE \left[\, \left(\frac{\bmt(x)}{n} - \bbE \left[ \frac{\bmt(x)}{n} \right] \right)^2 \,\right] \nonumber \\
&= \sum_{x \in \calX} v^2 \cdot \frac{\var(\bmt(x))}{n^2} \nonumber \\
&= v^2 \sum_{x \in \calX} \frac{\bmm(x)(1 - \bmm(x))}{n} \nonumber \\
&= \frac{v^2}{n} \Bigl( 1 - \sum_{x \in \calX} \bmm(x)^2 \Bigr)
.
\label{eq:restRR-l2-loss:outdist}
\end{align}

It follows from (\ref{eq:bmm_restRR}) that for $x\in\calXS$, $\bmm(x) = \bmp(x)/v + 1/u$, and for $x\in\calXN$, $\bmm(x) = \bmp(x)/v$.
Therefore, we obtain:
\begin{align}
&~~~\bbE \left[ l_2^2(\hbmp,\bmp) \right] \nonumber\\
&= \frac{v^2}{n} \Bigl( 1 - \sum_{x \in \calXS} \bigl( \bmp(x)/v + 1/u \bigr)^2 - \sum_{x \in \calXN} \bigl( \bmp(x)/v \bigr)^2 \Bigr) \nonumber \\
&= \frac{1}{n} \Bigl( v^2 - \sum_{x \in \calXS} \bigl( \bmp(x) + 1/u' \bigr)^2 - \sum_{x \in \calXN} \bmp(x)^2 \Bigr) \nonumber \\
&= \frac{1}{n} \Bigl( v^2 - \sum_{x \in \calX} \bmp(x)^2 - \sum_{x \in \calXS} 2\bmp(x)/u' - \sum_{x \in \calXS} {\textstyle\frac{1}{u'^2}} \bigr) \Bigr) \nonumber \\
&= \frac{1}{n} \Bigl( v^2 - \sum_{x \in \calX} \bmp(x)^2 - {\textstyle \frac{2\bmp(\calXS)}{u'}} - {\textstyle \frac{|\calXS|}{u'^2}} \bigr) \Bigr) \nonumber \\
&= \frac{u^2 - 2 u'\bmp(\calXS) - |\calXS|}{n u'^2} - \frac{1}{n} \sum_{x \in \calX} \bmp(x)^2 \nonumber \\
&= {\textstyle\frac{(|\calXS| + e^\epsilon - 1)^2 - 2(e^{\epsilon}-1)\bmp(\calXS) - |\calXS|}{n (e^{\epsilon}-1)^2}} - {\textstyle\frac{1}{n}} \sum_{x \in \calX} \bmp(x)^2 \nonumber \\
&= {\textstyle\frac{2(e^{\epsilon}-1)(|\calXS| - \bmp(\calXS)) + |\calXS|(|\calXS|-1)}{n (e^{\epsilon}-1)^2}} + {\textstyle\frac{1}{n}} \Bigl( 1 -\sum_{x \in \calX} \bmp(x)^2 \Bigr) \nonumber
\end{align}
\end{proof}

\subsubsection{Maximum of the $l_2$ loss}

Next we show that when $0 < \epsilon < \ln(|\calXN|+1)$, the $l_2$ loss is maximized by the uniform distribution $\bmpUN$ over $\calXN$.

\begin{restatable}{prop}{propLtwoRestRRsmall}
\label{prop:l2_restRR:eps=0}
For any $0 < \epsilon < \ln(|\calXN|+1)$,
(\ref{eq:restRR-l2-loss:formal})
is maximized by the uniform distribution $\bmpUN$ over $\calXN$:
\begin{align*}
\bbE \left[ l_2^2(\hbmp,\bmp) \right] 
&\le \bbE \left[ l_2^2(\hbmp,\bmpUN) \right] \nonumber\\
&={\textstyle\frac{|\calXS|(|\calXS| + 2e^{\epsilon} - 3)}{n (e^{\epsilon}-1)^2} + \frac{1}{n} \bigl( 1 - \frac{1}{|\calXN|} \bigr)}. \nonumber
\end{align*}
\end{restatable}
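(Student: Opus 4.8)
The plan is to exploit the fact that the closed form (\ref{eq:restRR-l2-loss:formal}) depends on the input distribution $\bmp$ only through the two quantities $\bmp(\calXS) = \sum_{x \in \calXS}\bmp(x)$ and $\sum_{x \in \calX}\bmp(x)^2$. Collecting the $\bmp$-dependent terms, maximizing $\bbE[l_2^2(\hbmp,\bmp)]$ over the simplex $\calC$ is equivalent to \emph{minimizing}
\[
g(\bmp) := \tfrac{2}{e^{\epsilon}-1}\,\bmp(\calXS) + \sum_{x \in \calX}\bmp(x)^2 ,
\]
since the remaining terms of (\ref{eq:restRR-l2-loss:formal}) do not involve $\bmp$. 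This reduces the problem to a clean constrained optimization, and I would attack it directly rather than through the mixture-distribution machinery used for the $l_1$ loss, which is possible here because the $l_2$ loss is already separable and in closed form.

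First I would bound $g$ from below using the Cauchy-Schwarz inequality in the same spirit as the $l_1$ proofs. Writing $s = \bmp(\calXS)$, we have $\sum_{x \in \calXS}\bmp(x)^2 \ge s^2/|\calXS|$ and $\sum_{x \in \calXN}\bmp(x)^2 \ge (1-s)^2/|\calXN|$, with equality in both iff $\bmp$ is uniform on $\calXS$ and on $\calXN$ separately. Hence $g(\bmp) \ge \phi(s)$, where $\phi(s) = \tfrac{2s}{e^{\epsilon}-1} + \tfrac{s^2}{|\calXS|} + \tfrac{(1-s)^2}{|\calXN|}$ plays here the role that the one-variable function $F$ plays in Lemma~\ref{lem:l1_restRR:decreasing}.

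Then I would analyze $\phi$ on $[0,1]$: it is a convex quadratic with $\phi''>0$, and its derivative at $0$ is $\phi'(0) = \tfrac{2}{e^{\epsilon}-1} - \tfrac{2}{|\calXN|}$, which is positive exactly when $e^{\epsilon}-1 < |\calXN|$, i.e.\ when $\epsilon < \ln(|\calXN|+1)$ --- precisely the hypothesis. Under this condition $\phi'$ is strictly increasing and nonnegative on $[0,1]$, so $\phi$ is strictly increasing and minimized at $s=0$ with $\phi(0) = 1/|\calXN|$; tracking the equality cases (uniform on each block together with $s=0$) shows the unique minimizer of $g$, and hence the unique maximizer of the $l_2$ loss, is $\bmpUN$. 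Finally, substituting $\bmp(\calXS)=0$ and $\sum_x\bmp(x)^2 = 1/|\calXN|$ into (\ref{eq:restRR-l2-loss:formal}) and combining the two fractions in the first term into $\tfrac{|\calXS|(|\calXS|+2e^{\epsilon}-3)}{n(e^{\epsilon}-1)^2}$ yields the stated value. The only real obstacle is the threshold step: one must verify that the hypothesis $\epsilon < \ln(|\calXN|+1)$ is exactly what forces the penalized sensitive mass to the boundary $s=0$, so that concentrating probability on $\calXN$ and spreading it uniformly there is worst case; everything else is routine substitution.
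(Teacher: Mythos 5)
Your proposal is correct and follows essentially the paper's own route: your function $\phi(s)$ is exactly $1 - nF(s)$ for the $F$ of Lemma~\ref{lem:l2_restRR:decreasing}, your blockwise bound $\sum_{x}\bmp(x)^2 \ge \bmp(\calXS)^2/|\calXS| + (1-\bmp(\calXS))^2/|\calXN|$ is the same reduction to block-uniform distributions, and your convexity argument via $\phi'(0) = \frac{2}{e^{\epsilon}-1} - \frac{2}{|\calXN|} > 0$ is equivalent to the paper's direct derivative bound showing $F$ is decreasing precisely under $\epsilon < \ln(|\calXN|+1)$. If anything, your appeal to Cauchy--Schwarz is the cleaner citation for that blockwise inequality, since the paper's invocation of the AM--GM inequality yields it only after an implicit extra step.
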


To show this proposition, we first show the following lemma.

\begin{lemma}
\label{lem:l2_restRR:decreasing}
For $w\in[0,1]$, we define $F(w)$ by:
\begin{align*}
F(w) &= 
- \frac{2}{n (e^{\epsilon}-1)} w
+ \frac{1}{n} \Bigl( 1 - \frac{w^2}{|\calXS|} - \frac{(1-w)^2}{|\calXN|} \Bigr)
{.}
\end{align*}
Then for any $0 < \epsilon < \ln(|\calXN|+1)$,
$F(w)$ is decreasing in $w$.
\end{lemma}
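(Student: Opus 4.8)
The plan is to prove this by a direct first-derivative computation, which here is much simpler than in the $l_1$ case (Lemma~\ref{lem:l1_restRR:decreasing}) because $F$ is a genuine quadratic in $w$ rather than a sum of square roots, so $F'$ is affine and its sign is transparent. First I would differentiate term by term: the linear term contributes $-\frac{2}{n(e^\epsilon-1)}$, the term $-\frac{w^2}{|\calXS|}$ contributes $-\frac{2w}{|\calXS|}$, and $-\frac{(1-w)^2}{|\calXN|}$ contributes $+\frac{2(1-w)}{|\calXN|}$. Collecting these and factoring out $\frac{2}{n}>0$ gives
\[
F'(w) = \frac{2}{n}\left( -\frac{1}{e^\epsilon-1} - \frac{w}{|\calXS|} + \frac{1-w}{|\calXN|} \right),
\]
so that the sign of $F'(w)$ is governed entirely by the bracketed quantity, which I will call $g(w)$.

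Next I would observe that $g$ is affine in $w$ with slope $-\frac{1}{|\calXS|}-\frac{1}{|\calXN|}<0$, hence $g$ is decreasing on $[0,1]$ and attains its maximum at $w=0$, where $g(0)=-\frac{1}{e^\epsilon-1}+\frac{1}{|\calXN|}$. It therefore suffices to check $g(0)<0$. The key step is to convert the hypothesis into this sign condition: from $\epsilon<\ln(|\calXN|+1)$ we get $e^\epsilon<|\calXN|+1$, i.e.\ $0<e^\epsilon-1<|\calXN|$ (positivity using $\epsilon>0$), so $\frac{1}{e^\epsilon-1}>\frac{1}{|\calXN|}$ and thus $g(0)<0$. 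Since $g$ is decreasing, $g(w)\le g(0)<0$ for every $w\in[0,1]$, whence $F'(w)<0$ on $[0,1)$ and $F$ is (strictly) decreasing, as claimed.

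There is essentially no real obstacle here beyond this sign bookkeeping; the only thing worth flagging is that, unlike the $l_1$ lemma, this argument does \emph{not} require the extra assumption $|\calXS|\le|\calXN|$. Because the derivative is linear, the $|\calXS|$-term only makes $g$ more negative and drops out of the critical evaluation $g(0)$, so the monotonicity holds for all $|\calXS|$. It is also worth noting that the governing condition is exactly $e^\epsilon-1<|\calXN|$, the same threshold that drives Lemma~\ref{lem:l1_restRR:decreasing}; this is the structural reason that both the $l_1$ and $l_2$ losses of the uRR are maximized by $\bmpUN$ over the identical range $0<\epsilon<\ln(|\calXN|+1)$.
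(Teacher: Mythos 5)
Your proof is correct and follows essentially the same route as the paper's: a direct computation of $F'(w)$ combined with the key inequality $e^{\epsilon}-1 < |\calXN|$ (i.e., $\frac{1}{e^{\epsilon}-1} > \frac{1}{|\calXN|}$) and the bound $w \ge 0$; your step of evaluating the affine bracket at $w=0$ is just a repackaging of the paper's inline estimate. Your side observation that, unlike the $l_1$ case, no assumption $|\calXS|\le|\calXN|$ is needed is accurate and consistent with the lemma as stated in the paper.
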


\begin{proof}
Let $0 < \epsilon < \ln(|\calXN|+1)$ and $w\in[0, 1]$.
Then $e^{\epsilon}-1 < |\calXN|$.
\begin{align*}
\frac{d F(w)}{dw}
&= 
- \frac{2}{n (e^{\epsilon}-1)}
+ \frac{1}{n} \Bigl( - \frac{2w}{|\calXS|} - \frac{2w-2}{|\calXN|} \Bigr) 
\end{align*}
\begin{align*}
&< 
- \frac{2}{n |\calXN|}
+ \frac{1}{n} \Bigl( - \frac{2w}{|\calXS|} - \frac{2w}{|\calXN|} + \frac{2}{|\calXN|} \Bigr) \\
&\le 0. \hspace{27ex}(\text{by $w\ge0$})
\end{align*}
Therefore, $F(w)$ is decreasing in $w$.
\end{proof}

Now we prove Proposition~\ref{prop:l2_restRR:eps=0} as follows.

\begin{proof}
Let 
$M = {\textstyle\frac{|\calXS|(|\calXS| + 2e^{\epsilon} - 3)}{n (e^{\epsilon}-1)^2}}$.
By (\ref{eq:restRR-l2-loss:formal}), we have:
\begin{align}
\bbE \left[ l_2^2(\hbmp,\bmp) \right]
= M 
- {\textstyle\frac{2}{n (e^{\epsilon}-1)}\bmp(\calXS)}
+ {\textstyle\frac{1}{n}} \Bigl( 1 -\sum_{x \in \calX}\!\bmp(x)^2 \Bigr).\nonumber
\end{align}

Let $\calC_{SN}$ be the set of 
distributions $\bmp^*$ over $\calX$ such that:
\begin{itemize}
\item for any $x\in\calXS$,\,
 $\bmp^*(x) = \frac{\bmp^*(\calXS)}{|\calXS|}$, and
\item for any $x\in\calXN$,\,
 $\bmp^*(x) = \frac{\bmp^*(\calXN)}{|\calXN|} = \frac{1 - \bmp^*(\calXS)}{|\calXN|}$.
\end{itemize}
By the inequality of arithmetic and geometric means,
we obtain:
\begin{align*}
\sum_{x\in\calXS} \bmp(x)^2
\ge \sqrt[|\calXS|]{\textstyle \prod_{x\in\calXS} \bmp(x)^2} \cdot |\calXS|,
\end{align*}
where the equality holds iff for all $x\in\calXS$, $\bmp(x)=\frac{\bmp(\calXS)}{|\calXS|}$.
An analogous inequality holds for $\calXN$.
Therefore we obtain:
\begin{align*}
\sum_{x\in\calX} \bmp(x)^2
&\ge\! \sqrt[|\calXS|]{\prod_{x\in\calXS} \bmp(x)^2} \cdot |\calXS|
+\!\sqrt[|\calXN|]{\prod_{x\in\calXN} \bmp(x)^2} \cdot |\calXN| \\
&={\textstyle\frac{\bmp(\calXS)^2}{|\calXS|} + \frac{\bmp(\calXN))^2}{|\calXN|}} \\
&={\textstyle\frac{\bmp(\calXS)^2}{|\calXS|} + \frac{(1 - \bmp(\calXS))^2}{|\calXN|}},
\end{align*}
where the equality holds iff $\bmp\in\calC_{SN}$.

Hence we obtain:
\begin{align}
&~~~~ \bbE \left[ l_2^2(\hbmp,\bmp) \right] \nonumber\\
&= M 
- {\textstyle\frac{2}{n (e^{\epsilon}-1)}\bmp(\calXS)}
+ {\textstyle\frac{1}{n}} \Bigl( 1 -\sum_{x \in \calX}\!\bmp(x)^2 \Bigr) \nonumber \\
&\le {\textstyle M 
- \frac{2}{n (e^{\epsilon}-1)}\bmp(\calXS)
+ \frac{1}{n} \Bigl( 1 - \frac{\bmp(\calXS)^2}{|\calXS|} - \frac{(1 - \bmp(\calXS))^2}{|\calXN|} \Bigr) } \nonumber \\
&= {\textstyle M + F(\bmp(\calXS)), } \nonumber 
\end{align}
where $F$ is defined in Lemma~\ref{lem:l2_restRR:decreasing}, and the equality holds iff $\bmp\in\calC_{SN}$.

By Lemma~\ref{lem:l2_restRR:decreasing} and $0 < \epsilon < \ln(|\calXN|+1)$, $F(\bmp(\calXS))$ is maximized when $\bmp(\calXS) = 0$.
Therefore $\bbE \left[ l_2^2(\hbmp,\bmp) \right]$ is maximized when $\bmp(\calXS) = 0$ and $\bmp\in\calC_{SN}$, i.e., when $\bmp$ is the uniform distribution $\bmpUN$ over $\calXN$.

Therefore we obtain:
\begin{align*}
\bbE \left[ l_2^2(\hbmp,\bmp) \right]
&\le \bbE \left[ l_2^2(\hbmp,\bmpUN) \right] \\
&= M + F(0) \\
&= {\textstyle\frac{|\calXS|(|\calXS| + 2e^{\epsilon} - 3)}{n (e^{\epsilon}-1)^2} + \frac{1}{n} \bigl( 1 - \frac{1}{|\calXN|} \bigr) }
\end{align*}
\end{proof}

Next, we show that when $\epsilon \ge \ln(|\calXN|+1)$, the $l_2$ loss is maximized by a mixture of the uniform distribution $\bmpUS$ over $\calXS$ and the uniform distribution $\bmpUN$ over $\calXN$.

\begin{restatable}{prop}{propLtwoRestRRbig}
\label{prop:l2_restRR:eps=lnX}
Let $\bmp^*$ be a distribution over $\calX$ defined~by:
\begin{align*}
\bmp^*(x) =
\begin{cases}
\frac{1 - |\calXN|/(e^{\epsilon}-1)}{|\calXS| + |\calXN|} 
~~\mbox{(if $x\in\calXS$)}
\\[2ex]
\frac{1 + |\calXS|/(e^{\epsilon}-1)}{|\calXS| + |\calXN|}
~~\mbox{(otherwise)}
\end{cases}
\end{align*}
Then for any $\epsilon \ge \ln(|\calXN|+1)$,
the right-hand side of (\ref{eq:restRR-l2-loss:formal})
is maximized by $\bmp^*$:
\begin{align*}
\bbE \left[ l_2^2(\hbmp,\bmp) \right] 
\le \bbE \left[ l_2^2(\hbmp,\bmp^*) \right]
= \frac{(|\calXS|+e^\epsilon-1)^2}{n(e^\epsilon-1)^2} \Bigl( 1 - \frac{1}{|\calX|} \Bigr).
\end{align*}
\end{restatable}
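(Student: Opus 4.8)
The plan is to reduce the maximization to a sum-of-squares minimization that is controlled by Cauchy--Schwarz. I would start from the intermediate identity (\ref{eq:restRR-l2-loss:outdist}) established in the proof of Proposition~\ref{prop:l2_restRR:new}, namely $\bbE[l_2^2(\hbmp,\bmp)] = \frac{v^2}{n}\bigl(1 - \sum_{x\in\calX}\bmm(x)^2\bigr)$, where $v = \frac{|\calXS|+e^\epsilon-1}{e^\epsilon-1}$ and $\bmm=\bmp\bmQrestRR$ is the induced output distribution. Since $v$ and $n$ are independent of $\bmp$, maximizing the expected $l_2$ loss over $\bmp\in\calC$ is equivalent to minimizing $\sum_{x\in\calX}\bmm(x)^2$ over the attainable output distributions. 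Note that the map $\bmp\mapsto\bmm$ given by (\ref{eq:bmm_restRR}) is an affine bijection (its inverse is (\ref{eq:hbmp_restRR}) with $v\neq 0$), so $\bmm$ ranges over the affine image of the simplex $\calC$.

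Next I would apply the Cauchy--Schwarz inequality: for any probability distribution $\bmm$ over the $|\calX|$ symbols of $\calY$, we have $1 = \bigl(\sum_{x\in\calX}\bmm(x)\bigr)^2 \le |\calX|\sum_{x\in\calX}\bmm(x)^2$, hence $\sum_{x\in\calX}\bmm(x)^2 \ge \frac{1}{|\calX|}$, with equality if and only if $\bmm$ is the uniform distribution over $\calY$. This yields the upper bound $\bbE[l_2^2(\hbmp,\bmp)] \le \frac{v^2}{n}\bigl(1-\frac{1}{|\calX|}\bigr)$, which is tight exactly when $\bmm$ is uniform.

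It then remains to verify that the uniform $\bmm$ is attainable inside $\calC$ precisely under the hypothesis $\epsilon\ge\ln(|\calXN|+1)$, and that it is realized by $\bmp^*$. The computation in the proof of Proposition~\ref{prop:l1_restRR:eps=lnX} already shows that substituting $\bmp=\bmp^*$ into (\ref{eq:bmm_restRR}) gives $\bmm(x)=\frac{1}{|\calX|}$ for every $x\in\calX$, so I would reuse it directly. The only additional check is feasibility: the preimage $\bmp^*$ of uniform $\bmm$ satisfies $\bmp^*(x)\ge 0$ automatically for $x\in\calXN$, while for $x\in\calXS$ nonnegativity requires $\frac{1}{|\calX|}\ge\frac{1}{|\calXS|+e^\epsilon-1}$, i.e.\ $e^\epsilon-1\ge|\calXN|$, i.e.\ $\epsilon\ge\ln(|\calXN|+1)$. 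This is exactly the stated hypothesis, so $\bmp^*\in\calC$ and the upper bound is achieved at $\bmp^*$.

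The main obstacle is pinning down this feasibility threshold rather than the optimization itself: the minimization of $\sum_{x\in\calX}\bmm(x)^2$ is immediate from Cauchy--Schwarz, but one must confirm that its unconstrained minimizer (uniform $\bmm$) lies in the affine image of the simplex, which is where the condition $\epsilon\ge\ln(|\calXN|+1)$ enters and separates this regime from Proposition~\ref{prop:l2_restRR:eps=0} (where the constraint is binding and the optimum shifts to $\bmpUN$). Finally I would substitute $v=\frac{|\calXS|+e^\epsilon-1}{e^\epsilon-1}$ into $\frac{v^2}{n}\bigl(1-\frac{1}{|\calX|}\bigr)$ to recover the closed form $\frac{(|\calXS|+e^\epsilon-1)^2}{n(e^\epsilon-1)^2}\bigl(1-\frac{1}{|\calX|}\bigr)$.
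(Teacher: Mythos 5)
Your proposal is correct and follows essentially the same route as the paper's proof: both start from the identity $\bbE[l_2^2(\hbmp,\bmp)] = \frac{v^2}{n}\bigl(1-\sum_{x\in\calX}\bmm(x)^2\bigr)$ in (\ref{eq:restRR-l2-loss:outdist}), observe that the bound is tight exactly when $\bmm$ is uniform, and reuse the computation from the proof of Proposition~\ref{prop:l1_restRR:eps=lnX} showing that $\bmp^*$ induces the uniform $\bmm$, with the hypothesis $\epsilon\ge\ln(|\calXN|+1)$ entering only to guarantee $\bmp^*(x)\ge 0$. The sole cosmetic difference is that you bound $\sum_{x}\bmm(x)^2\ge\frac{1}{|\calX|}$ via Cauchy--Schwarz while the paper invokes the AM--GM inequality; if anything, your version states the attainability/feasibility step more explicitly than the paper's parenthetical remark.
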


\begin{proof}
We show that for any $\epsilon \ge \ln(|\calXN|+1)$, 
the right-hand side of (\ref{eq:restRR-l2-loss:formal})
is maximized when $\bmp = \bmp^*$.
(Note that by $\epsilon \ge \ln(|\calXN|+1)$, $\bmp^*(x)\ge 0$ holds for all $x\in\calX$.)
To show this, we recall that if $\bmp = \bmp^*$ then $\bmm$ is the uniform distribution over $\calY$, as shown 
in the proof for Proposition\ref{prop:l1_restRR:eps=lnX}.

Let $v = \frac{|\calXS|+e^{\epsilon}-1}{e^{\epsilon}-1}$.
By (\ref{eq:restRR-l2-loss:outdist}) and the inequality of arithmetic and geometric means, we obtain:
\begin{align*}
\bbE \left[ l_2^2(\hbmp,\bmp) \right]
&= \frac{v^2}{n} \Bigl( 1 - \sum_{x \in \calX} \bmm(x)^2 \Bigr) \\
&\le \frac{v^2}{n} \Bigl( 1 - \sqrt[|\calX|]{\textstyle \prod_{x\in\calX} \bmm(x)^2} \cdot |\calX| \Bigr),
\end{align*}
where the equality holds iff $\bmm$ is the uniform distribution over $\calX$.
Hence:
\begin{align*}
\bbE \left[ l_2^2(\hbmp,\bmp) \right]
&\le \bbE \left[ l_2^2(\hbmp,\bmp^*) \right] \\
&= \frac{v^2}{n} \Bigl( 1 - \sqrt[|\calX|]{\textstyle \prod_{x\in\calX} \bigl(\frac{1}{|\calX|}\bigr)^2} \cdot |\calX| \Bigr) \\
&= \frac{v^2}{n} \Bigl( 1 - \frac{1}{|\calX|} \Bigr) \\
&= \frac{(|\calXS|+e^\epsilon-1)^2}{n(e^\epsilon-1)^2} \Bigl( 1 - \frac{1}{|\calX|} \Bigr).
\end{align*}
\end{proof}

\subsubsection{$l_2$ loss in the high privacy regime}
\label{sub:l2_restRR:high_privacy}
Consider the high privacy regime where $\epsilon\approx0$. 
In this case, $e^\epsilon - 1 \approx \epsilon$. 
By using this approximation, we simplify the $l_2$ loss of the 
\colorBB{uRR}. 

By Proposition~\ref{prop:l2_restRR:eps=0}, the expected $l_2$ loss of the $(\calXS,\epsilon)$-\colorBB{uRR} 
mechanism is maximized by $\bmpUN$:
\begin{align*}
\bbE \left[ l_2^2(\hbmp,\bmp) \right]
&\le \bbE \left[ l_2^2(\hbmp,\bmpUN) \right] \nonumber \\
&={\textstyle\frac{|\calXS|(|\calXS| + 2e^{\epsilon} - 3)}{n (e^{\epsilon}-1)^2} + \frac{1}{n} \bigl( 1 - \frac{1}{|\calXN|} \bigr)} \nonumber \\
&\approx{\textstyle\frac{|\calXS|(|\calXS| + 2\epsilon - 1) + \epsilon^2\bigl( 1 - \frac{1}{|\calXN|} \bigr)}{n \epsilon^2} } 
\hspace{3ex} (\text{by $e^\epsilon-1\approx\epsilon$}) \nonumber \\
&\approx{\textstyle\frac{|\calXS|(|\calXS| - 1)}{n \epsilon^2}.} 
\hspace{10ex} (\text{by $e^\epsilon-1\approx\epsilon$})
\end{align*}
It is shown in \cite{Kairouz_ICML16} that 
the expected $l_2$ loss of the $\epsilon$-RR is at most 
$\textstyle\frac{|\calX|(|\calX| - 1)}{n \epsilon^2}$ 
when $\epsilon \approx 0$. 
Thus, 
the expected $l_2$ loss of the $(\calXS,\epsilon)$-\colorBB{uRR} 
is much smaller than that of the $\epsilon$-RR when $|\calXS| \ll |\calX|$.\\

\subsubsection{$l_2$ loss in the low privacy regime}
\label{sub:l2_restRR:low_privacy}

Consider the low privacy regime where $\epsilon=\ln|\calX|$ and $|\calXS| \ll |\calX|$. 
By Proposition~\ref{prop:l2_restRR:eps=lnX},
the expected $l_2^2$ loss of the $(\calXS,\epsilon)$-\colorBB{uRR} 
is given by:
\begin{align*}
\bbE \left[ l_2^2(\hbmp,\bmp) \right]
&\le \bbE \left[ l_2^2(\hbmp,\bmp^*) \right] \\
&= \frac{(|\calXS|+|\calX|-1)^2}{n(|\calX|-1)^2} \Bigl( 1 - \frac{1}{|\calX|} \Bigr) \\
&= \frac{\bigl(1+\frac{|\calXS|-1}{|\calX|}\bigr)^2}{n\bigl(1-\frac{1}{|\calX|}\bigr)^2} \Bigl( 1 - \frac{1}{|\calX|} \Bigr) \\
&\approx \frac{1}{n}. \hspace{5ex}(\text{by $1 / |\calX| \approx 0$ and $|\calXS| / |\calX| \approx 0$}) 
\end{align*}
It should be noted that the expected $l_2$ loss of the non-private mechanism 
is 
at most $\frac{1}{n} ( 1 - \frac{1}{|\calX|} )$ \cite{Kairouz_ICML16},
and that 
$\frac{1}{n} ( 1 - \frac{1}{|\calX|} ) \approx \frac{1}{n}$ when $|\calX| \gg 1$. 
Thus, when $\epsilon=\ln|\calX|$ and $|\calXS| \ll |\calX|$,
the 
$(\calXS,\epsilon)$-\colorBB{uRR} 
achieves almost the same data utility as the non-private mechanism, 
whereas the expected $l_1$ loss of the $\epsilon$-RR is 
four times 
larger than that of the non-private mechanism \cite{Kairouz_ICML16}.

\subsection{$l_2$ loss of the utility-optimized RAPPOR}
\label{sub:proof_prop_l2_restRAPPOR}

We first present the $l_2$ loss of the $(\calXS,\epsilon)$-\colorBB{uRAP}.
Recall that 
$\bmm_j$ (resp.~$\hbmm_j$) is the true probability (resp.~empirical probability) that the $j$-th coordinate in obfuscated data is $1$.

\begin{restatable}[$l_2$ loss of the \colorBB{uRAP}]{prop}{propLoneRestRAP_l2}
\label{prop:l2_restRAPPOR}
Then the expected $l_2$-loss of the $(\calXS,\epsilon)$-\colorBB{uRAP} 
mechanism is given by:
\begin{align}
&\bbE \left[ l_2^2(\hbmp,\bmp) \right] \nonumber\\
&= 
\frac{1}{n} \Bigl( 1 + {\textstyle\frac{(|\calXS|+1)e^{\epsilon/2} - 1}{(e^{\epsilon/2}-1)^2}} - {\textstyle \frac{1}{e^{\epsilon/2}-1}} \bmp(\calXS) - \!\sum_{j = 1}^{|\calX|} \bmp(x_j)^2 \Bigr).
\label{eq:restRAPPOR-l2-loss:formal}
\end{align}
\end{restatable}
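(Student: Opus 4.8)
The plan is to follow the $l_2$-loss analysis of the uRR in the proof of Proposition~\ref{prop:l2_restRR:new}, adapted to the bit-vector output of the uRAP. The crucial structural point is that the expected $l_2$ loss decomposes coordinate-wise, $\bbE[l_2^2(\hbmp,\bmp)] = \sum_{j=1}^{|\calX|} \bbE[(\hbmp(x_j) - \bmp(x_j))^2]$, so correlations between different output coordinates play no role and only the marginal distribution of each empirical frequency $\hbmm_j$ matters.

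First I would observe that, for a single user with $X \sim \bmp$, the $j$-th output bit is a Bernoulli random variable whose marginal success probability (averaging over the input and the independent per-coordinate flips of the uRAP) is exactly $\bmm_j$ as given by (\ref{eq:bmmj_bmpj_restRAPPOR}). Since the $n$ users are i.i.d., $n\hbmm_j$ is binomial with parameters $n$ and $\bmm_j$, so $\bbE[\hbmm_j] = \bmm_j$ and $\var(\hbmm_j) = \bmm_j(1-\bmm_j)/n$. By (\ref{eq:bmpj_bmmj_restRAPPOR}) the estimate $\hbmp(x_j)$ is an affine function $a_j \hbmm_j + b_j$ of $\hbmm_j$, with $a_j = v_S := \frac{e^{\epsilon/2}+1}{e^{\epsilon/2}-1}$ for $1 \le j \le |\calXS|$ and $a_j = v_N := \frac{e^{\epsilon/2}}{e^{\epsilon/2}-1}$ otherwise. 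A short check (using $v_S/(e^{\epsilon/2}+1) = 1/(e^{\epsilon/2}-1)$) shows $\hbmp$ is unbiased, so each per-coordinate error equals the variance: $\bbE[(\hbmp(x_j)-\bmp(x_j))^2] = a_j^2\,\bmm_j(1-\bmm_j)/n$.

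Next I would substitute the expressions for $\bmm_j$ from (\ref{eq:bmmj_bmpj_restRAPPOR}). Writing $u' = e^{\epsilon/2}-1$, the identities $v_S\bmm_j = \bmp(x_j) + 1/u'$ and $v_S(1-\bmm_j) = v_N - \bmp(x_j)$ (the latter because $v_S - 1/u' = v_N$) collapse each sensitive summand to $(\bmp(x_j)+1/u')(v_N-\bmp(x_j))$, exactly the form appearing in the $l_1$ analysis (Proposition~\ref{prop:l1_restRAPPOR}), while each non-sensitive summand becomes $\bmp(x_j)(v_N-\bmp(x_j))$. Expanding and using $\sum_{x\in\calX}\bmp(x) = 1$ to absorb the linear terms $v_N\bmp(\calXS)+v_N\bmp(\calXN) = v_N$ gives
\begin{align*}
\bbE[l_2^2(\hbmp,\bmp)] = \frac{1}{n}\Bigl( v_N + \frac{|\calXS|\,v_N}{u'} - \frac{\bmp(\calXS)}{u'} - \sum_{j=1}^{|\calX|}\bmp(x_j)^2 \Bigr).
\end{align*}

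The only remaining work is routine algebra: putting $v_N + |\calXS|v_N/u'$ over the common denominator $(e^{\epsilon/2}-1)^2$ yields $\frac{e^{\epsilon} - e^{\epsilon/2} + |\calXS|e^{\epsilon/2}}{(e^{\epsilon/2}-1)^2}$, which one verifies equals $1 + \frac{(|\calXS|+1)e^{\epsilon/2}-1}{(e^{\epsilon/2}-1)^2}$, together with $\bmp(\calXS)/u' = \bmp(\calXS)/(e^{\epsilon/2}-1)$, delivering (\ref{eq:restRAPPOR-l2-loss:formal}). I expect no genuine obstacle here; the one point requiring care is the justification that each $\hbmm_j$ is marginally Bernoulli($\bmm_j$) despite the users having distinct random inputs — this is precisely what lets me treat $n\hbmm_j$ as a binomial and reuse the per-coordinate variance computation from the uRR case almost verbatim.
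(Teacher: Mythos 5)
Your proposal is correct and follows essentially the same route as the paper's proof: both decompose the $l_2$ loss coordinate-wise, use that $n\hbmm_j$ is binomial with parameters $n$ and $\bmm_j$ so that each per-coordinate error is exactly $v_S^2\,\bmm_j(1-\bmm_j)/n$ (sensitive) or $v_N^2\,\bmm_j(1-\bmm_j)/n$ (non-sensitive), substitute (\ref{eq:bmmj_bmpj_restRAPPOR}) to rewrite the summands as $\bigl(\bmp(x_j)+1/u'\bigr)\bigl(v_N-\bmp(x_j)\bigr)$ and $\bmp(x_j)\bigl(v_N-\bmp(x_j)\bigr)$, and finish with the same algebra using $\bmp(\calXN)=1-\bmp(\calXS)$. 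Your explicit unbiasedness check and the remark that only the marginal Bernoulli$(\bmm_j)$ law of each coordinate matters (despite within-user correlations across coordinates) simply make explicit what the paper leaves implicit.
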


\begin{proof}
Let $v_S = \frac{e^{\epsilon/2}+1}{e^{\epsilon/2}-1}$, 
and $v_N = \frac{e^{\epsilon/2}}{e^{\epsilon/2}-1}$.
By $\epsilon>0$, we have $v_S>0$ and $v_N>0$.

For each $1\le j \le |\calX|$, let 
$\bmt_j$ be the number of users whose $j$-th coordinate in the obfuscated data is $1$; i.e., $\bmt_j = \hbmm_j n$.
Since $\bmt_j$ follows the binomial distribution with parameters $n$ and $\bmm_j$, 
the mean is given by $\bbE[\bmt_j] = n\bmm_j$, and the variance of $\bmt_j$ is given by $\var(\bmt_j) = n\bmm_j(1 - \bmm_j)$. 

Then, 
by (\ref{eq:bmmj_bmpj_restRAPPOR}) and (\ref{eq:bmpj_bmmj_restRAPPOR}), 
the $l_2$-loss of $\hbmp$ can be written as follows:
\begin{align}
& \expectym \left[ l_2^2(\hbmp,\bmp) \right] \nonumber \\
&= \bbE \left[ \sum_{x \in \calX} (\hbmp(x) - \bmp(x))^2 \right] \nonumber \\
&= \sum_{j = 1}^{|\calXS|} v_S^2 \cdot \bbE \left[ (\hbmm_j - \bmm_j)^2 \right]  
+ \!\sum_{j = |\calXS|+1}^{|\calX|}\! v_N^2 \cdot \bbE \left[\, (\hbmm_j - \bmm_j)^2 \,\right] \nonumber \\
&= \sum_{j = 1}^{|\calXS|} v_S^2 \cdot \frac{\var(\bmt(x_j))}{n^2}
+ \sum_{j = |\calXS|+1}^{|\calX|} v_N^2 \cdot \frac{\var(\bmt(x_j))}{n^2} \nonumber \\
&= \frac{v_S^2}{n} \sum_{j = 1}^{|\calXS|} \bmm_j(1 - \bmm_j) + \frac{v_N^2}{n} \hspace{-1ex}\sum_{j = |\calXS|+1}^{|\calX|}\hspace{-1ex} \bmm_j(1 - \bmm_j) 
.
\label{eq:restRAPPOR-l2-loss:outdist}
\end{align}

Let $u = e^{\epsilon/2} + 1$ and $u' = e^{\epsilon/2} - 1$.
Then $v_S = \frac{u}{u'}$ and $v_N = \frac{u-1}{u'}$.
It follows from (\ref{eq:bmmj_bmpj_restRAPPOR}) that for $1\le j \le |\calXS|$, $\bmm_j = \bmp(x_j)/v_S + 1/u$, and for $|\calXS|+1 \le j \le |\calX|$, $\bmm_j = \bmp(x_j)/v_N$.
Therefore, we obtain:
\begin{align}
&\bbE \left[ l_2^2(\hbmp,\bmp) \right] \nonumber\\
&= \frac{v_S^2}{n} \sum_{j = 1}^{|\calXS|} \bigl( {\textstyle\frac{\bmp(x_j)}{v_S} + \frac{1}{u}} \bigr) \bigl( 1 - {\textstyle\frac{\bmp(x_j)}{v_S} - \frac{1}{u}} \bigr) \nonumber \\
&~~~ + \frac{v_N^2}{n} \hspace{-1ex}\sum_{j = |\calXS|+1}^{|\calX|}\hspace{-1ex} \bigl( {\textstyle\frac{\bmp(x_j)}{v_N}} \bigr) \bigl( 1 - {\textstyle\frac{\bmp(x_j)}{v_N}} \bigr) \nonumber \\
&= \frac{1}{n} \sum_{j = 1}^{|\calXS|} \bigl( {\textstyle \bmp(x_j) + \frac{v_S}{u}} \bigr) \bigl( v_S - {\textstyle \bmp(x_j) - \frac{v_S}{u}} \bigr) \nonumber \\
&~~~ + \frac{1}{n} \hspace{-1ex}\sum_{j = |\calXS|+1}^{|\calX|}\hspace{-1ex} \bmp(x_j) \bigl( v_N - \bmp(x_j) \bigr) \nonumber \\
&= \frac{1}{n} \biggl( v_S {\textstyle \bigl(1 - \frac{2}{u}\bigr) }\bmp(\calXS) \nonumber \\
&\hspace{9mm} + v_N \bmp(\calXN) + {\textstyle\frac{v_S^2(u-1)}{u^2} |\calXS|} - \!\sum_{j = 1}^{|\calX|} \bmp(x_j)^2 \biggr) \nonumber \\
&= \frac{1}{n} \biggl( {\textstyle \bigl( \frac{v_S (u-2)}{u} - v_N \bigr)} \bmp(\calXS) \nonumber \\
&\hspace{9mm} + v_N + {\textstyle\frac{v_S^2(u-1)}{u^2}}|\calXS| - \!\sum_{j = 1}^{|\calX|} \bmp(x_j)^2 \biggr) \nonumber \\
&\hspace{36ex}(\text{by $\bmp(\calXN) = 1 - \bmp(\calXS)$}) \nonumber \\
&= \frac{1}{n} \Bigl( 1 + {\textstyle\frac{(|\calXS|+1)e^{\epsilon/2} - 1}{(e^{\epsilon/2}-1)^2}} - {\textstyle \frac{1}{e^{\epsilon/2}-1}} \bmp(\calXS) - \!\sum_{j = 1}^{|\calX|} \bmp(x_j)^2 \Bigr). \nonumber
\end{align}
\end{proof}

\subsubsection{Maximum of the $l_2$ loss}

Next we show that for any $0 < \epsilon < 2\ln(\frac{|\calXN|}{2}+1)$, the $l_2$ loss is maximized by the uniform distribution $\bmpUN$ over $\calXN$.

\begin{restatable}{prop}{propLoneRestRAPapprox_l2}
\label{prop:l2_restRAPPOR:approx}
For any $0 < \epsilon < 2\ln(\frac{|\calXN|}{2}+1)$,
the $l_2$ loss $\bbE \left[ l_2^2(\hbmp,\bmp) \right]$ is maximized when $\bmp =~\bmpUN$:
\begin{align*}
\bbE \left[ l_2^2(\hbmp,\bmp) \right]
&\le \bbE \left[ l_2^2(\hbmp,\bmpUN) \right] \nonumber \\
&=\!\frac{1}{n} \Bigl( {\textstyle 1 + \frac{(|\calXS|+1)e^{\epsilon/2} - 1}{(e^{\epsilon/2}-1)^2} - \frac{1}{|\calXN|}} \Bigr).
\end{align*}
\end{restatable}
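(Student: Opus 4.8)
The plan is to mirror exactly the argument already used for the randomized-response $l_2$ loss in Proposition~\ref{prop:l2_restRR:eps=0}, since the closed form supplied by Proposition~\ref{prop:l2_restRAPPOR} has the same structure. Starting from
\begin{align*}
\bbE[l_2^2(\hbmp,\bmp)] = \frac{1}{n}\Bigl(1 + \tfrac{(|\calXS|+1)e^{\epsilon/2}-1}{(e^{\epsilon/2}-1)^2} - \tfrac{\bmp(\calXS)}{e^{\epsilon/2}-1} - \sum_{j=1}^{|\calX|}\bmp(x_j)^2\Bigr),
\end{align*}
I observe that the only quantities depending on the shape of $\bmp$ are $\bmp(\calXS)$ and $\sum_j \bmp(x_j)^2$; the leading $1$ and the term $\tfrac{(|\calXS|+1)e^{\epsilon/2}-1}{(e^{\epsilon/2}-1)^2}$ are fixed. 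Hence maximizing the loss is the same as maximizing $-\tfrac{1}{e^{\epsilon/2}-1}\bmp(\calXS) - \sum_j \bmp(x_j)^2$ over the simplex.

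First I would collapse the optimization to a single parameter. Writing $w = \bmp(\calXS)$, the inequality of arithmetic and geometric means (invoked exactly as in Proposition~\ref{prop:l2_restRR:eps=0}) gives $\sum_{x\in\calXS}\bmp(x)^2 \ge \frac{w^2}{|\calXS|}$ and $\sum_{x\in\calXN}\bmp(x)^2 \ge \frac{(1-w)^2}{|\calXN|}$, with equality iff $\bmp$ is uniform separately on $\calXS$ and on $\calXN$. Thus among all $\bmp$ with $\bmp(\calXS)=w$ the loss is largest on the corresponding mixture distribution, and the problem reduces to maximizing the one-variable function
\begin{align*}
F(w) = -\frac{w}{n(e^{\epsilon/2}-1)} + \frac{1}{n}\Bigl(1 - \frac{w^2}{|\calXS|} - \frac{(1-w)^2}{|\calXN|}\Bigr), \quad w\in[0,1].
\end{align*}
Note that this AM--GM reduction is exact, so, unlike the square-root structure handled in Lemma~\ref{lem:l1_restRAPPOR:decreasing}, no side condition $|\calXS|\le|\calXN|$ is needed here.

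Next I would prove, in a short helper lemma paralleling Lemma~\ref{lem:l2_restRR:decreasing}, that $F$ is decreasing on $[0,1]$ whenever $0<\epsilon<2\ln(\frac{|\calXN|}{2}+1)$. Differentiating,
\begin{align*}
F'(w) = -\frac{1}{n(e^{\epsilon/2}-1)} - \frac{2w}{n|\calXS|} + \frac{2(1-w)}{n|\calXN|},
\end{align*}
which is itself decreasing in $w$, so it suffices to bound $F'(0) = \frac{1}{n}\bigl(\frac{2}{|\calXN|} - \frac{1}{e^{\epsilon/2}-1}\bigr)$. This sign analysis is the one genuine step and the main obstacle: the hypothesis $\epsilon < 2\ln(\frac{|\calXN|}{2}+1)$ is equivalent to $e^{\epsilon/2}-1 < \frac{|\calXN|}{2}$, i.e. $\frac{1}{e^{\epsilon/2}-1} > \frac{2}{|\calXN|}$, which forces $F'(0)<0$ and hence $F'(w)<0$ throughout $[0,1]$. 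Therefore $F$ is maximized at $w=0$, i.e. at $\bmp=\bmpUN$. Finally I would substitute $\bmp=\bmpUN$ (so $\bmp(\calXS)=0$ and $\sum_j\bmp(x_j)^2=\frac{1}{|\calXN|}$) back into the formula of Proposition~\ref{prop:l2_restRAPPOR}, which yields the claimed value $\frac{1}{n}(1 + \frac{(|\calXS|+1)e^{\epsilon/2}-1}{(e^{\epsilon/2}-1)^2} - \frac{1}{|\calXN|})$. The substitution is routine; everything of substance lies in tying $F'$ to the exact threshold on $\epsilon$.
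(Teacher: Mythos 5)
Your proposal is correct and takes essentially the same approach as the paper's proof: the same AM--GM reduction of $\sum_j \bmp(x_j)^2$ to collapse the problem to the one-variable function $F(\bmp(\calXS))$, the same monotonicity argument (the paper's Lemma~\ref{lem:l2_restRAPPOR:decreasing} verifies $F'(w)<0$ by the identical sign computation, merely combining your ``$F'$ is decreasing, so check $F'(0)$'' step into a single line), and the same substitution at $w=0$. Your side remark is also consistent with the paper: unlike the $l_1$ case, Lemma~\ref{lem:l2_restRAPPOR:decreasing} and Proposition~\ref{prop:l2_restRAPPOR:approx} indeed impose no condition $|\calXS|\le|\calXN|$.
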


To prove this proposition, we first show the lemma below.

\begin{lemma}
\label{lem:l2_restRAPPOR:decreasing}
For $w\in[0,1]$, we define $F(w)$ by:
\begin{align*}
F(w) &= {\textstyle - \frac{1}{e^{\epsilon/2}-1}} w - 
{\textstyle \frac{w^2}{|\calXS|} - \frac{(1 - w)^2}{|\calXN|} }
{.}
\end{align*}
For any $0 < \epsilon < 2\ln(\frac{|\calXN|}{2}+1)$,
$F(w)$ is decreasing in $w$.
\end{lemma}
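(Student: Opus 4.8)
The plan is to show directly that $\frac{dF}{dw} < 0$ throughout $[0,1)$, which immediately yields that $F$ is decreasing. First I would differentiate term by term. Being careful with the sign produced by the chain rule on the $(1-w)^2$ term, namely $\frac{d}{dw}\bigl[-\frac{(1-w)^2}{|\calXN|}\bigr] = \frac{2(1-w)}{|\calXN|}$, I obtain
\begin{align*}
\frac{dF(w)}{dw} = -\frac{1}{e^{\epsilon/2}-1} - \frac{2w}{|\calXS|} + \frac{2(1-w)}{|\calXN|}.
\end{align*}

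Next I would bound this from above using two elementary facts: since $w \ge 0$ we have $-\frac{2w}{|\calXS|} \le 0$, and since $1-w \le 1$ we have $\frac{2(1-w)}{|\calXN|} \le \frac{2}{|\calXN|}$. Discarding the nonpositive term and enlarging the last term gives
\begin{align*}
\frac{dF(w)}{dw} \le -\frac{1}{e^{\epsilon/2}-1} + \frac{2}{|\calXN|}.
\end{align*}

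The crucial step is to translate the hypothesis on $\epsilon$ into a clean comparison. The condition $0 < \epsilon < 2\ln\bigl(\frac{|\calXN|}{2}+1\bigr)$ is equivalent to $e^{\epsilon/2} < \frac{|\calXN|}{2}+1$, hence $e^{\epsilon/2}-1 < \frac{|\calXN|}{2}$, and therefore $\frac{1}{e^{\epsilon/2}-1} > \frac{2}{|\calXN|}$ (note $e^{\epsilon/2}-1 > 0$ since $\epsilon > 0$, so the reciprocal is well defined and the inequality reverses correctly). Substituting into the bound yields $\frac{dF(w)}{dw} < 0$ on $[0,1)$, so $F$ is decreasing there, as claimed. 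This argument is the exact $l_2$ analogue of the $l_1$ version in Lemma~\ref{lem:l1_restRAPPOR:decreasing}, where the same threshold on $\epsilon$ was used to force $\frac{2}{|\calXN|} - \frac{1}{u'} < 0$ with $u' = e^{\epsilon/2}-1$.

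I do not anticipate a genuine obstacle here: the computation is short and the only points requiring care are the sign bookkeeping when differentiating the $(1-w)^2$ term and the recognition that the whole argument hinges on converting the logarithmic bound on $\epsilon$ into the single inequality $\frac{1}{e^{\epsilon/2}-1} > \frac{2}{|\calXN|}$. Once that comparison is in hand, dropping the two $w$-dependent terms by their elementary sign/magnitude bounds closes the proof.
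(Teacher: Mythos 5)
Your proposal is correct and follows essentially the same route as the paper's proof: both compute $\frac{dF(w)}{dw} = -\frac{1}{e^{\epsilon/2}-1} - \frac{2w}{|\calXS|} + \frac{2(1-w)}{|\calXN|}$, drop the nonpositive $w$-dependent terms (the paper groups them as $-2\bigl(\frac{1}{|\calXS|}+\frac{1}{|\calXN|}\bigr)w \le 0$), and convert the hypothesis $\epsilon < 2\ln\bigl(\frac{|\calXN|}{2}+1\bigr)$ into $\frac{1}{e^{\epsilon/2}-1} > \frac{2}{|\calXN|}$ to conclude strict negativity. Your sign bookkeeping and the key inequality match the paper exactly, so there is nothing to fix.
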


\begin{proof}
Let $w\in[0, 1]$.
\begin{align*}
\frac{d F(w)}{dw}
&= - \frac{1}{e^{\epsilon/2}-1} - \frac{2w}{|\calXS|} - \frac{2w - 2}{|\calXN|} \\
&= - 2 \Bigl( \frac{1}{|\calXS|} + \frac{1}{|\calXN|} \Bigr) w
- \frac{1}{e^{\epsilon/2}-1} + \frac{2}{|\calXN|} \\
&< 0. \hspace{10ex}\bigl(\text{by ${\textstyle \epsilon < 2\ln(\frac{|\calXN|}{2}+1)}$ and $w\ge0$}\bigr)
\end{align*}
Therefore, $F(w)$ is decreasing in $w$.
\end{proof}

Now we prove Proposition~\ref{prop:l2_restRAPPOR:approx} as follows.

\begin{proof}
Let 
$M = 1 + {\textstyle\frac{(|\calXS|+1)e^{\epsilon/2} - 1}{(e^{\epsilon/2}-1)^2}}$.
By Proposition~\ref{prop:l2_restRAPPOR}, we have:
\begin{align}
\bbE \left[ l_2^2(\hbmp,\bmp) \right]
&= \frac{1}{n} \Bigl( {\textstyle M - \frac{1}{e^{\epsilon/2}-1}} \bmp(\calXS) - \!\sum_{j = 1}^{|\calX|} \bmp(x_j)^2 \Bigr). \nonumber
\end{align}

As with the proof for Proposition~\ref{prop:l2_restRR:eps=0},
let $\calC_{SN}$ be the set of all distributions $\bmp^*$ over $\calX$ that satisfy:
\begin{itemize}
\item for any $1 \le j \le |\calXS|$,\,
 $\bmp^*(x_j) = \frac{\bmp^*(\calXS)}{|\calXS|}$, and
\item for any $|\calXS|+1 \le j \le |\calX|$,\,
 $\bmp^*(x_j) = \frac{\bmp^*(\calXN)}{|\calXN|} = \frac{1 - \bmp^*(\calXS)}{|\calXN|}$.
\end{itemize}
By the inequality of arithmetic and geometric means,
we obtain:
\begin{align*}
\sum_{j=1}^{|\calXS|} \bmp(x_j)^2
\ge \sqrt[|\calXS|]{\textstyle \prod_{j=1}^{|\calXS|} \bmp(x_j)^2} \cdot |\calXS|,
\end{align*}
where the equality holds iff for all 
$1 \le j \le |\calXS|$, 
$\bmp(x_j)=\frac{\bmp(\calXS)}{|\calXS|}$.
An analogous inequality holds for $\calXN$.
Therefore we obtain:
\begin{align*}
&\sum_{j=1}^{|\calX|} \bmp(x_j)^2 \nonumber\\
&\ge\! \sqrt[|\calXS|]{\prod_{j=1}^{|\calXS|} \bmp(x_j)^2} \cdot |\calXS|
+\!\sqrt[|\calXN|]{\prod_{j=|\calXS|+1}^{|\calX|} \bmp(x_j)^2} \cdot |\calXN| \\
&={\textstyle\frac{\bmp(\calXS)^2}{|\calXS|} + \frac{\bmp(\calXN))^2}{|\calXN|}} \\
&={\textstyle\frac{\bmp(\calXS)^2}{|\calXS|} + \frac{(1 - \bmp(\calXS))^2}{|\calXN|}},
\end{align*}
where the equality holds iff $\bmp\in\calC_{SN}$.

Hence we obtain:
\begin{align}
\bbE \left[ l_2^2(\hbmp,\bmp) \right]
&= \frac{1}{n} \Bigl( {\textstyle M - \frac{1}{e^{\epsilon/2}-1}} \bmp(\calXS) - \!\sum_{j = 1}^{|\calX|} \bmp(x_j)^2 \Bigr) \nonumber \\
&\le \frac{1}{n} \Bigl( {\textstyle M - \frac{1}{e^{\epsilon/2}-1}} \bmp(\calXS) - 
{\textstyle \frac{\bmp(\calXS)^2}{|\calXS|} -
 \frac{(1 - \bmp(\calXS))^2}{|\calXN|} } \Bigr) \nonumber \\
&= \frac{1}{n} \bigl( {\textstyle M + F(\bmp(\calXS))} \bigr),
\label{eq:l2_RAPPOR_FpXs}
\end{align}
where $F$ is defined in Lemma~\ref{lem:l2_restRAPPOR:decreasing}, and the equality holds iff $\bmp\in\calC_{SN}$.

By Lemma~\ref{lem:l2_restRAPPOR:decreasing} and $0 < \epsilon < 2\ln(\frac{|\calXN|}{2}+1)$, $F(\bmp(\calXS))$ is maximized when $\bmp(\calXS) = 0$.
Therefore, $\bbE \left[ l_2^2(\hbmp,\bmp) \right]$ is maximized when $\bmp(\calXS) = 0$ and $\bmp\in\calC_{SN}$, i.e., when $\bmp$ is the uniform distribution $\bmpUN$ over $\calXN$.

Therefore we obtain:
\begin{align*}
\bbE \left[ l_2^2(\hbmp,\bmp) \right]
&\le \bbE \left[ l_2^2(\hbmp,\bmpUN) \right] \nonumber \\
&=\!\frac{1}{n} \bigl( {\textstyle M + F(\bmp(\calXS))} \bigr) \nonumber \\
&=\!\frac{1}{n} \Bigl( {\textstyle 1 + \frac{(|\calXS|+1)e^{\epsilon/2} - 1}{(e^{\epsilon/2}-1)^2} - \frac{1}{|\calXN|}} \Bigr).
\end{align*}
\end{proof}

\subsubsection{$l_2$ loss in the high privacy regime}
\label{sub:l2_restRAPPOR:high_privacy}

Consider the high privacy regime where $\epsilon\approx0$. In this case, $e^{\epsilon/2} - 1 \approx \epsilon/2$. 
By using this approximation, we simplify the $l_2$ loss of the 
\colorBB{uRAP}.

By Proposition~\ref{prop:l2_restRAPPOR:approx},
the expected $l_2$ loss of the $(\calXS,\epsilon)$-\colorBB{uRAP} 
mechanism is given by:
\begin{align*}
\bbE \left[ l_2^2(\hbmp,\bmp) \right]
&\le \bbE \left[ l_2^2(\hbmp,\bmpUN) \right] \nonumber \\
&=\!\frac{1}{n} \Bigl( {\textstyle 1 + \frac{(|\calXS|+1)e^{\epsilon/2} - 1}{(e^{\epsilon/2}-1)^2} - \frac{1}{|\calXN|}} \Bigr) \nonumber \\
&\approx\!\frac{1}{n} \Bigl( {\textstyle 1 + \frac{(|\calXS|+1)(\epsilon/2+1) - 1}{\epsilon^2/4} - \frac{1}{|\calXN|}} \Bigr) \nonumber \\
&\hspace{29ex} (\text{by $e^{\epsilon/2}-1\approx\epsilon/2$}) \nonumber \\
&\approx\!\frac{1}{n} \Bigl( {\textstyle 1 + \frac{4|\calXS|}{\epsilon^2} - \frac{1}{|\calXN|}} \Bigr) \hspace{4ex}(\text{by $\epsilon\approx0$}) \nonumber \\
&=\!\frac{1}{n} \Bigl( {\textstyle \frac{4|\calXS| + \epsilon^2(1 - \frac{1}{|\calXN|})}{\epsilon^2}} \Bigr) \nonumber \\
&\approx\!\frac{4|\calXS|}{n\epsilon^2}. \hspace{19ex}(\text{by $\epsilon\approx0$}) \nonumber
\end{align*}\\

Thus, the expected $l_2$ loss of the 
\colorBB{uRAP} 
is at most 
$\frac{4|\calXS|}{n\epsilon^2}$ 
in the high privacy regime. 
It is shown in \cite{Kairouz_ICML16} that 
the expected $l_2$ loss of the $\epsilon$-RAPPOR is at most 
$\frac{4|\calX|}{n\epsilon^2} (1 - \frac{1}{|\calX|})$ 
when $\epsilon \approx 0$. 
Thus, 
the expected $l_2$ loss of the $(\calXS,\epsilon)$-\colorBB{uRAP} 
is much smaller than that of the $\epsilon$-RAPPOR when $|\calXS| \ll |\calX|$.

Note that the expected $l_2$ loss of the 
\colorBB{uRAP} 
in the worst case can also be expressed as 
$\Theta(\frac{|\calXS|}{n\epsilon^2})$ 
in this case. 
As described in Section~\ref{sub:ULDP_theoretical}, 
this is ``order'' optimal among all ULDP mechanisms.

\subsubsection{$l_2$ loss in the low privacy regime}
\label{sub:l2_restRAPPOR:low_privacy}

Consider the low privacy regime where $\epsilon=\ln|\calX|$ and $|\calXS| \ll |\calX|$.
By Proposition~\ref{prop:l2_restRAPPOR:approx},
the expected $l_2$ loss of the $(\calXS,\epsilon)$-\colorBB{uRAP} 
mechanism is given by:
\begin{align}
\bbE \left[ l_2^2(\hbmp,\bmp) \right]
&\le \bbE \left[ l_2^2(\hbmp,\bmpUN) \right] \nonumber \\
&=\!\frac{1}{n} \Bigl( {\textstyle 1 + \frac{(|\calXS|+1)e^{\epsilon/2} - 1}{(e^{\epsilon/2}-1)^2} - \frac{1}{|\calXN|}} \Bigr) \nonumber \\
&=\!\frac{1}{n} \Bigl( {\textstyle 1 + \frac{(|\calXS|+1)\sqrt{|\calX|} - 1}{(\sqrt{|\calX|}-1)^2} - \frac{1}{|\calXN|}} \Bigr) \nonumber \\
&=\!\frac{1}{n} \Bigl( {\textstyle 1 + \frac{|\calXS|+1 - \frac{1}{\sqrt{|\calX|}}}{ \sqrt{|\calX|} - 2 + \frac{1}{\sqrt{|\calX|}} }
- \frac{1}{|\calX| \bigl(1 - \frac{|\calXS|}{|\calX|} \bigr)}} \Bigr) \nonumber \\
&\approx\!\frac{1}{n} \Bigl( {\textstyle 1 + \frac{|\calXS|+1}{\sqrt{|\calX|}}} \Bigr). \hspace{2.2ex}(\text{by $|\calXS| \ll |\calX|$}) 
\label{eq:l2_restRAPPOR_low_XS_ll_X}
\end{align}
When $|\calXS| \ll \sqrt{|\calX|}$, 
the right side of (\ref{eq:l2_restRAPPOR_low_XS_ll_X}) is simplified as:
\begin{align}
\frac{1}{n} \Bigl( {\textstyle 1 + \frac{|\calXS|+1}{\sqrt{|\calX|}}} \Bigr)
\approx\!\frac{1}{n}. ~~~~(\text{by $|\calXS| / \sqrt{|\calX|} \approx 0$}) \nonumber
\end{align}
Note that the expected $l_2$ loss of the non-private mechanism is 
at most $\frac{1}{n} ( 1 - \frac{1}{|\calX|} )$ \cite{Kairouz_ICML16}, and that 
$\frac{1}{n} ( 1 - \frac{1}{|\calX|} ) \approx \frac{1}{n}$ when $|\calX| \gg 1$. 
Thus, when $\epsilon=\ln|\calX|$ and 
$|\calXS| \ll \sqrt{|\calX|}$,
the 
$(\calXS,\epsilon)$-\colorBB{uRAP} 
achieves almost the same data utility as the non-private mechanism, 
whereas the expected $l_2$ loss of the $\epsilon$-RAPPOR is 
$\sqrt{|\calX|}$ 
times larger than that of the non-private mechanism \cite{Kairouz_ICML16}.

\subsection{Experimental Results of the MSE}
\label{sec:results_MSE}

Figures~\ref{fig:res1_MSE}, \ref{fig:res2_MSE}, \ref{fig:res4_MSE}, and \ref{fig:res3_MSE}
show the results of the MSE corresponding to Figures~\ref{fig:res1_TV}, \ref{fig:res2_TV},  \ref{fig:res4_TV}, and \ref{fig:res3_TV}, respectively. 
It can be seen that a tendency similar to the results of the TV is obtained for the results of the MSE, meaning that our proposed methods are effective in terms of both the $l_1$ and $l_2$ losses. 

\section{Properties of PUMs}
\label{sec:PUM:details}

\subsection{Privacy Analysis of PUMs}
\label{sub:PUM:privacy:details}
Below we show the proof of 
\colorBB{Propositions~\ref{prop:PUM_ULDP} and \ref{prop:PUM_DPcalXS}}.

\propPUMULDP*

\begin{proof}
Since $\bmQ_{cmn}$ provides $(\calZS,\calYP,\epsilon)$-\ULDP{}, 
for any output data $y \in \calYI$, there exists intermediate data 
$x \in \calXN$ such that 
$\bmQ_{cmn}(y|x) > 0$ and $\bmQ_{cmn}(y|x') = 0$ for any $x' \in \calZ\setminus\{x\}$. In addition, 
by the property of the pre-processor $f_{pre}\uid{i}$ (see (\ref{eq:f_pre_semantic})), 
if the intermediate data is $x \in \calXN$, then the input data is always $x \in \calXN$. 
Therefore, the PUM $\bmQ\uid{i}$ satisfies (\ref{eq:xs_ys_0}).

In addition, 
(\ref{eq:epsilon_OSLDP_whole}) holds for any $z,z' \in \calZ$ and any $y \in \calY$, since $\bmQ_{cmn}$ provides $(\calZS,\calYP,\epsilon)$-\ULDP{}. 
Let $Z\uid{i}$ be a random variable representing intermediate data of the $i$-th user. 
Then, 
$\Pr(Y\uid{i} = y | X\uid{i} = x) = \Pr(Y\uid{i} = y | Z\uid{i} = \bot_k)$ for $x \in \calX_{S,k}\uid{i}$, since $x \in \calX_{S,k}\uid{i}$ is deterministically mapped to $\bot_k$. 
Moreover, 
$\Pr(Y\uid{i} = y | X\uid{i} = x) = \Pr(Y\uid{i} = y | Z\uid{i} = x)$ for $x \notin \calX_S\uid{i}$. 
Thus, 
(\ref{eq:epsilon_OSLDP_whole}) holds for any $x,x' \in \calX$ and any $y \in \calY$.  
\end{proof}

\propPUMDPcalXS*
\begin{proof}
\colorBB{Since $\bmQ\uid{i} = \bmQ_{cmn} \circ f_{pre}\uid{i}$ and $f_{pre}\uid{i}$ is given by (\ref{eq:f_pre_semantic}), we have:
\begin{align}
    \bmQ\uid{i}(y|x) = 
    \begin{cases}
        \bmQ_{cmn}(y|\bot_k) & \text{(if $x \in \calX_{S,k}\uid{i}$)}\\
        \bmQ_{cmn}(y|x) & \text{(otherwise)}.\\
    \end{cases} 
    \label{eq:propPUMDPcalXS:Qi}
\end{align}
Similarly, we have:
\begin{align}
    \bmQ\uid{j}(y|x) = 
    \begin{cases}
        \bmQ_{cmn}(y|\bot_k) & \text{(if $x \in \calX_{S,k}\uid{j}$)}\\
        \bmQ_{cmn}(y|x) & \text{(otherwise)}.\\
    \end{cases} 
    \label{eq:propPUMDPcalXS:Qj}
\end{align}
Since $\bmQ_{cmn}$ provides $(\calZS,\calYP,\epsilon)$-\ULDP{}, for any $z, z' \in \calZ$ and any $y \in \calYP$, we have:
\begin{align}
    \bmQ_{cmn}(y|z) \leq e^\epsilon \bmQ_{cmn}(y|z').
    \label{eq:propPUMDPcalXS:Qcmn_ULDP}
\end{align}
By (\ref{eq:propPUMDPcalXS:Qi}), (\ref{eq:propPUMDPcalXS:Qj}), and (\ref{eq:propPUMDPcalXS:Qcmn_ULDP}), Proposition~\ref{prop:PUM_DPcalXS} holds.} 
\end{proof}

\subsection{Utility Analysis of PUMs}
\label{sub:PUM:utility:details}
\label{sec:proof_theorem_l1_decomposition}
Below we show the proof of Theorem~\ref{thm:l1_loss_decomposition}.

\lonelossPUM*

\begin{proof}
Let $\hbmp^*$ be the estimate of $\bmp$ in the case where the exact distribution $\pi_k$ is known to the analyst; i.e., $\hat{\pi}_k = \pi_k$ for any 
$k = 1, \cdots, \kappa$. 
Then the $l_1$ loss of $\hbmp$ can be written, using the triangle inequality, as follows:
\begin{align}
l_1(\hbmp, \bmp) \leq l_1(\hbmp, \hbmp^*) + l_1(\hbmp^*, \bmp).
\label{eq:l_1_triangle}
\end{align}
Since $\pi_k(x)$ is the conditional probability that 
personal data is $x \in \calX$ given that the intermediate data is $z = \bot_k$, we have
\begin{align}
\bmp(x) = \bmr(x) + \sum_{k=1}^\kappa \bmr(\bot_k) \bmpi_k(x).
\label{eq:bmp_bmpd_bmpi_kappa}
\end{align}
In addition, by substituting $\hbmp^*$ and $\bmpi_k$ for $\hbmp$ and $\hat{\bmpi}_k$ in (\ref{eq:hbmp_PUM_2}), respectively, we have 
\begin{align}
\hbmp^*(x) = \hbmr(x) + \sum_{k=1}^\kappa \hbmr(\bot_k) \bmpi_k(x).
\label{eq:bmp*_bmpd_bmpi_kappa}
\end{align}
By (\ref{eq:bmp_bmpd_bmpi_kappa}) and (\ref{eq:bmp*_bmpd_bmpi_kappa}), an upper bound of $l_1(\hbmp^*, \bmp)$ is given by:
\begin{align}
&l_1(\hbmp^*, \bmp) \nonumber\\
&= \sum_{x \in \calX} |\bmp^*(x) - \bmp(x)| \nonumber\\
&= \sum_{x \in \calX} \left |\hbmr(x) - \bmr(x) + \sum_{k=1}^\kappa (\hbmr(\bot_k) - \bmr(\bot_k)) \bmpi_k(x) \right | \nonumber
\end{align}
\begin{align}
&\leq \sum_{x \in \calX} |\hbmr(x) - \bmr(x) | + \sum_{x \in \calX} \sum_{k=1}^\kappa |\hbmr(\bot_k) - \bmr(\bot_k)| \bmpi_k(x) \nonumber\\
&~~~~(\text{by the triangle inequality}) \nonumber\\
&= \sum_{x \in \calX} |\hbmr(x) - \bmr(x) | + \sum_{k=1}^\kappa |\hbmr(\bot_k) - \bmr(\bot_k)| \nonumber\\
&= \sum_{z \in \calZ} |\hbmr(z) - \bmr(z)| \nonumber\\
&= l_1(\hbmr, \bmr).
\label{eq:l1_hbmp*_bmp_l1_hbmpd_bmpd}
\end{align}
By (\ref{eq:hbmp_PUM_2}) and (\ref{eq:bmp*_bmpd_bmpi_kappa}), $l_1(\hbmp, \hbmp^*)$ is written as follows:
\begin{align}
l_1(\hbmp, \hbmp^*) 
&= \sum_{x \in \calX} |\hbmp(x) - \hbmp^*(x)| \nonumber\\
&= \sum_{x \in \calX} \sum_{k=1}^\kappa \hat{\bmr}(\bot_k) |\hat{\bmpi}_k(x) - \bmpi_k(x)| \nonumber\\
&= \sum_{k=1}^\kappa \hat{\bmr}(\bot_k) \sum_{x \in \calX} |\hat{\bmpi}_k(x) - \bmpi_k(x)| \nonumber\\
&= \sum_{k=1}^\kappa \hat{\bmr}(\bot_k) l_1(\hat{\bmpi},\bmpi).
\label{eq:l1_hbmp_hbmp*_hbmpd_l1_hbmpi_bmpi}
\end{align}
By (\ref{eq:l_1_triangle}), (\ref{eq:l1_hbmp*_bmp_l1_hbmpd_bmpd}), and (\ref{eq:l1_hbmp_hbmp*_hbmpd_l1_hbmpi_bmpi}), the inequality (\ref{eq:l1_loss_decomposition}) holds. 
\end{proof}

\conference{
\begin{figure}[p]
\centering
\includegraphics[width=0.99\linewidth]{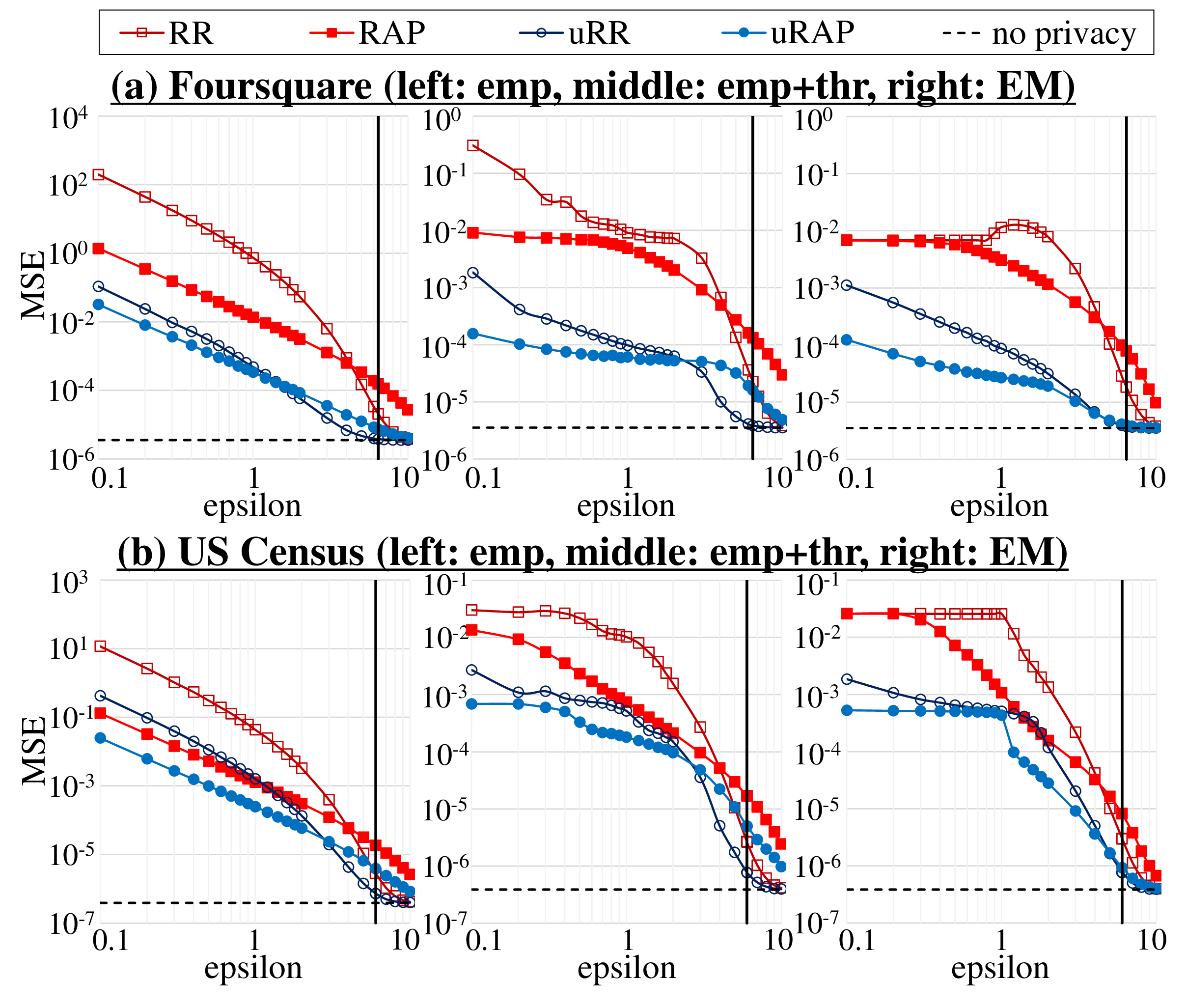}
\vspace{-5mm}
\caption{$\epsilon$ vs.~MSE (common-mechanism). 
A bold line parallel to the $y$-axis represents $\epsilon = \ln |\calX|$.}
\label{fig:res1_MSE}
\vspace{15mm}
\centering
\includegraphics[width=0.99\linewidth]{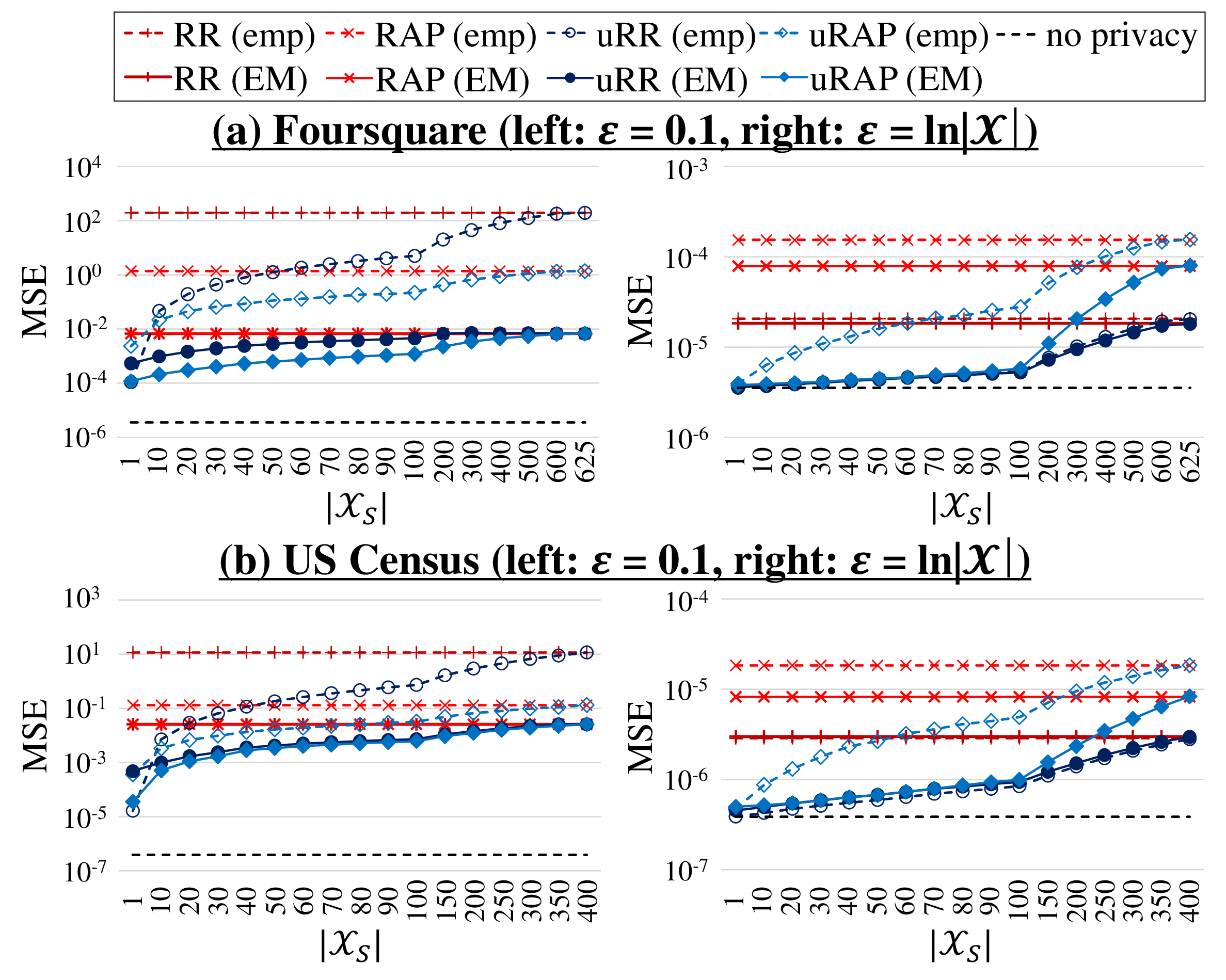}
\vspace{-5mm}
\caption{$|\calXS|$ vs.~MSE when $\epsilon = 0.1$ or $\ln |\calX|$.}
\label{fig:res2_MSE}
\end{figure}
\begin{figure}[p]
\centering
\includegraphics[width=1.0\linewidth]{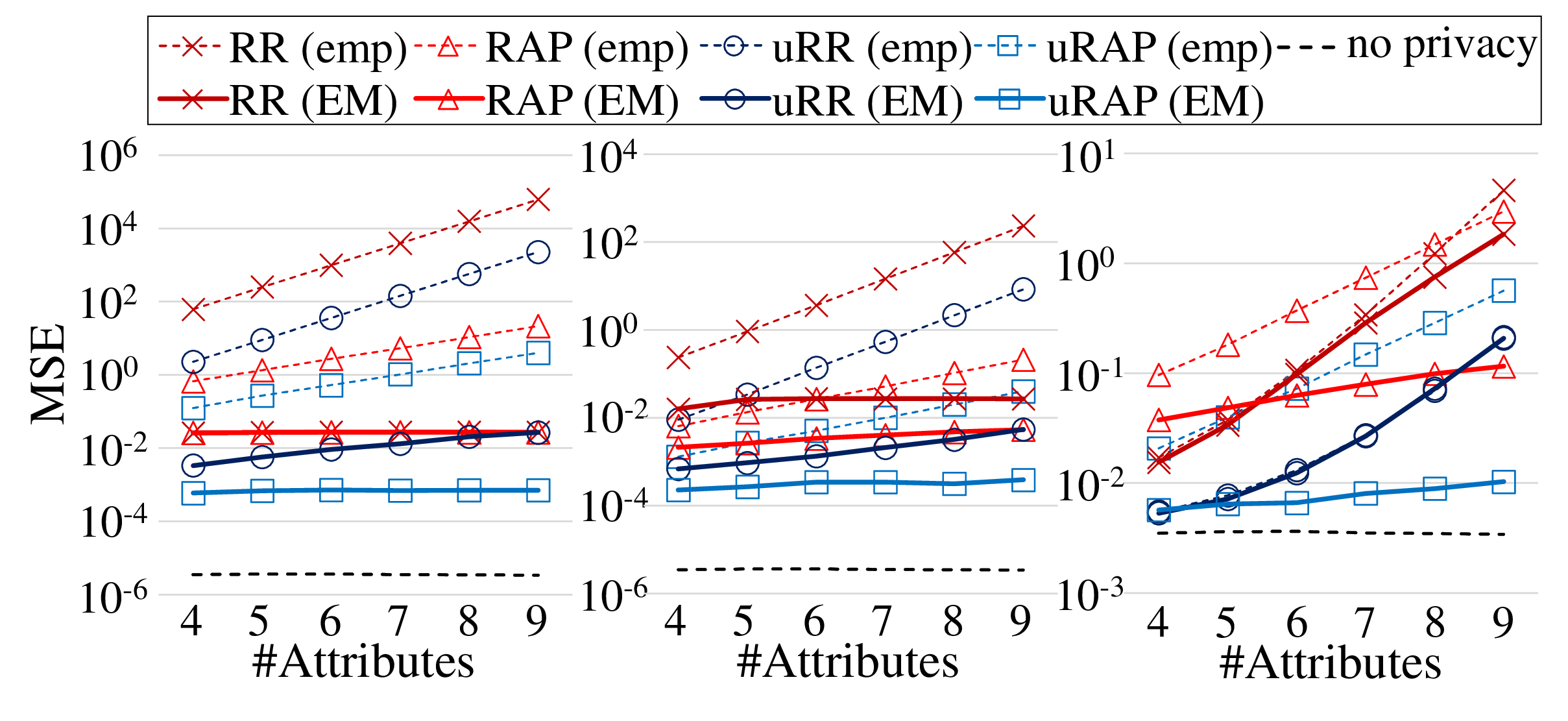}
\vspace{-5mm}
\caption{Number of attributes vs.~MSE (US Census dataset; left: $\epsilon=0.1$, middle: $\epsilon=1.0$, right: $\epsilon=6.0$).}
\label{fig:res4_MSE}
\vspace{15mm}
\centering
\includegraphics[width=0.99\linewidth]{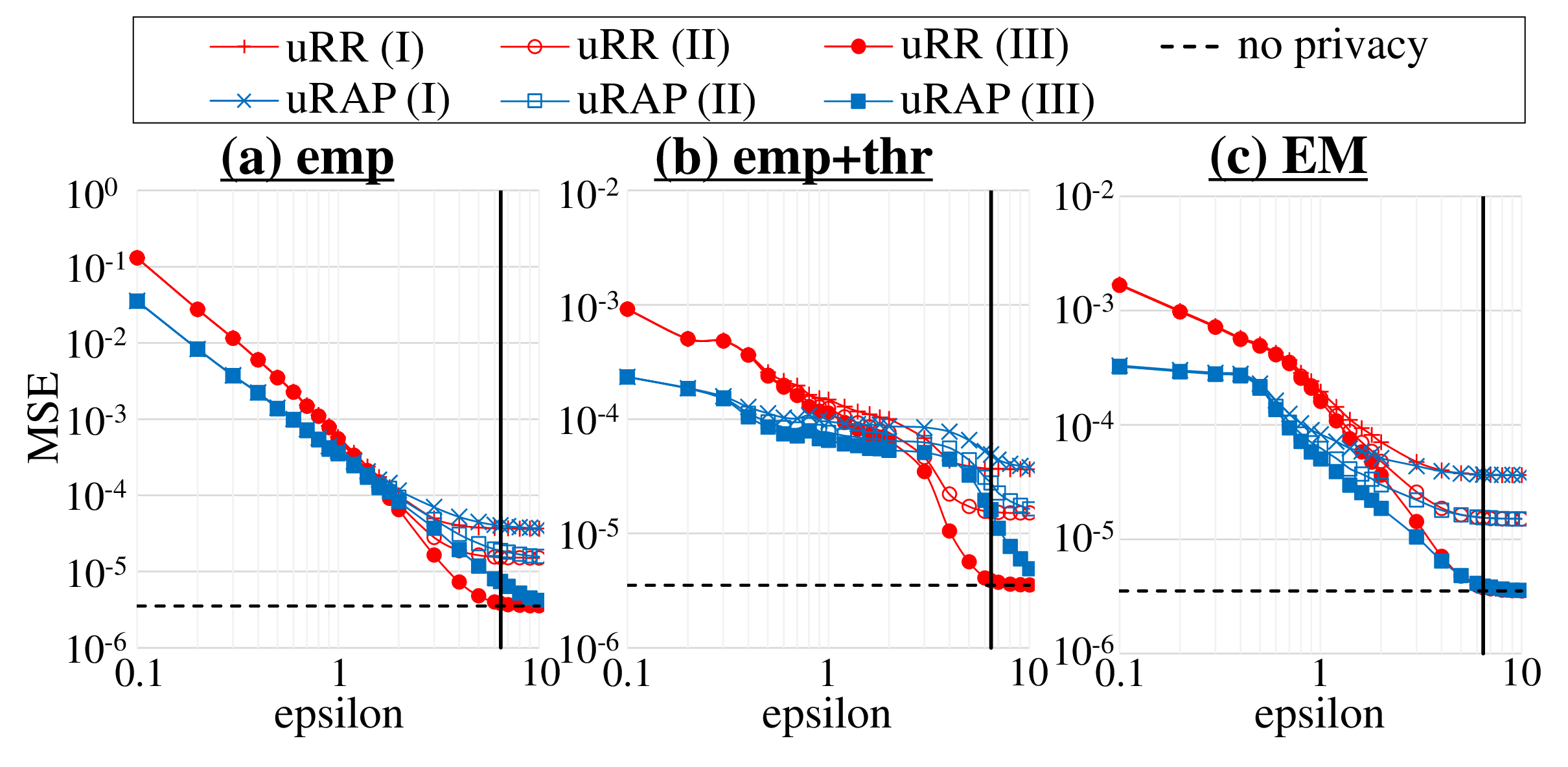}
\vspace{-5mm}
\caption{$\epsilon$ vs.~MSE (personalized-mechanism) ((I): w/o background knowledge, (II) POI distribution, (III) true distribution).
}
\label{fig:res3_MSE}
\end{figure}
}

\arxiv{
\begin{figure}[p]
\centering
\includegraphics[width=0.97\linewidth]{fig/res1_MSE.eps}
\vspace{-5mm}
\caption{$\epsilon$ vs.~MSE (common-mechanism). 
A bold line parallel to the $y$-axis represents $\epsilon = \ln |\calX|$.}
\label{fig:res1_MSE}
\vspace{15mm}
\centering
\includegraphics[width=0.97\linewidth]{fig/res2_MSE.eps}
\vspace{-5mm}
\caption{$|\calXS|$ vs.~MSE when $\epsilon = 0.1$ or $\ln |\calX|$.}
\label{fig:res2_MSE}
\end{figure}
\begin{figure}[p]
\centering
\includegraphics[width=0.97\linewidth]{fig/res4_MSE.eps}
\vspace{-5mm}
\caption{Number of attributes vs.~MSE (US Census dataset; left: $\epsilon=0.1$, middle: $\epsilon=1.0$, right: $\epsilon=6.0$).}
\label{fig:res4_MSE}
\vspace{15mm}
\centering
\includegraphics[width=0.97\linewidth]{fig/res3_MSE.eps}
\vspace{-5mm}
\caption{$\epsilon$ vs.~MSE (personalized-mechanism) ((I): w/o background knowledge, (II) POI distribution, (III) true distribution).
}
\label{fig:res3_MSE}
\end{figure}
}

\end{document}